%% file: 0.Main.tex
\documentclass[11pt]{article}
\usepackage{style}
\usepackage{comment}
\begin{document}
	
	\title{Planar Graph Homomorphisms: A Dichotomy and a Barrier from Quantum Groups}
	\author{
		Jin-Yi Cai\thanks{Department of Computer Sciences,
			University of Wisconsin-Madison}\\
		\texttt{jyc@cs.wics.edu}
		\and
		Ashwin Maran\footnotemark[1]\\
		\texttt{amaran@wisc.edu}
		\and
		Ben Young\footnotemark[1]\\
		\texttt{benyoung@cs.wisc.edu}
	}
	\date{}
	\maketitle
	
	\begin{abstract}
		We study the complexity of counting (weighted)
		planar graph homomorphism  problem $\PlGH(M)$
		parametrized by an arbitrary symmetric non-negative real valued matrix $M$.
		For matrices with pairwise distinct diagonal values,
		we prove a complete dichotomy theorem:
		$\PlGH(M)$ is 
		either polynomial-time tractable, or $\#$P-hard, according to a simple criterion.
		More generally, we obtain
		a dichotomy whenever every vertex pair of the graph represented by $M$ 
		can be separated using some planar edge gadget.
		
		A key 
		question in proving complexity dichotomies in the 
		planar setting is the expressive power
		of planar edge gadgets.
		We build on the framework of 
		Man\v{c}inska and Roberson~\cite{manvcinska2020quantum}
		to establish links between
		\emph{planar} edge gadgets and the
		theory of the
		\emph{quantum automorphism group} $\qut(M)$.
		We show that planar edge gadgets that can
		separate vertex pairs of $M$ exist 
		precisely when $\qut(M)$ is \emph{trivial}, 
		and prove that the problem of whether $\qut(M)$ is trivial 
		is undecidable.
		These results delineate the 
		frontier 
		for planar homomorphism counting problems 
		and uncover intrinsic barriers to 
		extending nonplanar reduction techniques 
		to the planar setting.
	\end{abstract}
	\thispagestyle{empty}
	
	\newpage
	
	\tableofcontents

	\thispagestyle{empty}
	\clearpage
	\pagenumbering{arabic}
	
	\input{1.Introduction}
	\input{2.Preliminaries}

	\input{3.Magnitude_Distinct_Hardness}
	\input{4.Diagonal_Distinct_Hardness}
	\input{5.Diagonal_Distinct_Dichotomy}
	\input{6.Quantum}
	\input{7.Future_Work}
	
	\newpage
	\appendix
	\input{A2.0.Model_of_Computation}
	\newpage
	\input{A2.Preliminaries}

	\newpage
	\input{A4.Diagonal_Distinct_Hardness}
	\newpage
	\input{A5.Diagonal_Distinct_Dichotomy}
	\newpage
	\input{A6.Quantum}
	
	\printbibliography
	
\end{document}

%% file: 1.Introduction.tex
\section{Introduction}\label{sec: introduction}

Given graphs $G$ and $H$, a mapping from $V(G)$ to $V(H)$ is called
a \textit{homomorphism} if the edges of $G$ are mapped to the edges of $H$.
More generally, 
let  $M = (M_{ij})$ be a $q \times q$ symmetric matrix.
In this paper, 
we consider arbitrary
$M_{ij} \in \mathbb{R}_{\geq 0}$, where $M_{ij}>0$ is the weight of edge $(i,j)$ in $H = H_M$.
Given $M$,
the partition function  $Z_{M}(G)$ for any input undirected multi-graph $G = (V, E)$
is 
$$Z_{M}(G) = \sum_{\sigma: V \rightarrow [q]} \prod_{(u, v) \in E} M_{\sigma(u) \sigma(v)}.$$
Isomorphic graphs $G \cong G'$ have the same value $Z_{M}(G) = Z_{M}(G')$,  thus every $M$
defines a graph property $Z_{M}(\cdot)$. For a 0-1 matrix $M$,
$Z_{M}(G)$ counts the number of homomorphisms from $G$ to $H$.
Graph homomorphism ($\GH$) encompasses a great deal of graph properties~\cite{lovasz1967operations, lovasz2012large, hell2004graphs}.

Each $M$ defines a
computational problem denoted by $\GH(M)$: $G \mapsto Z_{M}(G)$.
The  complexity of 
$\GH(M)$ has been a major focus of research.
A number of increasingly general complexity dichotomy theorems have been achieved~\cite{dyer2000complexity,bulatov2005complexity,goldberg2010complexity,cai2013graph}.
Graph homomorphism is also
a special case of
counting CSP, where 
a series of results established a complexity dichotomy for any set of constraint functions
${\cal F}$~\cite{bulatov2013complexity,dyer2010complexity,dyer2011decidability,dyer2013complexity,bulatov2012complexity,cai2016nonnegative,cai2017complexity}.

Parallel to this development, Valiant~\cite{valiant2008holographic} introduced
\emph{holographic algorithms}. It is well known that counting the number of perfect
matchings (\#PM) is \#P-complete~\cite{valiant1979complexity}. 
On the other hand,
since the 60's,
there has been a 
famous FKT algorithm~\cite{kasteleyn1961statistics,temperley1961dimer,kasteleyn1963dimer,kasteleyn1967graph} 
that 
can
compute \#PM on planar graphs in P-time. Valiant's holographic algorithms
greatly extended its reach, in fact so much so that a most intriguing question arises: Is this   a \emph{universal} algorithm that 
\emph{every} counting problem expressible as a sum-of-products
that \emph{can be solved} in P-time on planar graphs
(but \#P-hard in general) \emph{is solved} by this method alone?

After a series of work~\cite{cai2009holant,cai2016complete, backens2017new, backens2018complete, yang2022local, fu2019blockwise, fu2014holographic, cai2019holographic}
it was established that for every set of complex valued constraint functions ${\cal F}$
on the Boolean domain (i.e., domain size $q=2$) there is a 3-way exact
classification for  
\#CSP(${\cal F}$): 
 (1) P-time solvable, 
 (2) P-time solvable over planar graphs but \#P-hard over general graphs, 
 (3) \#P-hard over planar graphs.
Moreover,
category (2) consists of precisely  those problems that can be solved by
Valiant's holographic algorithm using FKT.
Extending this even for domain sizes $q= 3$ and $4$ has been difficult~\cite{cai2023complexity,cai2024polynomial},
requiring enormous effort with individually tailored  proof ideas.

This paper makes progress toward this universality in two directions. 
Let $\PlGH(M)$ denote the problem $\GH(M)$ when the
input graphs $G$ are restricted to planar graphs.
In one direction,  we 
prove dichotomies for $\PlGH(M)$, for all domain sizes $q$,
by proving the $\#$P-hardness of several open $\PlGH(M)$.
One such class of matrices for which we prove
$\#$P-hardness is the class of tensor products
of the form $M = \left(\begin{smallmatrix}
    a_{1} & b_{1}\\
    b_{1} & c_{1}\\
\end{smallmatrix} \right) \otimes \dots \otimes
\left(\begin{smallmatrix}
    a_{n} & b_{n}\\
    b_{n} & c_{n}\\
\end{smallmatrix} \right)$,
that are closely tied to tensor products of matchgates,
that belong in category (2) of problems that
are solvable over planar graphs by the holographic algorithm.
At the same time we prove some limitations of current proof techniques
by connecting to the theory of \emph{quantum groups}~\cite{atserias2019quantum,lupini2020nonlocal, manvcinska2020quantum}.

In the setting of general $\GH(M)$,
the classic tool used for establishing
$\#$P-hardness is \emph{gadget construction}.
Given a graph $G$, a new graph $G'$
is constructed from $G$ using some gadget.
By clever choices of $G'$, it can be shown that
computing $Z_{M}(G')$ corresponds to 
computing some $\#$P-hard problem on the instance $G$, 
thus proving that $\GH(M)$ is
$\#$P-hard.

In the $\PlGH$ setting, the main difficulty with
gadget construction is that several
gadgets used to prove
$\#$P-hardness of $\GH(M)$ do not preserve planarity
(i.e., $G$ is planar does not imply $G'$ is planar).
One class of gadgets that does preserve
planarity is the class of \emph{planar edge gadgets}.
Given a planar graph $\mathbf{K}$ with 2
labeled vertices on the external face, 
and any planar graph $G = (V, E)$,
every edge of $G$ can be replaced with the gadget $\mathbf{K}$,
thus preserving planarity of the constructed graph
$\mathbf{K}G$.
The problem of computing $Z_{M}(\mathbf{K}G)$
can then be framed as the problem of computing
$Z_{\mathbf{K}(M)}(G)$ for some other matrix
$\mathbf{K}(M)$,
and if $\PlGH(\mathbf{K}(M))$ can be shown to be
$\#$P-hard, that would imply $\PlGH(M)$ is $\#$P-hard as well.
Cai and Maran~\cite{cai2023complexity,cai2024polynomial}
use this strategy to
prove that for $q \in \{3, 4\}$, $\PlGH(M)$
is $\#$P-hard for $M$ not expressible  (as direct sums and tensor products) from 
categories (1) or (2), thus
establishing the 3-way classification for $q \leq 4$.

For general (non-planar) $\GH(M)$, the  expressive power of edge gadgets is inversely related
to the size of the automorphism group $\aut(M)$ of $M$, following work by Lov{\'a}sz~\cite{lovasz2006rank}.
A key structural observation 
(\cref{proposition: autOrbits}) shows that edge gadgets
cannot distinguish between two vertex labels $i, j$ 
if and only if they belong to the same orbit of 
$\aut(M)$.
This reduces 
questions about general edge gadget expressivity to 
decidable questions
about $\aut(M)$.
Building on the work of Man{\v{c}}inska and Roberson
(\cite{manvcinska2020quantum}) we establish links between
\emph{planar} edge gadgets and the
theory of the
\emph{quantum automorphism group} $\qut(M)$ \cite{lupini2020nonlocal, manvcinska2020quantum}.

Our main technical contribution is a 
dichotomy theorem for $\PlGH(M)$ 
(\cref{theorem: nonNegativeDichotomy}):
For any symmetric non-negative matrix $M$ 
whose diagonal entries are pairwise distinct, 
$\PlGH(M)$ is either polynomial-time computable or 
$\#P$-hard, 
with a simple and easily checkable criterion.
In \cref{sec: allDistinctHardness}, we
consider matrices where all entries 
are pairwise distinct (except $M_{ij} = M_{ji}$).
We
demonstrate how planar edge gadgets and 
gadget interpolation
techniques can be used to amplify
distinctions between different vertex types
and find reductions from known $\#$P-hard
Boolean $2 \times 2$ cases
(see \cref{theorem: magnitudeDistinctHardness}).
In \cref{sec: diagonalDistinctHardness,sec: diagonalDistinctDichotomy},
we sketch proofs of how similar techniques can be used
to establish the complexity dichotomy
when only the diagonal entries are pairwise distinct
(see \cref{theorem: nonNegativeDichotomy}).
In \cref{sec: quantumBody},
we show (\cref{theorem: fullGadgetSeparationNonNegative}) 
that this dichotomy extends to any matrix for which 
planar gadgets can separate its diagonal entries. 
Finally, we relate this combinatorial criterion 
to quantum automorphisms by proving that $\qut(M)$ is trivial 
iff
such a separation is possible
(\cref{corollary: distinct}), and prove
that given an arbitrary $M$ and indices $i,j$, it is \emph{undecidable}
whether there exists some planar edge gadget
$\mathbf{K}$
that can separate the diagonal entries
at those indices
(see \cref{theorem: undec}).
This work establishes the first connection between
dichotomy theorems for counting problems and the theory of quantum groups, and strongly suggests that further progress towards the
universality conjecture requires genuinely new ideas
beyond classical gadget constructions.

%% file: 2.Preliminaries.tex
\section{Preliminaries}\label{sec: preliminaries}

For any integer $q \geq 1$,
let $\Mat{q}{}(X)$ denote the set of
$q \times q$ matrices with entries from
$X \subseteq \mathbb{R}$.
For example, we can have $X = \mathbb{R}$, $\mathbb{R}_{\geq 0}$ or
$\mathbb{R}_{\neq 0}$.
We then let $\Sym{q}{}(X) \subset \Mat{q}{}(X)$
denote the subset of symmetric matrices, 
$\Sym{q}{F}(X) \subset \Sym{q}{}(X)$ 
and $\Sym{q}{pd}(X) \subset \Sym{q}{F}(X)$ denote, respectively, the
subsets of full rank and positive definite symmetric matrices.
We justify this decision to consider arbitrary $\mathbb{R}$-valued
matrices (rather than just algebraic real valued matrices)
in \autoref{appendix: modelComputation}.

\input{2.1.Edge_Gadgets}
\input{2.2.Thickening_Gadgets}
\input{2.3.Stretching_Gadgets}
\input{2.4.Bridge_Gadgets}
\input{2.5.Loop_Gadgets}

%% file: 2.1.Edge_Gadgets.tex
An \emph{edge gadget} is a 2-labeled graph $\mathbf{K} = 
(V(\mathbf{K}), E(\mathbf{K}))$
with 
distinguished vertices $\ell_{1} \neq \ell_{2}
\in V(\mathbf{K})$.
Given any $M \in \Sym{q}{}(\mathbb{R})$,
we define the \emph{signature} $\mathbf{K}(M) 
\in \Mat{q}{}(\mathbb{R})$ as:

\begin{equation}\label{equation: signature}
    \mathbf{K}(M)_{ij} = \sum_{\substack{\tau: 
    V(\mathbf{K}) \rightarrow [q]\\
    \tau(\ell_{1}) = i, \tau(\ell_{2}) = j}}
    \prod_{(u, v) \in E(\mathbf{K})}M_{\tau(u)\tau(v)}.
\end{equation}

In this work, we consider planar edge gadgets whose signatures are symmetric matrices.
\begin{definition}[$\Edge(M)$, $\PlEdge(M)$]
    For $M \in \Sym{q}{}(\mathbb{R})$,
    let $\Edge(M) := \{\mathbf{K}(M):
    \text{edge gadget } \mathbf{K}\} \cap \Sym{q}{}(\mathbb{R})$. Define
    $\PlEdge(M) \subset \Edge(M)$ to the set of \emph{planar} edge
    gadget signatures: those for which there is a planar embedding of $\vk$ in
    which the labeled vertices lie on the outer face.
\end{definition}

Given any planar graph $G = (V, E)$ and $\mathbf{K}$ with $\vk(M) \in \PlEdge(M)$, we construct the planar graph $\mathbf{K}G = (V(\mathbf{K}G), E(\mathbf{K}G))$
by replacing every
edge $(u, v) \in E$ with a copy of
$\mathbf{K}$, and identifying $(u, v)$ with
$\ell_{1}, \ell_{2}$ respectively.
Technically, $\vk G$ is
not yet well-defined as a graph, as its definition depends on
how we name each edge $\{u, v\}$ as $(u,v)$ or $(v,u)$;
however, by the symmetry of $\vk(M)$, the value $Z_M(\vk G)$ will not depend on 
these edge orientations. We have

\begin{equation} \label{eq:signature_matrix}
    Z_M(\vk G)
    = \sum_{\sigma: V \to [q]} \prod_{(u,v) \in E} 
    ~\sum_{\substack{\tau: V(\vk) \to [q] \\ \tau(\ell_1) = \sigma(u),  \tau(\ell_2) = \sigma(v)}}~
    \prod_{(u',v') \in E(\vk)} M_{\tau(u')\tau(v')}
    = Z_{\vk(M)}(G).
\end{equation}

Therefore $\PlGH(\vk(M)) \leq \PlGH(M)$ for all
$\vk(M) \in \PlEdge(M)$.
Some examples follow:

%% file: 2.2.Thickening_Gadgets.tex
\subsection{Thickening Gadgets}\label{sec: thickeningGadgets}
For an integer $n \geq 1$,
the \emph{thickening gadget} $\mathbf{T}_{n}$ consists
of two vertices $\ell_{1}, \ell_{2}$,
that are connected by $n$ parallel edges.
From \cref{equation: signature}, we have that for
all $i, j \in [q]$,
$$\mathbf{T_{n}}(M)_{ij} = 
\sum_{\substack{\tau: V(\mathbf{T_{n}}) \rightarrow [q]\\
\tau(\ell_{1}) = i, \tau(\ell_{2}) = j}}
\prod_{(u, v) \in E(\mathbf{T_{n}})}M_{\tau(u) \tau(v)} 
= \prod_{(u, v) \in E(\mathbf{T_{n}})}M_{ij}
= (M_{ij})^{n}.$$


\begin{figure}
	\centering
	\scalebox{0.8}{\input{images/thickening}}
	\caption{A graph $G$, the thickening gadget $\mathbf{T_{4}}$, and $\mathbf{T_{4}}G$.}
	\label{fig:thickening}
\end{figure}
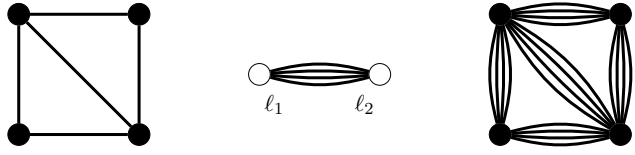

We see that $\mathbf{T_{n}}(M) \in \PlEdge(M)$ for all
$M \in \Sym{q}{}(\mathbb{R})$, for all $n \geq 1$.
By a technique called \emph{gadget interpolation}
we can reduce from more matrices than $\mathbf{T_{n}}(M)$.
The idea 
is that with
oracle access to $\PlGH(M)$ we can compute
$Z_{M}(\mathbf{T_{n}}G) = Z_{\mathbf{T_{n}}(M)}(G)$ 
for 
polynomially many $n$, and then form a Vandermonde system
of linear equations. 
By solving this system, we can compute 
$Z_{N}(G)$ for much more general matrices $N$ (see \autoref{appendix: thickeningAppendix}).
A consequence is the following:

\begin{definition}\label{definition: mathcalT}
    Let $M \in \Sym{q}{}(\mathbb{R}_{> 0})$.
    We may define
    $\mathcal{T}_{M}: \mathbb{R} \rightarrow
    \Sym{q}{}(\mathbb{R}_{> 0})$ such that
    $$\mathcal{T}_{M}(\theta)_{ij} = (M_{ij})^{\theta}, \qquad
    \text{for all } i, j \in [q].$$
\end{definition}

\begin{restatable}{lemma}{simpleThickening}
\label{lemma: simpleThickening}
    Let $M \in \Sym{q}{}(\mathbb{R}_{> 0})$. Then,
    $\PlGH(\mathcal{T}_{M}(\theta)) \leq 
    \PlGH(M)$ for all $\theta \in \mathbb{R}$.
\end{restatable}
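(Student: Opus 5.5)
The plan is the standard thickening interpolation argument. Fix a planar input $G=(V,E)$ with $m=|E|$, and let $\{x_1,\dots,x_s\}$ be the distinct values among the entries $M_{ij}$; note $s\le q(q+1)/2$, a constant. Every $M_{ij}>0$, so each $x_k>0$.

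\textbf{Grouping by exponent vectors.} For $\sigma:V\to[q]$, let $c_k(\sigma)$ be the number of edges $(u,v)\in E$ with $M_{\sigma(u)\sigma(v)}=x_k$, so that $\sum_k c_k=m$. For nonnegative integers $c=(c_1,\dots,c_s)$ with $\sum_k c_k=m$, let $N_c$ be the number of $\sigma$ with $(c_k(\sigma))_k=c$. Then for every real $\theta$,
\[
Z_{\mathcal{T}_M(\theta)}(G)=\sum_{c} N_c \prod_{k=1}^{s} x_k^{\theta c_k},
\qquad
Z_{M}(\mathbf{T}_n G)=Z_{\mathbf{T}_n(M)}(G)=\sum_c N_c \Bigl(\prod_{k} x_k^{c_k}\Bigr)^{n}.
\]
The number of vectors $c$ is $\binom{m+s-1}{s-1}=O(m^{s-1})$, which is polynomial in the size of $G$.

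\textbf{Interpolation.} The oracle for $\PlGH(M)$ applied to the planar graphs $\mathbf{T}_nG$, $n=1,\dots,t$, returns power sums of the quantities $y_c=\prod_k x_k^{c_k}$. Group together the vectors $c$ that give the same value $y$, and write $a_y=\sum_{c:\,y_c=y}N_c$. The oracle values are then $\sum_y a_y y^n$. With $t$ equal to the number of distinct $y$'s (polynomially many), this is a Vandermonde system in the distinct positive $y$'s, so it is nonsingular and we recover every $a_y$.

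The difficulty is that $Z_{\mathcal{T}_M(\theta)}(G)=\sum_c N_c\,y_c^{\theta}$ needs only the grouped sums. Vectors $c$ with equal $y_c$ also have equal $y_c^\theta=\exp(\theta\log y_c)$, because $y_c$ determines $\log y_c$. Hence $Z_{\mathcal{T}_M(\theta)}(G)=\sum_y a_y y^\theta$ and can be computed. Note that $(M_{ij})^{\theta}>0$ and $\mathcal{T}_M(\theta)$ is symmetric, so it is a legitimate target.

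\textbf{Main obstacle.} The combinatorics is routine; the hard part is the model-of-computation issue. We do not know which $c$ collide, since multiplicative relations among the real $x_k$ may exist, and we must compute with the $y$'s and $y^\theta$ for arbitrary reals. I would rely on the paper's model in \autoref{appendix: modelComputation}, where arithmetic and equality tests on the given reals (and on the fixed $\theta$-powers $x_k^\theta$) are exact. Equality tests between the products $y_c$ identify the distinct values, so the Vandermonde system can be set up and solved in polynomial time. Finally, $\sum_y a_y y^\theta$ is evaluated with $y^\theta=\prod_k (x_k^\theta)^{c_k}$ for any representative $c$, which is well defined by the argument above.
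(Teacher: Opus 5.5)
Your proof is correct. It is the same Vandermonde interpolation over distinct product values $y$ that the paper uses, but it reaches the conclusion more directly.

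The paper does not interpolate $\mathcal{T}_M(\theta)$ on its own. It first proves the multivariate \cref{lemma: thickeningLemma} via \cref{lemma: thickeningGeneralLemma}:
\begin{itemize}
\item It fixes a generating set $\{g_t\}$ for the entries of $M$.
\item Each recovered count $\#_M(G,x)$ is reweighted by $\widehat{y}(x)=\prod_t z_t^{e^x_t}$, where $e^x_t$ is the exponent of $g_t$ in $x$.
\item It then notes that $\mathcal{T}^{*}_{M}(g_1^{\theta},\dots,g_d^{\theta}) = c^{\theta}\,\mathcal{T}_M(\theta)$, and removes the scalar $c^{\theta}$ with \cref{lemma: MequivalentCM}.
\end{itemize}
The generating set is needed there because, for arbitrary $\mathbf{z}$, the reweighting $\widehat{y}$ is well defined on grouped values only thanks to uniqueness of the exponent vectors. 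In your special case that is unnecessary, since $y\mapsto y^{\theta}$ is already a function of the value $y$. This is exactly your observation, and the paper uses the same observation in its proof of \cref{lemma: bridgingLemma}.

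What each route buys:
\begin{itemize}
\item Your proof is shorter and self-contained for this statement.
\item The paper's detour gives a single framework covering $\mathcal{T}^{*}_{M}(\mathbf{z})$ with independently varying generators. That in turn yields the integer-exponent family $\widehat{\mathcal{T}}_M$ of \cref{lemma: thickeningSingleVariable}, on which the hardness proofs rely.
\end{itemize}

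Your treatment of the model of computation matches the paper's level of rigor. You need equality tests in $\mathbb{Q}(M)$ to identify the distinct $y$. The final sum needs only arithmetic with the entries of $\mathcal{T}_M(\theta)$, not equality tests among the $x_k^{\theta}$.
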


The full power of the thickening gadget
allows us to interpolate even more matrices.

\begin{restatable}{definition}{generatingSet}
\label{definition: generatingSet}
    Let $\mathcal{A} \subseteq \mathbb{R}_{\neq 0}$
    be a set of non-zero real numbers.
    A finite set 
    $\{g_{t}\}_{t \in [d]} \in (\mathbb{R}_{> 1})^{d}$,
    for some integer $d \geq 0$, 
    is called a generating set of $\mathcal{A}$ if
    for every $a \in \mathcal{A}$, there exists a
    \emph{unique} $(e_0, e_{1}, \dots, e_{d}) \in \{0, 1\} \times {\mathbb{Z}}^{d}$ 
    such that $a = (-1)^{e_0} {g_{1}^{e_{1}} \cdots g_{d}^{e_{d}}}$.
\end{restatable}

\begin{remark*}
    Note that given any finite $\mathcal{A} \subset 
    \mathbb{Z}_{\neq 0}$,
    the set of prime factors of 
    $\prod_{a \in \mathcal{A}}|a|$
    forms a generating set of $\mathcal{A}$.
    In general, every finite 
    $\mathcal{A} \subset \mathbb{R}_{\neq 0}$ 
    has a generating set (see \cref{lemma: generatingSet} in
    \autoref{appendix: thickeningAppendix}).
\end{remark*}

\begin{definition}\label{definition: mathcalTStar}
    Let $M \in \Sym{q}{}(\mathbb{R}_{\neq 0})$
    with its entries generated by some $\{g_t\}_{t \in [d]}$,
    such that $M_{ij} = (-1)^{e_{ij0}} \cdot
    g_{1}^{e_{ij1}} \cdots g_{d}^{e_{ijd}}$.
    Let $e_{t}^{*} = \min_{i, j \in [q]}e_{ijt}$
    for all $t \in [d]$.
    Define $\mathcal{T}^{*}_{M}: \mathbb{R}^{d} \rightarrow
    \Sym{q}{}(\mathbb{R})$
    such that
    $\mathcal{T}^{*}_{M}(\mathbf{z})_{ij} = 
    (-1)^{e_{ij0}} \cdot z_{1}^{e_{ij1} - e_{1}^{*}} \cdots
    z_{d}^{e_{ijd} - e_{d}^{*}}$
    is a signed monomial in $\mathbf{z} = (z_{1}, \dots, z_{d})$
    for all $i, j \in [q]$.
\end{definition}

\begin{restatable}{lemma}{thickeningLemma}\label{lemma: thickeningLemma}
    Let $M \in \Sym{q}{}(\mathbb{R}_{\neq 0})$
    with its entries generated by some $\{g_t\}_{t \in [d]}$.
    Then, $\PlGH(\mathcal{T}^{*}_{M}(\mathbf{z})) \leq \PlGH(M)$
    for all $\mathbf{z} \in \mathbb{R}^{d}$.
\end{restatable}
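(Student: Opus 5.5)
We prove \cref{lemma: thickeningLemma} by polynomial interpolation on the thickening gadgets $\mathbf{T}_n$. Fix a planar input $G=(V,E)$ with $m=|E|$. Write $x_{ij}=M_{ij}$, and for a map $\sigma:V\to[q]$ let $k_{ij}(\sigma)$ be the number of edges mapped to $\{i,j\}$. Then
\[
Z_{\mathbf{T}_n(M)}(G)=\sum_{\sigma}\prod_{\{i,j\}} M_{ij}^{n k_{ij}(\sigma)} .
\]
Substituting $M_{ij}=(-1)^{e_{ij0}}\prod_t g_t^{e_{ijt}}$ gives
\[
\prod_{\{i,j\}} M_{ij}^{n k_{ij}} = \Bigl((-1)^{\sum e_{ij0}k_{ij}}\prod_t g_t^{\sum_{ij} e_{ijt}k_{ij}}\Bigr)^{n}.
\]
So we group the maps $\sigma$ by the vector $\mathbf{w}(\sigma)=\bigl(\sum e_{ij0}k_{ij}\bmod 2,\ (\sum_{ij}(e_{ijt}-e_t^*)k_{ij})_{t\in[d]}\bigr)$. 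Let $c_{\mathbf{w}}$ be the number of $\sigma$ with $\mathbf{w}(\sigma)=\mathbf{w}$. Since $\sum k_{ij}=m$, we get
\[
Z_{\mathbf{T}_n(M)}(G)=\Bigl(\prod_t g_t^{e_t^* m}\Bigr)^{n}\sum_{\mathbf{w}} c_{\mathbf{w}}\,\lambda_{\mathbf{w}}^{\,n},
\]
where $\lambda_{\mathbf{w}}=(-1)^{w_0}\prod_t g_t^{w_t}$. By the same computation, the target value is $Z_{\mathcal{T}^*_M(\mathbf{z})}(G)=\sum_{\mathbf{w}} c_{\mathbf{w}}(-1)^{w_0}\prod_t z_t^{w_t}$. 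So it suffices to recover all the $c_{\mathbf{w}}$.

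The number of distinct $\mathbf{w}$ is polynomial in $m$: each $w_t$ lies in $[0,m\max_{ij}(e_{ijt}-e^*_t)]$, and $d$ and the exponents depend only on $M$, so there are $O(m^d)$ classes. The key point, and the main obstacle, is that distinct $\mathbf{w}$ give distinct $\lambda_{\mathbf{w}}$. This is exactly the uniqueness clause in \cref{definition: generatingSet}. It must hold for all of $\mathbb{Z}^d$, not only for the entries of $M$, so we rely on the generating set having multiplicatively independent $g_t$. The construction of \cref{lemma: generatingSet} supplies this; if only entry-wise uniqueness were given, we would first pass to such an independent generating set.

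With distinct $\lambda_{\mathbf{w}}$ in hand, we query the oracle on the planar graphs $\mathbf{T}_nG$ for $n=1,\dots,N$, where $N$ is the number of classes. Dividing out the known scalar $(\prod_t g_t^{e_t^*m})^n$ leaves a Vandermonde system in the unknowns $c_{\mathbf{w}}$ with distinct nonzero nodes $\lambda_{\mathbf{w}}$, so it is invertible. Solving it gives every $c_{\mathbf{w}}$, and we then evaluate $\sum_{\mathbf{w}} c_{\mathbf{w}}(-1)^{w_0}\prod_t z_t^{w_t}$. Each $\mathbf{T}_nG$ is planar with $nm$ edges, so the reduction is polynomial time in the model of computation of \autoref{appendix: modelComputation}, where arithmetic on the $g_t$ and $\mathbf{z}$ is allowed.
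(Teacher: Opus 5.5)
Your argument is correct and is essentially the paper's: it interpolates on $\mathbf{T}_nG$, solves a full-rank Vandermonde system for the grouped counts, and re-evaluates with $\mathbf{z}$. You group by exponent vector $\mathbf{w}$ and pull out $(\prod_t g_t^{e_t^*m})^n$ directly, whereas the paper groups by product value $x\in X(G)$ and handles the $e_t^*$ shift via $\PlGH(M)\equiv\PlGH(cM)$, which amounts to the same thing. Your fallback about passing to another generating set is unnecessary: uniqueness for even one entry already forces the $g_t$ to be multiplicatively independent, since a relation $\prod_t g_t^{f_t}=1$ with $f\neq 0$ would give that entry a second representation. The fallback would also not be legitimate, because $\mathcal{T}^*_M$ depends on the chosen generators.
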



\begin{restatable}{lemma}{thickeningSingleVariable}
\label{lemma: thickeningSingleVariable}
    Let $M \in \Sym{q}{}(\mathbb{R}_{> 0})$ with
    its entries generated by some $\{g_t\}_{t \in [d]}$.
    Then, we can find 
    $f_{i}: \mathbb{R} \rightarrow \mathbb{R}$
    for $i \in [d]$, and construct
    $\widehat{\mathcal{T}}_{M}: z \mapsto \mathcal{T}^{*}_{M}(f_{1}(z), \dots,
    f_{d}(z))$, such that
    $\widehat{\mathcal{T}}_{M}(z)_{ij} = z^{x_{ij}}$ for 
    $x_{ij} \in \mathbb{Z}_{\geq 0}$,
    such that for all $(i, j)$, $M_{ij} < M_{i'j'} 
    \iff x_{ij} < x_{i'j'}$, and $M_{ij} = M_{i'j'} 
    \iff x_{ij} = x_{i'j'}$.
    Moreover, if $M \in \Sym{q}{F}(\mathbb{R}_{> 0})$,
    then there exists some $\widehat{z} \in \mathbb{R}$, such that
    $\widehat{\mathcal{T}}_{M}(\widehat{z}) \in 
    \Sym{q}{F}(\mathbb{R}_{> 0})$.
\end{restatable}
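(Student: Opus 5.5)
Since $M$ has positive entries, each $M_{ij} = g_1^{e_{ij1}}\cdots g_d^{e_{ijd}}$ with $e_{ij0}=0$, so $\mathcal{T}^{*}_{M}(\mathbf{z})_{ij} = \prod_t z_t^{e_{ijt}-e_t^*}$, and all exponents $e_{ijt}-e_t^* \ge 0$. My plan is to choose $f_t(z) = z^{k_t}$ for suitable non-negative integers $k_t$. Then
\[
\widehat{\mathcal{T}}_M(z)_{ij} = z^{x_{ij}}, \qquad x_{ij} = \sum_{t} k_t\,(e_{ijt}-e_t^*) \in \mathbb{Z}_{\ge 0}.
\]
This reduces the ordering requirement to choosing $\mathbf{k}=(k_t)$ so that the linear functional $\mathbf{e}\mapsto \sum_t k_t e_t$ orders the finitely many exponent vectors $\mathbf{e}_{ij}$ the same way as $\log M_{ij} = \sum_t e_{ijt}\log g_t$.

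\textbf{Step 1: choosing the weights.} Consider the real weight vector $\boldsymbol{\lambda} = (\log g_1,\dots,\log g_d)$, which has positive entries since $g_t>1$.
\begin{itemize}
\item For pairs with $M_{ij}=M_{i'j'}$, uniqueness of the representation in \cref{definition: generatingSet} gives $\mathbf{e}_{ij}=\mathbf{e}_{i'j'}$. Hence $x_{ij}=x_{i'j'}$ for any choice of $\mathbf{k}$.
\item For pairs with $M_{ij}<M_{i'j'}$, we have $\langle \boldsymbol{\lambda}, \mathbf{e}_{i'j'}-\mathbf{e}_{ij}\rangle>0$.
\end{itemize}
There are only finitely many such strict inequalities, and each is an open condition on $\boldsymbol{\lambda}$. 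So some rational vector $\boldsymbol{\mu}$ near $\boldsymbol{\lambda}$ with positive entries satisfies all of them. Scaling $\boldsymbol{\mu}$ by a common denominator yields positive integers $k_t$ with $\langle \mathbf{k}, \mathbf{e}_{i'j'}-\mathbf{e}_{ij}\rangle >0$ for every strict pair. This gives $M_{ij}<M_{i'j'} \iff x_{ij}<x_{i'j'}$; the reverse implication follows from trichotomy together with the equality case. The functions $f_t(z)=z^{k_t}$ are then explicit, and plugging $\mathbf{z}=(f_1(z),\dots,f_d(z))$ into \cref{lemma: thickeningLemma} gives the reduction.

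\textbf{Step 2: full rank.} Assume $M\in\Sym{q}{F}(\mathbb{R}_{>0})$. Then $p(z) = \det \widehat{\mathcal{T}}_M(z)$ is a polynomial in $z$ (a generalized polynomial with non-negative integer exponents), so it suffices to show $p\not\equiv 0$. I would evaluate along a curve that returns to $M$. Let $F(\mathbf{y}) = \det \mathcal{T}^{*}_{M}(\mathbf{y})$, a polynomial in $\mathbf{y}$. Then
\[
F(g_1,\dots,g_d) = \det M \Big/ \prod_t g_t^{e_t^*} \neq 0 ,
\]
since $\mathcal{T}^{*}_{M}(g_1,\dots,g_d)_{ij} = M_{ij}/\prod_t g_t^{e_t^*}$ is a positive rescaling of $M$. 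Thus $F$ is a nonzero polynomial.

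The difficulty is that restricting $F$ to the monomial curve $\mathbf{y}=(z^{k_1},\dots,z^{k_d})$ could annihilate it, because distinct monomials of $F$ might collapse to the same power of $z$. This is the main obstacle. To avoid it, I would also require $\mathbf{k}$ to separate the finitely many distinct exponent vectors of monomials appearing in $F$, i.e. $\langle\mathbf{k},\mathbf{a}-\mathbf{b}\rangle\ne 0$ for all distinct such $\mathbf{a},\mathbf{b}$.
\begin{itemize}
\item Each such condition excludes only a hyperplane.
\item The open cone of admissible $\boldsymbol{\mu}$ from Step 1 is nonempty, so it contains rational points avoiding finitely many hyperplanes.
\end{itemize}
With such $\mathbf{k}$, the map $\mathbf{a}\mapsto\langle\mathbf{k},\mathbf{a}\rangle$ is injective on the monomials of $F$, so no cancellation occurs and $p(z)=F(z^{k_1},\dots,z^{k_d})\not\equiv 0$. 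A nonzero polynomial has finitely many roots, so some $\widehat z$ (which can be taken positive) gives $\widehat{\mathcal{T}}_M(\widehat z)$ with nonzero determinant. Its entries are $\widehat z^{x_{ij}}>0$, so $\widehat{\mathcal{T}}_M(\widehat z)\in\Sym{q}{F}(\mathbb{R}_{>0})$.
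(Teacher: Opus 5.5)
Your proposal is correct. Step~1 is essentially the paper's argument. The paper also takes $f_t(z)=z^{\kappa y^*_t}$ for a rational positive $\mathbf{y}^*$ near $(\ln g_1,\dots,\ln g_d)$. It phrases order preservation via continuity of the minimum gap $\delta_M$ and the intermediate value theorem, whereas you use finitely many strict linear inequalities; these amount to the same thing.

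Step~2 takes a genuinely different route. The paper handles full rank with the same perturbation. It shrinks the ball around $(\ln g_t)$ so that $\det \mathcal{T}^*_M(e^{y_1},\dots,e^{y_d})\neq 0$ on it, which is possible because at the center this matrix is $cM$. Then $\widehat z=e^{1/\kappa}$ lands exactly on the chosen rational point, so $\det\widehat{\mathcal{T}}_M(e^{1/\kappa})\neq 0$. Your own construction would give this too: at $z=e^{1/D}$ you obtain $\mathcal{T}^*_M(e^{\mu_1},\dots,e^{\mu_d})$, which is close to $cM$ when $\mu$ is close to $\lambda$.

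You instead argue algebraically. You note that $\det\mathcal{T}^*_M(\mathbf{y})$ is a nonzero polynomial. You then choose $\mathbf{k}$ off finitely many hyperplanes so that the substitution $y_t=z^{k_t}$ is injective on its monomial support, and hence no cancellation occurs. This is valid, and it gives slightly more:
\begin{itemize}
\item It decouples full rank from the ordering requirement. Any integer direction in the open order-preserving cone, outside finitely many hyperplanes, works, not only directions very close to $\lambda$.
\item It shows directly that $\det\widehat{\mathcal{T}}_M(z)$ is a nonzero polynomial in $z$. This is exactly the form in which the lemma is used in \cref{theorem: magnitudeDistinctHardness} and \cref{theorem: inductionStep}.
\end{itemize}
The paper's route is shorter and avoids any bookkeeping of monomial supports.

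One minor slip: $\det \mathcal{T}^*_M(g_1,\dots,g_d)=\det M\big/\prod_t g_t^{\,q e^*_t}$, with the power $q$. This does not affect nonvanishing.
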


A proof as well as a detailed discussion of \cref{lemma: thickeningLemma}
and \cref{lemma: thickeningSingleVariable}
can be found in \autoref{appendix: thickeningAppendix}.

%% file: images/thickening.tex
\begin{tikzpicture}[line join=miter, draw opacity=1]

    \node[circle, draw=black, fill=black] (A) at (-1, -1){};
    \node[circle, draw=black, fill=black] (B) at (-1, 1){};
    \node[circle, draw=black, fill=black] (C) at (1, 1){};
    \node[circle, draw=black, fill=black] (D) at (1, -1){};
    
    \draw[line width=0.5mm, black] (A) -- (B);
    
    \draw[line width=0.5mm, black] (B) -- (C);
    
    \draw[line width=0.5mm, black] (C) -- (D);
    
    \draw[line width=0.5mm, black] (D) -- (A);
    
    \draw[line width=0.5mm, black] (B) -- (D);

\begin{scope}[xshift = 4cm]

    \node[circle, draw=black, fill=white] (A) at (-1, 0){};
    \node[circle, draw=black, fill=white] (B) at (1, 0){};

    \node[draw=none] at (-0.75, -0.5) {$\ell_{1}$};
    \node[draw=none] at (0.75, -0.5) {$\ell_{2}$};
    
    \foreach \angle in {5, 15} {
      \draw[line width=0.5mm] (A) to[bend left=\angle] (B);
      \draw[line width=0.5mm] (A) to[bend right=\angle] (B);
    }
    
\end{scope}


    
    
    
    
    


\begin{scope}[xshift = 8cm]

    \node[circle, draw=black, fill=black] (A) at (-1, -1){};
    \node[circle, draw=black, fill=black] (B) at (-1, 1){};
    \node[circle, draw=black, fill=black] (C) at (1, 1){};
    \node[circle, draw=black, fill=black] (D) at (1, -1){};
    
    \foreach \angle in {5, 15} {
      \draw[line width=0.5mm] (A) to[bend left=\angle] (B);
      \draw[line width=0.5mm] (A) to[bend right=\angle] (B);
    }
    
    \foreach \angle in {5, 15} {
      \draw[line width=0.5mm] (B) to[bend left=\angle] (C);
      \draw[line width=0.5mm] (B) to[bend right=\angle] (C);
    }
    
    \foreach \angle in {5, 15} {
      \draw[line width=0.5mm] (C) to[bend left=\angle] (D);
      \draw[line width=0.5mm] (C) to[bend right=\angle] (D);
    }
    
    \foreach \angle in {5, 15} {
      \draw[line width=0.5mm] (D) to[bend left=\angle] (A);
      \draw[line width=0.5mm] (D) to[bend right=\angle] (A);
    }
    
    \foreach \angle in {5, 15} {
      \draw[line width=0.5mm] (B) to[bend left=\angle] (D);
      \draw[line width=0.5mm] (B) to[bend right=\angle] (D);
    }

\end{scope}

\end{tikzpicture}

%% file: 2.3.Stretching_Gadgets.tex
\subsection{Stretching Gadgets}\label{sec: stretchingGadgets}
For an integer $n \geq 1$, the \emph{stretching gadget}
$\mathbf{S_{n}}$ consists of two vertices
$\ell_{1}, \ell_{2}$, that are connected
by a path of length $n$.
From \cref{equation: signature}, we have that for all
$i, j \in [q]$,
$$\mathbf{S_{n}}(M)_{ij} = 
\sum_{\substack{\tau: V(\mathbf{S_{n}}) \rightarrow [q]\\
\tau(\ell_{1}) = i, \tau(\ell_{2}) = j}}
\prod_{(u, v) \in E(\mathbf{S_{n}})}M_{ij}
= \sum_{k_{1}, \dots, k_{n-1} \in [q]}
M_{ik_{1}}M_{k_{1}k_{2}} \cdots M_{k_{n-1}j}
= (M^{n})_{ij}.$$


\begin{figure}
	\centering
	\scalebox{0.8}{\input{images/stretching}}
	\caption{A graph $G$, the stretching gadget $\mathbf{S_{4}}$, and $\mathbf{S_{4}}G$.}
	\label{fig:stretching}
\end{figure}
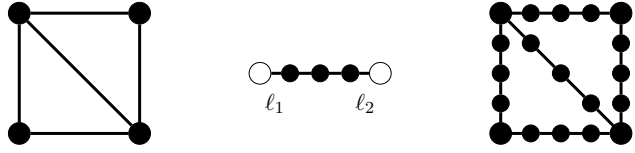

We see that $\mathbf{S_{n}}(M) \in \PlEdge(M)$ for all
$M \in \Sym{q}{}(\mathbb{R})$, for all $n \geq 1$.
Let $M \in \Sym{q}{pd}(\mathbb{R})$
with eigenvalues $\lambda_{1}, \dots, \lambda_{q} > 0$.
There exists an orthonormal matrix 
of unit eigenvectors $H$
such that $M = HDH^{\tt{T}}$, where
$D = \diag(\lambda_{1}, \dots, \lambda_{q})$
is a diagonal matrix.
Note that $M^{n} = HD^{n}H^{\tt{T}}$ for
all $n \geq 1$.
So, 
with oracle access to $\PlGH(M)$, we can compute
$Z_{M}(\mathbf{S_{n}}G) = Z_{\mathbf{S_{n}}(M)}(G)$
for polynomially many $n$.
From this, we obtain a Vandermonde system
and 
this lets us compute $Z_{N}(G)$ for other
matrices $N$ (see \autoref{appendix: stretchingAppendix}).
A consequence is the following:

\begin{definition}\label{definition: mathcalS}
    Let $M = HDH^{\tt{T}} \in \Sym{q}{pd}(\mathbb{R})$.
    We may  define $\mathcal{S}_{M}: \mathbb{R} \rightarrow 
    \Sym{q}{pd}(\mathbb{R})$ such that
    $$\mathcal{S}_{M}(\theta) = HD^{\theta}H^{\tt{T}},~~~~~~\mbox{for all real $\theta \in \mathbb{R}$}.$$
\end{definition}

\begin{restatable}{lemma}{stretchingLemma}\label{lemma: stretchingLemma}
    Let $M = HDH^{\tt{T}} \in \Sym{q}{pd}(\mathbb{R})$.
    Then, $\PlGH(\mathcal{S}_{M}(\theta)) \leq \PlGH(M)$
    for all $\theta \in \mathbb{R}_{\neq 0}$.
\end{restatable}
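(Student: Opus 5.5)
We prove the stretching lemma by polynomial interpolation in the eigenvalues of $M$, in the same way as the thickening lemma. Fix a planar input $G=(V,E)$ with $m=|E|$. Let $\mu_1,\dots,\mu_s$ be the \emph{distinct} eigenvalues of $M$. Group the spectral decomposition as
\[
M=\sum_{k=1}^{s}\mu_k P_k ,
\]
where $P_k$ is the orthogonal projection onto the $\mu_k$-eigenspace, built from the columns of $H$. Then $\mathcal{S}_M(\theta)=\sum_k \mu_k^\theta P_k$. This shows that $\mathcal{S}_M(\theta)$ does not depend on the choice of $H$ (using $\mu_k>0$), and $M^n=\mathcal{S}_M(n)$.

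We expand $Z_{\mathcal{S}_M(\theta)}(G)$ by choosing, for each edge, one of the $s$ spectral terms. For a vector $\mathbf{c}=(c_1,\dots,c_s)\in\mathbb{Z}_{\ge0}^s$ with $\sum_k c_k=m$, let $X_{\mathbf{c}}$ be the sum over $\sigma:V\to[q]$ and over all assignments $\kappa:E\to[s]$ having $|\kappa^{-1}(k)|=c_k$ of $\prod_{(u,v)\in E}(P_{\kappa(u,v)})_{\sigma(u)\sigma(v)}$. Then
\[
Z_{\mathcal{S}_M(\theta)}(G)=\sum_{\mathbf{c}} X_{\mathbf{c}}\prod_{k}\mu_k^{\theta c_k}
\]
for every real $\theta$. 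The numbers $X_{\mathbf{c}}$ do not depend on $\theta$. There are $\binom{m+s-1}{s-1}=\mathrm{poly}(m)$ of them, since $s\le q$ is a constant.

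Now we query the oracle. Using $\mathbf{S_n}$, which is planar with its labels on the outer face, the oracle for $\PlGH(M)$ returns
\[
Z_M(\mathbf{S_n}G)=Z_{M^n}(G)=\sum_{\mathbf{c}}X_{\mathbf{c}}\,y_{\mathbf{c}}^{\,n},\qquad y_{\mathbf{c}}=\prod_k\mu_k^{c_k},
\]
and we ask this for $n=1,\dots,N$, where $N$ is the number of distinct values among the $y_{\mathbf{c}}$. Grouping terms by the value $y$ gives a Vandermonde system in the unknowns $W_y=\sum_{\mathbf{c}:y_{\mathbf{c}}=y}X_{\mathbf{c}}$. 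Its nodes are distinct and positive, so the system is nonsingular and we solve for all $W_y$ in polynomial time. The graphs $\mathbf{S_n}G$ have polynomial size, so the reduction is polynomial.

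The main obstacle is that distinct vectors $\mathbf{c},\mathbf{c}'$ can give the same $y_{\mathbf{c}}=y_{\mathbf{c}'}$ when the $\mu_k$ are multiplicatively dependent. Then the system determines only the grouped sums $W_y$, not the individual $X_{\mathbf{c}}$. This is why the target is $\mathcal{S}_M(\theta)$ rather than an arbitrary function of $M$. If $y_{\mathbf{c}}=y_{\mathbf{c}'}$, taking logarithms gives $\sum_k c_k\log\mu_k=\sum_k c'_k\log\mu_k$, and multiplying by $\theta$ and exponentiating gives $y_{\mathbf{c}}^\theta=y_{\mathbf{c}'}^\theta$. Hence
\[
Z_{\mathcal{S}_M(\theta)}(G)=\sum_y W_y\,y^\theta,
\]
which we compute directly from the recovered $W_y$. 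This works for every real $\theta$; the hypothesis $\theta\neq0$ is not needed in this argument.

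A remaining model-of-computation issue is that the $y^\theta$ are arbitrary reals. We handle this using the conventions of \autoref{appendix: modelComputation}, exactly as in \cref{lemma: simpleThickening}. In particular, the grouping of the $\mathbf{c}$ by equal $y_{\mathbf{c}}$ depends only on $M$, so it can be hardwired into the reduction as a finite amount of data.
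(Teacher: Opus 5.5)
Your proof is correct and follows essentially the same route as the paper: expand $Z_{M^n}(G)$ spectrally, use the stretching gadgets $\mathbf{S_n}$ to get a Vandermonde system whose unknowns are the coefficients grouped by distinct values of $\prod_k \mu_k^{c_k}$, solve it, and re-evaluate with exponent $\theta$. The differences are cosmetic. You group by distinct eigenvalues via spectral projections, which makes independence from the choice of $H$ explicit. You also evaluate $\sum_y W_y\, y^\theta$ directly instead of reindexing by $\nu=\mu^{\theta}$, so, as you correctly note and as the paper itself remarks after its proof, $\theta\neq 0$ is not needed.
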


%% file: images/stretching.tex
\begin{tikzpicture}[line join=miter, draw opacity=1]

    \node[circle, draw=black, fill=black] (A) at (-1, -1){};
    \node[circle, draw=black, fill=black] (B) at (-1, 1){};
    \node[circle, draw=black, fill=black] (C) at (1, 1){};
    \node[circle, draw=black, fill=black] (D) at (1, -1){};
    
    \draw[line width=0.5mm, black] (A) -- (B);
    
    \draw[line width=0.5mm, black] (B) -- (C);
    
    \draw[line width=0.5mm, black] (C) -- (D);
    
    \draw[line width=0.5mm, black] (D) -- (A);
    
    \draw[line width=0.5mm, black] (B) -- (D);

\begin{scope}[xshift = 4cm]

    \node[circle, draw=black, fill=white] (A) at (-1, 0){};
    \node[circle, draw=black, fill=white] (B) at (1, 0){};

    \node[draw=none] at (-0.75, -0.5) {$\ell_{1}$};
    \node[draw=none] at (0.75, -0.5) {$\ell_{2}$};
    
    \node[circle, draw=black, fill=black, inner sep=0pt, minimum size=8] (AB1) at (-0.5, 0){};
    \node[circle, draw=black, fill=black, inner sep=0pt, minimum size=8] (AB2) at (0, 0){};
    \node[circle, draw=black, fill=black, inner sep=0pt, minimum size=8] (AB3) at (0.5, 0){};
    
    \draw[line width=0.5mm, black] (A) -- (AB1);
    \draw[line width=0.5mm, black] (AB1) -- (AB2);
    \draw[line width=0.5mm, black] (AB2) -- (AB3);
    \draw[line width=0.5mm, black] (AB3) -- (B);
    
\end{scope}


    
    
    
    
    
    
    
    
    
    


\begin{scope}[xshift = 8cm]

    \node[circle, draw=black, fill=black] (A) at (-1, -1){};
    \node[circle, draw=black, fill=black] (B) at (-1, 1){};
    \node[circle, draw=black, fill=black] (C) at (1, 1){};
    \node[circle, draw=black, fill=black] (D) at (1, -1){};
    
    \node[circle, draw=black, fill=black, inner sep=0pt, minimum size=8] (AB1) at (-1, -0.5){};
    \node[circle, draw=black, fill=black, inner sep=0pt, minimum size=8] (AB2) at (-1, 0){};
    \node[circle, draw=black, fill=black, inner sep=0pt, minimum size=8] (AB3) at (-1, 0.5){};
    
    \node[circle, draw=black, fill=black, inner sep=0pt, minimum size=8] (BC1) at (-0.5, 1){};
    \node[circle, draw=black, fill=black, inner sep=0pt, minimum size=8] (BC2) at (0, 1){};
    \node[circle, draw=black, fill=black, inner sep=0pt, minimum size=8] (BC3) at (0.5, 1){};
    
    \node[circle, draw=black, fill=black, inner sep=0pt, minimum size=8] (CD1) at (1, 0.5){};
    \node[circle, draw=black, fill=black, inner sep=0pt, minimum size=8] (CD2) at (1, 0){};
    \node[circle, draw=black, fill=black, inner sep=0pt, minimum size=8] (CD3) at (1, -0.5){};
    
    \node[circle, draw=black, fill=black, inner sep=0pt, minimum size=8] (DA1) at (0.5, -1){};
    \node[circle, draw=black, fill=black, inner sep=0pt, minimum size=8] (DA2) at (0, -1){};
    \node[circle, draw=black, fill=black, inner sep=0pt, minimum size=8] (DA3) at (-0.5, -1){};
    
    \node[circle, draw=black, fill=black, inner sep=0pt, minimum size=8] (BD1) at (-0.5, 0.5){};
    \node[circle, draw=black, fill=black, inner sep=0pt, minimum size=8] (BD2) at (0, 0){};
    \node[circle, draw=black, fill=black, inner sep=0pt, minimum size=8] (BD3) at (0.5, -0.5){};

    \draw[line width=0.5mm, black] (A) -- (AB1);
    \draw[line width=0.5mm, black] (AB1) -- (AB2);
    \draw[line width=0.5mm, black] (AB2) -- (AB3);
    \draw[line width=0.5mm, black] (AB3) -- (B);
    
    \draw[line width=0.5mm, black] (B) -- (BC1);
    \draw[line width=0.5mm, black] (BC1) -- (BC2);
    \draw[line width=0.5mm, black] (BC2) -- (BC3);
    \draw[line width=0.5mm, black] (BC3) -- (C);
    
    \draw[line width=0.5mm, black] (C) -- (CD1);
    \draw[line width=0.5mm, black] (CD1) -- (CD2);
    \draw[line width=0.5mm, black] (CD2) -- (CD3);
    \draw[line width=0.5mm, black] (CD3) -- (D);
    
    \draw[line width=0.5mm, black] (D) -- (DA1);
    \draw[line width=0.5mm, black] (DA1) -- (DA2);
    \draw[line width=0.5mm, black] (DA2) -- (DA3);
    \draw[line width=0.5mm, black] (DA3) -- (A);
    
    \draw[line width=0.5mm, black] (B) -- (BD1);
    \draw[line width=0.5mm, black] (BD1) -- (BD2);
    \draw[line width=0.5mm, black] (BD2) -- (BD3);
    \draw[line width=0.5mm, black] (BD3) -- (D);

\end{scope}

\end{tikzpicture}

%% file: 2.4.Bridge_Gadgets.tex
\subsection{Bridge Gadgets}\label{sec: bridgingGadgets}

For an integer $n \geq 1$, the \emph{bridge gadget}
$\mathbf{B_{n}}$ consists of two vertices
$\ell_{1}, \ell_{2}$, that are connected
by a path of length $3$, with the middle
edge replaced with $n$ parallel edges.
From \cref{equation: signature}, we have that for all
$i, j \in [q]$,
$$\mathbf{B_{n}}(M)_{ij} = 
\sum_{\substack{\tau: V(\mathbf{B_{n}}) \rightarrow [q]\\
\tau(\ell_{1}) = i, \tau(\ell_{2}) = j}}
\prod_{(u, v) \in E(\mathbf{B_{n}})}M_{ij}
= \sum_{k_{1}, k_{2} \in [q]}
M_{ik_{1}}(M_{k_{1}k_{2}})^{n}M_{k_{2}j}
= (M \cdot \mathbf{T_{n}}(M) \cdot M)_{ij}.$$
We see that $\mathbf{B_{n}}(M) \in \PlEdge(M)$ for all
$M \in \Sym{q}{}(\mathbb{R})$, for all $n \geq 1$.
With gadget interpolation
(details in \autoref{appendix: bridgingAppendix}), we also have

\begin{figure}
	\centering
	\scalebox{0.8}{\input{images/bridging}}
	\caption{A graph $G$, the bridge gadget $\mathbf{B_{4}}$, and $\mathbf{B_{4}}G$.}
	\label{fig:bridging}
\end{figure}
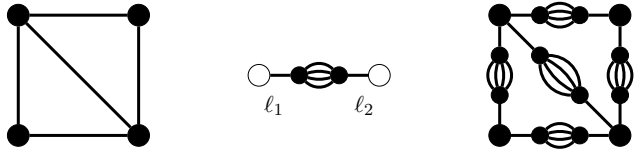


\begin{definition}\label{definition: mathcalB}
    Let $M \in \Sym{q}{}(\mathbb{R}_{\neq 0})$.
    We  define $\mathcal{B}_{M}: \mathbb{R} \rightarrow
    \Sym{q}{}(\mathbb{R})$, such that
    $$\mathcal{B}_{M}(\theta)_{ij} = \sum_{x, y \in [q]}
    M_{ix}(M_{xy})^{\theta}M_{jy},~~~~\mbox{for all $i, j \in [q]$.}$$
\end{definition}

\begin{restatable}{lemma}{bridgingLemma}\label{lemma: bridgingLemma}
    Let $M \in \Sym{q}{}(\mathbb{R}_{\neq 0})$.
    Then, $\PlGH(\mathcal{B}_{M}(\theta)) \leq \PlGH(M)$
    for all $\theta \in \mathbb{R}$.
\end{restatable}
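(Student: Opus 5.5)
We prove \cref{lemma: bridgingLemma} by interpolating over the bridge gadgets $\mathbf{B_{n}}$, in the same way as the thickening interpolation behind \cref{lemma: thickeningLemma}. The bridge signature is $\mathbf{B_{n}}(M) = M\,\mathbf{T_{n}}(M)\,M$. So for any planar $G=(V,E)$ with $m = |E|$,
$$Z_{M}(\mathbf{B_{n}}G) = Z_{\mathbf{B_{n}}(M)}(G) = \sum_{\sigma: V\to[q]}\prod_{(u,v)\in E}\sum_{x,y\in[q]} M_{\sigma(u)x}(M_{xy})^{n}M_{\sigma(v)y}.$$
Expanding the product, each edge $e=(u,v)$ picks a pair $(x_e,y_e)\in[q]^2$. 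The factor $(M_{x_ey_e})^n$ depends only on the value $M_{x_ey_e}$. Let $\{a_1,\dots,a_s\}$ be the set of distinct values $M_{xy}$, which are all nonzero. Then
$$Z_{M}(\mathbf{B_{n}}G) = \sum_{\mathbf{k}} c_{\mathbf{k}}\prod_{r=1}^{s} a_r^{\,n k_r}.$$
Here $\mathbf{k}=(k_1,\dots,k_s)$ ranges over compositions of $m$ into $s$ nonnegative parts, with $k_r$ counting the edges whose chosen middle pair carries value $a_r$. The coefficient $c_{\mathbf{k}}$ is the sum, over $\sigma$ and over all edge choices of pairs $(x_e,y_e)$ with this type, of $\prod_e M_{\sigma(u)x_e}M_{\sigma(v)y_e}$. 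The key point is that $c_{\mathbf{k}}$ does not depend on $n$. The same expansion with $n$ replaced by $\theta$ gives
$$Z_{\mathcal{B}_{M}(\theta)}(G) = \sum_{\mathbf{k}} c_{\mathbf{k}}\prod_r a_r^{\,\theta k_r}.$$
Thus it suffices to recover the $c_{\mathbf{k}}$, or suitable aggregates of them, from oracle values at $n=1,\dots,N$. There are only $\binom{m+s-1}{s-1}=\mathrm{poly}(m)$ compositions, since $s\le q^2$ is a constant.

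The main obstacle is that the bases $\beta_{\mathbf{k}}=\prod_r a_r^{k_r}$ need not be distinct for different $\mathbf{k}$, because of multiplicative relations among the $a_r$, and the values may be negative. I handle this with the generating-set machinery of \cref{definition: generatingSet} and \cref{lemma: generatingSet}. Write $a_r = (-1)^{e_{r0}}\prod_t g_t^{e_{rt}}$. Then
$$\beta_{\mathbf{k}} = (-1)^{\sum_r k_re_{r0}}\prod_t g_t^{\sum_r k_re_{rt}}.$$
By uniqueness of the representation, two types give the same base exactly when they share the sign parity and the exponent vector $\big(\sum_r k_re_{rt}\big)_t$. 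So I group the compositions into classes according to this data. Each class $\kappa$ has a single base $\beta_\kappa$ and aggregated coefficient $C_\kappa=\sum_{\mathbf{k}\in\kappa}c_{\mathbf{k}}$. Different classes have distinct nonzero bases, and there are polynomially many classes.

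Querying the oracle on $\mathbf{B_{n}}G$ for $n=1,\dots,N$, where $N$ is the number of classes, gives the system
$$Z_M(\mathbf{B_{n}}G)=\sum_\kappa C_\kappa\,\beta_\kappa^{\,n}.$$
Its matrix is a Vandermonde matrix scaled by the nonzero $\beta_\kappa$, so it is invertible, and we solve for every $C_\kappa$ in polynomial time. Each $\mathbf{B_{n}}G$ is planar and has size polynomial in $|G|$ and $n$.

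Finally, for real $\theta$, $\prod_r a_r^{\theta k_r}$ should be read as $\prod_r \mathrm{sgn}(a_r)^{k_r}|a_r|^{\theta k_r}$. This is exactly how one must interpret $(M_{xy})^\theta$ in \cref{definition: mathcalB} for negative entries, presumably as $\mathrm{sgn}(M_{xy})|M_{xy}|^{\theta}$ as in \cref{definition: mathcalTStar}. With that reading the quantity depends only on the class $\kappa$: it equals $\pm\prod_t g_t^{\theta\sum_r k_re_{rt}}$ with the class's sign. Hence
$$Z_{\mathcal{B}_M(\theta)}(G)=\sum_\kappa C_\kappa\,\beta_\kappa(\theta),$$
which we compute from the recovered $C_\kappa$. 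This gives $\PlGH(\mathcal{B}_M(\theta))\le\PlGH(M)$. The remaining technical points are the convention for negative entries and the model-of-computation issue of exact real arithmetic with the $g_t$. I would defer the latter to \autoref{appendix: modelComputation}.
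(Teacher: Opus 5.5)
Your proposal is correct and follows essentially the same route as the paper. You expand $Z_M(\mathbf{B_n}G)$ over the choices of middle pairs and group terms by the distinct value of the product of middle-edge weights; your generating-set classes coincide exactly with the paper's set $X(G)$ of distinct product values. You then recover the aggregated coefficients from the Vandermonde system for $n=1,\dots,|X(G)|$ and re-evaluate with exponent $\theta$, as the paper does.

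The one difference is that the paper's proof silently assumes positive entries so that $x^{\theta}$ is well defined. Your convention $\mathrm{sgn}(a)|a|^{\theta}$ handles the stated $\mathbb{R}_{\neq 0}$ case. Since $\mathrm{sgn}(\beta)|\beta|^{\theta}$ depends only on $\beta$, grouping by value alone would already suffice and the generating set is not needed.
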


%% file: images/bridging.tex
\begin{tikzpicture}[line join=miter, draw opacity=1]

    \node[circle, draw=black, fill=black] (A) at (-1, -1){};
    \node[circle, draw=black, fill=black] (B) at (-1, 1){};
    \node[circle, draw=black, fill=black] (C) at (1, 1){};
    \node[circle, draw=black, fill=black] (D) at (1, -1){};
    
    \draw[line width=0.5mm, black] (A) -- (B);
    
    \draw[line width=0.5mm, black] (B) -- (C);
    
    \draw[line width=0.5mm, black] (C) -- (D);
    
    \draw[line width=0.5mm, black] (D) -- (A);
    
    \draw[line width=0.5mm, black] (B) -- (D);

\begin{scope}[xshift = 4cm]

    \node[circle, draw=black, fill=white] (A) at (-1, 0){};
    \node[circle, draw=black, fill=white] (B) at (1, 0){};

    \node[draw=none] at (-0.75, -0.5) {$\ell_{1}$};
    \node[draw=none] at (0.75, -0.5) {$\ell_{2}$};
    
    \node[circle, draw=black, fill=black, inner sep=0pt, minimum size=8] (AB1) at (-0.33, 0){};
    \node[circle, draw=black, fill=black, inner sep=0pt, minimum size=8] (AB2) at (0.33, 0){};
    
    \draw[line width=0.5mm, black] (A) -- (AB1);
    \foreach \angle in {15, 45} {
      \draw[line width=0.5mm] (AB1) to[bend left=\angle] (AB2);
      \draw[line width=0.5mm] (AB1) to[bend right=\angle] (AB2);
    }
    \draw[line width=0.5mm, black] (AB2) -- (B);
    
\end{scope}

\begin{scope}[xshift = 8cm]

    \node[circle, draw=black, fill=black] (A) at (-1, -1){};
    \node[circle, draw=black, fill=black] (B) at (-1, 1){};
    \node[circle, draw=black, fill=black] (C) at (1, 1){};
    \node[circle, draw=black, fill=black] (D) at (1, -1){};
    
    \node[circle, draw=black, fill=black, inner sep=0pt, minimum size=8] (AB1) at (-1, -0.33){};
    \node[circle, draw=black, fill=black, inner sep=0pt, minimum size=8] (AB2) at (-1, 0.33){};
    
    \node[circle, draw=black, fill=black, inner sep=0pt, minimum size=8] (BC1) at (-0.33, 1){};
    \node[circle, draw=black, fill=black, inner sep=0pt, minimum size=8] (BC2) at (0.33, 1){};
    
    \node[circle, draw=black, fill=black, inner sep=0pt, minimum size=8] (CD1) at (1, 0.33){};
    \node[circle, draw=black, fill=black, inner sep=0pt, minimum size=8] (CD2) at (1, -0.33){};
    
    \node[circle, draw=black, fill=black, inner sep=0pt, minimum size=8] (DA1) at (0.33, -1){};
    \node[circle, draw=black, fill=black, inner sep=0pt, minimum size=8] (DA2) at (-0.33, -1){};
    
    \node[circle, draw=black, fill=black, inner sep=0pt, minimum size=8] (BD1) at (-0.33, 0.33){};
    \node[circle, draw=black, fill=black, inner sep=0pt, minimum size=8] (BD2) at (0.33, -0.33){};

    \draw[line width=0.5mm, black] (A) -- (AB1);
    \foreach \angle in {15, 45} {
      \draw[line width=0.5mm] (AB1) to[bend left=\angle] (AB2);
      \draw[line width=0.5mm] (AB1) to[bend right=\angle] (AB2);
    }
    \draw[line width=0.5mm, black] (AB2) -- (B);
    
    \draw[line width=0.5mm, black] (B) -- (BC1);
    \foreach \angle in {15, 45} {
      \draw[line width=0.5mm] (BC1) to[bend left=\angle] (BC2);
      \draw[line width=0.5mm] (BC1) to[bend right=\angle] (BC2);
    }
    \draw[line width=0.5mm, black] (BC2) -- (C);
    
    \draw[line width=0.5mm, black] (C) -- (CD1);
    \foreach \angle in {15, 45} {
      \draw[line width=0.5mm] (CD1) to[bend left=\angle] (CD2);
      \draw[line width=0.5mm] (CD1) to[bend right=\angle] (CD2);
    }
    \draw[line width=0.5mm, black] (CD2) -- (D);
    
    \draw[line width=0.5mm, black] (D) -- (DA1);
    \foreach \angle in {15, 45} {
      \draw[line width=0.5mm] (DA1) to[bend left=\angle] (DA2);
      \draw[line width=0.5mm] (DA1) to[bend right=\angle] (DA2);
    }
    \draw[line width=0.5mm, black] (DA2) -- (A);
    
    \draw[line width=0.5mm, black] (B) -- (BD1);
    \foreach \angle in {15, 45} {
      \draw[line width=0.5mm] (BD1) to[bend left=\angle] (BD2);
      \draw[line width=0.5mm] (BD1) to[bend right=\angle] (BD2);
    }
    \draw[line width=0.5mm, black] (BD2) -- (D);

\end{scope}

\end{tikzpicture}

%% file: 2.5.Loop_Gadgets.tex
\subsection{Loop Gadgets}\label{sec: loopGadgets}

For an integer $n \geq 1$, the \emph{loop gadget}
$\mathbf{L_{n}}$ consists of two vertices
$\ell_{1}, \ell_{2}$, that are connected
by a single edge, with $n$ self loops on both
$\ell_{1}$ and $\ell_{2}$.
By \cref{equation: signature}, for all
$i, j \in [q]$,
$$\mathbf{L_{n}}(M)_{ij} = 
\sum_{\substack{\tau: V(\mathbf{L_{n}}) \rightarrow [q]\\
\tau(\ell_{1}) = i, \tau(\ell_{2}) = j}}
\prod_{(u, v) \in E(\mathbf{L_{n}})}M_{ij}
= (M_{ii})^{n} \cdot M_{ij} \cdot (M_{jj})^{n}.$$
We see that $\mathbf{L_{n}}(M) \in \PlEdge(M)$ for all
$M \in \Sym{q}{}(\mathbb{R})$, for all $n \geq 0$.
Gadget interpolation now lets us prove the following
(details in \autoref{appendix: loopAppendix}).

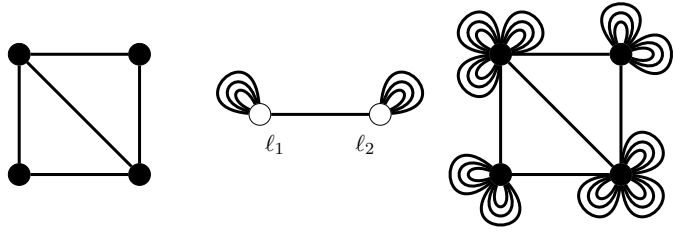
\begin{figure}
	\centering
	\scalebox{0.8}{\input{images/loop}}
	\caption{A graph $G$, the loop gadget $\mathbf{L_{3}}$, and 
    $\mathbf{L_{3}}G$.}
	\label{fig:loop}
\end{figure}

\begin{definition}\label{definition: mathcalL}
    Let $M \in \Sym{q}{}(\mathbb{R}_{> 0})$.
    We define $\mathcal{L}_{M}: \mathbb{R} \rightarrow 
    \Sym{q}{}(\mathbb{R})$:
    $$\mathcal{L}_{M}(\theta) = \mathcal{D}_{M}(\theta)
    \cdot M \cdot \mathcal{D}_{M}(\theta),$$
    where $\mathcal{D}_{M}(\theta)$ is a diagonal matrix with
    $\mathcal{D}_{M}(\theta)_{ii} = (M_{ii})^{\theta}.$
\end{definition}

\begin{restatable}{lemma}{loopLemmaSimple}\label{lemma: loopLemmaSimple}
    Let $M \in \Sym{q}{}(\mathbb{R}_{> 0})$.
    Then $\PlGH(\mathcal{L}_{M}(\theta)) \leq \PlGH(M)$
    for all $\theta \in \mathbb{R}$.
\end{restatable}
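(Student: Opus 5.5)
The plan is the standard gadget-interpolation argument, run on the loop gadgets $\mathbf{L_{n}}$. Fix a planar input $G=(V,E)$ with $m=|E|$ and a target $\theta\in\mathbb{R}$. For each $n\ge 0$ we have $\mathbf{L_{n}}(M)\in\PlEdge(M)$, so the oracle for $\PlGH(M)$ gives us $Z_{M}(\mathbf{L_{n}}G)=Z_{\mathbf{L_{n}}(M)}(G)$. Expanding this,
$$Z_{\mathbf{L_{n}}(M)}(G)=\sum_{\sigma:V\to[q]}\Big(\prod_{(u,v)\in E}M_{\sigma(u)\sigma(v)}\Big)\prod_{w\in V}(M_{\sigma(w)\sigma(w)})^{n\deg(w)}.$$
Each $\sigma$ therefore contributes a weight $w(\sigma)=\prod_{E}M_{\sigma(u)\sigma(v)}$ times $\lambda(\sigma)^{n}$, where $\lambda(\sigma)=\prod_{w\in V}(M_{\sigma(w)\sigma(w)})^{\deg(w)}>0$. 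The target quantity is $Z_{\mathcal{L}_{M}(\theta)}(G)=\sum_{\sigma}w(\sigma)\lambda(\sigma)^{\theta}$, because the entry $\mathcal{D}_M(\theta)_{ii}M_{ij}\mathcal{D}_M(\theta)_{jj}$ contributes one factor $(M_{ii})^{\theta}$ for each edge-endpoint incidence.

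To interpolate, we group the assignments $\sigma$ by their value of $\lambda(\sigma)$. A priori there could be exponentially many distinct values. To avoid this, take a generating set $\{g_t\}_{t\in[d]}$ of the diagonal entries $\{M_{ii}\}$, which exists by \cref{lemma: generatingSet}; there are no signs since all entries are positive. Write $M_{ii}=\prod_t g_t^{a_{it}}$. Then $\lambda(\sigma)=\prod_t g_t^{\sum_w \deg(w)a_{\sigma(w)t}}$, and each exponent is an integer of absolute value at most $2m\max|a_{it}|$. Since $d$ and $q$ are constants, the number of distinct values $\lambda_1,\dots,\lambda_N$ is $N=O(m^{d})$, which is polynomial. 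Let $c_k=\sum_{\sigma:\lambda(\sigma)=\lambda_k}w(\sigma)$. Then $Z_{\mathbf{L_{n}}(M)}(G)=\sum_{k}c_k\lambda_k^{n}$ for $n=0,1,\dots,N-1$. This is a Vandermonde system in the unknowns $c_k$, and its nodes $\lambda_k$ are distinct positive reals, so it is invertible. The nodes themselves can be enumerated explicitly from the exponent vectors, without knowing which ones actually occur. Solving the system gives every $c_k$, and we output $\sum_k c_k\lambda_k^{\theta}$.

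The main obstacle is the model of computation. We are computing with arbitrary real numbers, namely the quantities $\lambda_k^{\theta}$ and the solution of a Vandermonde system over $\mathbb{R}$. I would defer this to the conventions of \autoref{appendix: modelComputation} and the parallel treatment of \cref{lemma: simpleThickening}. Concretely, $\lambda_k^{\theta}=\prod_t (g_t^{\theta})^{e_{kt}}$, so the algorithm only needs the constants $g_t^{\theta}$ and $g_t$ as fixed field elements adjoined to the computation. Uniqueness of the exponent representation in \cref{definition: generatingSet} is what guarantees that distinct exponent vectors give distinct $\lambda_k$, which is exactly what makes the Vandermonde system nonsingular.
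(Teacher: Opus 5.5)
Your proof is correct and is essentially the paper's argument: Vandermonde interpolation with $\mathbf{L_{n}}$, grouping assignments by the value $\prod_{w \in V}(M_{\sigma(w)\sigma(w)})^{\deg(w)}$, then re-summing with the $\theta$-th powers of those values. The paper instead derives it as a corollary of the multivariable version (\cref{lemma: loopLemma}, via \cref{lemma: loopLemmaGeneral}) evaluated at $\mathbf{z} = (g_{1}^{\theta}, \dots, g_{d}^{\theta})$, together with the scaling \cref{lemma: MequivalentCM}; you go directly and use the generating set only to bound and enumerate the nodes, which the bound $\binom{2m+q-1}{q-1}$ on the number of products $\prod_{i}(M_{ii})^{k_{i}}$ would also give.
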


As with the thickening gadgets,
generating sets let us interpolate a larger class
of matrices here.

\begin{definition}\label{definition: mathcalLStar}
    Let $M \in \Sym{q}{}(\mathbb{R}_{> 0})$.
    Let $\{g_t\}_{t \in [d]}$ be a generating set for
    $\{M_{ii}: 1 \le i \le q\}$ with
    $M_{ii} = g_{1}^{e_{ii1}} \cdots g_{d}^{e_{iid}}$,
    and  $e_{t}^{*} = \min_{i \in [q]}e_{iit}$
    for all $t \in [d]$.
    We define $\mathcal{L}^{*}_{M}: \mathbb{R}^{d} \rightarrow 
    \Sym{q}{}(\mathbb{R})$:
    $$\mathcal{L}^{*}_{M}(\mathbf{z}) = \mathcal{D}^{*}_{M}(\mathbf{z})
    \cdot M \cdot \mathcal{D}^{*}_{M}(\mathbf{z}),$$
    where $\mathcal{D}^{*}_{M}(\mathbf{z})$ is a diagonal matrix with
    $\mathcal{D}^{*}_{M}(\mathbf{z})_{ii} = z_{1}^{e_{ii1} - e_{1}^{*}} \cdots z_{d}^{e_{iid} - e_{d}^{*}}$.
\end{definition}

\begin{restatable}{lemma}{loopLemma}\label{lemma: loopLemma}
    Let $M \in \Sym{q}{}(\mathbb{R}_{> 0})$ with
    its diagonal entries generated by some $\{g_t\}_{t \in [d]}$.
    Then $\PlGH(\mathcal{L}^{*}_{M}(\mathbf{z})) \leq \PlGH(M)$
    for all $\mathbf{z} \in \mathbb{R}^{d}$.
\end{restatable}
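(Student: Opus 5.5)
We prove \cref{lemma: loopLemma} by polynomial interpolation over the loop gadgets $\mathbf{L_{n}}$. Fix a planar input $G = (V,E)$. For each vertex $v$ let $\deg(v)$ be its degree, and for each $\sigma: V \to [q]$ let $k_{t}(\sigma) = \sum_{v} \deg(v)\,(e_{\sigma(v)\sigma(v)t} - e_t^*) \in \mathbb{Z}_{\ge 0}$. Expanding $\mathbf{L_{n}}(M)_{ij} = (M_{ii})^n M_{ij} (M_{jj})^n$ over the edges of $G$, every vertex $v$ picks up the factor $(M_{\sigma(v)\sigma(v)})^{n \deg(v)}$. Substituting $M_{ii} = g_1^{e_{ii1}}\cdots g_d^{e_{iid}}$ and factoring out the constant $\prod_t g_t^{n e_t^* \cdot 2|E|}$ gives
$$Z_{\mathbf{L_n}(M)}(G) = \Big(\prod_{t} g_t^{2|E| e_t^*}\Big)^{n} \sum_{\mathbf{k}} c_{\mathbf{k}} \prod_{t=1}^{d} (g_t^{k_t})^{n},$$
where $c_{\mathbf{k}} = \sum_{\sigma: \mathbf{k}(\sigma)=\mathbf{k}} \prod_{(u,v)\in E} M_{\sigma(u)\sigma(v)}$. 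The target value is
$$Z_{\mathcal{L}^*_M(\mathbf{z})}(G) = \sum_{\mathbf{k}} c_{\mathbf{k}} \prod_t z_t^{k_t},$$
since $\mathcal{D}^*_M(\mathbf{z})$ contributes exactly $\prod_t z_t^{k_t(\sigma)}$ per assignment. So it suffices to recover all coefficients $c_{\mathbf{k}}$ from oracle values.

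The vectors $\mathbf{k}$ range over $\{0,\dots,2|E|\max_{i,t}(e_{iit}-e_t^*)\}^d$. Since $d$ and the exponents depend only on $M$, this is polynomially many in $|G|$, say $N$ of them. The key point is the \emph{uniqueness} clause of the generating set. Distinct $\mathbf{k} \ne \mathbf{k}'$ give distinct positive reals $\beta_{\mathbf{k}} = \prod_t g_t^{k_t}$. Otherwise $\prod_t g_t^{k_t - k'_t} = 1 = \prod_t g_t^0$ would give two representations of $1$ (note $1 = M_{ii}/M_{ii}$ need not lie in $\mathcal{A}$). So one must check that uniqueness extends to the multiplicative group. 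I would handle this by invoking the construction in \cref{lemma: generatingSet}, which I expect yields multiplicatively independent $g_t$. Failing that, I would argue directly: the uniqueness statement for $a$ and $a\cdot\prod_t g_t^{m_t}$ forces $m = 0$ whenever both lie in $\mathcal{A}$. The cleanest route is to assume, as in the appendix, that the $g_t$ are multiplicatively independent. This distinctness is the main (really the only) subtle step.

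With the $\beta_{\mathbf{k}}$ distinct, query the oracle for $\PlGH(M)$ on $\mathbf{L_n}G$ for $n = 1,\dots,N$. Each $\mathbf{L_n}G$ is planar and of polynomial size, and $Z_M(\mathbf{L_n}G)=Z_{\mathbf{L_n}(M)}(G)$. Divide out the known prefactor to obtain $\sum_{\mathbf{k}} c_{\mathbf{k}} \beta_{\mathbf{k}}^{n}$ for $n=1,\dots,N$. This is a Vandermonde system in the $c_{\mathbf{k}}$ with distinct nonzero nodes $\beta_{\mathbf{k}}$, hence nonsingular, and solving it recovers every $c_{\mathbf{k}}$. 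Finally evaluate $\sum_{\mathbf{k}} c_{\mathbf{k}}\prod_t z_t^{k_t}$ at the given $\mathbf{z}$. The arithmetic is polynomial-time in the real-number model of \autoref{appendix: modelComputation}, which completes the reduction. Taking $d=1$-style specializations also recovers \cref{lemma: loopLemmaSimple}.
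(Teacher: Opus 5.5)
Your proposal is correct and is essentially the paper's argument: interpolate over the loop gadgets $\mathbf{L_n}$ with a Vandermonde system whose nodes are distinct monomials in the $g_t$, then re-evaluate with $z_t$ in place of $g_t$. The paper packages this as \cref{lemma: loopLemmaGeneral} applied to $cA$ with $c=\prod_t g_t^{-e_t^*}$, together with \cref{lemma: MequivalentCM}, whereas you divide out the prefactor $(\prod_t g_t^{2|E|e_t^*})^n$ inline and index the unknowns by exponent vectors rather than by distinct values.

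The step you hedged on needs no extra assumption. If $\prod_t g_t^{m_t}=1$ with $\mathbf{m}\neq\mathbf{0}$, then any single $a\in\mathcal{A}$ would have the two distinct exponent vectors $\mathbf{e}$ and $\mathbf{e}+\mathbf{m}$. So the uniqueness clause of \cref{definition: generatingSet} already forces the $g_t$ to be multiplicatively independent, and the paper relies on the same fact when it asserts that each $x\in X(G)$ has a unique representation.
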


%% file: images/loop.tex
\begin{tikzpicture}[line join=miter, draw opacity=1]

    \node[circle, draw=black, fill=black] (A) at (-1, -1){};
    \node[circle, draw=black, fill=black] (B) at (-1, 1){};
    \node[circle, draw=black, fill=black] (C) at (1, 1){};
    \node[circle, draw=black, fill=black] (D) at (1, -1){};
    
    \draw[line width=0.5mm, black] (A) -- (B);
    
    \draw[line width=0.5mm, black] (B) -- (C);
    
    \draw[line width=0.5mm, black] (C) -- (D);
    
    \draw[line width=0.5mm, black] (D) -- (A);
    
    \draw[line width=0.5mm, black] (B) -- (D);

\begin{scope}[xshift = 4cm]

    \node[circle, draw=black, fill=white] (A) at (-1, 0){};
    \node[circle, draw=black, fill=white] (B) at (1, 0){};

    \node[draw=none] at (-0.75, -0.5) {$\ell_{1}$};
    \node[draw=none] at (0.75, -0.5) {$\ell_{2}$};

    \draw[line width=0.5mm, black] (A) -- (B);

    \draw[line width=0.5mm] (A) .. controls +(115:7mm) and +(155:7mm) .. (A);
    \draw[line width=0.5mm] (A) .. controls +(105:10mm) and +(165:10mm) .. (A);
    \draw[line width=0.5mm] (A) .. controls +(95:14mm) and +(175:14mm) .. (A);

    \draw[line width=0.5mm] (B) .. controls +(25:7mm) and +(65:7mm) .. (B);
    \draw[line width=0.5mm] (B) .. controls +(15:10mm) and +(75:10mm) .. (B);
    \draw[line width=0.5mm] (B) .. controls +(5:14mm) and +(85:14mm) .. (B);

\end{scope}

\begin{scope}[xshift = 8cm]

    \node[circle, draw=black, fill=black] (A) at (-1, -1){};
    \node[circle, draw=black, fill=black] (B) at (-1, 1){};
    \node[circle, draw=black, fill=black] (C) at (1, 1){};
    \node[circle, draw=black, fill=black] (D) at (1, -1){};
    
    \draw[line width=0.5mm, black] (A) -- (B);
    
    \draw[line width=0.5mm, black] (B) -- (C);
    
    \draw[line width=0.5mm, black] (C) -- (D);
    
    \draw[line width=0.5mm, black] (D) -- (A);
    
    \draw[line width=0.5mm, black] (B) -- (D);

    \draw[line width=0.5mm] (A) .. controls +(160:7mm) and +(200:7mm) .. (A);
    \draw[line width=0.5mm] (A) .. controls +(150:10mm) and +(210:10mm) .. (A);
    \draw[line width=0.5mm] (A) .. controls +(140:14mm) and +(220:14mm) .. (A);
    
    \draw[line width=0.5mm] (A) .. controls +(250:7mm) and +(290:7mm) .. (A);
    \draw[line width=0.5mm] (A) .. controls +(240:10mm) and +(300:10mm) .. (A);
    \draw[line width=0.5mm] (A) .. controls +(230:14mm) and +(310:14mm) .. (A);
    
    \draw[line width=0.5mm] (B) .. controls +(205:7mm) and +(245:7mm) .. (B);
    \draw[line width=0.5mm] (B) .. controls +(195:10mm) and +(255:10mm) .. (B);
    \draw[line width=0.5mm] (B) .. controls +(185:14mm) and +(265:14mm) .. (B);
    
    \draw[line width=0.5mm] (B) .. controls +(115:7mm) and +(155:7mm) .. (B);
    \draw[line width=0.5mm] (B) .. controls +(105:10mm) and +(165:10mm) .. (B);
    \draw[line width=0.5mm] (B) .. controls +(95:14mm) and +(175:14mm) .. (B);
    
    \draw[line width=0.5mm] (B) .. controls +(25:7mm) and +(65:7mm) .. (B);
    \draw[line width=0.5mm] (B) .. controls +(15:10mm) and +(75:10mm) .. (B);
    \draw[line width=0.5mm] (B) .. controls +(5:14mm) and +(85:14mm) .. (B);
    
    \draw[line width=0.5mm] (C) .. controls +(70:7mm) and +(110:7mm) .. (C);
    \draw[line width=0.5mm] (C) .. controls +(60:10mm) and +(120:10mm) .. (C);
    \draw[line width=0.5mm] (C) .. controls +(50:14mm) and +(130:14mm) .. (C);
    
    \draw[line width=0.5mm] (C) .. controls +(-20:7mm) and +(20:7mm) .. (C);
    \draw[line width=0.5mm] (C) .. controls +(-30:10mm) and +(30:10mm) .. (C);
    \draw[line width=0.5mm] (C) .. controls +(-40:14mm) and +(40:14mm) .. (C);
    
    \draw[line width=0.5mm] (D) .. controls +(205:7mm) and +(245:7mm) .. (D);
    \draw[line width=0.5mm] (D) .. controls +(195:10mm) and +(255:10mm) .. (D);
    \draw[line width=0.5mm] (D) .. controls +(185:14mm) and +(265:14mm) .. (D);
    
    \draw[line width=0.5mm] (D) .. controls +(295:7mm) and +(335:7mm) .. (D);
    \draw[line width=0.5mm] (D) .. controls +(285:10mm) and +(345:10mm) .. (D);
    \draw[line width=0.5mm] (D) .. controls +(275:14mm) and +(355:14mm) .. (D);
    
    \draw[line width=0.5mm] (D) .. controls +(25:7mm) and +(65:7mm) .. (D);
    \draw[line width=0.5mm] (D) .. controls +(15:10mm) and +(75:10mm) .. (D);
    \draw[line width=0.5mm] (D) .. controls +(5:14mm) and +(85:14mm) .. (D);

\end{scope}

\end{tikzpicture}

%% file: 3.Magnitude_Distinct_Hardness.tex
\section{Hardness of magnitude-distinct matrices}\label{sec: allDistinctHardness}

In this section, we  demonstrate the power of
the gadget interpolation techniques
to prove that for any \emph{magnitude-distinct} matrix 
$M \in \Sym{q}{pd}(\mathbb{R}_{> 0})$
(\cref{definition: magnitudeDistinctMatrix}),
$\PlGH(M)$ is $\#$P-hard.

\begin{definition}\label{definition: magnitudeDistinctMatrix}
	We call $M \in \Sym{q}{}(\mathbb{R})$ \emph{magnitude-distinct}
	if $|M_{ij}| = |M_{i'j'}|$ for $i \leq j$ and $i' \leq j'$
	implies that $(i, j) = (i', j')$.
	Equivalently, $M$ is magnitude-distinct 
	if $\phi_{\rm{mag}}(M) \neq 0$, where
	\begin{equation}
		\label{defn:phi-function-for-magnitude-distinct}
		\phi_{\rm{mag}}(M) = \prod_{i \leq j}
		\prod_{\substack{i' \leq j':\\ (i', j') \neq (i, j)}}
		\left((M_{ij})^{2} - (M_{i'j'})^{2}\right)
	\end{equation}
\end{definition}

\begin{definition}\label{definition: UAnalytic}
	Given an open set 
	$U \subseteq \Sym{q}{}(\mathbb{R})$,
	a function $\rho: U
	\rightarrow \mathbb{R}$ is called
	a $U$-analytic function if $\rho$
	is a real analytic function in the
	entries of the matrix.
	A function $\rho: \Sym{q}{}(\mathbb{R}) \rightarrow \mathbb{R}$
	is called a $\Sym{q}{}(\mathbb{R})$-polynomial,
	if $\rho$ is a polynomial on the entries of the matrix.
\end{definition}

\begin{remark*}
	Trivially, a $\Sym{q}{}(\mathbb{R})$-polynomial
	is also a $\Sym{q}{}(\mathbb{R})$-analytic function.
\end{remark*}

\begin{lemma}\label{lemma: analyticGadgets}
	Let $U \subseteq \Sym{q}{}(\mathbb{R})$ be
	an open set.
	Let $\mathcal{K}: Z \rightarrow U$,
	for some open, connected $Z \subseteq \mathbb{R}^{d}$,
	such that
	$\mathcal{K}(\mathbf{z})_{ij}$ is an analytic
	function of $\mathbf{z}$ for all $i, j \in [q]$.
	Let $\mathcal{F}$ be a countable set of
	$U$-analytic functions, such that
	for each $\rho \in \mathcal{F}$, there exists some
	$\mathbf{z}_{\rho} \in Z$ such that 
	$\rho(\mathcal{K}(\mathbf{z}_{\rho})) \neq 0$.
	Then, given any open subset
	$\mathcal{O} \subseteq Z$,
	there exists some $\mathbf{z}^{*} \in \mathcal{O}$,
	such that $\rho(\mathcal{K}(\mathbf{z}^{*})) \neq 0$ 
	for all $\rho \in \mathcal{F}$.
\end{lemma}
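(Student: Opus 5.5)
The plan is to use the identity theorem for real analytic functions together with a Baire category (or measure) argument. For each $\rho \in \mathcal{F}$, consider the composition $f_\rho := \rho \circ \mathcal{K} : Z \to \mathbb{R}$. Since each entry $\mathcal{K}(\mathbf{z})_{ij}$ is analytic in $\mathbf{z}$ and $\rho$ is real analytic on the open set $U \supseteq \mathcal{K}(Z)$, the composition $f_\rho$ is a real analytic function on the open connected set $Z \subseteq \mathbb{R}^d$. By hypothesis $f_\rho(\mathbf{z}_\rho) \neq 0$, so $f_\rho$ is not identically zero on $Z$.

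The key step is to show that the zero set $N_\rho := \{\mathbf{z} \in Z : f_\rho(\mathbf{z}) = 0\}$ is a closed (relative to $Z$) set with empty interior; in fact it has Lebesgue measure zero. Suppose first that $N_\rho$ contained a nonempty open ball. Then all partial derivatives of $f_\rho$ vanish at the center of the ball. The set of points of $Z$ at which all derivatives of $f_\rho$ vanish is closed in $Z$ by continuity of derivatives. It is also open, because at such a point the Taylor series, which converges to $f_\rho$ on a neighborhood, is identically zero. By connectedness of $Z$ this set would be all of $Z$, contradicting $f_\rho(\mathbf{z}_\rho) \neq 0$. Hence $N_\rho$ is nowhere dense in $Z$. (Alternatively, one can cite the standard fact that the zero set of a nonzero real analytic function on a connected domain has measure zero, proved by induction on $d$ via Fubini.)

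Now fix the open set $\mathcal{O} \subseteq Z$, which we may assume nonempty. Since $\mathcal{F}$ is countable, $\bigcup_{\rho \in \mathcal{F}} N_\rho$ is a countable union of closed nowhere dense subsets of $Z$. The set $\mathcal{O}$ is a nonempty open subset of $\mathbb{R}^d$, hence a Baire space, and each $N_\rho \cap \mathcal{O}$ is nowhere dense in $\mathcal{O}$. By the Baire category theorem, $\mathcal{O} \setminus \bigcup_\rho N_\rho$ is nonempty (indeed dense in $\mathcal{O}$). Any $\mathbf{z}^*$ in this set satisfies $\rho(\mathcal{K}(\mathbf{z}^*)) \neq 0$ for all $\rho \in \mathcal{F}$. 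Using the measure-zero version instead, a countable union of null sets is null while $\mathcal{O}$ has positive measure, which gives the same conclusion.

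The only real obstacle is the nowhere-density of $N_\rho$, which rests on the identity theorem in several variables and needs the connectedness of $Z$. We also need analyticity of the composition: the composition of real analytic maps is real analytic, and this is standard. Everything else is routine.
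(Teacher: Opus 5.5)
Your proof is correct and follows the paper's argument: $\rho\circ\mathcal{K}$ is analytic and not identically zero on the connected set $Z$, so its zero set is small, and countability of $\mathcal{F}$ then leaves room for some $\mathbf{z}^{*}\in\mathcal{O}$. The paper uses the measure-zero property of such zero sets (citing Mityagin) together with countable additivity of Lebesgue measure, whereas your main line proves nowhere-density directly from the identity theorem and then applies Baire category, which is slightly more self-contained; both arguments tacitly need $\mathcal{O}\neq\emptyset$.
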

\begin{proof}
	For each $\rho \in \mathcal{F}$,
	$\rho(A)$ is analytic in the entries of $A$
	as $A$ varies over $U$.
	The entries of $\mathcal{K}(\mathbf{z})$ are
	analytic in $\mathbf{z} \in Z$, 
	and $\mathcal{K}(\mathbf{z})$ takes values in $U$.
	It follows that each
	$\rho(\mathcal{K}(\mathbf{z}))$ is analytic
	as a function of $\mathbf{z} \in Z$.
	Moreover, 
	from our assumption,
	$\rho(\mathcal{K}(\mathbf{z}))$ is not the zero function
	for any $\rho \in \mathcal{F}$.
	It follows that the zero set $\emptyset_{\rho} = 
	\{\mathbf{z}: \rho(\mathcal{K}(\mathbf{z})) = 0\}$
	has measure 0 (see \cite{mityagin2015zero}).
	Since a countable union of measure 0 sets has
	measure 0, and the measure of the open
	set $\mathcal{O}$ is positive, there exists some
	$\mathbf{z}^{*} \in (\mathcal{O} \setminus 
	\cup_{\rho}\emptyset_{\rho})$,
	which satisfies our requirements.
\end{proof}

\begin{remark*}
	$\mathcal{T}_{M}$,
	$\mathcal{T}^{*}_{M}$, $\widehat{\mathcal{T}}_{M}$, 
	$\mathcal{S}_{M}$, $\mathcal{B}_{M}$, 
	$\mathcal{L}_{M}$,
	and $\mathcal{L}^{*}_{M}$ all
	satisfy the conditions of $\mathcal{K}$ in
	\cref{lemma: analyticGadgets}.
\end{remark*}

We use \cref{lemma: analyticGadgets} as follows.
Assume  we can \emph{encode} a matrix
property as $\rho(M) \neq 0$ 
for some $\Sym{q}{}(\mathbb{R})$-analytic function $\rho$, 
and suppose
we already have  some $M \in \Sym{q}{}(\mathbb{R})$
with $\rho(M) \neq 0$.
Then, when we use  gadget interpolation functions
to find an 
$N \in \Sym{q}{}(\mathbb{R})$ 
to achieve some goals, we can \emph{inherit} this property
$\rho(N) \neq 0$.
This lets us satisfy \emph{countably}
many such matrix properties.

\begin{remark*}
	For example, one such matrix property is that
	$A \in \Sym{q}{}(\mathbb{R}_{\neq 0})$, which we can
	encode with the function $\rho: A \mapsto 
	\prod_{i \leq j} A_{ij}$.
	Another property is $A \in \Sym{q}{F}(\mathbb{R})$,
	which we can encode with the determinant function $\det$.
\end{remark*}

\begin{definition}\label{definition: IClose}
	We call $M \in \Sym{q}{}(\mathbb{R})$ \emph{I-close}
	if $|M_{ij} - I_{ij}| < \nicefrac{1}{3}$ 
	for all $i, j \in [q]$, where $I \in 
	\Sym{q}{}(\mathbb{R})$ is the
	identity matrix.
\end{definition}

Being I-close is \emph{not} a property
we can encode with $\Sym{q}{}(\mathbb{R})$-analytic
functions.
But, we can use the stretching gadget function
$\mathcal{S}_{M}$ to obtain $N$ \emph{close} to $I$,
that inherits relevant matrix properties. 

\begin{lemma}\label{lemma: I-closeStretching}
	Let $M \in \Sym{q}{pd}(\mathbb{R}_{> 0})$.
	Let $\mathcal{F}$ be some countable family of
	$\Sym{q}{}(\mathbb{R})$-polynomials, such that
	$\rho(M) \neq 0$ for all $\rho \in \mathcal{F}$.
	Then, there exists some
	I-close $N \in \Sym{q}{pd}(\mathbb{R}_{> 0})$
	such that $\rho(N) \neq 0$ for all
	$\rho \in \mathcal{F}$, and
	$\PlGH(N) \leq \PlGH(M)$.
\end{lemma}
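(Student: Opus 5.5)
We will use the stretching family $\mathcal{S}_{M}(\theta) = HD^{\theta}H^{\tt{T}}$ from \cref{definition: mathcalS}, take $\theta$ to be a small positive real, and apply \cref{lemma: stretchingLemma} and \cref{lemma: analyticGadgets}. The key observation is that $\mathcal{S}_{M}(0) = HH^{\tt{T}} = I$. Each entry $\mathcal{S}_{M}(\theta)_{ij} = \sum_{k} H_{ik}H_{jk}\lambda_{k}^{\theta}$ is a finite sum of exponentials $e^{\theta \log \lambda_{k}}$, so it is real analytic, in particular continuous, in $\theta \in \mathbb{R}$. Hence there is some $\epsilon > 0$ such that for all $\theta \in (-\epsilon, \epsilon)$:
\begin{itemize}
\item $|\mathcal{S}_{M}(\theta)_{ij} - I_{ij}| < \nicefrac{1}{3}$ for all $i, j$;
\item in particular, the diagonal entries are close to $1$.
\end{itemize}
Moreover, $\mathcal{S}_{M}(\theta)$ is positive definite for every real $\theta$, since its eigenvalues are $\lambda_{k}^{\theta} > 0$.

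Next we handle the polynomial conditions. Set $\mathcal{K} = \mathcal{S}_{M}$ on $Z = \mathbb{R}$, which is open and connected, with $U = \Sym{q}{}(\mathbb{R})$. Each $\rho \in \mathcal{F}$ is a polynomial, hence $U$-analytic. At $\theta = 1$ we have $\mathcal{S}_{M}(1) = M$, so $\rho(\mathcal{K}(1)) = \rho(M) \neq 0$. Thus the hypothesis of \cref{lemma: analyticGadgets} holds with $\mathbf{z}_{\rho} = 1$ for every $\rho$.

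We add one more function to $\mathcal{F}$: the polynomial $\rho_{0}(A) = \prod_{i \le j} A_{ij}$. It is nonzero at $M$ because $M$ has strictly positive entries. Applying \cref{lemma: analyticGadgets} with the open set $\mathcal{O} = (0, \epsilon)$ produces some $\theta^{*} \in (0, \epsilon)$ such that every $\rho \in \mathcal{F} \cup \{\rho_{0}\}$ is nonzero at $N := \mathcal{S}_{M}(\theta^{*})$.

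The main subtlety is entry positivity, that is, showing $N \in \Sym{q}{pd}(\mathbb{R}_{>0})$. I-closeness only yields $|N_{ij}| < \nicefrac{1}{3}$ off the diagonal, and $\rho_{0}$ only rules out zero entries; neither forces those entries to be positive. We resolve this by looking at first-order behavior near $\theta = 0$. Differentiating gives
$$\mathcal{S}_{M}(\theta) = I + \theta \log M + O(\theta^{2}),$$
where $\log M = H \log(D) H^{\tt{T}}$. This alone does not guarantee positive off-diagonal entries.

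The argument we will actually use is a sign-invariance one:
\begin{itemize}
\item For $i \neq j$, the off-diagonal entry $\mathcal{S}_{M}(\theta)_{ij}$ is real analytic in $\theta$ and equals $M_{ij} > 0$ at $\theta = 1$, so it is not identically zero.
\item The entry can therefore vanish at only finitely many points of any compact interval. Hence its sign is constant on small intervals $(0, \epsilon')$.
\end{itemize}
This still does not settle which sign occurs near $0$. If a negative sign can occur there, we fall back on a different route. The entries of $\mathcal{S}_{M}(\theta)$ are positive for all $\theta \in [1, \infty)$? No, that also fails in general. So our fallback is instead to note that the problem $\PlGH$ only cares about $|N_{ij}|$ up to a sign pattern. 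Concretely, we will check whether \cref{lemma: stretchingLemma} combined with the thickening map $\mathcal{T}$ (via \cref{lemma: simpleThickening} applied to an even power) can restore positivity.

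We expect this positivity step to be the main obstacle. The first thing we would try is to verify, using continuity of all entries in $\theta$ on $[\theta^{*}, 1]$, that positivity holds for the particular $\theta^{*}$ we chose. Failing that, we would restrict $\mathcal{O}$ to the set of small $\theta$ where all entries are positive, assuming that set is nonempty and open. Once positivity is in hand, $\PlGH(N) \leq \PlGH(M)$ follows directly from \cref{lemma: stretchingLemma}, because $\theta^{*} \neq 0$.
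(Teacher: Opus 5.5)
Your proposal has a genuine gap, and it is the one you flag yourself. Nothing you write shows that $N=\mathcal{S}_M(\theta^*)$ has positive entries, which the conclusion $N\in\Sym{q}{pd}(\mathbb{R}_{>0})$ requires. The rest of your argument is fine: I-closeness near $\theta=0$, positive definiteness, and $\rho(N)\neq 0$ via \cref{lemma: analyticGadgets} with base point $\theta=1$.

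None of your proposed remedies closes the gap:
\begin{itemize}
\item Continuity on $[\theta^*,1]$ says nothing about sign, since an entry can cross zero.
\item The set of small $\theta>0$ with all entries of $\mathcal{S}_M(\theta)$ positive can be empty. Because $\mathcal{S}_M(\theta)=I+\theta\log M+O(\theta^2)$, any off-diagonal entry with $(\log M)_{ij}<0$ is negative for every small $\theta>0$. This does occur for positive definite positive matrices. Take unit diagonal with $M_{12}=M_{23}=0.3$ and $M_{13}=0.01$: then $(\log M)_{13}\approx-0.039$, while $(\log M)_{12}\approx 0.32>0$, so small negative $\theta$ fails as well.
\end{itemize}
Stretching alone toward $I$ therefore cannot produce the required $N$.

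The missing idea is to finish with the actual gadget $\mathbf{T_2}$. Entrywise squaring is a planar edge gadget for any real symmetric matrix, unlike the interpolation in \cref{lemma: simpleThickening}, which needs positive entries. It turns nonzero entries into positive ones. It keeps the matrix close to $I$, provided you first work in a tighter neighbourhood (the paper uses $\nicefrac{1}{10}$). It also preserves positive definiteness, by the Schur product theorem, or by Gershgorin with a diagonal-dominance margin as in the paper.

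Squaring can destroy $\rho\neq 0$, so you must instead arrange $\rho'(A):=\rho(\mathbf{T_2}(A))\neq 0$ at the matrix you square. The stretching curve of $M$ gives no base point certifying this, because at $\theta=1$ the condition reads $\rho(\mathbf{T_2}(M))$, which may vanish. The paper adds a preliminary thickening step:
\begin{itemize}
\item Since $\rho'(\mathcal{T}_M(\nicefrac12))=\rho(M)\neq0$, \cref{lemma: analyticGadgets} yields $M'=\mathcal{T}_M(\theta^*)$ with $\theta^*$ near $1$ and $\rho'(M')\neq 0$ for all $\rho$. Staying near $1$ keeps $M'$ positive definite with positive entries, which $\mathcal{T}_M(\nicefrac12)$ itself need not be.
\item It then stretches $M'$, using base point $\mathcal{S}_{M'}(1)=M'$, to some $M''$ within $\nicefrac1{10}$ of $I$ with all entries nonzero and $\rho'(M'')\neq0$.
\item Finally it sets $N=\mathbf{T_2}(M'')$. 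Then $\rho(N)=\rho'(M'')\neq 0$, all entries are positive, the diagonal lies in $(0.81,1.21)$, and the off-diagonal entries lie in $(0,0.01)$.
\end{itemize}
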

\begin{proof}
	We let $\mathcal{F'} = 
	\{\rho': A \mapsto \rho(\mathbf{T_{2}}(A))\ |\ 
	\rho \in \mathcal{F} \}$.
	Recall $\mathcal{T}_{M}$ from \cref{definition: mathcalT}.
	By hypothesis, $\rho(M) \neq 0$
	for all $\rho \in \mathcal{F}$.
	So, $\rho'(\mathcal{T}_{M}(\nicefrac{1}{2}))
	\neq 0$ for all $\rho' \in \mathcal{F'}$.
	
	We are given $M \in
	\Sym{q}{pd}(\mathbb{R}_{> 0})$.
	Since the eigenvalues of a matrix are continuous as
	a function of its entries
	and $\mathcal{T}_{M}(1) = M
	\in \Sym{q}{pd}(\mathbb{R}_{> 0})$,
	there exists
	some open 
	interval $\mathcal{O} \subseteq \mathbb{R}$ around
	$1$, s.t.,
	$\mathcal{T}_{M}(\theta)
	\in \Sym{q}{pd}(\mathbb{R}_{> 0})$ for all 
	$\theta \in \mathcal{O}$.
	\cref{lemma: analyticGadgets} now lets us find some
	$\theta^{*} \in \mathcal{O}$ such that
	$M' = \mathcal{T}_{M}(\theta^{*})$
	satisfies:
	$\rho'(M') \neq 0$ for all $\rho' \in \mathcal{F'}$, and
	$\PlGH(M') \leq \PlGH(M)$.
	
	As $M' \in \Sym{q}{pd}(\mathbb{R}_{> 0})$,
	we can  consider $\mathcal{S}_{M'}(\theta)$
	(recall \cref{definition: mathcalS}).
	Note that $\mathcal{S}_{M'}(0) = I$.
	Since each entry
	of $\mathcal{S}_{M'}(\theta)$
	is continuous in $\theta$,
	there exists an open interval
	$\mathcal{O}' \subseteq \mathbb{R}$ around 
	$0$ such that for all
	$\theta \in \mathcal{O}'$, 
	$|\mathcal{S}_{M'}(\theta)_{ij} - I_{ij}| < \nicefrac{1}{10}$,
	and
	$$(\mathcal{S}_{M'}(\theta)_{ii})^{2}
	> \sum_{j \neq i \in [q]}(\mathcal{S}_{M'}(\theta)_{ij})^{2},
	\qquad \text{ for all } i \in [q].$$
	We use \cref{lemma: analyticGadgets} again
	to find $\theta^{**} \in \mathcal{O}'$ such that
	$M'' = \mathcal{S}_{M'}(\theta^{**})
	\in \Sym{q}{pd}(\mathbb{R})$ satisfies:
	$\rho'(M'') \neq 0$ for all $\rho' \in \mathcal{F'}$,
	$\prod_{i, j}(M'')_{ij} \neq 0$,
	and $\PlGH(M'') \leq \PlGH(M') \leq \PlGH(M)$.
	
	We finally let $N = T_{2}(M'')$.
	From the construction of $\mathcal{F'}$, we see
	that $\rho'(M'') \neq 0 \implies \rho(N) \neq 0$ for all
	$\rho \in \mathcal{F}$.
	Moreover, by the choice of $\theta^{*} \in \mathcal{O}'$,
	we see that $N_{ii} \in ((\nicefrac{9}{10})^{2}, 
	(\nicefrac{11}{10})^{2}) \subset 
	(\nicefrac{2}{3}, \nicefrac{4}{3})$,
	for all $i \in [q]$,
	and $N_{ij} \in (0, \nicefrac{1}{3})$ for all
	$i \neq j \in [q]$.
	Since $(M'')_{ij} \in \mathbb{R}_{\neq 0}$
	for all $i, j \in [q]$, we also have that
	$N \in \Sym{q}{}(\mathbb{R}_{> 0})$.
	Moreover,
	$N_{ii} > \sum_{j \neq i}|N_{ij}|$.
	So, by the Gershgorin Circle Theorem,
	(see ~\cite{horn2012matrix} Theorem 6.1.1)
	$N \in \Sym{q}{pd}(\mathbb{R}_{ > 0})$
	is the required matrix.
\end{proof}


\begin{definition}\label{definition: DiagonalDominant}
	We call $M \in \Sym{q}{}(\mathbb{R})$
	\emph{diagonal dominant}, if
	$M_{ii} > |M_{jk}|$ for all
	$i\in [q]$, $j \neq k \in [q]$.
\end{definition}
Clearly, if $M \in \Sym{q}{}(\mathbb{R})$
is I-close then it is also diagonal dominant.
We are now ready to prove the
main theorem of this section,
but we shall need to use the following
theorem:

\begin{theorem}[\cite{guo2020complexity}]
	\label{theorem: domain2Hardness}
	The problem $\PlGH(M)$ is $\#$P-hard, for
	$M = \left( \begin{smallmatrix}
		x & y\\
		y & z\\
	\end{smallmatrix} \right) \in 
	\Sym{2}{F}(\mathbb{R}_{> 0})$,
	unless $x = z$, in which case,
	$\PlGH(M)$ is polynomial-time computable.
\end{theorem}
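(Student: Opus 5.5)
The statement is quoted from \cite{guo2020complexity}, and I would prove it by placing $\PlGH(M)$ for $M = \left(\begin{smallmatrix} x & y\\ y & z\end{smallmatrix}\right)$ inside the Boolean planar $\#$CSP framework. A graph homomorphism instance on a planar $G$ is exactly a planar $\#\mathrm{CSP}(f)$ instance. Here $f = [x, y, z]$ is the symmetric binary constraint, applied once per edge. Vertices act as equality constraints of arbitrary arity, and these are implicitly available in $\#$CSP. So the plan has two parts. First, show tractability when $x = z$ by a holographic algorithm using FKT. Second, in every other case, check that $f$ falls outside all three tractable classes of the planar Boolean $\#$CSP dichotomy (product type, affine, and matchgate-transformable). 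That dichotomy is among the results cited in the introduction.

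For the tractable side, let $x = z$ and apply the Hadamard basis change $H = \frac{1}{\sqrt 2}\left(\begin{smallmatrix}1 & 1\\ 1 & -1\end{smallmatrix}\right)$. This gives $H M H = \diag(x+y, x-y)$, so each edge becomes the generalized-equality signature $[x+y, 0, x-y]$. At the same time each vertex equality $=_d$ becomes, up to a scalar $2^{-d/2}\cdot 2$, the even-parity signature $[1,0,1,0,\dots]$. Both are standard matchgate signatures. The bipartite vertex–edge signature grid is planar whenever $G$ is. The holographic transformation preserves $Z$ by Valiant's Holant theorem. The resulting planar matchgate grid is evaluated in polynomial time by FKT.

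For hardness, assume $x \neq z$. Full rank gives $xz \neq y^2$, so $f$ is not degenerate. Since all entries are positive, $f$ is not of product type: that would require $f$ to be degenerate or an (unweighted-up-to-scalar) equality or disequality, and $y > 0$ together with $x \neq z$ rules both out. It is not affine, because affine binary symmetric signatures with no zeros must have $|x| = |y| = |z|$ (up to a power of $i$), which contradicts $x \neq z$. The main obstacle is excluding the matchgate-transformable class. For a binary symmetric $[x,y,z]$ together with the full family of equalities, membership requires that a single holographic transformation in the normalizer class makes $f$ a matchgate signature while simultaneously making every $=_d$ a matchgate signature. I would try first to show that, for the equalities to remain matchgate-realizable, the transformation must be essentially $H$ (or $\left(\begin{smallmatrix}1&1\\ i&-i\end{smallmatrix}\right)$) up to an orthogonal or diagonal factor. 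Under such a transformation, $f$ satisfies the matchgate identities iff $x = z$ (or $x = -z$, which positivity excludes). Hence $\#\mathrm{CSP}(f)$ is $\#$P-hard on planar instances.

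Two routine steps remain. One is to translate this hardness back to $\PlGH(M)$, using that equalities come for free in $\PlGH$. The other is the case where $x, y, z$ are arbitrary (possibly transcendental) reals. I would handle that case as in \autoref{appendix: modelComputation}, or by using thickening interpolation (\cref{lemma: thickeningLemma}) to reduce to algebraic instances; the non-membership conditions above are polynomial inequalities, so they are preserved at generic parameters by \cref{lemma: analyticGadgets}.
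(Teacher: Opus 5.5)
Your proposal is correct and takes essentially the paper's route: the paper gives no argument of its own and simply invokes the planar Boolean $\#$CSP dichotomy of Guo and Williams, and you instantiate that dichotomy for $f=[x,y,z]$, using the Hadamard transformation plus FKT when $x=z$ and ruling out the product-type and affine classes otherwise. The step you flag as the main obstacle is unnecessary, because that dichotomy already fixes the third tractable class as the Hadamard-transformed matchgate class, so it suffices to note that $HMH=\left(\begin{smallmatrix} x+2y+z & x-z\\ x-z & x-2y+z\end{smallmatrix}\right)$ for $H=\left(\begin{smallmatrix}1&1\\1&-1\end{smallmatrix}\right)$ satisfies the binary parity condition only when $x=z$ (given $x,y,z>0$); your thickening reduction to algebraic parameters is a sound way to cover transcendental entries.
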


\begin{theorem}\label{theorem: magnitudeDistinctHardness}
	Let $M \in \Sym{q}{pd}(\mathbb{R}_{> 0})$
	be magnitude-distinct ($q \geq 3$).
	Then, $\PlGH(M)$ is $\#$P-hard.
\end{theorem}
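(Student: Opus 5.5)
We reduce from the domain-$2$ case, \cref{theorem: domain2Hardness}. The plan is to use the interpolation lemmas of \cref{sec: preliminaries} to push $M$ toward a matrix whose effective behaviour on large inputs is governed by a $2\times 2$ principal block with unequal diagonal.

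First, apply \cref{lemma: I-closeStretching} with $\mathcal{F}$ consisting of $\phi_{\rm{mag}}$, $\det$, and $A\mapsto\prod_{i\le j}A_{ij}$. This gives an I-close $N\in\Sym{q}{pd}(\mathbb{R}_{>0})$ that is still magnitude-distinct, with $\PlGH(N)\le\PlGH(M)$. Since $N$ is I-close, it is diagonal dominant, so the diagonal entries of $N$ strictly exceed every off-diagonal entry. Magnitude-distinctness makes the diagonal entries pairwise distinct. Relabel so that $N_{11}>N_{22}>\dots>N_{qq}$.

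Next, pick a generating set for the entries of $N$ and apply \cref{lemma: thickeningSingleVariable}. This yields $\widehat{\mathcal{T}}_N(z)_{ij}=z^{x_{ij}}$ with integer exponents $x_{ij}$, all distinct for $i\le j$ and preserving the order of the $N_{ij}$. In particular $x_{11}>x_{22}>\dots$, and each $x_{ii}$ exceeds every off-diagonal exponent. By \cref{lemma: thickeningLemma} we have $\PlGH(\widehat{\mathcal{T}}_N(z))\le\PlGH(N)$ for every real $z$.

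The goal is to isolate the two top states $\{1,2\}$. The idea is to view $Z_{\widehat{\mathcal{T}}_N(z)}(G)$ as a polynomial in $z$, or to combine it with the loop gadget $\mathcal{L}^*$ to reweight vertices by a second variable $w^{x_{ii}}$. We then extract, by polynomial interpolation in these variables, the coefficient corresponding to the extremal configurations. Concretely, attach a large number $n$ of self-loops to every vertex of $G$ via $\mathcal{L}$, where $n$ is polynomial in $|G|$. The term in which every vertex lies in $\{1,2\}$ then dominates the degree structure in a way that interpolation can separate. The strategy is as follows. Choose $G'$ from $G$ by adding $m$ loops per vertex. Form the bivariate polynomial in $z$ whose monomial $z^{\sum}$ records the loop contribution $\sum_v m\,x_{\sigma(v)\sigma(v)}$. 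Recover the sum restricted to assignments with each vertex's diagonal exponent in $\{x_{11},x_{22}\}$. This needs $m$ large enough that the loop exponents dominate the edge exponents, so that these assignments occupy the top window of degrees. In that window the restricted sum is exactly $Z_{N'}(G)$ reweighted, with $N'=\left(\begin{smallmatrix}a&b\\ b&c\end{smallmatrix}\right)$ the principal submatrix on $\{1,2\}$ modified by the diagonal weights. Equivalently, restrict to $\{1,2\}$ and apply \cref{lemma: loopLemmaSimple} to the resulting $2\times2$ matrix.

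Finally, check that the recovered $2\times2$ matrix $\left(\begin{smallmatrix}x&y\\ y&z\end{smallmatrix}\right)$ is full rank, positive, and has $x\ne z$. It is positive because $N>0$. It is full rank because $N_{11}N_{22}>N_{12}^2$ by I-closeness. We have $x\neq z$ since $N_{11}\ne N_{22}$, and any loop reweighting $N_{ii}^{\theta}$ preserves this inequality. \cref{theorem: domain2Hardness} then gives $\#$P-hardness.

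The main obstacle is the isolation step: making the "top two diagonal states" projection rigorous under planarity. The difficulty is that the degree window picked out by the interpolation must contain only assignments into $\{1,2\}$, and not, for example, mixtures using state $3$ compensated by larger edge exponents. I would handle this by choosing $m$ exceeding the maximum degree times the exponent spread, so that the loop exponent $m\sum_v x_{\sigma(v)\sigma(v)}$ controls the ordering. I would also use the distinctness of all $x_{ij}$, inherited from magnitude-distinctness, so that each edge contribution is identifiable.
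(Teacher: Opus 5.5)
\textbf{Gap: the isolation step.} Your proposal breaks down at the step you yourself flag as the main obstacle: the loop/degree-window mechanism cannot isolate the assignments into $\{1,2\}$.

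Loops at the endpoints (via $\mathbf{L_m}$ or $\mathcal{L}$) multiply the weight of $\sigma$ by $z^{m\sum_v \deg(v)\,x_{\sigma(v)\sigma(v)}}$. Interpolation therefore only lets you group assignments by this single linear functional, together with the edge exponent. The $\{1,2\}$-valued assignments do not form a window of that ordering:
\begin{itemize}
\item putting one low-degree vertex in state $3$ and all others in state $1$ gives a larger loop exponent than putting every vertex in state $2$;
\item the all-$1$ assignment is larger still.
\end{itemize}
Increasing $m$ only rescales the same functional. So no choice of $m$, and no appeal to distinctness of the $x_{ij}$, separates ``every vertex in $\{1,2\}$'' from mixtures involving state $3$.

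More fundamentally, the loop gadgets can annihilate a state $k$ only through $\mathcal{L}^{*}$ with some $z_t=0$. That requires a generator occurring with non-minimal exponent in exactly the diagonal entries you want to kill. Nothing in your construction produces such a divisibility pattern. Your fallback, ``restrict to $\{1,2\}$ and apply \cref{lemma: loopLemmaSimple}'', is not an available reduction, since $\mathcal{D}_M(\theta)$ has strictly positive diagonal.

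\textbf{The missing idea: create the divisibility pattern arithmetically, and isolate a different pair.} For the I-close magnitude-distinct $M'$, the minimum exponent of $\widehat{\mathcal{T}}_{M'}$ is attained at a unique off-diagonal pair $(i^{*},j^{*})$. This uses diagonal dominance plus magnitude-distinctness.
\begin{enumerate}
\item Normalize this minimum to $0$ using \cref{lemma: MequivalentCM}.
\item Evaluate at a large prime $p$ and square: $M''=(\widehat{\mathcal{T}}_{M'}(p))^{2}$. Then $M''_{kk}=\sum_j p^{2x_{kj}}$ is $\equiv 1 \pmod p$ for $k\in\{i^{*},j^{*}\}$, since those rows contain exactly one zero exponent. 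For every other $k$ it is $\equiv 0 \pmod p$.
\item Take the prime factors of the diagonal entries as the generating set, with $g_1=p$. Then $\mathcal{L}^{*}_{M''}(0,1,\dots,1)$ is exactly the projection onto the $\{i^{*},j^{*}\}$ principal block.
\item This block is positive definite, as a principal submatrix of $M''$. For $p$ large it also has $M''_{i^{*}i^{*}}\neq M''_{j^{*}j^{*}}$, because $\sum_k z^{2x_{i^{*}k}}-\sum_k z^{2x_{j^{*}k}}$ is a nonzero polynomial. This is where $q\ge 3$ and magnitude-distinctness are used.
\end{enumerate}
\cref{theorem: domain2Hardness} then finishes the argument. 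Note that the pair the arithmetic selects is the one carrying the unique minimal entry, not the two largest diagonal states.
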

\begin{proof}
	We can use \cref{lemma: I-closeStretching},
	with  $\mathcal{F} \ni \phi_{\rm{mag}}$ 
	(where $\phi_{\rm{mag}}$ is defined in
	\cref{defn:phi-function-for-magnitude-distinct}),
	to obtain an I-close (and therefore diagonal dominant),
	and magnitude-distinct
	$M' \in \Sym{q}{pd}(\mathbb{R}_{> 0})$,
	such that $\PlGH(M') \leq \PlGH(M)$.
	We will now consider $\widehat{\mathcal{T}}_{M'}(z)$
	as defined in \cref{lemma: thickeningSingleVariable}.
	We know that there exist $x_{ij} \in \mathbb{Z}_{\geq 0}$
	such that $\widehat{\mathcal{T}}_{M'}(z)_{ij} = z^{x_{ij}}$, and
	there exists some $\widehat{z} \in \mathbb{R}$, such that 
	$\det(\widehat{\mathcal{T}}_{M'}(\widehat{z})) \neq 0$.
	
	Let $x^{*} = \min_{i, j}x_{ij}$.
	From the construction of $\widehat{\mathcal{T}}_{M'}$ 
	and the fact that $M'$ is magnitude-distinct
	and diagonal dominant,
	it follows that $x^{*} = x_{i^{*}j^{*}}$ for some $i^{*} < j^{*}$,
	such that $x^{*} < x_{ij}$ for all other $i \leq j
	\in [q]$.
	It can also be shown (see \cref{lemma: MequivalentCM})
	that $\PlGH(\widehat{\mathcal{T}}_{M'}(z)) \equiv 
	\PlGH(z^{-x^{*}} \cdot \widehat{\mathcal{T}}_{M'}(z))$
	for all $z \in \mathbb{R}_{\geq 0}$.
	Therefore, we may assume without loss of generality,
	that $x^{*} = x_{i^{*}j^{*}} = 0$.
	
	As a polynomial in $z$, $\det(\widehat{\mathcal{T}}_{M'}(z))$ 
	is not identically zero
	by the existence of
	$\widehat{z}$.
	So, there exists some $z^{*}$, s.t.,
	$\det(\widehat{\mathcal{T}}_{M'}(z)) \neq 0$
	for all $z > z^{*}$.
	Since $M'$ is magnitude-distinct, 
	it also follows that for all
	$k \neq i^{*}, j^{*} \in [q]$,
	$M'_{i^{*}k} \notin \{M'_{j^{*}j}: j \in [q]\}$.
	Moreover, since $q \geq 3$, such a $k \in [q]$ exists.
	So, the  difference
	$ S_{2}(\widehat{\mathcal{T}}_{M'}(z))_{i^{*}i^{*}} -  
	S_{2}(\widehat{\mathcal{T}}_{M'}(z))_{j^{*}j^{*}}$ is
	$f(z) = \sum_{k \in [q]}z^{2x_{i^{*}k}} - 
	\sum_{k \in [q]}z^{2x_{ j^{*}k}}$, which is not the zero
	polynomial. 
	So, there  exists some $z^{**}
	\in \mathbb{R}$ such that for all $z > z^{**}$,
	$f(z) \neq 0$.
	
	We can now pick some prime integer $p > \max(z^{*}, z^{**})$, and let
	$M'' = \mathbf{S_{2}}(\widehat{\mathcal{T}}_{M'}(p))
	= (\widehat{\mathcal{T}}_{M'}(p))^{2}$.
	From our choice of $p > z^{*}$,
	$\det(\widehat{\mathcal{T}}_{M'}(p)) \neq 0$.
	Being the square of a full rank symmetric matrix, $M''$ is
	positive definite, so $ M
	\in \Sym{q}{pd}(\mathbb{Z}_{> 0})$.
	We note that for all $i \in [q]$,
	$M''_{ii} = 
	\sum_{j \in [q]}p^{2x_{ij}}$.
	Therefore, 
	$$M''_{i^{*}i^{*}}
	= p^{2x^{*}} + \sum_{j \neq j^{*}}p^{2x_{i^{*}j}}
	= 1 + \sum_{j \neq j^{*}}p^{2x_{i^{*}j}} \equiv
	1 \mod p.$$
	Similarly, $M''_{j^{*}j^{*}} \equiv 1 \mod p$.
	Moreover, for $k \neq i^{*}, j^{*}$,
	we see that $M''_{kk} \equiv 0 \mod p$,
	since $x_{kj} > 0$ for all $j \in [q]$.
	Furthermore, from our choice of $p > z^{**}$,
	we can also see that
	$M''_{i^{*}i^{*}} \neq M''_{j^{*}j^{*}}$
	
	Since $M'' \in \Sym{q}{pd}(\mathbb{Z}_{> 0})$,
	the set of primes that divide
	$\prod_{i}M''_{ii}$ form a generating set
	for the diagonal entries of $M''$.
	We may assume that
	this generating set is $\{g_{1}, \dots, g_{d}\}$.
	Since we know that $p \mid M''_{ii}$ for $i \notin
	\{i^{*}, j^{*}\}$, and $q \ge 3$, we know that
	$p \in \{g_{1},  \dots, g_{d} \}$.
	We may name $g_{1} = p$.
	Now, we can let $M''' = \mathcal{L}^{*}_{M''}(0, 1, \dots, 1)$.
	From \cref{definition: mathcalLStar}, we note that
	$\mathcal{D}^{*}_{M''}(0, 1, \dots, 1)_{ii} = 0$ for
	all $i \notin \{i^{*}, j^{*}\}$, since $g_{1} \mid M''_{ii}$.
	Moreover, $\mathcal{D}^{*}_{M''}(0, 1, \dots, 1)_{i^{*}i^{*}}
	= \mathcal{D}^{*}_{M''}(0, 1, \dots, 1)_{j^{*}j^{*}} = 1$.
	Thus, with $\mathbf{z} = (0, 1, \dots, 1)$,
	$\mathcal{D}^{*}_{M''}(\mathbf{z})$
	selects exactly two rows and columns
	$\{i^{*}, j^{*}\}$    of $M''$, in
	$M''' = \mathcal{D}^{*}_{M''}(\mathbf{z})
	\cdot M'' \cdot \mathcal{D}^{*}_{M''}(\mathbf{z})$.
	So,  $M'''_{ij} = M''_{ij}$
	iff $\{i,  j\} \subseteq \{i^{*}, j^{*}\}$,
	and $M'''_{ij} = 0$ otherwise.
	
	We can now let $N \in \Sym{2}{}(\mathbb{Z}_{> 0})$ be the 2 by 2
	principal sub-matrix of $M''$ formed by the rows and columns
	$i^{*},  j^{*} \in [q]$. Then $\PlGH(N) \equiv \PlGH(M''') $.
	Since $M'' \in \Sym{q}{pd}(\mathbb{Z}_{> 0})$, it follows
	that all principal sub-matrices must also be positive definite
	(see ~\cite{horn2012matrix} Observation 7.1.2).
	So, $N \in \Sym{2}{pd}(\mathbb{Z}_{> 0})$.
	Since $M''_{i^{*}i^{*}} \neq M''_{j^{*}j^{*}}$,
	\cref{theorem: domain2Hardness} tells us that
	$\PlGH(N)$ is $\#$P-hard, and $\PlGH(N) \equiv \PlGH(M''') \leq \PlGH(M)$.
\end{proof}

%% file: 4.Diagonal_Distinct_Hardness.tex
\section{Hardness of diagonal-distinct matrices}\label{sec: diagonalDistinctHardness}

\begin{definition}\label{definition: diagonalDistinct}
	We call $M \in \Sym{q}{}(\mathbb{R})$
	\emph{diagonal-distinct} (resp., \emph{strongly diagonal-distinct})
	if the $q$ elements in $\{M_{11}, \ldots, M_{qq}\}$ are distinct 
	(resp., for any $i, j, k \in [q]$,  $M_{ii} = M_{jk}$ 
	$\Longrightarrow$  $i = j= k$).
	Equivalently, $M$ satisfies 
	$\phi_{\rm{diag}}(M) \neq 0$ (resp., $\psi_{\rm{diag}}(M) \neq 0$), where
	\begin{equation}
		\label{defn:phi-psi-function-for-diagonal-distinct}
		\phi_{\rm{diag}}(M) = \prod_{i < j}
		\left(M_{ii} - M_{jj}\right)
		~~~\mbox{and}~~~
		\psi_{\rm{diag}}(M) = \prod_{i \in [q]} \prod_{(j,k) \neq (i, i)}
		\left(M_{ii} - M_{jk}\right)
	\end{equation}
\end{definition}

In this section, we will prove that diagonal-distinctness
is sufficient to prove $\#$P-hardness for
$M \in \Sym{q}{pd}(\mathbb{R}_{> 0})$
(all missing proofs are in 
\autoref{appendix: diagonalDistinctHardness}).
We prove this by induction. 
Assume the following is true for all $2 \leq t < q$.

\begin{induction*}[Induction Hypothesis ($t$)]\label{induction: inductionHyp}
	If $M \in \Sym{t}{pd}(\mathbb{R}_{> 0})$ is 
	diagonal-distinct
	then $\PlGH(M)$ is $\#$P-hard.
\end{induction*}

The base case $t = 2$ is proved by \cref{theorem: domain2Hardness}.

\begin{restatable}{theorem}{inductionStep}\label{theorem: inductionStep}
	Let $M \in \Sym{q}{F}(\mathbb{R}_{> 0})$
	be strongly diagonal-distinct.
	Let $m^{*} = \min_{i, j}M_{ij}$.
	If there exists some $i^{*} \in [q]$, such that
	$M_{i^{*}j} > m^{*}$ for all $j \in [q]$,
	and $i_{1}^{\dagger} \neq i_{2}^{\dagger} \in [q]$,
	$j_{1}^{\dagger}, j_{2}^{\dagger} \in [q]$, such that
	$M_{i_{1}^{\dagger}j_{1}^{\dagger}} = 
	M_{i_{2}^{\dagger}j_{2}^{\dagger}} = m^{*}$,
	then $\PlGH(M)$ is $\#$P-hard.
\end{restatable}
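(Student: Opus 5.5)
The plan is to mimic the proof of \cref{theorem: magnitudeDistinctHardness}: use thickening so that the minimal entries become $1$ and all others become divisible by a large prime $p$, then use stretching and loop gadgets to cut out a proper principal submatrix to which the Induction Hypothesis applies.

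\textbf{Step 1 (normalize to integer exponents).} Apply \cref{lemma: thickeningSingleVariable} to $M$ to get $\widehat{\mathcal{T}}_{M}(z)_{ij} = z^{x_{ij}}$. The ordering of the $x_{ij}$ matches that of the $M_{ij}$, so the minimal exponent $x^{*}$ is attained exactly at the positions where $M_{ij} = m^{*}$. By \cref{lemma: MequivalentCM} I rescale so that $x^{*} = 0$. Strong diagonal-distinctness is preserved, since it is an order/equality condition. Full rank is preserved for all large $z$, because $\det(\widehat{\mathcal{T}}_{M}(z))$ is a nonzero polynomial by \cref{lemma: thickeningSingleVariable}.

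\textbf{Step 2 (choose the prime and square).} Pick a prime $p$ beyond all roots of finitely many relevant nonzero polynomials in $z$, and set $M'' = (\widehat{\mathcal{T}}_{M}(p))^{2}$. This matrix is positive definite with positive integer entries. Its diagonal is $M''_{ii} = \sum_{j} p^{2x_{ij}}$, so
\[
M''_{ii} \bmod p = \#\{j : x_{ij} = 0\} = \#\{j : M_{ij} = m^{*}\}.
\]
By hypothesis, row $i^{*}$ has no minimal entry, so $p \mid M''_{i^{*}i^{*}}$. Rows $i_{1}^{\dagger}$ and $i_{2}^{\dagger}$ each contain at least one minimal entry, so $p \nmid M''_{i_{1}^{\dagger}i_{1}^{\dagger}}$ and $p \nmid M''_{i_{2}^{\dagger}i_{2}^{\dagger}}$ for $p > q$.

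\textbf{Step 3 (restrict to a submatrix).} As in the earlier proof, take the primes dividing $\prod_i M''_{ii}$ as a generating set, with $g_1 = p$, and form $\mathcal{L}^{*}_{M''}(0,1,\dots,1)$. This kills every index whose diagonal entry is divisible by $p$, including $i^{*}$. Let $S$ be the set of surviving indices; then $|S| \ge 2$ (it contains $i_1^\dagger, i_2^\dagger$) and $|S| \le q-1$ (it excludes $i^{*}$). The surviving diagonal entries of the scaling matrix are $\prod_{t\ge2} g_t^{e_{iit}-e_t^*}$, not all equal to $1$. So the resulting matrix is $D M''_{S} D$ for a positive diagonal $D$ on $S$, and $\PlGH$ of it is equivalent to $\PlGH$ of a positive definite matrix in $\Sym{|S|}{pd}(\mathbb{R}_{>0})$.

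\textbf{Step 4 (apply the Induction Hypothesis).} It remains to check that $N = D M''_S D$ is diagonal-distinct; then the Induction Hypothesis $(|S|)$ gives hardness. I expect this to be the main obstacle. We have $N_{ii} = D_{ii}^{2} M''_{ii}$, and $D_{ii}$ is just $M''_{ii}$ with its $p$-part removed and normalized, so $N_{ii} = c \cdot (M''_{ii})^{2} / p^{2v_p(M''_{ii})}$ for a common constant $c$. For $i \in S$ the $p$-adic valuation is zero, so $N_{ii} = c\,(M''_{ii})^{2}$. Hence it suffices that the $M''_{ii}$ are pairwise distinct on $S$.

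That distinctness holds for large $p$: the polynomial $\sum_k z^{2x_{ik}} - \sum_k z^{2x_{jk}}$ is nonzero, because its leading term is $z^{2x_{ii}}$ versus $z^{2x_{jj}}$. Strong diagonal-distinctness, which is preserved by the order-matching of \cref{lemma: thickeningSingleVariable}, makes $x_{ii}$ strictly the largest exponent in row $i$ and all diagonal exponents distinct. I include these polynomials among those whose roots $p$ must avoid in Step 2.

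One point needs care: $M''$ may acquire generating primes that are not integers. But everything is an integer here, so the prime-factor generating set from the remark after \cref{definition: generatingSet} applies directly.
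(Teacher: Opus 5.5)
\textbf{Verdict.} Your route is the paper's own proof: thicken to $z^{x_{ij}}$ with $x^{*}=0$, square at a large prime $p$, read $M''_{ii}\bmod p$ as the number of minimal entries in row $i$, cut out $S$ with $\mathcal{L}^{*}_{M''}(0,1,\dots,1)$, and apply the Induction Hypothesis. However, the justification you give in Step 4 is wrong and needs replacing.

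\textbf{The gap in Step 4.} Strong diagonal-distinctness only says $M_{ii}\neq M_{jk}$ for $(j,k)\neq(i,i)$. It says nothing about order, and the theorem assumes only full rank, not positive definiteness or diagonal dominance. So $x_{ii}$ need not be the largest exponent in row $i$. For example, $M_{ij}=M_{ji}$ may exceed every other entry in rows $i$ and $j$; a block like $\left(\begin{smallmatrix}1&5\\5&2\end{smallmatrix}\right)$ is allowed. Then the top-degree terms $z^{2x_{ij}}=z^{2x_{ji}}$ of $\sum_k z^{2x_{ik}}$ and $\sum_k z^{2x_{jk}}$ cancel, and your leading-term argument proves nothing.

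\textbf{The fix.} This is the argument the paper uses: look at the single monomial $z^{2x_{ii}}$, which need not be the leading one.
\begin{itemize}
\item Its coefficient in $\sum_k z^{2x_{ik}}$ is exactly $1$, since $x_{ik}\neq x_{ii}$ for $k\neq i$.
\item Its coefficient in $\sum_k z^{2x_{jk}}$ is $0$, since $x_{ii}\notin\{x_{jk}:k\in[q]\}$ for $j\neq i$.
\end{itemize}
Both facts come directly from strong diagonal-distinctness transported through the order and equality matching of the $x_{ij}$. Hence $f_{ij}\not\equiv 0$, and taking $p$ beyond its roots gives pairwise distinct $M''_{ii}$.

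\textbf{A smaller, harmless slip in Step 3.} At $\mathbf{z}=(0,1,\dots,1)$ the factors $z_t^{e_{iit}-e_t^{*}}$ for $t\ge 2$ are all equal to $1$; you substituted $g_t$ for $z_t$. Also $e_1^{*}=0$, because $p\nmid M''_{i_1^{\dagger}i_1^{\dagger}}$. So the surviving scaling entries are exactly $1$, and $\mathcal{L}^{*}_{M''}(0,1,\dots,1)$ is just the principal submatrix $M''_S$ padded with zeros. No valuation computation is needed.

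What you actually analyzed corresponds to the equally legitimate choice $\mathbf{z}=(0,g_2,\dots,g_d)$. There $N_{ii}=D_{ii}^{2}M''_{ii}$ is proportional to $(M''_{ii})^{3}$ on $S$, not $(M''_{ii})^{2}$ as you wrote. It is still strictly increasing in $M''_{ii}$, and $DM''_SD$ is still positive definite by congruence, so the conclusion survives. With the Step 4 fix, your proof is complete.
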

The proof is similar to \cref{theorem: magnitudeDistinctHardness},
and uses \nameref{induction: inductionHyp}.

\begin{remark*}
	\cref{theorem: inductionStep} places
	the additional constraint that
	$M$ be strongly diagonal-distinct,
	but we shall see later, using
	\cref{lemma: analyticGadgets}, that
	if $M \in \Sym{q}{pd}(\mathbb{R}_{> 0})$ 
	is diagonal-distinct, we can find
	some $\PlGH(N) \leq \PlGH(M)$ that is
	also \emph{strongly diagonal-distinct}.
\end{remark*}

The next step of our proof is to show that given
a diagonal distinct
$M \in \Sym{q}{pd}(\mathbb{R}_{> 0})$,
we can 
find some $N \in \Sym{q}{pd}(\mathbb{R}_{> 0})$
with additional desirable properties,
such that $\PlGH(N) \leq \PlGH(M)$.

\begin{definition}\label{definition: intraRowDistinct}
	We call $M \in \Sym{q}{}(\mathbb{R})$ \emph{intra-row-distinct}
	if for all $i \in [q]$, $M_{ij} = M_{ij'} \implies j = j'$.
	Equivalently, $M$ is intra-row-distinct if
	$\phi_{\rm{row}}(M) \neq 0$, where
	\begin{equation}
		\label{defn:phi-function-for-intra-row-distinct}
		\phi_{\rm{row}}(M) = \prod_{i \in [q]} \prod_{j < j' \in [q]}
		\left(M_{ij} - M_{ij'}\right)
	\end{equation} 
\end{definition}

\begin{restatable}{lemma}{diagDistinctImpliesMatrixProperties}
	\label{lemma: diagDistinctImpliesMatrixProperties}
	Let $M \in \Sym{q}{pd}(\mathbb{R}_{> 0})$ be
	diagonal distinct.
	Then, either $\PlGH(M)$ is $\#$P-hard,
	or there exists an I-close (thus diagonal dominant),
	strongly diagonal distinct,
	and intra-row-distinct
	$N \in \Sym{q}{pd}(\mathbb{R}_{> 0})$, s.t.,
	$\PlGH(N) \leq \PlGH(M)$.
\end{restatable}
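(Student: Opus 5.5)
The plan is to apply \cref{lemma: I-closeStretching} with the family $\mathcal{F} = \{\phi_{\rm{diag}}, \psi_{\rm{diag}}, \phi_{\rm{row}}\}$. That lemma needs an input matrix on which every $\rho \in \mathcal{F}$ is already non-zero, and the hypothesis only gives $\phi_{\rm{diag}}(M) \neq 0$. So the first step is to produce, by analytic gadget families and \cref{lemma: analyticGadgets}, some $M' \in \Sym{q}{pd}(\mathbb{R}_{> 0})$ with $\PlGH(M') \leq \PlGH(M)$ and $\phi_{\rm{diag}}(M'), \psi_{\rm{diag}}(M'), \phi_{\rm{row}}(M') \neq 0$. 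Failing that, we show $\PlGH(M)$ is $\#$P-hard.

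The factor-by-factor device is the key tool. Each of $\phi_{\rm{diag}}, \psi_{\rm{diag}}, \phi_{\rm{row}}$ is a product of linear factors $A_{ab} - A_{cd}$. Analytic functions on a connected open set form an integral domain, so along an analytic path $\mathcal{K}(\theta)$ it is enough to find, for \emph{each factor separately}, one parameter value where it is non-zero. I would use the stretching family $\mathcal{K}(\theta) = \mathcal{S}_M(\theta)$ for $\theta$ in a small open interval $Z$ containing $[0,1]$, chosen so that $\mathcal{S}_M(\theta) \in \Sym{q}{pd}(\mathbb{R}_{>0})$ on $Z$.
\begin{itemize}
\item Factors $A_{ii} - A_{jj}$ with $i \neq j$ are non-zero at $\theta = 1$, since $M$ is diagonal distinct.
\item Factors $A_{ii} - A_{jk}$ with $j \neq k$, and $A_{ii} - A_{ij'}$ with $j' \neq i$, are non-zero for small $\theta > 0$. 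This is because $\mathcal{S}_M(\theta) \to I$ as $\theta \to 0$.
\end{itemize}
So $\psi_{\rm{diag}}(\mathcal{S}_M(\theta))$ and the diagonal-versus-off-diagonal part of $\phi_{\rm{row}}$ are not identically zero. The part that remains is the off-diagonal factors $A_{ij} - A_{ij'}$ with $i, j, j'$ pairwise distinct.

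I expect these off-diagonal factors to be the main obstacle, and the only place where the ``either $\#$P-hard'' alternative is used. For each such triple I would look at the enlarged analytic family $(\theta_1, \theta_2, \theta_3) \mapsto \mathcal{L}_{\mathcal{S}_{\mathcal{T}_M(\theta_1)}(\theta_2)}(\theta_3)$, built from thickening, stretching and loop gadgets. The aim is to show that if $A_{ij} \equiv A_{ij'}$ on the whole family, then $j$ and $j'$ behave as twins under every gadget we have. In that case, as in the proof of \cref{theorem: magnitudeDistinctHardness}, I would use the loop gadget $\mathcal{L}^*$ together with a generating set chosen so that one prime divides exactly the diagonal entries to be discarded. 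This kills a proper subset of indices and yields a principal submatrix that is still positive definite and diagonal distinct. Its hardness then follows from the \nameref{induction: inductionHyp}, or from \cref{theorem: inductionStep} in the appropriate minimum-entry configuration. I am least sure how to organize this case analysis cleanly. My first attempt would use the first-order expansion $\mathcal{S}_M(\theta) = I + \theta \log M + O(\theta^2)$, reading off non-identical vanishing from the entries of $\log M$ and their higher-order terms.

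Once $M'$ is in hand, \cref{lemma: I-closeStretching} with $\mathcal{F} = \{\phi_{\rm{diag}}, \psi_{\rm{diag}}, \phi_{\rm{row}}\}$ directly gives an I-close $N \in \Sym{q}{pd}(\mathbb{R}_{>0})$ on which all three functions are non-zero. This $N$ is therefore strongly diagonal distinct and intra-row-distinct, and it satisfies $\PlGH(N) \leq \PlGH(M') \leq \PlGH(M)$, which is what the lemma requires.
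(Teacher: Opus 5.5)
There is a genuine gap, and it is exactly the step you flag as the main obstacle: the factors $A_{ij}-A_{ij'}$ of $\phi_{\rm{row}}$ with $i,j,j'$ pairwise distinct. You never exhibit a parameter value at which such a factor is non-zero. Instead you propose a fallback (``$j,j'$ are twins, so pin them away with $\mathcal{L}^{*}$ and invoke the induction hypothesis'') that is neither carried out nor justified.

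In particular, there is no general way to get a generating set in which one prime divides exactly the diagonal entries you want to discard. The mod-$p$ device of \cref{theorem: inductionStep} works only because, after $\widehat{\mathcal{T}}$ and squaring, $M'_{ii}\equiv c(i) \pmod p$, where $c(i)$ is the number of minimum entries in row $i$. That requires a minimum-entry configuration you do not control here.

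Stretching alone cannot do the job either. The identity $\mathcal{S}_M(\theta)_{ij}=\mathcal{S}_M(\theta)_{ij'}$ for all $\theta$ only says that $e_j-e_{j'}$ is orthogonal to the Krylov space of $e_i$ under $M$. This is compatible with $M$ being entrywise positive, positive definite and diagonal distinct.

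The missing idea is a one-line computation with the loop gadget, which you listed but never used for this purpose. For any $i$ and any $j\neq j'$ (including $j=i$),
\[
\frac{\mathcal{L}_{M}(\theta)_{ij}}{\mathcal{L}_{M}(\theta)_{ij'}}
=\frac{M_{ij}}{M_{ij'}}\left(\frac{M_{jj}}{M_{j'j'}}\right)^{\theta}.
\]
This ratio is non-constant in $\theta$ because $M_{jj}\neq M_{j'j'}$. Hence every factor of $\phi_{\rm{row}}$ is non-zero at some point of the one-parameter family $\mathcal{L}_{M}$. Then \cref{lemma: analyticGadgets} gives $\theta^{*}$ near $0$ with $\mathcal{L}_{M}(\theta^{*})\in\Sym{q}{pd}(\mathbb{R}_{>0})$ diagonal distinct and intra-row-distinct. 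This is the paper's \cref{lemma: intraRowDistinctLoop}, and it shows the ``$\#$P-hard'' alternative is never needed.

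Applying \cref{lemma: I-closeStretching} with $\mathcal{F}=\{\phi_{\rm{diag}},\phi_{\rm{row}}\}$ then finishes the proof. Strong diagonal distinctness comes for free from I-closeness: the diagonal entries are pairwise distinct and lie in $(2/3,4/3)$, while the off-diagonal entries lie in $(0,1/3)$. So your stretching step for $\psi_{\rm{diag}}$ is unnecessary.

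There is also a smaller slip in that step. $\mathcal{S}_M(\theta)$ need not be entrywise positive for $\theta\in(0,1)$, since $\log M$ or $M^{1/2}$ may have negative entries. So you cannot take $Z\supseteq[0,1]$ inside $\Sym{q}{pd}(\mathbb{R}_{>0})$. This is harmless because the factors are polynomials: take $Z=\mathbb{R}$ and choose $\theta^{*}$ near $1$.
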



Given a matrix $N$
that satisfies the conditions of
\cref{lemma: diagDistinctImpliesMatrixProperties},
we can use \cref{theorem: inductionStep} to
prove that $\PlGH(N)$ is $\#$P-hard,
so long as the minimum entry of $N$
does not appear on \emph{every} row of $N$.
Given that we already know $N$ to be strongly
diagonal distinct and intra-row-distinct,
it seems reasonable to hope that we can also
ensure that no entry appears on every row of $N$.
In that case, it would trivially follow that the minimum
entry of the matrix also does not appear in every row.
This however, turns out to not be possible.
For example, if $M = \left( \begin{smallmatrix}
	x & y\\
	y & z\\
\end{smallmatrix} \right) \otimes \left( \begin{smallmatrix}
	a & b\\
	b & c\\
\end{smallmatrix} \right)$,
we see that $M_{14} = M_{23} = (b \cdot y)$.
More worryingly, it can be verified that
$\mathbf{K}(M) = \mathbf{K}\left(\left( \begin{smallmatrix}
	x & y\\
	y & z\\
\end{smallmatrix} \right)\right) \otimes 
\mathbf{K}\left(\left( \begin{smallmatrix}
	a & b\\
	b & c\\
\end{smallmatrix} \right)\right),$
for all $\mathbf{K}(M) \in \PlEdge(M)$.
In other words, using edge gadget reductions,
we cannot possibly find a reduction
$\PlGH(N) \leq \PlGH(M)$ which does not satisfy:
$N_{14} = N_{23}$.

So, instead of trying to prove that we can find
$\PlGH(N) \leq \PlGH(M)$ where no entry appears
on every row of $N$, we instead try to ensure
that even if some entry appears on every row of $N$,
it is not the minimum entry of $N$.
We accomplish this by using the various gadget functions
we have introduced in \cref{sec: preliminaries}
to prove a series
of lemmas, that either place successively more 
restrictions on the form of the matrix $N$,
or prove that $\PlGH(N)$ is $\#$P-hard.
The exact statements of all these lemmas along
with a detailed discussion of their proofs can be
found in \autoref{appendix: diagonalDistinctHardness}.
For now, we will summarize the results here as one
theorem with just a few of the most relevant
restrictions.

\begin{theorem}\label{theorem: diagDistinctAllProperties}
	Let $N \in \Sym{q}{pd}(\mathbb{R}_{ > 0})$ 
	be diagonal dominant,
	strongly diagonal distinct, and intra-row-distinct.
	Then, either $\PlGH(N)$ is $\#$P-hard, or
	after a renaming of domain elements $[q]$
	(below we write $i^{\rm R} = q+1-i$, for all $i \in [q]$),    
	\begin{specialenumerate}
		\item
		$N_{i,i^{\rm R}} = \min_{i,j \in [q]} N_{i,j}$
		for all $i \in [q]$,
		\setcounter{specialenumeratei}{3}
		\item
		$N_{11} > N_{22} > \dots > N_{qq}$,
		\setcounter{specialenumeratei}{9}
		\item
		$\aleph(N) = 0$, where
		$\aleph$ is the
		$\Sym{q}{}(\mathbb{R}_{> 0})$-analytic function defined as:
	\end{specialenumerate}
	{\small$$\aleph(A) = \Big(\ln(A_{1q}) - \ln(A_{qq}) \Big)
		\Big(\ln(A_{(q-1)(q-1)}) - \ln(A_{qq})\Big) - 
		\Big(\ln(A_{11}) - \ln(A_{qq})\Big)
		\Big(\ln(A_{(q-1)q}) - \ln(A_{qq}) \Big).$$}
\end{theorem}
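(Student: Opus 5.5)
We argue by successive reductions: at each stage we either produce, via \cref{theorem: inductionStep} or the \nameref{induction: inductionHyp}, a hard instance $\PlGH(N') \leq \PlGH(N)$, or we learn one more structural constraint on $N$. Throughout, the constraints already obtained (diagonal dominance, strong diagonal distinctness, intra-row-distinctness, full rank) are preserved. They are encoded as nonvanishing of $\Sym{q}{}(\mathbb{R})$-analytic functions, and \cref{lemma: analyticGadgets} transfers them to every matrix produced by $\mathcal{T}_N$, $\widehat{\mathcal{T}}_N$, $\mathcal{S}_N$, $\mathcal{L}^*_N$ near the parameter value giving $N$ back.

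The first step is property (1). Let $m^* = \min N_{ij}$. If some row avoids $m^*$ and $m^*$ occurs in two distinct rows, \cref{theorem: inductionStep} gives hardness. Diagonal dominance keeps $m^*$ off the diagonal, so it occurs in at least two rows. We may therefore assume $m^*$ occurs in every row. Intra-row-distinctness says it occurs exactly once per row, which defines a fixed-point-free involution $i \mapsto \pi(i)$ with $N_{i\pi(i)} = m^*$.

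The second step is property (4). Order the domain so that $N_{11} > \cdots > N_{qq}$, which is possible by diagonal distinctness. It then remains to show that, up to hardness, $\pi(i) = q+1-i$. For this we apply the same minimum-entry argument to perturbed matrices $\widehat{\mathcal{T}}_N(z)$, $\mathcal{L}^*_N(\mathbf{z})$ and $\mathcal{S}_N(\theta)$. For example, $\mathcal{L}_N(\theta)$ with large $\theta$ rescales $N_{ij}$ by $(N_{ii}N_{jj})^\theta$. The minimum among the pairs $\{i,\pi(i)\}$ then becomes the pair of smallest diagonal product, and unless that entry again lies in every row, \cref{theorem: inductionStep} applies. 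Comparing the pairing forced at $\theta$ near $0$ with that at $\theta \to \pm\infty$ (with the pairing constant on a connected interval unless hardness intervenes) should force $\pi$ to pair the largest diagonal index with the smallest, then recurse. This yields $\pi(i) = i^{\rm R}$. When $q$ is odd there is no fixed-point-free involution, so the argument gives hardness outright.

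The third step, property (10), is the main obstacle. Consider the two-parameter family $\mathcal{L}_{\mathcal{T}_N(\alpha)}(\beta)$, whose entries in log form are $\alpha\ln N_{ij} + \alpha\beta(\ln N_{ii} + \ln N_{jj})$. After normalizing by $N_{qq}$, the entries relevant to rows $q-1$ and $q$ are affine in $(\alpha, \alpha\beta)$. The minimum-entry structure just established must persist across the whole family; otherwise \cref{theorem: inductionStep} fires. In particular $N_{1q}$ must stay the row-$q$ minimum relative to $N_{(q-1)q}$ and $N_{(q-1)(q-1)}$ for all parameters. This forces two linear forms in the log-entries to be proportional, which is exactly the vanishing of the $2\times 2$ determinant $\aleph(N)$. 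If $\aleph(N) \neq 0$, we choose parameters where the forms disagree in sign, making the minimum no longer lie in every row, and conclude hardness. The hard part is checking that such a choice does not destroy diagonal dominance or intra-row-distinctness. We would first try to handle this by working in the $\ln$-coordinates and invoking \cref{lemma: analyticGadgets} with $\aleph$ added to $\mathcal{F}$.
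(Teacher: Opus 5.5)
Your first step matches the paper. The proposal has a genuine gap in the third step, and the second step relies on a mechanism that does not work as stated.

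\textbf{The third step.} The claim that the minimum-entry structure ``must persist across the whole family; otherwise \cref{theorem: inductionStep} fires'' is false.

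For $s>0$ the thickening parameter only rescales the log-entries, so the crossing pattern is that of $\mathcal{L}_N(\theta)$. Consistently, $\aleph(\mathcal{L}_{\mathcal{T}_N(s)}(t))=s^2(1+2t)\,\aleph(N)$. Along $\mathcal{L}_N(\theta)$ the ratio $\mathcal{L}_N(\theta)_{1q}/\mathcal{L}_N(\theta)_{qq}=(N_{1q}/N_{qq})(N_{11}/N_{qq})^{\theta}$ crosses $1$ at some $\alpha>0$. For $\theta>\alpha$ the diagonal entry $(q,q)$ drops strictly below the anti-diagonal, and for large $\theta$ it is the unique minimum, sitting in a single row. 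There \cref{theorem: inductionStep} says nothing, so the structure must break without any hardness following.

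What minimum-entry reasoning legitimately gives happens at the crossing point $\alpha$, after perturbing to a generic nearby parameter. Set $a=\ln(N_{1q}/N_{qq})$, $b=\ln(N_{11}/N_{qq})$, $c=\ln(N_{(q-1)q}/N_{qq})$ and $e=\ln(N_{(q-1)(q-1)}/N_{qq})$. If $\mathcal{L}_N(\alpha)_{(q-1)q}<\mathcal{L}_N(\alpha)_{qq}$, then row $1$ avoids a strictly smaller off-diagonal minimum, and you get hardness. This is exactly the case $\aleph(N)=ae-bc>0$. If $\aleph(N)<0$, then $\mathcal{L}_N(\alpha)_{(q-1)q}>\mathcal{L}_N(\alpha)_{qq}$, which violates nothing. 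So you obtain only $\aleph(N)\le 0$, not proportionality.

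\textbf{The second step.} It has the same defect. For large $\theta$ the minimum of $\mathcal{L}_N(\theta)$ is the lone entry $(q,q)$, so comparing with $\theta\to\pm\infty$ yields nothing. The working argument is local near $\theta=0$. If the partner entries $m^*(N_{ii}N_{\pi(i)\pi(i)})^{\theta}$ ever differ, \cref{lemma: analyticGadgets} produces such a $\theta^*$ near $0$. There each row's minimum is still at its partner, so the row with the largest partner entry avoids the global minimum. Hence $N_{ii}N_{\pi(i)\pi(i)}=d$ for all $i$, and this is what forces $\pi(i)=i^{\rm R}$ once the diagonal is sorted.

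\textbf{The missing idea.} What closes the third step is the multiplicative row identity $N_{ij}N_{ij^{\rm R}}=c_i$ of \cref{lemma: diagDistinctRestrictions}, which comes from the bridge gadget $\mathcal{B}_N(\theta)$ that you never use.
\begin{enumerate}
\item If the anti-diagonal of $\mathcal{B}_N(\theta)$ is non-constant for some $\theta$, one returns near $N$ via $\mathcal{S}_{\mathcal{B}_N(\theta)}(1/3)$ and gets hardness as before. Otherwise, matching the coefficients of $(N_{ii})^{\theta}$ gives the identity; strong diagonal distinctness is what makes this matching possible.
\item Using $N_{ii}N_{i^{\rm R}i^{\rm R}}=d$, the identity passes to $\mathcal{L}_N(\alpha)$, since $\mathcal{L}_N(\alpha)_{ij}\mathcal{L}_N(\alpha)_{ij^{\rm R}}=c_iN_{ii}^{2\alpha}d^{\alpha}$. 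In row $q$ it reads $\mathcal{L}_N(\alpha)_{iq}\mathcal{L}_N(\alpha)_{i^{\rm R}q}=\mathcal{L}_N(\alpha)_{qq}^2$.
\item One also needs, as in the proof of \cref{lemma: diagDistinctLoopRestrictions}, that the anti-diagonal is still minimal at $\alpha$ unless $\PlGH(N)$ is hard. Then every factor is at least $\mathcal{L}_N(\alpha)_{qq}$, so the whole last column of $\mathcal{L}_N(\alpha)$ equals $\mathcal{L}_N(\alpha)_{qq}$.
\item This gives $\alpha=-a/b=-c/e$, i.e.\ $\aleph(N)=0$.
\end{enumerate}
Equivalently, when $\aleph(N)<0$ the identity converts $\mathcal{L}_N(\alpha)_{(q-1)q}>\mathcal{L}_N(\alpha)_{qq}$ into $\mathcal{L}_N(\alpha)_{2q}<\mathcal{L}_N(\alpha)_{qq}$, which is a genuine minimum violation.

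\textbf{The real obstacle.} It is not preserving diagonal dominance or intra-row-distinctness, since \cref{theorem: inductionStep} needs neither. It is that ``the minimum lies in two rows and some row avoids it'' is not a condition of the form $\rho\neq 0$. It has to come from a strict inequality, open in the parameter, at a specific point such as $\alpha$. Adding $\aleph$ to $\mathcal{F}$ does not provide this.
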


The exact statements of all the other restrictions that can be placed
on the form of $N$ 
can be found in
\autoref{appendix: diagonalDistinctHardness}
(see \cref{lemma: diagDistinctAntiDiagonal,lemma: diagDistinctRestrictions,lemma: diagDistinctLoopRestrictions,lemma: diagDistinctSecondLastRow,lemma: diagDistinctAleph}).
Finally, we prove the following lemma.

\begin{restatable}{lemma}{diagDistinctHardness}
	\label{lemma: diagDistinctHardness}
	Let $N \in \Sym{q}{pd}(\mathbb{R}_{> 0})$
	be I-close, strongly diagonal distinct, and
	intra-row-distinct, that satisfies
	all the restrictions in 
	\cref{theorem: diagDistinctAllProperties}.
	Then there exists some
	$N' \in \Sym{q}{pd}(\mathbb{R}_{> 0})$,
	such that $\PlGH(N') \leq \PlGH(N)$,
	and is diagonal dominant, strongly diagonal distinct,
	and intra-row-distinct, but
	\emph{does not} simultaneously
	satisfy all the restrictions in
	\cref{theorem: diagDistinctAllProperties}.
\end{restatable}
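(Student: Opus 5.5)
We will produce $N'$ as a member of a one-parameter gadget family through $N$, and show that the restriction $\aleph = 0$ (restriction (x)) cannot hold along the whole family; all the open conditions are then inherited from $N$ by \cref{lemma: analyticGadgets}. The natural family is the stretching family $\mathcal{S}_N(\theta)$ of \cref{definition: mathcalS}, which is available since $N \in \Sym{q}{pd}(\mathbb{R}_{>0})$. We have $\mathcal{S}_N(1) = N$, and $\PlGH(\mathcal{S}_N(\theta)) \le \PlGH(N)$ by \cref{lemma: stretchingLemma}.

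By continuity there is an open interval $\mathcal{O}$ around $1$ on which $\mathcal{S}_N(\theta)$ has strictly positive entries, is positive definite, diagonal dominant, and keeps the ordering $N_{11} > \dots > N_{qq}$ of restriction (iv). Restriction (i) should also persist on $\mathcal{O}$ in the following sense: it is a conjunction of equalities and strict inequalities, and the equalities $N_{i,i^{\rm R}} = N_{j,j^{\rm R}}$ are preserved by the gadgets. Strong diagonal distinctness ($\psi_{\rm diag}$) and intra-row-distinctness ($\phi_{\rm row}$) are $\Sym{q}{}(\mathbb{R})$-polynomials that are nonzero at $\theta = 1$. Hence, by \cref{lemma: analyticGadgets}, it suffices to show that the analytic function $\theta \mapsto \aleph(\mathcal{S}_N(\theta))$ is not identically zero on $\mathcal{O}$. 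Once that is shown, we add $\aleph$ to the countable family $\mathcal{F}$ and obtain $\theta^* \in \mathcal{O}$ with all polynomial conditions and $\aleph \ne 0$. The resulting $N' = \mathcal{S}_N(\theta^*)$ then violates (x), so it fails to satisfy all the restrictions simultaneously, as required.

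The main obstacle is proving that $\aleph(\mathcal{S}_N(\theta)) \not\equiv 0$. The plan is to study $\theta \to 0$ (or a suitable limit), where $\mathcal{S}_N(\theta) = I + \theta \ln N + O(\theta^2)$, with $\ln N = H \ln D\, H^{\tt T}$. The diagonal terms satisfy $\ln(\mathcal{S}_N(\theta)_{ii}) = \theta (\ln N)_{ii} + O(\theta^2)$. The off-diagonal terms satisfy $\ln(\mathcal{S}_N(\theta)_{ij}) = \ln\theta + \ln (\ln N)_{ij} + O(\theta)$, at least when $(\ln N)_{ij} > 0$.

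Substituting into $\aleph$, the first product is
\[\bigl(\ln\theta + O(1)\bigr)\,\theta\bigl((\ln N)_{q-1,q-1} - (\ln N)_{qq}\bigr),\]
and the second product is $\theta\bigl((\ln N)_{11} - (\ln N)_{qq}\bigr)(\ln\theta + O(1))$. The leading $\theta\ln\theta$ coefficient is therefore
\[(\ln N)_{q-1,q-1} - (\ln N)_{11}.\]
This coefficient is nonzero provided $\ln N$ has distinct diagonal entries in positions $1$ and $q-1$, which we will secure by a preliminary generic choice. If the signs of $(\ln N)_{ij}$ are a problem, we will instead use the thickening family $\mathcal{T}_N(\theta)$ or $\mathcal{L}_N(\theta)$. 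For $\mathcal{L}_N(\theta)$, $\ln$ of each entry is affine in $\theta$, so $\aleph(\mathcal{L}_N(\theta))$ becomes a quadratic in $\theta$. We will then check that its $\theta^2$ coefficient,
\[\bigl(\ln N_{11} - \ln N_{qq}\bigr)\bigl(2\ln N_{q-1,q-1} - 2\ln N_{qq}\bigr) - \bigl(2\ln N_{11} - 2\ln N_{qq}\bigr)\bigl(\ln N_{q-1,q-1} - \ln N_{qq}\bigr),\]
also cancels. Because of this cancellation, the stretching expansion (or a composition of stretching with loops) is the route we would pursue first.
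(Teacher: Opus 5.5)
Your outer scaffolding is sound and matches how the paper finishes. You keep entrywise positivity, positive definiteness, diagonal dominance, $\psi_{\rm{diag}}\neq0$, $\phi_{\rm{row}}\neq0$ and the strict order $N'_{11}>\dots>N'_{qq}$, and then break (x). You should say explicitly that this strict order is what disposes of the ``after a renaming'' clause, since only the identity can then satisfy (iv).

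\textbf{The gap.} The step that carries all the content is missing: you never establish that $\aleph(\mathcal{S}_N(\theta))\not\equiv0$ near $\theta=1$. Your expansion lives at $\theta\to0^{+}$. \cref{lemma: analyticGadgets} (i.e.\ the identity theorem) only transports it to the neighbourhood $\mathcal{O}$ of $1$ if $\theta\mapsto\aleph(\mathcal{S}_N(\theta))$ is analytic on a connected interval containing both. That requires $\mathcal{S}_N(\theta)_{1q}>0$ and $\mathcal{S}_N(\theta)_{(q-1)q}>0$ for all $\theta\in(0,1]$.

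This can fail. Fractional powers of entrywise-positive positive definite matrices need not be entrywise positive, and by (i) $N_{1q}$ is the smallest entry, which is exactly the dangerous regime. For example, an I-close $3\times3$ matrix with diagonal near $1$, $N_{12}=N_{23}=0.3$ and $N_{13}$ tiny has $(\ln N)_{13}\approx\tfrac14\ln(0.82)<0$. Then $\mathcal{S}_N(\theta)_{13}<0$ for small $\theta>0$, and your $\theta\ln\theta$ coefficient tells you nothing about $\theta$ near $1$.

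Two further problems:
\begin{itemize}
\item The ``preliminary generic choice'' securing $(\ln N)_{11}\neq(\ln N)_{(q-1)(q-1)}$ is left unspecified.
\item Your fallbacks cannot help, and not just at order $\theta^2$. A direct computation gives $\aleph(\mathcal{T}_N(\theta))=\theta^{2}\aleph(N)$ and $\aleph(\mathcal{L}_N(\theta))=(1+2\theta)\aleph(N)$, both identically zero.
\end{itemize}

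\textbf{The missing idea.} You need to exploit the rigid structure that restrictions (v)--(ix) force at the special $\alpha$, which your argument never uses. Writing $x_i=\mathcal{L}_N(\alpha)_{ii}$:
\begin{itemize}
\item row $1$ of $\mathcal{L}_N(\alpha)$ is $(x_1,\dots,x_q)$;
\item row $q$ is constantly $x_q$;
\item row $q-1$ is half $x_q$'s and half $x_{q-1}$'s.
\end{itemize}
The paper then applies a row-mixing operation that preserves positivity: thickening followed by the integer stretch $\mathbf{S_{2}}$. This gives $\aleph(\mathbf{S_{2}}(\mathcal{T}_{\mathcal{L}_N(\alpha)}(\theta)))$ in closed form. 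As $\theta\to-\infty$, with $\epsilon=(x_{q-1}/x_q)^{\theta}\to0$, the two products have linear terms $-\ln 2\cdot\epsilon$ and $-\ln q\cdot\epsilon$, which differ since $q\ge3$.

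Because $\mathcal{T}$, $\mathcal{L}$ and $\mathbf{S_{2}}$ all preserve entrywise positivity, $\aleph\circ\mathbf{S_{2}}\circ\mathcal{K}_N$ with $\mathcal{K}_N(\theta_1,\theta_2,\theta_3)=\mathcal{T}_{\mathcal{L}_{\mathcal{S}_N(\theta_1)}(\theta_2)}(\theta_3)$ is analytic on the connected set $\mathcal{O}\times\mathbb{R}^2$. Fractional stretching enters only through $\theta_1$ near $1$, where positivity is automatic. So the witness at $(1,\alpha,\theta^*)$ legitimately feeds \cref{lemma: analyticGadgets}. Taking $\theta_3$ large then yields the strict diagonal order, diagonal dominance and (by Gershgorin) positive definiteness of $N'$, rather than relying on continuity near $N$.
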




\begin{theorem}\label{theorem: diagonalDistinctHardness}
	Let $M \in \Sym{q}{pd}(\mathbb{R}_{> 0})$ be
	diagonal distinct (for $q \geq 2$).
	Then, $\PlGH(M)$ is $\#$P-hard.
\end{theorem}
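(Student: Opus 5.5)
The plan is to prove \cref{theorem: diagonalDistinctHardness} by strong induction on $q$, using the \nameref{induction: inductionHyp} for all $2 \leq t < q$. The base case $q = 2$ is exactly \cref{theorem: domain2Hardness}: a diagonal-distinct $M \in \Sym{2}{pd}(\mathbb{R}_{>0})$ is full rank with $M_{11} \neq M_{22}$, so $\PlGH(M)$ is $\#$P-hard. For $q \geq 3$ I assume the hypothesis for all smaller $t$; it is invoked inside \cref{theorem: inductionStep}.

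Given a diagonal-distinct $M \in \Sym{q}{pd}(\mathbb{R}_{>0})$, I first apply \cref{lemma: diagDistinctImpliesMatrixProperties}. Either $\PlGH(M)$ is already $\#$P-hard, or there is an I-close, strongly diagonal-distinct, intra-row-distinct $N \in \Sym{q}{pd}(\mathbb{R}_{>0})$ with $\PlGH(N) \leq \PlGH(M)$. I then apply \cref{theorem: diagDistinctAllProperties} to $N$, which is diagonal dominant since it is I-close. If $\PlGH(N)$ is $\#$P-hard, we are done by transitivity. Otherwise, after renaming, $N$ satisfies all of the restrictions (1), (4), (10), together with the auxiliary ones in the appendix.

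In that remaining case I invoke \cref{lemma: diagDistinctHardness} to obtain $N'$ with $\PlGH(N') \leq \PlGH(N)$. This $N'$ is diagonal dominant, strongly diagonal distinct, intra-row-distinct and positive definite, but it fails at least one of the restrictions. I apply \cref{theorem: diagDistinctAllProperties} again, now to $N'$. Since $N'$ cannot satisfy all the restrictions under any renaming, the only remaining alternative is that $\PlGH(N')$ is $\#$P-hard. Chaining the reductions gives $\PlGH(N') \leq \PlGH(N) \leq \PlGH(M)$, so $\PlGH(M)$ is $\#$P-hard, which closes the induction.

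The main obstacle is the bookkeeping at this last step. \cref{lemma: diagDistinctHardness} says $N'$ fails the restrictions, but \cref{theorem: diagDistinctAllProperties} concludes the restrictions hold only \emph{after a renaming} of $[q]$. So I must make sure ``does not simultaneously satisfy'' is read as ``for every renaming''; I would check this in the construction of $N'$, and otherwise argue that the restrictions single out a canonical ordering. Condition (4) forces the order by decreasing diagonal, so the renaming is unique. I must also make sure that the hypothesis of \cref{theorem: diagDistinctAllProperties} actually holds for $N'$, in particular positive definiteness, which \cref{lemma: diagDistinctHardness} provides. Finally, the induction hypothesis must be available at every place where \cref{theorem: inductionStep} is used inside these lemmas, which is why the argument is set up as a strong induction on $q$.
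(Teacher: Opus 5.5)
Your proposal is correct and follows the paper's proof essentially verbatim. It uses induction on $q$ with base case \cref{theorem: domain2Hardness}, then \cref{lemma: diagDistinctImpliesMatrixProperties}, \cref{theorem: diagDistinctAllProperties} and \cref{lemma: diagDistinctHardness}, followed by a second application of \cref{theorem: diagDistinctAllProperties} to $N'$. The renaming issue you flag is resolved in the paper exactly as you suggest, inside the proof of \cref{lemma: diagDistinctHardness}: $N'$ already satisfies $N'_{11} > \cdots > N'_{qq}$, so only the identity renaming can give restriction (4), and then $\aleph(N') \neq 0$ violates restriction (10).
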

\begin{proof}
	From \cref{lemma: diagDistinctImpliesMatrixProperties},
	we see that either $\PlGH(M)$ is $\#$P-hard,
	or there exists an I-close (thus diagonal dominant),
	strongly diagonal distinct, intra-row-distinct
	$N \in \Sym{q}{pd}(\mathbb{R}_{> 0})$, s.t.,
	$\PlGH(N) \leq \PlGH(M)$.
	$\PlGH(N)$ is $\#$P-hard, unless $N$ satisfies all  
	properties in \cref{theorem: diagDistinctAllProperties}.
	Now, suppose that $N$ satisfies all these properties.
	\cref{lemma: diagDistinctHardness} implies that
	$\PlGH(N') \leq \PlGH(N)$ does not satisfy
	all the properties of \cref{theorem: diagDistinctAllProperties}.
	So, $\PlGH(N') \leq \PlGH(N) \leq \PlGH(M)$ is 
	$\#$P-hard.
\end{proof}

%% file: 5.Diagonal_Distinct_Dichotomy.tex
\section{Dichotomy for diagonal-distinct matrices}\label{sec: diagonalDistinctDichotomy}

In this section, we
use previously known techniques to
establish
a complexity dichotomy for all
diagonal distinct
$M \in \Sym{q}{}(\mathbb{R}_{\geq 0})$
(without assuming all
positive  and being positive definite).
The first step is to extend
\cref{theorem: diagonalDistinctHardness}
from positive definite matrices to
full rank matrices.
This is fairly straightforward, since
$M \in \Sym{q}{F}(\mathbb{R}_{> 0})$
implies that $\mathbf{S_{2}}(M) = M^{2} \in 
\Sym{q}{pd}(\mathbb{R}_{>0})$.
With clever use of gadgets, we can also
ensure diagonal distinctness, to prove the
following:

\begin{restatable}{lemma}{fullRankHardness}
\label{lemma: fullRankHardness}
    Let $M \in \Sym{q}{F}(\mathbb{R}_{> 0})$
    be diagonal distinct (for $q \geq 2$).
    Then, $\PlGH(M)$ is $\#$P-hard.
\end{restatable}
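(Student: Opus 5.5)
The plan is to reduce to \cref{theorem: diagonalDistinctHardness} by producing a positive definite, diagonal distinct matrix $N$ with $\PlGH(N) \leq \PlGH(M)$. The natural candidate is $\mathbf{S_{2}}(M) = M^{2}$. Since $M \in \Sym{q}{F}(\mathbb{R}_{> 0})$ is symmetric and full rank, $M^{2}$ is positive definite with positive entries, and $\PlGH(M^{2}) \leq \PlGH(M)$ via the stretching gadget. The difficulty is that $(M^{2})_{ii} = \sum_{k} M_{ik}^{2}$ need not be pairwise distinct even when the $M_{ii}$ are. So before squaring I will first pass to a matrix $M' = \mathcal{K}(\theta)$ obtained from a gadget family that keeps us inside $\Sym{q}{F}(\mathbb{R}_{> 0})$ and makes $\phi_{\rm{diag}}(\mathbf{S_{2}}(M')) \neq 0$.

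I would use the thickening family $\mathcal{T}_{M}(\theta)$, with $\PlGH(\mathcal{T}_{M}(\theta)) \leq \PlGH(M)$ by \cref{lemma: simpleThickening}, together with \cref{lemma: analyticGadgets}. Take the countable family $\mathcal{F} = \{\det, \ \rho: A \mapsto \phi_{\rm{diag}}(A^{2})\}$; both are polynomials in the entries, hence analytic on an open $U \supseteq \Sym{q}{}(\mathbb{R}_{>0})$. The determinant is nonzero at $\theta = 1$. For $\rho$, I need some $\theta$ with $\sum_k M_{ik}^{2\theta} \neq \sum_k M_{jk}^{2\theta}$ for all $i<j$. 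Taking $\theta \to \infty$ or $\theta\to -\infty$ does not obviously suffice, because the maximal entries of two rows could coincide.

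The cleaner route is to argue each factor separately, which is enough because \cref{lemma: analyticGadgets} only needs each $\rho$ to be nonvanishing at some point. So I split $\rho$ into the functions $\rho_{ij}(A) = (A^{2})_{ii} - (A^{2})_{jj}$. Consider $g(\theta) = \sum_k M_{ik}^{2\theta} - \sum_k M_{jk}^{2\theta}$, a generalized exponential polynomial $\sum_a c_a e^{2\theta \ln a}$. It vanishes identically iff the multisets of row entries $\{M_{ik}\}_k$ and $\{M_{jk}\}_k$ coincide, since distinct exponentials are linearly independent. If the multisets differ, some $\theta$ gives $\rho_{ij}\neq 0$. Then \cref{lemma: analyticGadgets}, applied on an open interval around $1$ where $\mathcal{T}_M(\theta)$ stays full rank, gives a single $\theta^{*}$ with $M' = \mathcal{T}_{M}(\theta^{*})$ full rank and $(M'^{2})$ diagonal distinct. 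Then \cref{theorem: diagonalDistinctHardness} applied to $M'^{2}$ finishes.

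The main obstacle is rows $i,j$ with identical multisets of entries, for example when $M_{ii}=M_{jk}$ and $M_{jj}=M_{ik}$. There thickening alone fails, and I would switch the gadget. The loop family $\mathcal{L}_{M}(\theta)$ from \cref{lemma: loopLemmaSimple} rescales row $i$ by $M_{ii}^{\theta}$ and column $k$ by $M_{kk}^{\theta}$. Then
\[
(\mathcal{L}_M(\theta)^{2})_{ii} = M_{ii}^{2\theta}\sum_k M_{ik}^{2}M_{kk}^{2\theta},
\]
and a diagonal-distinct $M$ makes these row sums distinguishable as exponential polynomials in $\theta$. The argument is to compare the term $k=i$, namely $M_{ii}^{2+4\theta}$, which has the extreme exponent $4\ln M_{ii}$. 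The multiset of exponents $\{2\ln M_{ii} + 2\ln M_{kk}\}_k$ determines $M_{ii}$ given the set of all diagonals: the shift $2\ln M_{ii}$ differs between the two rows, while the base set $\{2\ln M_{kk}\}$ is common. So the exponent multisets differ, and the two rows cannot agree for all $\theta$.

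Combining, I would take $\mathcal{K}(\theta) = \mathcal{L}_{M}(\theta)$, which preserves full rank because $\det \mathcal{L}_M(\theta) = \det M \cdot \prod M_{ii}^{2\theta}$, and preserves positivity. I then apply \cref{lemma: analyticGadgets} with the $\rho_{ij}$ to get $M'$ with $M'^{2} \in \Sym{q}{pd}(\mathbb{R}_{>0})$ diagonal distinct. Verifying the exponent-multiset argument carefully is the step I expect to need the most care.
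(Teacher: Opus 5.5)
Your final route is exactly the paper's proof: pass to $\mathcal{L}_{M}(\theta)$, which stays positive and full rank, square it via $\mathbf{S_{2}}$, and invoke \cref{theorem: diagonalDistinctHardness}. The only difference is how diagonal distinctness of $\mathcal{L}_{M}(\theta)^{2}$ is certified. The paper orders $M_{11} > \dots > M_{qq}$ and lets $\theta \to \infty$, using $(\mathcal{L}_{M}(\theta)^{2})_{ii} \sim M_{i1}^{2}(M_{11}M_{ii})^{2\theta}$; you instead show each pairwise difference is a nonzero exponential polynomial and combine them with \cref{lemma: analyticGadgets}. Your translation argument is sound: the exponent set of row $i$ is $2\ln M_{ii} + \{2\ln M_{kk}\}_{k}$, and a finite set is not invariant under a nonzero shift. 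Your side remark that the $k=i$ term carries the extreme exponent is false in general, but the argument does not need it.
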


A full proof can be found in
\autoref{appendix: diagonalDistinctDichotomy}.
Using techniques and results adapted from
~\cite{govorov2020dichotomy} and
~\cite{cai2013graph}, we
can also prove the following 
(more details in \autoref{appendix: diagonalDistinctDichotomy}):

\begin{restatable}{theorem}{positiveDichotomy}
\label{theorem: positiveDichotomy}
    Let $M \in \Sym{q}{}(\mathbb{R}_{> 0})$
    be diagonal distinct.
    Then $\PlGH(M)$ is polynomial time tractable
    if $M$ has rank $1$.
    In all other cases,
    $\PlGH(M)$ is $\#$P-hard.
\end{restatable}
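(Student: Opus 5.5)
The tractable side is immediate. If $M$ has rank $1$, write $M = vv^{\tt T}$ with $v \in \mathbb{R}_{>0}^{q}$ (possible since $M$ is symmetric with positive entries). Then for any graph $G$,
$$Z_M(G) = \prod_{u \in V} \Big(\sum_{i \in [q]} v_i^{\deg(u)}\Big),$$
which is computable in polynomial time; planarity plays no role here.

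For hardness, suppose $\operatorname{rank}(M) = r \ge 2$. The plan is to reduce to \cref{lemma: fullRankHardness}, which needs a full-rank, diagonal distinct, positive matrix. If $r = q$ we are done directly. Otherwise $M$ has twin-like dependencies among rows, and I would follow the approach of~\cite{cai2013graph,govorov2020dichotomy}.

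First, I would apply the stretching gadget. Consider $\mathbf{S_2}(M) = M^2$, which is positive semidefinite with the same rank $r$. Using the thickening interpolation $\mathcal{T}_M(\theta)$ together with \cref{lemma: analyticGadgets}, I would perturb $M$ within the reduction class. The aim is to obtain a matrix $M'$, with $\PlGH(M') \le \PlGH(M)$, that keeps diagonal distinctness (encoded by $\phi_{\rm diag}$) and has rank as large as the structure permits. Since $\mathcal{T}_M(\theta)$ for generic $\theta$ typically has full rank unless rows are proportional, the only real obstruction is \emph{pairwise proportional rows}.

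Next comes the twin-reduction step. Suppose rows $i$ and $j$ of $\mathcal{T}_M(\theta)$ are proportional for all $\theta$. Then $M_{ik}/M_{jk}$ must be constant in $k$, and the entrywise powers must preserve this. Comparing $(M_{ii}, M_{ij})$ with $(M_{ji}, M_{jj})$ gives $M_{ii}M_{jj} = M_{ij}^2$. In that case I would group domain elements into proportionality classes, as in the Cai–Chen–Lu "cleaning" argument. Each class contributes a vertex weight, so $\PlGH(M)$ becomes equivalent to a problem $Z_{C,D}$ with a full-rank, pairwise non-proportional matrix $C$ and a diagonal vertex-weight matrix $D$. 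This is the standard $\GH(C,D)$ normal form, and the equivalence holds for planar inputs too, since it only regroups the sum. Then $r \ge 2$ forces $C$ to have size at least $2$.

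The main obstacle is that the resulting $C$ is full rank but carries vertex weights $D$, while \cref{lemma: fullRankHardness} is stated for plain $M$. We also need $C$ (or a gadget image of it) to remain diagonal distinct. To handle vertex weights, I would use $\mathbf{S_2}$-type gadgets: $Z_{C,D}$ on $\mathbf{S_2}G$ realizes $CDC$ as an edge weight, and $(D^{1/2}CD^{1/2})^2$ absorbs $D$, giving an honest positive definite matrix of size at least $2$. For diagonal distinctness, I would use the fact that the original diagonal entries were distinct: within a class, $M_{ii}$ are distinct multiples along a proportional ray, which I expect to force distinct class representatives after suitable thickening. In the worst case, I would fall back on the $\#$P-hardness criterion of~\cite{govorov2020dichotomy} for non-rank-1 blocks. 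I would then finish by invoking \cref{lemma: fullRankHardness} (or \cref{theorem: diagonalDistinctHardness}) on this matrix.
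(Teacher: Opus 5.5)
The tractable half is fine and matches the paper. The hardness half has a genuine gap at the twin-reduction step. Suppose $M_{ik}=c_iM_{\iota(I)k}$ for all $k$ and all $i$ in a twin class $I$. Then $c_i=(M_{ii}/M_{\iota(I)\iota(I)})^{1/2}$, and regrouping gives
\[
Z_M(G)=\sum_{\tau:V\to\mathcal{I}(M)}\ \prod_{(u,v)\in E}M_{\iota(\tau(u))\iota(\tau(v))}\ \prod_{v\in V}w_{\deg(v)/2}(\tau(v)),\qquad w_\theta(I)=\sum_{i\in I}\Big(\frac{M_{ii}}{M_{\iota(I)\iota(I)}}\Big)^{\theta}.
\]
The vertex weight depends on $\deg(v)$. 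For $|I|\ge 2$ it is not of the form $x_I^{\deg(v)}$, so this is not a $Z_{C,D}$ problem with one fixed $D$, and the weight cannot be moved onto the edges.

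Your $\mathbf{S_2}$ step does not fix this. In $\mathbf{S_2}G$ only the new degree-$2$ vertices receive a fixed weight; every original vertex still carries $w_{\deg(v)/2}$. Meanwhile $Z_{(D^{1/2}CD^{1/2})^2}(G)$ puts the weight $D_{II}^{\deg(v)/2}$ on original vertices. So it is not $Z_M$ of any graph you have built, and no reduction to $\PlGH(M)$ results.

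The missing idea is a planar \emph{vertex} gadget that gives every vertex a fixed degree. The paper uses the ring gadget $\mathbf{R_{m,n}}$ of \cite{govorov2020dichotomy}: each vertex is replaced by a cycle of thickened paths, which preserves planarity. Interpolation then yields the weight-free matrix $\mathcal{R}_M(n)_{IJ}=M_{\iota(I)\iota(J)}\frac{w_{2n+1}(I)}{w_{2n}(I)}\frac{w_{2n+1}(J)}{w_{2n}(J)}$ on the twin classes, with $\PlGH(\mathcal{R}_M(n))\le\PlGH(M)$ (\cref{theorem: twinReduction}).

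Diagonal distinctness is the second missing piece; ``I expect \dots\ after suitable thickening'' is not an argument. The paper chooses $\iota(I)=\arg\max_{i\in I}M_{ii}$, which is unique by diagonal distinctness. Then $w_\theta(I)\to 1$ as $\theta\to\infty$, so
\[
\mathcal{R}_M(n)_{II}/\mathcal{R}_M(n)_{JJ}\to M_{\iota(I)\iota(I)}/M_{\iota(J)\iota(J)}\neq 1 .
\]
Hence for large $n$ the reduced matrix is diagonal distinct, and \cref{lemma: fullRankHardness} applies.

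Your fallback to the hardness criterion of \cite{govorov2020dichotomy} is not available. That is a bounded-degree result, and its criterion fails for planar graphs: $\left(\begin{smallmatrix}a&b\\b&a\end{smallmatrix}\right)$ with $0<b\neq a$ has rank $2$ yet is planar-tractable. Only their planarity-preserving gadget transfers.

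Finally, your claim that generic thickening restores full rank unless two rows are proportional is unproven, and false in that generality. The twin-free symmetric matrix with rows $(1,z,1,z)$, $(z,1,z,1)$, $(1,z,z,1)$, $(z,1,1,z)$, for $z>0$, $z\neq 1$, keeps $r_1+r_2=r_3+r_4$ under every entrywise power. Twin-freeness alone does not give full rank either: take $M=\left(\begin{smallmatrix}1&2&3\\2&3&4\\3&4&5\end{smallmatrix}\right)$. For this $M$ all classes are singletons and $\mathcal{R}_M(n)=M$, so the full-rank clause of \cref{theorem: twinReduction} as stated also needs supplementing. One thickening repairs this instance, since $\det(\mathbf{T_2}(M))=-8$.
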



\begin{restatable}{theorem}{nonNegativeDichotomy}
\label{theorem: nonNegativeDichotomy}
    Let $M \in \Sym{q}{}(\mathbb{R}_{\geq 0})$
    be diagonal distinct.
    Then $\PlGH(M)$ is polynomial time tractable
    if (after a renaming of $[q]$), 
    $M = M_{1} \oplus \dots \oplus M_{r}$
    for rank $0$ or rank $1$ matrices $M_{i}$.
    In all other cases, $\PlGH(M)$ is $\#$P-hard.
\end{restatable}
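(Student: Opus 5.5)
The plan is to reduce \cref{theorem: nonNegativeDichotomy} to the positive case, \cref{theorem: positiveDichotomy}, by a connected-component decomposition followed by a bipartite/non-bipartite case analysis, in the style of~\cite{cai2013graph} and~\cite{govorov2020dichotomy}.

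\textbf{Tractability.} Suppose $M = M_1 \oplus \dots \oplus M_r$ with each $M_i$ of rank $0$ or $1$. For a connected input $G$, every homomorphism maps $G$ into a single block, so $Z_M(G) = \sum_i Z_{M_i}(G)$, and $Z_M$ is multiplicative over the connected components of $G$.
\begin{itemize}
\item If $M_i = 0$, then $Z_{M_i}(G)$ is trivial.
\item If $M_i = v v^{\tt T}$, then $Z_{M_i}(G) = \prod_{u \in V} \sum_{a} v_a^{\deg(u)}$, which is computable in polynomial time.
\end{itemize}

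\textbf{Hardness.} Assume $M$ is not of this form. View $M$ as a weighted graph $H_M$ with loops, and split it into its connected components $M = M_1 \oplus \dots \oplus M_r$.

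First I show $\PlGH(M_i) \le \PlGH(M)$ for each component, as in~\cite{dyer2000complexity, cai2013graph}. The naive argument uses the sum formula for connected $G$ and separates the components. This needs care because separation normally uses a vertex-weighted interpolation, which I would implement planarly by attaching pendant gadgets: trees and loops hung at each vertex of $G$. These preserve planarity and distinguish components through the values $\sum_j (M^k)_{ij}$.

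Some component $M_i$ is then neither zero nor rank $1$, and remains diagonal distinct. There are two cases.

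\textbf{Non-bipartite case.} $M_i$ is irreducible and aperiodic. Use a stretching gadget $\mathbf{S_n}$ for large even $n$, so that $M_i^n$ has all entries positive. I would pick $n$ to keep diagonal distinctness: using the Perron decomposition, $M_i^n$ is close to $\lambda^n uu^{\tt T}$, whose diagonal $\lambda^n u_a^2$ may repeat. So instead I would use the analytic-inheritance \cref{lemma: analyticGadgets}, applied to the interpolated $\mathcal{S}$ family together with thickening, to keep $\phi_{\rm diag} \neq 0$ while making all entries positive. Also $M_i^n$ has rank greater than $1$ unless $M_i$ has rank $1$. Then \cref{theorem: positiveDichotomy} gives hardness.

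\textbf{Bipartite case.} A loopless bipartite component has all diagonal entries equal to $0$, so diagonal distinctness forces it to be $1 \times 1$, that is, rank $\le 1$. Any component with a positive diagonal entry is non-bipartite.

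\textbf{Main obstacle.} The hard step is making $M_i^n$ (or a gadget variant) entrywise positive and diagonal distinct simultaneously. Positivity is not an analytic nonvanishing condition, so it must come from an open neighborhood argument, as in \cref{lemma: I-closeStretching}. The chosen sequence of gadgets must keep the open-set structure so that \cref{lemma: analyticGadgets} applies.
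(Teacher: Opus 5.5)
You have the paper's outer skeleton right: tractability via the block decomposition, splitting into connected components, and noting that diagonal distinctness leaves at most one loopless vertex, so every component with at least two vertices is primitive. You then aim for \cref{theorem: positiveDichotomy}, as the paper does. However, the step you flag as the ``main obstacle'' is the entire content of the hardness direction, and the fix you sketch does not work.

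\textbf{The proposed fix fails.} The interpolated family $\mathcal{S}_P(\theta)$ exists, and \cref{lemma: stretchingLemma} is proved, only for positive definite $P$. Your $M_i$ is in general indefinite and possibly singular, so $\mathcal{S}_{M_i}$ is unavailable. If you base the family on $P=M_i^{2n}$ instead, that is positive definite only when $M_i$ is nonsingular. Even then the family sees only the eigenprojections and the $|\lambda_k|$: it passes through powers of $|M_i|$, not through $M_i$. Nothing guarantees that any member is diagonal distinct. But that is exactly the hypothesis \cref{lemma: analyticGadgets} requires: for $\rho=\phi_{\rm diag}$ you must exhibit some parameter in the connected domain where $\rho\neq 0$. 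That lemma can only preserve a nonvanishing condition, not create one. Adding thickening does not help either. $\mathcal{T}$ preserves every equality among diagonal entries, and it is only defined when the entries are positive, which $\mathcal{S}_P(\theta)$ need not be away from the points you care about.

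\textbf{The missing idea.} The paper uses \cref{lemma: BDiagonalDistinct}. The bridge--thickening family $\mathcal{K}_M(\Theta)=\mathcal{B}_{\mathcal{T}_M(\theta_1)}(\theta_2)$, with $\mathcal{K}_M(\Theta)_{ij}=\sum_{x,y}(M_{ix}M_{jy})^{\theta_1}(M_{xy})^{\theta_1\theta_2}$, provably contains a diagonal-distinct member. Suppose $\mathcal{K}_M(\Theta)_{ii}=\mathcal{K}_M(\Theta)_{jj}$ identically. Compare coefficients of $(M_{kk})^{\theta_1\theta_2}$: by diagonal distinctness, $M_{ab}=M_{kk}$ forces $(a,b)=(k,k)$ or $a\neq b$, and off-diagonal pairs are counted twice. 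So $(M_{ik})^{2\theta_1}$ carries an odd coefficient, and a parity argument forces $M_{ik}=M_{jk}$ for all $k$, hence $M_{ii}=M_{jj}$, a contradiction.

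This family also fixes the positivity problem.
\begin{itemize}
\item $\mathcal{K}_M(1,1)=M^3$, and every entry of $\mathcal{K}_M(\Theta)$ is nonnegative.
\item So positivity on the support of $M^3$ is exactly the nonvanishing of the polynomial $\zeta_{M^3}(A)=\prod_{(M^3)_{ij}>0}A_{ij}$. Contrary to your remark, this is an analytic condition here.
\item Rank $\ge 2$ of the relevant block is an open condition near $(1,1)$.
\end{itemize}
\cref{lemma: analyticGadgets} then gives a diagonal-distinct $N_1$ near $(1,1)$ whose support contains that of $M^3$. Iterating $\lceil\log_3 q\rceil$ times yields a block that is entrywise positive, diagonal distinct and of rank $\ge 2$. \cref{theorem: positiveDichotomy} together with \cref{theorem: domainSeparable} then finishes the proof.

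\textbf{A smaller point.} Your component separation by pendant gadgets ``at each vertex'' is imprecise, since products of vertex weights need not separate components. Either cite \cref{theorem: domainSeparable}, or pin a single vertex $u$ of a connected $G$ with $t$ self-loops. Interpolating in the distinct values $M_{aa}$ then recovers each pinned partition function with $u\mapsto a$, and you sum these over $a$ in the component.
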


%% file: 6.Quantum.tex
\section{Edge gadgets and quantum automorphisms}\label{sec: quantumBody}

The dichotomy of 
\cref{theorem: nonNegativeDichotomy} can be extended
to prove the following:
\begin{restatable}{theorem}{fullGadgetSeparationNonNegative}
    \label{theorem: fullGadgetSeparationNonNegative}
    Let $M \in \Sym{q}{}(\mathbb{R}_{\geq 0})$.
    Assume that for all $x < y \in [q]$, there exists
    some $\mathbf{K_{xy}}(M) \in \PlEdge(M)$ such that
    $\mathbf{K_{xy}}(M)_{xx} \neq \mathbf{K_{xy}}(M)_{yy}$.
    Then, $\PlGH(M)$ is polynomial-time tractable
    if (after a renaming of $[q]$),
    $M = M_{1} \oplus \dots \oplus M_{r}$,
    for rank $0$ or rank $1$ or bipartite rank $2$ matrices
    $M_{i}$.
    In all other cases, $\PlGH(M)$ is $\#$P-hard.   
\end{restatable}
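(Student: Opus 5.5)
The idea is to reduce \cref{theorem: fullGadgetSeparationNonNegative} to \cref{theorem: nonNegativeDichotomy} by building a single planar edge gadget $\mathbf{K}$ whose signature $\mathbf{K}(M)$ is diagonal distinct. The obstacle is that $\mathbf{K}(M)$ may have a very different block structure from $M$, so the two sides of the dichotomy have to be transported separately.

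\textbf{Step 1: combining the separating gadgets.} For every pair $x<y$ we have $\mathbf{K_{xy}}$ with $\mathbf{K_{xy}}(M)_{xx}\neq\mathbf{K_{xy}}(M)_{yy}$. Planar edge gadgets can be glued in parallel, identifying both $\ell_1$'s and both $\ell_2$'s; this stays planar because the labeled vertices lie on the outer face, and the signature is the entrywise product. I would form the parallel composition of thickened copies $\mathbf{T}_{n_{xy}}\mathbf{K_{xy}}$, which has signature $\prod_{x<y}(\mathbf{K_{xy}}(M))^{\circ n_{xy}}$. To obtain a diagonal distinct signature, I would instead interpolate over the exponents, in the spirit of \cref{lemma: thickeningLemma}. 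Either choose generic exponents, or argue via generating sets that the diagonal entries, viewed as functions of the exponent vector $(n_{xy})$, are pairwise distinct as formal exponential functions. Then some choice of integers makes all $q$ diagonal entries distinct. One subtlety is that some $\mathbf{K_{xy}}(M)_{ii}$ may be $0$. I would handle this by also including a gadget such as $\mathbf{S_2}$ in the parallel composition, which keeps the diagonals positive on nonzero rows. Zero rows of $M$ are isolated vertices; they can be split off as rank-$0$ blocks and treated separately, provided separation still holds for them. This yields $N=\mathbf{K}(M)\in\PlEdge(M)$ that is diagonal distinct and nonnegative, with $\PlGH(N)\le\PlGH(M)$.

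\textbf{Step 2: hardness.} If $M$ is not of the stated tractable form, I would show that $N$ can be chosen not to be a direct sum of rank $\le 1$ blocks; then \cref{theorem: nonNegativeDichotomy} gives $\#$P-hardness. Decompose $M$ into connected components; the block structure of $N$ refines that of $M$. Consider a component that is neither rank $1$ nor bipartite rank $2$. If it is non-bipartite, I would add $\mathbf{S}_{2m+1}$ and thickening to the composition. Then the component of $N$ is connected, and I would show it has rank $\ge2$ by a genericity argument with \cref{lemma: analyticGadgets}, using some $2\times2$ minor that is nonzero for a specific gadget parameter. If the component is bipartite, connectivity can fail, because odd-length products vanish on same-side pairs. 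Here I would use even stretching $\mathbf{S_2}$ (giving the two sides as separate blocks) and show that one side's block has rank $\ge2$ unless the component has rank $2$. This amounts to checking that a bipartite matrix $\left(\begin{smallmatrix}0&B\\B^{\tt T}&0\end{smallmatrix}\right)$ has rank $2$ exactly when $B$ has rank $1$, and that the tractable condition is precisely this. The main obstacle I expect is combining diagonal distinctness with this rank and connectivity preservation in one gadget. I would encode both as nonvanishing analytic conditions and satisfy them simultaneously via \cref{lemma: analyticGadgets}, after establishing a single witness parameter for each condition.

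\textbf{Step 3: tractability.} This is standard. Rank-$0$ and rank-$1$ blocks factor over vertices, and a bipartite rank-$2$ block $\left(\begin{smallmatrix}0&uv^{\tt T}\\vu^{\tt T}&0\end{smallmatrix}\right)$ reduces to checking bipartiteness of each connected component of $G$ and summing two product terms. Direct sums split over connected components of $G$. This step needs no planarity.
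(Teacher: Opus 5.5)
Your overall plan matches the paper's. You merge the separators into one diagonal-distinct gadget with generic exponents, interpolate it against a power of $M$ near exponent $0$ so a bad component keeps rank $\ge 2$, split a bipartite component with an even power, and apply \cref{theorem: nonNegativeDichotomy} and \cref{theorem: domainSeparable}; the tractable side is standard. But Step~1 has a genuine gap, exactly at the subtlety you flagged.

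Parallel composition multiplies signatures entrywise. So if $\mathbf{K_{xy}}(M)_{ii}=0$, every parallel composition containing a copy of $\mathbf{K_{xy}}$ still has a $0$ in position $(i,i)$. This holds whether or not you add $\mathbf{S_2}$ and for any positive exponents. Interpolating to real exponents does not help either, since $0^{\theta}=0$ for $\theta>0$. Once two indices have zero diagonal entries, diagonal distinctness is lost. Your genericity argument also breaks: it treats $\log$ of each diagonal entry as a linear form in the exponents, which needs every entry positive. The paper names this as the main obstacle in the remark after \cref{theorem: fullGadgetSeparation}. Its last appendix proposition shows that when $\Gamma_M$ mixes bipartite and non-bipartite components, a separable pair may be separable only by gadgets with a zero diagonal entry. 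So positivity cannot be arranged for the whole matrix at once.

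The missing idea is to pass to a single connected component of $\Gamma_M$, to which the separation hypothesis restricts. There you show the separating gadgets can be taken with strictly positive diagonal.

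\emph{Bipartite component.} Here positivity is automatic (\cref{lemma: makeDiagonalDistinctBipartite}). Since $\mathbf{K_{xy}}(M)_{xx}\neq\mathbf{K_{xy}}(M)_{yy}$, some homomorphism sends $\ell_1,\ell_2$ to a common vertex. Hence $\mathbf{K_{xy}}$ is bipartite with a bipartition placing $\ell_1,\ell_2$ on the same side. Mapping that side to $i$ and the other side to a neighbour of $i$ gives a positive term in $\mathbf{K_{xy}}(M)_{ii}$.

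\emph{Non-bipartite component.} The paper uses the connector idea of Lov\'asz--Szegedy and Dvo\v{r}\'ak (\cref{lemma: makeDiagonalDistinctNonBipartite}). Pick $m$ with $(M^k)_{ii}>0$ for all $i$ and all $k\ge m$. By a Vandermonde argument on the eigenvalues, write $M=\sum_{t=0}^{q}c_tM^{m+t}$. Expand $\mathbf{K_{xy}}(M)$ multilinearly over its edges as $\sum_\psi c_\psi\mathbf{K^\psi_{xy}}(M)$, where $\mathbf{K^\psi_{xy}}$ replaces each edge by a path of length in $\{m,\dots,m+q\}$. Every $\mathbf{K^\psi_{xy}}(M)$ has positive diagonal, because the constant assignment contributes a positive term. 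Since the combination separates $x$ and $y$, some single $\mathbf{K^\psi_{xy}}$ does; symmetrize it by doubling, as in the proof of \cref{theorem: qutOrbits}.

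Only with these positive-diagonal separators does your Step~1 go through, as in \cref{lemma: makeDiagonalDistinct}. Your Step~2 can then be run on that component.
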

A proof of \cref{theorem: fullGadgetSeparationNonNegative}
can be found in \autoref{appendix: quantumAppendix}.
The following question now arises:
Given $M \in \Sym{q}{}(\mathbb{R}_{> 0})$ and $i < j \in [q]$,
is  there  a
$\mathbf{K}(M) \in \PlEdge(M)$ such that 
$\mathbf{K}(M)_{ii} \neq \mathbf{K}(M)_{jj}$? 
When we allow all, possibly \emph{nonplanar},
$\mathbf{K}(M) \in \Edge(M)$, this question has a 
definitive answer in terms of graph isomorphism.
The \emph{automorphism group} $\aut(M)$ of $M$ is the group of
permutations $\sigma \in S_q$ such that 
$M_{ij} = M_{\sigma(i) \sigma(j)}$ for all
$i,j \in [q]$. 
The \emph{orbits} of $\aut(M)$ are the equivalence classes of the relation
$\sim$ on $[q]$ defined by $i \sim j \iff 
\exists\ \sigma \in \aut(M): \sigma(i) = j$. 
If $i \sim j$,
then $\mathbf{K}(M)_{ii} = \mathbf{K}(M)_{jj}$ for all $\mathbf{K}(M)
\in \Edge(M)$. 
For $M \in \Sym{q}{}(\mathbb{R}_{> 0})$, we show, using results of \cite{lovasz2006rank,young2022equality}, that the converse also holds, even restricted to symmetric gadget signatures.

\begin{restatable}{proposition}{autOrbits} 
    \label{proposition: autOrbits}
    Let $M \in \Sym{q}{}(\mathbb{R}_{\geq 0})$. Then
    $\mathbf{K}(M)_{ii} = \mathbf{K}(M)_{jj}$ 
    for every $\mathbf{K}(M) \in \Edge(M)$ 
    if and only if $i$ and $j$ are in the same
    orbit of $\aut(M)$.
\end{restatable}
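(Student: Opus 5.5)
The easy direction is a relabeling argument. Suppose $\sigma \in \aut(M)$ with $\sigma(i) = j$, and let $\mathbf{K}$ be any edge gadget. Composing each assignment $\tau$ in \cref{equation: signature} with $\sigma$ gives a bijection from assignments with $\tau(\ell_1)=\tau(\ell_2)=i$ to assignments with both labels mapped to $j$. Since $M_{\sigma(u)\sigma(v)} = M_{uv}$, this bijection preserves each product, so $\mathbf{K}(M)_{ii} = \mathbf{K}(M)_{jj}$.

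For the converse, I would first pass from two-labeled gadgets to one-labeled graphs. Given any $1$-labeled graph $F$ with labeled vertex $r$, define the edge gadget $\mathbf{K}_F$ as follows: take $F$, add a new vertex $\ell_2$, and join it to $\ell_1 := r$ by a single edge. Then $\mathbf{K}_F(M)_{ii} = M_{ii}\, f_F(i)$, where $f_F(i) = \sum_{\tau:\tau(r)=i}\prod_{e\in E(F)} M_{\tau(e)}$. This matrix need not be symmetric, so instead I would identify $\ell_1$ and $\ell_2$ as distinct vertices, both joined to $r$ by an edge, and attach a copy of $F$ at each. Concretely, let $\mathbf{K}'_F$ have vertices $\ell_1,\ell_2$, each carrying its own copy of $F$ rooted at it, plus two length-two paths $\ell_1 - w - \ell_2$. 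Its signature is $\mathbf{K}'_F(M)_{ab} = f_F(a) f_F(b) (M^2)_{ab}$, which is symmetric, and its diagonal is $f_F(i)^2 (M^2)_{ii}$. A simpler choice is to take a symmetric gadget whose diagonal equals $f_F(i)^2\cdot c_i$. The hypothesis "$\mathbf{K}(M)_{ii}=\mathbf{K}(M)_{jj}$ for all symmetric signatures" then yields $f_F(i)^2 (M^2)_{ii} = f_F(j)^2 (M^2)_{jj}$ for every $F$. Taking $F$ trivial gives $(M^2)_{ii} = (M^2)_{jj}$. If this common value is nonzero, nonnegativity gives $f_F(i) = f_F(j)$ for all rooted $F$. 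If it is zero, row $i$ and row $j$ of $M$ are both zero, so $i$ and $j$ are isolated; then the transposition $(i\,j)$ is an automorphism, and this degenerate case is settled directly. The point of using squares is that nonnegativity lets us drop the square root.

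Once $f_F(i) = f_F(j)$ holds for all rooted graphs $F$, i.e.\ the vertices $i$ and $j$ of the weighted graph $H_M$ have identical rooted homomorphism counts, I would invoke the result of Lov\'asz~\cite{lovasz2006rank} (in the weighted form of \cite{young2022equality}). It says that for twin-free weighted graphs, equality of all $1$-labeled homomorphism numbers forces an automorphism mapping $i$ to $j$. The twin-free hypothesis is the main obstacle. If $M$ has twins (identical rows), I would first collapse each twin class into a single vertex carrying a vertex weight equal to the class size. The rooted counts $f_F$ then become rooted counts of a vertex-weighted, twin-free graph, and here the cited theorem should apply with node weights. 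This produces an automorphism of the reduced graph sending the class of $i$ to the class of $j$. Since the counts force the two classes to have equal weight, it lifts to an automorphism of $M$ by choosing any bijection between twin classes of equal size and mapping $i$ to $j$ inside the matched classes. The care needed is to check that the vertex weights appearing in $f_F$ after collapsing are exactly what the cited node-weighted theorem requires, and that the rooting at $i$ versus its class is handled correctly. The rooted class weight factors out uniformly and can be cancelled, because equal $f_F$ for the single-edge $F$ and others pins it down.
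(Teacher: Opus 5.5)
Your proof is correct, but it takes a different route from the paper. The paper cites Young's classical duality (\cref{theorem: classical_duality}): a matrix commutes with every $P \in \aut(M)$ iff it lies in $\mathrm{span}_{\mathbb{C}}(\Gadget(M))$. Here binary gadgets may have $\ell_1 = \ell_2$ and asymmetric signatures. The diagonal indicator $L$ of the orbit of $i$ commutes with $\aut(M)$, so some binary gadget separates $i$ from $j$ on the diagonal (\cref{lemma: classical_diagonal_orbit}). The doubling trick from the proof of \cref{theorem: qutOrbits} then turns it into a symmetric edge gadget: two disjoint copies when both labels sit on one vertex, or two copies glued with swapped labels when the signature is asymmetric. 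Its diagonal entries are squares, which still differ by nonnegativity.

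You instead reduce at once to rooted counts $f_F$; your two-rooted-copies gadget is essentially the paper's first doubling case. You then use Lov\'asz's characterization of orbits by $1$-labeled homomorphism numbers, which needs twin-freeness. You pay for this with the collapse of twin classes into node weights and the lift of the resulting automorphism. That bookkeeping is sound:
\begin{itemize}
\item the collapsed matrix is twin-free;
\item $f_F(i)$ is exactly the $1$-labeled count at $[i]$, with unlabeled vertices weighted by class sizes;
\item a node-weight-preserving automorphism lifts to $\aut(M)$ by matching equal-size classes with $i \mapsto j$.
\end{itemize}

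The paper's route avoids all of this. Allowing both labels on one vertex is exactly what removes the twin-free hypothesis; this is the point of the paper's footnote on \cite{lovasz2006rank}. It also makes the classical proof parallel the quantum one in \cref{lemma: diagonal_orbit}. Your route needs only the $1$-labeled, node-weighted form of Lov\'asz's lemma. However, attributing a twin-free hypothesis to Young's weighted result is inaccurate, since that hypothesis belongs only to Lov\'asz's version. Citing Young directly would let you skip the collapse entirely.

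Two minor points. First, the connecting path $\ell_1 - w - \ell_2$ is unnecessary, and it is what creates your zero-row case split. The disjoint union of two rooted copies of $F$, with roots labeled $\ell_1$ and $\ell_2$, already has symmetric signature $f_F(a)f_F(b)$ and diagonal $f_F(i)^2$. (Also, two length-two paths with distinct middle vertices would give $((M^2)_{ab})^2$, not $(M^2)_{ab}$, though this is harmless.)

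Second, your final sentence is not a valid justification. For the single-edge $F$, $f_F(i)=\sum_k M_{ik}$ is a row sum and says nothing about class sizes. Nothing needs cancelling anyway: in Lov\'asz's $\hom_\phi(F,H)$ the labeled vertex carries no node weight. Equality of the class sizes of $i$ and $j$ comes from the automorphism preserving node weights.
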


We refer the reader to \autoref{appendix: quantumAppendix} for proofs of the
results in this section.
It is decidable whether $i \sim j$, so, by~\cref{proposition: autOrbits}, it is decidable whether there is a
$\mathbf{K}(M) \in \Edge(M)$ separating $i$ and $j$.
For a planar version of \autoref{proposition: autOrbits}, we turn to the quantum versions
of $\aut(M)$ and its orbits.
The \emph{quantum automorphism group} $\qut(M)$ of $M$ is a
highly abstract 
object
designed to mimic and generalize $\aut(M)$. In particular, $\qut(M)$ also admits well-behaved orbits on $[q]$ that are often
identical to the orbits of $\aut(M)$, but can be strictly coarser. To prove the equivalence
between quantum isomorphism and planar graph homomorphism indistinguishability, 
Man\v{c}inska and Roberson \cite{manvcinska2020quantum} developed connections between the orbits of $\qut(M)$ and planar homomorphism gadgets (generalized to weighted
graphs in \cite{cai2023planar}), from which we 
can prove the following.
\begin{restatable}{theorem}{qutOrbits} 
    \label{theorem: qutOrbits}
    Let $M \in \Sym{q}{}(\mathbb{R}_{\geq 0})$. Then
    $\mathbf{K}(M)_{ii} = \mathbf{K}(M)_{jj}$ 
    for every $\mathbf{K}(M) \in \PlEdge(M)$ 
    if and only if $i$ and $j$ are in the same
    orbit of $\qut(M)$.
\end{restatable}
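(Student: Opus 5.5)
We would prove \cref{theorem: qutOrbits} by transporting the Man\v{c}inska--Roberson correspondence between planar bilabeled graphs and the intertwiner spaces of $\qut(M)$ (in its weighted form from \cite{cai2023planar}) to the diagonal entries of gadget signatures. The key facts are these. The span of the homomorphism matrices of planar $(k,\ell)$-bilabeled graphs, weighted by $M$, equals the intertwiner space $C_{\qut(M)}(k,\ell)$. In particular, for $(1,1)$-bilabeled planar graphs one obtains exactly the matrices in the commutant of the magic unitary $u$ of $\qut(M)$. The orbits of $\qut(M)$ are the blocks of the minimal central projections of the algebra of $(0,1)$-intertwiners, i.e.\ the vectors $v\in\mathbb{C}^q$ with $u v = v\otimes 1$. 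This algebra is a coherent-algebra-like $*$-subalgebra of $\mathbb{C}^q$ under entrywise product, and its minimal idempotents are the indicator vectors of the quantum orbits.

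For the forward direction, suppose $i,j$ lie in the same quantum orbit and let $\mathbf{K}$ be a planar edge gadget. Take the $(0,1)$-bilabeled graph obtained by gluing $\ell_1$ and $\ell_2$ into a single output vertex. Identifying the two labels, which lie on the outer face, keeps the graph planar with the label on the outer face. Its homomorphism vector is exactly $(\mathbf{K}(M)_{xx})_{x\in[q]}$. By \cite{manvcinska2020quantum,cai2023planar} this vector lies in $C_{\qut(M)}(0,1)$, so it is constant on quantum orbits. Hence $\mathbf{K}(M)_{ii}=\mathbf{K}(M)_{jj}$.

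For the converse, suppose $i,j$ lie in different quantum orbits. Because $C_{\qut(M)}(0,1)$ is spanned by homomorphism vectors of planar $(0,1)$-graphs and contains the indicator $\mathbf{1}_{O}$ of the orbit $O\ni i$, some planar one-labeled graph $F$ has $F_M(i)\ne F_M(j)$. We then convert $F$ into a planar edge gadget $\mathbf{K}$ with symmetric signature and $\mathbf{K}(M)_{xx}=F_M(x)\cdot c_x$ for a harmless factor $c_x$. The natural choice is to attach a copy of $F$ at $\ell_1$ and at $\ell_2$ and join $\ell_1,\ell_2$ by a single edge; this is the loop-gadget idea with $F$ replacing the self-loops. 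The signature is then $D M D$ with $D=\diag(F_M)$, which is symmetric and planar, with both labels on the outer face. Its diagonal is $F_M(x)^2M_{xx}$, and this may fail to separate $i$ from $j$ because of the factor $M_{xx}$ and the squaring.

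To repair this we use several gadgets. First, take the $(1,1)$-graph consisting of the edge $\ell_1\ell_2$ thickened by $n$ parallel edges, and add $F$ hanging at $\ell_1$ and at $\ell_2$. Its diagonal is $F_M(x)^2M_{xx}^n$. Nonnegativity of $M$ means $F_M\ge 0$ and $M_{xx}\ge 0$, so squaring loses nothing: $F_M(i)^2=F_M(j)^2$ forces $F_M(i)=F_M(j)$. If $F_M(i)^2M_{ii}^n=F_M(j)^2M_{jj}^n$ held for all $n\ge 0$, including the case with no connecting edges once we take the gadget as a disjoint-union-style concatenation, then $F_M(i)=F_M(j)$, a contradiction. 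Some care is needed for $n=0$, since then $\ell_1,\ell_2$ are disconnected, but the definition allows that. Alternatively, we can replace $F$ by the orbit-indicator combination and use that $\mathbf{1}_O$ is idempotent.

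The main obstacle we expect is the first step of the converse: verifying that the Man\v{c}inska--Roberson spanning result holds with real nonnegative weights and, crucially, with the orbit algebra of $\qut(M)$ for a weighted matrix $M$. We would settle this by citing the weighted generalization in \cite{cai2023planar} and checking that the quantum orbit relation there matches the one used here.
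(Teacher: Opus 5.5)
Your proof is correct, but it goes through the $(0,1)$-intertwiners (unary planar gadgets), whereas the paper uses the $(1,1)$-intertwiners.

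The paper first proves \cref{lemma: diagonal_orbit} for all planar binary gadgets. Its forward direction multiplies the relation $u\mathbf{K}(M)=\mathbf{K}(M)u$ by $u_{ij}\neq 0$. Its converse uses \cref{lemma: orbitals} and \cref{theorem: duality} to place the indicator of the orbital of $(i,i)$ in $\mathrm{span}_{\mathbb{C}}(\PlGadget(M))$. It must then turn a separating binary gadget into a symmetric edge gadget, which takes two cases:
\begin{itemize}
\item labels on a common vertex: take two disjoint copies, giving signature $\mathbf{x}\mathbf{x}^{\tt{T}}$;
\item asymmetric signature: merge two copies crosswise, giving entries $\mathbf{K'}(M)_{xy}\mathbf{K'}(M)_{yx}$.
\end{itemize}
In both cases nonnegativity makes squaring the diagonal harmless.

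By working with rooted planar graphs $F$ (the fact the paper cites as \cite[Lemma 10]{cai2023planar}), you avoid orbitals and the asymmetric case altogether. Your $n=0$ gadget, two disjoint copies of $F$ rooted at $\ell_1$ and $\ell_2$, has signature $F_MF_M^{\tt{T}}\in\PlEdge(M)$ with diagonal $F_M(x)^2$. This is exactly the paper's first case, and it finishes the converse by itself. So you can drop the thickened edge, the $DMD$ loop-gadget construction, and the idempotent remark. Your forward direction, gluing $\ell_1,\ell_2$ into a single root on the outer face, is also a valid substitute for the paper's $u_{ij}$ computation.

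The cost of your route is that it needs the Man\v{c}inska--Roberson/Cai--Young duality at level $(0,1)$. It also needs the standard identification of $\{v: uv=v\otimes 1\}$ with the span of the orbit indicators. The paper gets by with the $(1,1)$ duality alone, which it needs anyway for \cref{proposition: quantumSymmetric} and the undecidability results.
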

We say $\qut(M)$ is \emph{trivial} if its orbits all have size 1. In this case, any
$i \neq j \in [q]$ are distinguishable by some $K \in \PlEdge(M)$.
We call $M$ \emph{connected} if the underlying graph
$\Gamma_{M} = ([q], E(\Gamma_{M}))$
is connected, where $(i,  j) \in
E(\Gamma_{M})$ iff $M_{ij} > 0$.
For such $M$,
we can find a single edge gadget distinguishing all pairs $i,j \in [q]$ simultaneously.

\begin{restatable}{corollary}{distinct}
    \label{corollary: distinct}
    Let $M \in \Sym{q}{}(\mathbb{R}_{\geq 0})$ be connected. Then
    $\qut(M)$ is trivial if and only if there exists a diagonal distinct $\vk(M) \in \PlEdge(M)$.
\end{restatable}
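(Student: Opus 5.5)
The ``if'' direction is immediate from \cref{theorem: qutOrbits}. If some $\vk(M) \in \PlEdge(M)$ is diagonal distinct, then for every $i \neq j$ this one gadget gives $\vk(M)_{ii} \neq \vk(M)_{jj}$. Hence $i$ and $j$ lie in different orbits of $\qut(M)$, so all orbits are singletons and $\qut(M)$ is trivial.

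For the ``only if'' direction, assume $\qut(M)$ is trivial. By \cref{theorem: qutOrbits}, for every pair $x<y$ there is a planar gadget $\vk_{xy}$ with $\vk_{xy}(M)_{xx} \neq \vk_{xy}(M)_{yy}$. The task is to merge these finitely many gadgets into a single gadget whose diagonal is pairwise distinct. My plan is to use the closure of $\PlEdge(M)$ under two planar operations:
\begin{itemize}
\item parallel composition (gluing two gadgets at $\ell_1$ and at $\ell_2$), which gives the entrywise (Hadamard) product of signatures;
\item series composition (identifying $\ell_2$ of one gadget with $\ell_1$ of the other), which gives the ordinary matrix product.
\end{itemize}
Both operations keep the labeled vertices on the outer face.

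The obstacle is that a plain Hadamard product $\prod_{x<y}\vk_{xy}(M)$ can collapse distinctions, either through cancellation or because some diagonal entry is $0$. Connectivity of $M$ handles the zeros. Since $\Gamma_M$ is connected, for large $n$ the matrix $\mathbf{S_n}(M)=M^n$ (or $M^n+M^{n+1}$ in the bipartite case) has all diagonal entries positive. If $\Gamma_M$ is bipartite, $M^n$ with $n$ even has positive diagonal, since $(M^n)_{ii}>0$ whenever there is a closed walk of length $n$ at $i$, and such walks exist for every even $n\ge 2$ once $i$ has an edge. Connectivity with $q\ge 2$ gives every vertex an edge.

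I would therefore replace each $\vk_{xy}$ by the parallel composition of $\vk_{xy}$ with $\mathbf{T}_k$ applied to a positive-diagonal gadget $P$ (for example $P=\mathbf{S_{2}}(M)$), and also consider $P$ on its own. Define
\[
A^{(k)} = \vk_{xy}(M)\circ P^{\circ k}.
\]
Some care is needed for pairs where $\vk_{xy}(M)_{xx}=0$. I would instead use a linear-combination trick. The diagonal entries of $\vk_{xy}(M)\circ P^{\circ k}$ and of $P^{\circ k}$, for various $k$, cannot all agree at $x$ and $y$: if the $P$-diagonals already differ at $x,y$ we are done, and otherwise the factor $P^{\circ k}$ is a common positive scalar, which preserves the inequality. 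So I may assume each separating gadget $\vk_{xy}$ has a strictly positive diagonal.

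With all diagonals positive, I form the gadget
\[
\vk = \mathop{\circ}_{x<y} \mathbf{T}_{n_{xy}}\bigl(\vk_{xy}\bigr),
\]
a parallel composition of thickened copies, whose diagonal is $\vk(M)_{ii}=\prod_{x<y}\bigl(\vk_{xy}(M)_{ii}\bigr)^{n_{xy}}$. Taking logarithms, $\ln \vk(M)_{ii}=\sum_{x<y} n_{xy}\, a^{xy}_i$ where $a^{xy}_i=\ln\vk_{xy}(M)_{ii}$. For each pair $i\neq j$, the vector $(a^{xy}_i-a^{xy}_j)_{x<y}$ is nonzero, since its $(i,j)$ coordinate is nonzero. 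Hence the set of integer vectors $n$ with $\sum n_{xy}(a^{xy}_i-a^{xy}_j)=0$ is contained in a proper hyperplane. A finite union of proper hyperplanes cannot cover $\mathbb{Z}_{\geq1}^{\binom q2}$, so some choice of positive integers $n_{xy}$ makes $\vk(M)$ diagonal distinct. $\vk$ is planar and has a symmetric signature, because Hadamard products of symmetric matrices are symmetric.

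The main obstacle is the positivity reduction, that is, ensuring the separating gadgets can be taken with positive diagonals without losing separation. I would handle it with $\mathbf{S_{2n}}$ together with the connectivity hypothesis as described above, checking the bipartite case separately.
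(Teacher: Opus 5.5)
The gap is the positivity reduction, which is the real content of the ``only if'' direction. Your ``if'' direction and your final log-linear step are correct and match the paper (\cref{theorem: qutOrbits} and \cref{lemma: makeDiagonalDistinct}).

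Parallel composition with $P^{\circ k}$ can never remove a zero: $(\vk_{xy}(M)\circ P^{\circ k})_{ii}=\vk_{xy}(M)_{ii}\,(P_{ii})^{k}$ vanishes exactly when $\vk_{xy}(M)_{ii}$ does. Your dichotomy is true: either $P$ separates $x$ and $y$, or $\vk_{xy}(M)\circ P^{\circ k}$ does. In the second case, however, the separating gadget has exactly the zero pattern of $\vk_{xy}$. So ``I may assume each separating gadget has a strictly positive diagonal'' does not follow.

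Concretely, let $M=\left(\begin{smallmatrix}1&1&0\\1&0&1\\0&1&0\end{smallmatrix}\right)$, the path $1$--$2$--$3$ with a loop at $1$. It is connected, and $\qut(M)$ is trivial because $\mathbf{S_{4}}(M)$ has diagonal $(6,5,2)$. The gadget consisting of one self-loop at each labeled vertex has signature $\mathbf{d}\mathbf{d}^{\tt{T}}$ with $\mathbf{d}=(1,0,0)$. It separates $1$ from $2$, while $P=\mathbf{S_{2}}(M)$, with diagonal $(2,2,1)$, does not. Every $\mathbf{d}\mathbf{d}^{\tt{T}}\circ P^{\circ k}$ still vanishes at $2$ and $3$.

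More fundamentally, your argument uses connectivity only to make the diagonal of $P$ positive, i.e.\ only that $M$ has no zero row. The proposition on $(H\cup H)\cup(H\times K_2)$ in \cref{appendix: quantumAppendix} exhibits an $M$ with no zero row and a pair $x,y$ that some planar gadget separates. Yet every gadget separating them has $\mathbf{J}(M)_{yy}=0$. So no per-pair reduction of your shape can be valid. (Also, $M^{n}+M^{n+1}$ is not a gadget signature.)

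What is missing is an operation that changes the inside of the gadget rather than multiplying its signature. This is where connectivity genuinely enters, in \cref{lemma: makeDiagonalDistinctBipartite,lemma: makeDiagonalDistinctNonBipartite}.

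\emph{Bipartite case.} If $\Gamma_M$ is connected and bipartite, nothing needs to be done. A separating $\vk_{xy}$ has a nonzero diagonal entry, so it maps homomorphically into the bipartite $\Gamma_M$ with both labels sent to one vertex. Hence it admits a bipartition with $\ell_1,\ell_2$ on the same side. Sending that side to $i$ and the other side to a neighbour $j$ of $i$ gives $\vk_{xy}(M)_{ii}\ge M_{ij}^{|E(\vk_{xy})|}>0$.

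\emph{Non-bipartite case.} If $\Gamma_M$ is connected and non-bipartite, there is an $m$ with $(M^{k})_{ii}>0$ for all $i$ and all $k\ge m$. A Vandermonde argument on the nonzero eigenvalues gives $M=\sum_{t=0}^{q}c_tM^{m+t}$. Substituting this into every edge of $\vk_{xy}$ and expanding yields $\vk_{xy}(M)=\sum_{\psi}c_\psi\,\vk^{\psi}_{xy}(M)$, where $\vk^{\psi}_{xy}$ replaces each edge $e$ by a path of length $\psi(e)\ge m$. Each $\vk^{\psi}_{xy}(M)$ has positive diagonal, as seen by mapping every vertex to $i$. Since the sum separates $x$ and $y$, some single term does. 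Symmetrizing it by gluing two copies with their labels swapped squares the diagonal. The results are exactly the positive-diagonal separating gadgets your final step needs.
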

On the other extreme, we could have $\qut(M) = S_q^+$, the \emph{quantum symmetric group}.
Let $I,J \in \Sym{q}{}(\mathbb{R}_{\geq 0})$ be the 
identity and the all-ones matrices, respectively.
\begin{restatable}{proposition}{quantumSymmetric}
\label{proposition: quantumSymmetric}
Let $M \in \Sym{q}{}(\mathbb{R}_{\geq 0})$. Then
    $\qut(M) = S_q^+$ if and only if $M \in {\rm span}_{\mathbb{R}}(I,J)$.
\end{restatable}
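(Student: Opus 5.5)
The plan is to argue directly from the definition of $\qut(M)$ as a quotient of $C(S_q^+)$. Recall the setup we use, in the weighted setting of \cite{cai2023planar}. Let $u=(u_{ij})$ be the universal magic unitary generating $C(S_q^+)$: its entries are projections, and every row and every column sums to $1$. Then $C(\qut(M))$ is the quotient of $C(S_q^+)$ by the relations $uM = Mu$, meaning $\sum_k u_{ik}M_{kj} = \sum_k M_{ik}u_{kj}$ for all $i,j$. So $\qut(M)=S_q^+$ means exactly that these relations already hold in $C(S_q^+)$, i.e., that the universal $u$ commutes with $M$. If one prefers the definition that requires $u$ to commute with each level-set matrix $\mathbf{1}[M=c]$, the argument below goes through verbatim, since $I$, $J$ and permutation matrices behave identically under both definitions. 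The case $q=1$ is trivial, so we assume $q\ge 2$.

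($\Leftarrow$) Suppose $M=aI+bJ$. Then $uI=Iu$ trivially. For $J$, the entries are $(uJ)_{ij}=\sum_k u_{ik}=1$ and $(Ju)_{ij}=\sum_k u_{kj}=1$, by the magic unitary row and column sum relations. Hence $uJ=J=Ju$, so $uM=Mu$ holds in $C(S_q^+)$, and $\qut(M)=S_q^+$.

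($\Rightarrow$) Suppose $u$ commutes with $M$ in $C(S_q^+)$. Every permutation $\sigma\in S_q$ gives a $*$-character $\chi_\sigma$ of $C(S_q^+)$ with $\chi_\sigma(u_{ij})=\delta_{i,\sigma(j)}$. This map sends projections to $\{0,1\}$ and respects the row and column sums, so it is well defined by the universal property. Applying $\chi_\sigma$ to $uM=Mu$ yields $P_\sigma M = M P_\sigma$ for the permutation matrix $P_\sigma$. That is, $M_{\sigma(i)\sigma(j)}=M_{ij}$ for all $i,j$, so $\aut(M)=S_q$. Now $S_q$ acts transitively on the diagonal positions $\{(i,i)\}$, and for $q\ge2$ also on the off-diagonal ordered pairs $\{(i,j):i\ne j\}$. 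Hence all diagonal entries equal some $\alpha$ and all off-diagonal entries equal some $\beta$, which gives $M=(\alpha-\beta)I+\beta J\in\mathrm{span}_{\mathbb{R}}(I,J)$.

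The only delicate point is the first step: pinning down precisely which definition of $\qut(M)$ for weighted $M$ is in force, and checking that the classical characters $\chi_\sigma$ factor through it. This is the standard fact that $C(S_q)$ is the abelianization of $C(S_q^+)$, so its classical points are exactly $S_q$. Once that is fixed, both directions are short computations.
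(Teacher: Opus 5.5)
Your proof is correct. Your ($\Leftarrow$) direction is the paper's (3)$\Rightarrow$(1) step: $I$ and $J$ commute with every quantum permutation matrix, so the relation $uM=Mu$ imposes nothing new.

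For ($\Rightarrow$) you take a genuinely different route.

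\textbf{The paper's route.} The paper proves a stronger three-way equivalence and goes through the planar duality theorem: the $\mathbb{C}$-span of planar binary gadget signatures is exactly the commutant of the fundamental representation. If $\qut(M)=S_q^+=\qut(\mathbf{0})$, then the planar gadget signatures of $M$ span the same space as those of the zero matrix. Those come only from edgeless gadgets, so they lie in $\mathrm{span}(I,J)$. Since $M$ is itself the signature of the single-edge gadget, $M\in\mathrm{span}(I,J)$.

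\textbf{Your route.} You use the classical points of $S_q^+$ instead. Each permutation matrix is a one-dimensional representation of $C(S_q^+)$, so $\qut(M)=S_q^+$ forces $\aut(M)=S_q$, and $2$-transitivity of $S_q$ then gives $M\in\mathrm{span}_{\mathbb{R}}(I,J)$.

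\textbf{Comparison.} Your argument is more elementary: it needs only the universal property of $C(S_q^+)$, not the Man\v{c}inska--Roberson/Cai--Young intertwiner duality. It also isolates the intermediate fact $\aut(M)=S_q$. The paper's route states the result in the gadget language of the surrounding discussion and yields the extra equivalent condition $\PlEdge(M)\subset\mathrm{span}_{\mathbb{R}}(I,J)$. Your method recovers that condition cheaply as well. Once $\aut(M)=S_q$, every gadget signature, planar or not, is invariant under simultaneous permutation of rows and columns, so it lies in $\mathrm{span}_{\mathbb{R}}(I,J)$. The converse is immediate from the single-edge gadget.
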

Observe that every matrix in $M \in \text{span}_{\mathbb{R}}(I,J)$ is either a scalar
multiple of $I$, a scalar multiple of $J$, or of the form $\{aI + bJ \mid ab \neq 0\}$.
In the first two cases, $\PlGH(M)$ is trivially tractable, and in the third case (for $q \geq 3$),
\#P-hard 
from the Potts model \cite{vertigan2005computational}.

Mirroring the classical case, connected graphs $H$ and $H'$ with adjacency matrices $M$ and $M'$ are quantum isomorphic if and only if some $v \in V(H)$ and $v' \in V(H')$ are in the same orbit of $\qut(M \oplus M')$ \cite[Theorem 4.5]{lupini2020nonlocal}. It follows that
quantum isomorphism of graphs reduces to the following problem:
Given $q \geq 1$, $M \in \Sym{q}{}(\mathbb{R}_{> 0})$ and $i, j \in [q]$, determine whether $i$ and $j$ are in the same orbit of $\qut(M)$.
Quantum isomorphism is undecidable via the theory of quantum strategies for
nonlocal games \cite{atserias2019quantum,slofstra2019set}, so, by \autoref{theorem: qutOrbits}, we conclude that the problem of finding a planar
edge gadget with specific distinct diagonal entries is undecidable.
\begin{restatable}{theorem}{undec}
    \label{theorem: undec}
    The following problem is undecidable:
    given $q \geq 1$, $M \in \Sym{q}{}(\mathbb{R}_{> 0})$ 
    and $i, j \in [q]$, determine whether there 
    exists some $\mathbf{K}(M) \in \PlEdge(M)$ such that 
    $\mathbf{K}(M)_{ii} \neq \mathbf{K}(M)_{jj}$.
\end{restatable}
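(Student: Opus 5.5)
The plan is a computable reduction from quantum isomorphism of graphs, which is undecidable by \cite{atserias2019quantum,slofstra2019set}, to the stated separation problem. The reduction composes \cite[Theorem 4.5]{lupini2020nonlocal} with \cref{theorem: qutOrbits}. The one technical nuisance is that the statement requires a strictly positive matrix $M\in\Sym{q}{}(\mathbb{R}_{>0})$, whereas adjacency matrices of graphs contain zeros.

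\textbf{Steps 1--2 (the reduction).} Let $H,H'$ be a quantum-isomorphism instance. We may assume both are connected and have the same number $n$ of vertices, since quantum isomorphism remains undecidable for connected graphs: the constructions in \cite{atserias2019quantum} yield connected graphs. Let $A,A'$ be their adjacency matrices, set $q=2n$, and define $M_0=A\oplus A'\in\Sym{q}{}(\mathbb{R}_{\geq 0})$. By \cite[Theorem 4.5]{lupini2020nonlocal}, $H$ and $H'$ are quantum isomorphic iff some vertex $v$ of $H$ and some vertex $v'$ of $H'$ lie in the same orbit of $\qut(M_0)$. Since the orbits of $\qut(M_0)$ lie inside connected components, and all vertices of $H$ play symmetric roles here, it suffices to fix $v=1$ and ask, for each of the $n$ choices $j=n+1,\dots,2n$, whether $1$ and $j$ share an orbit. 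That gives $n$ queries; alternatively, one can append a few gadget vertices so that a single pair $(i,j)$ suffices.

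\textbf{Step 3 (positivity).} Replace $M_0$ by $M=J+M_0$, whose entries lie in $\{1,2\}$, so $M\in\Sym{q}{}(\mathbb{R}_{>0})$. I need $\qut(M)=\qut(M_0)$. The magic unitary $u$ defining $\qut$ has all row and column sums equal to $1$, so it commutes with $J$. Hence $u$ commutes with $M$ iff it commutes with $M_0$, and the two quantum automorphism groups, together with their orbits, coincide. This step uses only the definition of $\qut$ through the commutation relation $uM=Mu$ for weighted matrices, as in \cite{cai2023planar}.

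\textbf{Step 4 (conclusion).} By \cref{theorem: qutOrbits}, applied with $M\in\Sym{q}{}(\mathbb{R}_{\geq 0})$, there is some $\mathbf{K}(M)\in\PlEdge(M)$ with $\mathbf{K}(M)_{ii}\neq\mathbf{K}(M)_{jj}$ iff $i$ and $j$ lie in different orbits of $\qut(M)$. A decider for the stated problem would therefore decide, for each $j$, whether $1$ and $j$ share an orbit. By Step 2 this decides quantum isomorphism, which is a contradiction.

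The main obstacle is the bookkeeping in Steps 1--2: ensuring that a single query, or finitely many computably generated queries, correctly captures quantum isomorphism, and that connectedness holds so that \cite[Theorem 4.5]{lupini2020nonlocal} applies. A many-one or Turing reduction with $n$ queries both suffice for undecidability, so I would use the $n$-query version.
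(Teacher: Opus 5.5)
Your reduction is correct and is essentially the paper's own: \cite[Theorem 4.5]{lupini2020nonlocal} turns quantum isomorphism of connected graphs into an orbit question for $\qut(A\oplus A')$. Adding $J$ changes nothing, because every quantum permutation matrix satisfies $uJ=Ju=J$, so the relations $uM=Mu$ and $uM_0=M_0u$ are equivalent; this is more direct than the paper's appeal to intertwiner spaces. Then \cref{theorem: qutOrbits} finishes. The paper also adds $(|\lambda|+1)I$, which is needed only for its positive definite appendix version. It gets connectedness from the complement trick (a graph or its complement is connected, complements preserve quantum isomorphism, and connected graphs are never quantum isomorphic to disconnected ones) rather than from properties of particular constructions, and you should do the same.

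One line in Step 2 is false as written: orbits of $\qut(A\oplus A')$ do \emph{not} stay inside connected components, since an orbit meeting both $H$ and $H'$ is exactly what quantum isomorphism produces. Justify fixing $v=1$ instead as follows. A quantum isomorphism $\mathcal{P}$ yields a quantum permutation matrix for $A\oplus A'$ with off-diagonal blocks $\mathcal{P}$ and $\mathcal{P}^{\tt T}$. Hence every vertex of $H$ shares an orbit with some vertex of $H'$. Alternatively, simply query all $n^2$ pairs.
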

\begin{remark}\label{remark: separate_nonplanar}
Let $M$ and $M'$ be the adjacency matrices of graphs which are quantum
isomorphic but not isomorphic. Infinitely many such pairs can
be constructed from e.g. nonlocal games \cite{atserias2019quantum} or
Hadamard matrices \cite{chan2024quantum}. We may
assume the graphs are connected (see \autoref{proposition: separate_nonplanar} in 
\autoref{appendix: quantumAppendix}), 
so there exist $i \in V(M)$ and $j \in V(M')$ 
in the same orbit of $\qut(M \oplus M')$ but
different orbits of $\aut(M \oplus M')$. 
By \autoref{proposition: autOrbits} and \autoref{theorem: qutOrbits}, 
these $i$ and $j$ can be separated, 
but \emph{only} by a nonplanar gadget: 
There is a 
$\mathbf{K}'(M \oplus M') \in \mathfrak{E}(M \oplus M')$ such that
$\mathbf{K}'(M \oplus M')_{ii} \neq \mathbf{K}'(M \oplus M')_{jj}$,
but for every $\mathbf{K}(M \oplus M')
\in \PlEdge(M \oplus M')$, we have $\mathbf{K}(M \oplus M')_{ii} 
= \mathbf{K}(M \oplus M')_{jj}$.
\end{remark}

%% file: 7.Future_Work.tex
\section{Implications and Future Work}\label{sec: futureWork}

\cref{corollary: distinct} hints at some of the challenges
we face when trying to prove
$\#$P-hardness of $\PlGH(M)$ using
planar edge gadgets, when $\qut(M)$ is
non-trivial.
For example, given a graph $G = (V, E)$
Dyer and Greenhill (\cite{dyer2000complexity})
construct a graph $G'$ by starting with
the disjoint union of $G$ and 
a star graph with $|V|$ leaf nodes, and then 
identifying each of these leaf nodes with
one of the vertices of $V$.
This highly non-planar construction lets them prove
that
$\GH\left(\left(\begin{smallmatrix}
	1 & 1\\
	1 & 0\\
\end{smallmatrix} \right)\right) \leq \GH(M)$,
even for matrices $M$
with non-trivial $\qut(M)$.
We see from \cref{corollary: distinct} that
such a reduction from a diagonal distinct matrix
cannot be accomplished for 
all $M \in \Sym{q}{}(\mathbb{R}_{\geq 0})$
with non-trivial $\qut(M)$
using any planar edge gadgets.

This does not imply that planar edge gadgets
\emph{cannot} be used to establish the complexity dichotomy
in the $\PlGH$ setting.
For example, Cai and Maran (\cite{cai2024polynomial})
exclusively use planar edge gadgets to establish
the complexity dichotomy for 
$M \in \Sym{4}{}(\mathbb{R}_{\geq 0})$.
But even then, in their proof of 100 pages,
nearly 50 pages are devoted to dealing
with specific sub-cases of non-trivial
$\qut(M)$.
They use the very specific forms of the matrices
for the small $q = 4$ case,
and through a tortured series of lemmas, prove
$\#$P-hardness for each non-trivial $\qut(M)$.
From \cref{theorem: undec}, we see that for
arbitrary $q$, it is impossible
to characterize the form 
that a matrix $M$ with non-trivial
$\qut(M)$ must take.
So, this strategy \emph{cannot} work for arbitrary $q$.



By \cref{theorem: qutOrbits}, the dichotomy in \cref{theorem: fullGadgetSeparationNonNegative} is equivalent to the following: For any $M \in \Sym{q}{}(\mathbb{R}_{\geq 0})$, if $\qut(M)$ is trivial, then $\PlGH(M)$ is either tractable or $\#$P-hard. On the other extreme, \cref{proposition: quantumSymmetric} 
and the surrounding discussion
gives a (relatively trivial) dichotomy of the same form for $M$ such that $\qut(M) = S_q^+$. For $M$ with intermediate $\qut(M)$ (some quantum symmetry, but not full
quantum symmetry), no dichotomy for $\PlGH(M)$ is yet known.

To improve upon this current work,
we will need to establish a dichotomy in the case where
some pairs of vertices cannot be separated using
planar edge gadgets.
Any inductive proof would necessarily have to deal
with the base case of matrices where all the
diagonal values are identical.
So, a natural next step would be to try to establish
the complexity dichotomy in this case, which
corresponds to $\qut(M)$ having a single orbit.

Towards this goal, our paper hints at several
challenges we face:
There exist $M$ for which $\qut(M)$ has a single orbit ($M$ is
\emph{quantum vertex transitive}), but $\aut(M)$ has multiple orbits ($M$ is not
vertex transitive) \cite{lupini2020nonlocal}. It follows as in \cref{remark: separate_nonplanar} that there is a nonplanar edge gadget that does
not have a uniform diagonal, but there is no such planar edge gadget.
This makes it hard to even characterize
these problems.
This case also poses another additional challenge:
Unlike the case with trivial $\qut(M)$,
there exist some new polynomial-time tractable
$\PlGH(M)$ (tensor products of matchgates).
So, proving a dichotomy would require
\emph{carving out} these polynomial-time tractable problems,
before proving the $\#$P-hardness of the rest.

These challenges might lead us to consider looking
outside of $\PlEdge(M)$ and using
other previously unexplored techniques to find reductions
$\Pi \leq \PlGH(M)$ from problems $\Pi$ that cannot be
represented as $\PlGH(N)$ problems.
Alternatively, we may have to develop
much stronger analytic techniques
(without relying on the specific forms of $M$)
to prove $\#$P-hardness, within the constraints
of having to use planar edge gadgets.

%% file: A2.0.Model_of_Computation.tex
\section{Model of Computation}\label{appendix: modelComputation}

The Turing machine model is naturally suited to the study of computation over discrete structures such as integers or graphs. 
When  $M \in \Sym{q}{}(\mathbb{R})$, for  $\PlGH(M)$ 
one usually restricts $M$ to be a matrix
with only algebraic numbers. This is strictly for the consideration
of the model of computation, even though allowing all
real-valued matrices would be more natural. 
 
 There is a formal (albeit nonconstructive) method to treat  $\PlGH(M)$ 
 for arbitrary real-valued matrices $M$ and yet stay strictly
 within the Turing machine model in terms of  bit-complexity.
In this paper, because our proof depends heavily on analytic
argument with continuous functions on $\mathbb{R}^{d}$, this
logical formal view becomes necessary.

To begin with, we recall a theorem from field theory:
Every extension field ${\bf F}$ over  ${\mathbb Q}$
by a finite set of real numbers is  a finite algebraic extension ${\bf E}'$
of a certain  purely transcendental   extension field ${\bf E}$
over ${\mathbb Q}$,
which has the form ${\bf E} = {\mathbb Q}(X_1,
\ldots, X_m)$ where $m \ge 0$ and $X_1,
\ldots, X_m$ are algebraically independent~\cite{jacobson1985basic} (Theorem 8.35, p.~512).
${\bf F}$ is said to have
a finite transcendence degree $m$ over ${\mathbb Q}$.
It is known that $m$ is uniquely defined for ${\bf F}$.
Since 
$\rm{char}~{\mathbb Q} =0$,
the finite algebraic extension ${\bf E}'$ over ${\bf E}$
is actually simple, ${\bf E}' = {\bf E}(\beta)$ for some $\beta$,
and it is specified by a
minimal polynomial in ${\bf E}[Y]$, the polynomial ring over ${\bf E}$ with one
indeterminant $Y$.
Now given a real matrix $M$, let ${\bf F} = {\mathbb Q}(M)$ 
be the extension field by adjoining the entries of $M$.
We  consider $M$ is fixed for the problem $\PlGH(M)$,
and thus we may assume (nonconstructively) that the form
${\bf F} = {\bf E}(\beta)$
 and ${\bf E} = {\mathbb Q}(X_1,
\ldots, X_m)$ are given. (This means, among other things,
that the minimal polynomial of $\beta$ over ${\bf E}$ is given,
and all arithmetic operations can be performed on ${\bf F}$.)

Now, the computational problem $\PlGH(M)$ is the following:
Given a planar 
$G$, compute $Z_{M}(G)$ as an element in ${\bf F}$
(which is expressed as a polynomial in $\beta$ with coefficients in ${\bf E}$).
More concretely, we can show that this is equivalent to the following problem $\COUNT(M)$:
The input  is a pair $(G,x)$,
  where $G=(V,E)$ is a planar graph and $x\in {\bf F}$.
The output is\vspace{-0.1cm}
$$
\text{\#}_{M}(G,x)= \Big|\big\{\sigma:V\rightarrow
  [q]\hspace{0.08cm}: \hspace{0.08cm} \prod_{(u, v) \in E} m_{\sigma(u), \sigma(v)}=x\big\}\Big|,\label{full_COUNTM}
  $$
a non-negative integer. Note that, in this definition,
we are basically combining terms with the same 
product value in the definition of $Z_{M}(G)$.

Let $n=|E|$.
Define  $X$ to be the  set of all possible product values
appearing in $Z_{M}(G)$:
\begin{equation}\label{full_definitionreuse}
X=\left\{\prod_{i,j\in [q]}m_{ij}^{k_{ij}}\hspace{0.08cm}\Big|\hspace{0.1cm}
  \text{integers $k_{ij}\ge 0$ and $\sum_{i,j\in [q]}k_{ij}=n$}
\right\}.
\end{equation}
There are $\binom{n+q^2-1}{q^2-1} = n^{O(1)}$ many integer sequences
$(k_{i,j})$ such that  $k_{i,j}\ge 0$ and $\sum_{i,j\in [q]}k_{i,j}=n$.
$X$
  is defined as a set, not a multi-set.
After removing repeated
elements the cardinality $|X|$ is also polynomial in $n$.
 For fixed and given ${\bf F}$
 the elements in $X$ can be enumerated in polynomial time in $n$.
 (It is important that ${\bf F}$ and $q$ are all treated as fixed constants.)
It then follows from the definition that
  $\text{\#}_{M}(G,x)= 0$ for any $x\notin X$.
This gives us the following relation:
\[Z_M(G)= \sum_{x\in X} x \cdot \text{\#}_{M}(G,x),\ \ \ \text{for any 
  graph $G$,}\]
and thus, $\PlGH(M)\le \COUNT(M).$

For the other direction,
we construct, for any $p\in [|X|]$ (recall that $|X|$ is polynomial in $n$), the graph $\mathbf{T_{p}}G$.
Then,
$$
Z_M(\mathbf{T_p}G) =
Z_{\mathbf{T_{p}}(M)}(G) = \sum_{x\in X} x^p \cdot \text{\#}_{M}(G,x),\ \ \ \text{for any 
  graph $G$.}
$$
This is a Vandermonde system; it has full rank since
elements in $X$ are distinct by definition. So by
querying $\PlGH(M)$ for the
  values of $Z_M(\mathbf{T_p}G)$,
  we can solve it in polynomial time
  and get $\text{\#}_{M}(G,x)$ for\vspace{0.0015cm} every non-zero $x\in X$.
To obtain $\text{\#}_{M}(G,0)$ (if $0\in X$), we note that
$$
\sum_{x\in X} \text{\#}_{M}(G,x) =q^{|V|}.
$$
This gives us a polynomial-time reduction and thus, $\COUNT(M)\le \PlGH(M)$.
We have proved
\begin{lemma}\label{lemma: full_count}
For any fixed  $M \in \Sym{q}{}(\mathbb{R})$,  
   $\PlGH(M)\equiv \COUNT(M)$.
\end{lemma}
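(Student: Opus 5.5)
The statement is \cref{lemma: full_count}, and its proof is already laid out in the paragraphs immediately preceding it. I would organize the argument as two reductions.

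\textbf{$\PlGH(M) \le \COUNT(M)$.} Fix $G$ with $n = |E|$. Enumerate the finite set $X$ of possible edge-weight products; it has polynomially many elements, since there are $\binom{n+q^2-1}{q^2-1}$ exponent vectors $(k_{ij})$, and ${\bf F}$ and $q$ are fixed. All arithmetic is carried out exactly in ${\bf F} = {\bf E}(\beta)$, using the nonconstructively given presentation. One query to $\COUNT(M)$ per $x \in X$ then gives
\[
Z_M(G) = \sum_{x \in X} x \cdot \#_M(G,x).
\]

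\textbf{$\COUNT(M) \le \PlGH(M)$.} Use thickening. For $p = 1, \dots, |X|$, query the oracle on the planar graph $\mathbf{T_p}G$. By \cref{eq:signature_matrix},
\[
Z_M(\mathbf{T_p}G) = Z_{\mathbf{T_p}(M)}(G) = \sum_{x \in X} x^p \, \#_M(G,x).
\]
Restricted to the nonzero elements of $X$, this is a Vandermonde-type system with distinct nodes. Its matrix is the Vandermonde matrix multiplied by the diagonal matrix of the nodes, so it is nonsingular. Solving it by Gaussian elimination over ${\bf F}$ recovers $\#_M(G,x)$ for every nonzero $x$. The count for $x = 0$ follows from
\[
\sum_{x} \#_M(G,x) = q^{|V|},
\]
and $\#_M(G,x) = 0$ for every $x \notin X$.

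\textbf{Remaining checks.} First, $\mathbf{T_p}G$ is planar and has size polynomial in $|G|$, because $p \le |X| = n^{O(1)}$. Second, Gaussian elimination over ${\bf F}$ runs in polynomial time. This is the only point needing care: the bit sizes of the elements of ${\bf F}$ must stay polynomial. They do, because the entries $x^p$ are products of fixed field elements with exponents polynomial in $n$, and Bareiss-style elimination keeps intermediate sizes polynomially bounded.
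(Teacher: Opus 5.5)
Your proposal is correct and follows the paper's proof essentially verbatim. The forward reduction sums $x\cdot\#_M(G,x)$ over the polynomially many products in $X$. The reverse reduction queries $Z_M(\mathbf{T_p}G)$ for $p\in[|X|]$, solves the resulting Vandermonde system for the nonzero $x$, and recovers $\#_M(G,0)$ from $\sum_x \#_M(G,x)=q^{|V|}$. Two points you make explicit are left implicit in the paper: you restrict to nonzero nodes because the column for $x=0$ vanishes, and you note that exact arithmetic in ${\bf F}$ stays polynomially bounded.
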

Thus, $\PlGH(M)$ can be identified with the 
problem of producing those polynomially many integer coefficients
in the canonical expression for $Z_M(G)$ as a sum
of (distinct) terms from $X$.

This  formalistic view has 
the advantage that we can treat the complexity 
of  $\PlGH(M)$ for  general $M$, and not restricted to algebraic numbers. 
 Thus, numbers such as $e$ or $\pi$
need not be excluded.
More importantly, in this paper this generality is essential, due to the proof techniques that we employ
(see \cref{lemma: analyticGadgets}).
In short, in this paper, treating the complexity of  $\PlGH(M)$ for  general real $M$
is not a \emph{bug} but a \emph{feature}.

However, we note that this treatment 
has the following subtlety.  
For the computational problem $\PlGH(M)$
the formalistic view demands that
${\bf F}$ be specified in the form ${\bf F} = {\bf E}(\beta)$.
Such a form exists, and its specification is of
constant size when  measured
in terms of the size  of the input graph $G$. 
However, 
in reality many basic questions for transcendental numbers
are unknown.
For example, it is still unknown whether $e + \pi$ or $e \pi$ are
rational, algebraic irrational or transcendental,
and it is open whether ${\mathbb Q}(e, \pi)$ has  transcendence degree
2 (or 1) over  ${\mathbb Q}$, i.e., whether $e$ and $\pi$ are algebraically
independent.
The formalistic view here non-constructively
assumes this information is given for ${\bf F}$.
A  polynomial time reduction $\Pi_1  \le \Pi_2$  from one problem
to another  in this setting merely
implies that the \emph{existence} of a polynomial time algorithm
for $\Pi_2$ logically implies the 
\emph{existence} of  a  polynomial time algorithm
for $\Pi_1$. We do not actually obtain such
an algorithm constructively. 

This logical detour not withstanding, if a reader
is  interested only in the complexity of $\PlGH(M)$ 
for $M \in \Sym{q}{}(\mathbb{Z}_{\geq 0})$, then the
complexity dichotomy proved in this paper holds
according to the standard definition of $\PlGH(M)$ 
for integral $M$ in terms of
the model of computation; the fact that this is proved 
in a broader setting for all real matrices $M$ 
is irrelevant. This  is akin to the situation
in analytic number theory, where one might be 
interested in a question strictly about the ordinary
integers, but the theorems are proved
in a broader setting of analysis.

%% file: A2.Preliminaries.tex

\input{A2.2.Thickening_Gadgets}
\newpage
\input{A2.3.Stretching_Gadgets}
\newpage
\input{A2.4.Bridge_Gadgets}
\newpage
\input{A2.5.Loop_Gadgets}

%% file: A2.2.Thickening_Gadgets.tex
\section{Thickening Gadgets}\label{appendix: thickeningAppendix}

In this section, we shall explore the thickening gadgets,
and gadget interpolation more rigorously.
For $m, n \ge 1$, let
$$\mathcal{P}_{m}(n) = \left\{\mathbf{x} = (x_{i})_{i \in [m]}
\in (\mathbb{Z}_{\geq 0})^{m} \hspace{0.08cm}\Big|\hspace{0.1cm}
\sum_{i \in [m]}x_{i} = n\right\}.$$

We note that
given any graph $G = (V, E)$,
\begin{equation}\label{equation: thickeningEqn}
    Z_{M}(\mathbf{T_{n}}G) = Z_{\mathbf{T_{n}}(M)}(G) = 
    \sum_{x \in X(G)}x^{n} \cdot \text{\#}_{M}(G, x)
\end{equation}
where
\begin{equation}\label{equation: thickeningX(G)}
    X(G) = \left\{\prod_{i, j\in [q]}
    M_{ij}^{k_{ij}}\hspace{0.08cm}\Big|\hspace{0.1cm}
    \mathbf{k} = (k_{ij})_{i, j \in [q]} \in
    \mathcal{P}_{q^{2}}(|E|)\right\},
\end{equation}
and
\begin{equation}\label{equation: thickeningNumMappings}
    \text{\#}_{M}(G,x)= \Big|\big\{\sigma:V\rightarrow [q]\hspace{0.08cm}:
    \hspace{0.08cm} \prod_{(u, v) \in E} M_{\sigma(u)\sigma(v)}=x\big\}\Big|.
\end{equation}

Note that given any $x \in X(G)$,
$\text{\#}_{M}(G, x)$ does not depend on $n$,
but depends only on the entries of the matrix $M$.
Generating Sets, as defined in \cref{definition: generatingSet}
let us deal with this dependence.
First, we shall prove that generating sets always exist.

\generatingSet*

\begin{lemma}\label{lemma: generatingSet}
    Every finite set $\mathcal{A} \subset \mathbb{R}_{\neq 0}$
    of non-zero real numbers has
    a generating set.
\end{lemma}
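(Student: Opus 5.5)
The plan is to pass to logarithms and build the generating set from a basis of a free abelian group. Take the finite set $\mathcal{A} = \{a_1,\dots,a_m\}$ and the multiplicative subgroup $\Gamma \subset \mathbb{R}_{>0}$ generated by $|a_1|,\dots,|a_m|$. Then $\Gamma$ is a finitely generated abelian group. It is torsion-free, because $x^k = 1$ with $x>0$ forces $x=1$. By the structure theorem for finitely generated abelian groups, $\Gamma \cong \mathbb{Z}^d$ for some $d \ge 0$, and we may fix a basis $h_1,\dots,h_d \in \mathbb{R}_{>0}$ with $h_t \neq 1$. Equivalently, $\log \Gamma$ is a finitely generated subgroup of $(\mathbb{R},+)$, hence free, and we pick a $\mathbb{Z}$-basis of it.

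Next, replace each $h_t$ by $g_t = h_t$ if $h_t>1$ and $g_t = h_t^{-1}$ if $h_t<1$. This is still a basis of $\Gamma$, since inverting basis elements is a unimodular change of basis, and now every $g_t \in \mathbb{R}_{>1}$.

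It remains to check the definition. For $a \in \mathcal{A}$, set $e_0 = 0$ if $a>0$ and $e_0 = 1$ if $a<0$, so that $a = (-1)^{e_0}|a|$. Since $|a| \in \Gamma$, we can write $|a| = g_1^{e_1}\cdots g_d^{e_d}$ with $e_t \in \mathbb{Z}$. For uniqueness: the sign of $(-1)^{e_0}\prod g_t^{e_t}$ equals $(-1)^{e_0}$ because the product is positive, so $e_0$ is determined. The exponents $e_1,\dots,e_d$ are then determined because $g_1,\dots,g_d$ is a basis of the free abelian group $\Gamma$. If $\Gamma$ is trivial, that is, all $|a|=1$, we take $d=0$ and the empty set works.

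The only substantive step is freeness of $\Gamma$, which reduces to the standard fact that finitely generated torsion-free abelian groups are free. I expect no real obstacle beyond noting that this basis can be chosen nonconstructively; by the remark in \autoref{appendix: modelComputation}, that is acceptable in this paper's model.
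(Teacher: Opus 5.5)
Your proof is correct and matches the paper's argument: take the subgroup of $(\mathbb{R}_{>0},\cdot)$ generated by the $|a|$, note it is finitely generated and torsion-free and hence isomorphic to $\mathbb{Z}^d$, and invert basis elements where needed so that each $g_t > 1$. You also check explicitly that $e_0$ is forced by the sign of $a$ and handle the trivial case $d=0$, two details the paper leaves implicit.
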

\begin{proof}
    Consider the multiplicative group $\mathcal{G}$ 
    generated by 
    the positive real numbers 
    $\{|a| : a \in \mathcal{A}\}$.
    It is  a subgroup of the multiplicative group
    $(\mathbb{R}_{> 0}, \cdot)$.
    Since  $\mathcal{A}$ is finite, and 
    $(\mathbb{R}_{> 0}, \cdot)$ is torsion-free,
    the group $\mathcal{G}$ is  a finitely generated free Abelian group, 
    and thus isomorphic to 
    $\mathbb{Z}^d$ for some $d \ge 0$.
    Let $f$ be this isomorphism from $\mathbb{Z}^{d}$ to
    the multiplicative group.
    By flipping $\pm 1$ in $\mathbb{Z}$ we may assume that this isomorphism maps the basis
    elements of $\mathbb{Z}^{d}$ to some elements
    $\{g_{t}\}_{t \in [d]}$ such that $g_{t} > 1$
    for all $t \in [d]$.
    The set $\{g_{t}\}_{t \in [d]}$ is
    a generating set.
\end{proof}

We now use \cref{lemma: generatingSet} to find a generating set
for the entries $(M_{ij})_{i, j \in [q]}$ of any 
matrix $M \in \Sym{q}{}(\mathbb{R}_{\neq 0})$.
Note that this generating set need not be unique. 
However, with respect to a fixed generating set, 
for any $M_{ij}$, there are unique integers 
$e_{ij0} \in \{0, 1\}$, and
$e_{ij1}, \dots, e_{ijd} \in \mathbb{Z}$,
such that
\begin{equation}\label{equation: generatingM}
    M_{ij} = (-1)^{e_{ij0}} \cdot g_{1}^{e_{ij1}} \cdots g_{d}^{e_{ijd}}.
\end{equation}

\begin{remark*}
    It should be noted that since $M$ is symmetric, $M_{ij} = M_{ji}$
    for all $i, j \in [q]$.
    The uniqueness of the integers $e_{ijt}$
    in \cref{equation: generatingM} then implies that
    for all $i, j \in [q]$,
    $e_{ijt} = e_{jit}$ for all $t \in [d]$.
\end{remark*}

We can now prove the following lemma.

\begin{lemma}\label{lemma: thickeningGeneralLemma}
    Let $M \in \Sym{q}{}(\mathbb{R}_{\neq 0})$
    with its entries generated by some $\{g_t\}_{t \in [d]}$,
    such that
    $$M_{ij} = (-1)^{e_{ij0}} \cdot
    g_{1}^{e_{ij1}} \cdots g_{d}^{e_{ijd}}.$$
    Furthermore, assume that $e_{ijt} \geq 0$ for all
    $i, j \in [q]$, and $t \in [d]$.
    Then, for any
    $N \in \Sym{q}{}(\mathbb{R}_{\neq 0})$
    of the form
    $$N_{ij} = (-1)^{e_{ij0}} \cdot z_{1}^{e_{ij1}}
    \cdots z_{d}^{e_{ijd}},$$
    with parameters
    $\mathbf{z} = (z_{1}, \dots, z_{d}) \in \mathbb{R}^{d}$,
    we have $\PlGH(N) \leq \PlGH(M)$.
\end{lemma}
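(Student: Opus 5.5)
We interpolate over the thickening gadgets $\mathbf{T_{n}}$ and re-index the resulting linear system by exponent vectors instead of by actual product values. Fix a planar input $G=(V,E)$ with $m=|E|$. For each $\sigma:V\to[q]$ record the count vector $\mathbf{k}(\sigma)=(k_{ij})_{i\le j}$ of edges mapped to each unordered pair $\{i,j\}$. The weight of $\sigma$ under $M$ is then
$$\prod_{i\le j}M_{ij}^{k_{ij}} = (-1)^{\sum k_{ij}e_{ij0}}\prod_{t\in[d]} g_t^{\sum_{i\le j}k_{ij}e_{ijt}}.$$
Under $N$ the weight is the same expression with $g_t$ replaced by $z_t$. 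So both depend on $\sigma$ only through the signed exponent vector
$$\mathbf{w}(\sigma)=\Bigl(\textstyle\sum k_{ij}e_{ij0} \bmod 2,\ \sum k_{ij}e_{ij1},\dots,\sum k_{ij}e_{ijd}\Bigr)\in\{0,1\}\times\mathbb{Z}_{\ge 0}^{d}.$$
The nonnegativity hypothesis $e_{ijt}\ge 0$ guarantees that every coordinate of $\mathbf{w}$ lies in $[0,mE^*]$, where $E^*=\max e_{ijt}$. Hence there are only $W=O(m^{d})$ possible vectors, which is polynomial since $d$ and $q$ are constants. Letting $c_{\mathbf{w}}=|\{\sigma:\mathbf{w}(\sigma)=\mathbf{w}\}|$, we get
$$Z_M(G)=\sum_{\mathbf{w}} c_{\mathbf{w}}\,(-1)^{w_0}\mathbf{g}^{\mathbf{w}},\qquad Z_N(G)=\sum_{\mathbf{w}} c_{\mathbf{w}}\,(-1)^{w_0}\mathbf{z}^{\mathbf{w}}.$$
It therefore suffices to recover every $c_{\mathbf{w}}$ from oracle calls to $\PlGH(M)$ and then evaluate the second sum directly in the field containing $\mathbf{z}$, following the model of computation in \autoref{appendix: modelComputation}.

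The recovery uses $Z_M(\mathbf{T_{n}}G)=Z_{\mathbf{T_{n}}(M)}(G)$, which by \cref{equation: thickeningEqn} equals $\sum_{\mathbf{w}} c_{\mathbf{w}}\,x_{\mathbf{w}}^{n}$ with $x_{\mathbf{w}}=(-1)^{w_0}\mathbf{g}^{\mathbf{w}}$. We query this for $n=1,\dots,W$, obtaining a Vandermonde system in the unknowns $c_{\mathbf{w}}$. It is solvable exactly when the values $x_{\mathbf{w}}$ are pairwise distinct and nonzero. Nonzero is immediate because every $g_t>1$.

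The key step is distinctness, and this is where the \emph{uniqueness} clause in the definition of a generating set (\cref{definition: generatingSet}) is used. The elements $g_1,\dots,g_d$ are multiplicatively independent, as in the proof of \cref{lemma: generatingSet}, which identifies the group they generate with $\mathbb{Z}^d$. Hence $(-1)^{w_0}\mathbf{g}^{\mathbf{w}}=(-1)^{w'_0}\mathbf{g}^{\mathbf{w}'}$ forces $w_0=w_0'$ by comparing signs (all $\mathbf{g}$-monomials are positive), and then $\mathbf{w}=\mathbf{w}'$ by independence. So vectors collide as values exactly when they are equal as vectors, and all the work is done by the distinctness of $\mathbf{w}$ as a formal vector. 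Note that the paper's $X(G)$ merges equal values, so here we index by $\mathbf{w}$ and only afterwards observe that no merging occurs.

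I expect the main obstacle to be bookkeeping rather than mathematics. First, symmetric entries must be counted once per unordered pair, which is harmless because $e_{ijt}=e_{jit}$. Second, the enumeration of the $\mathbf{w}$'s and the solution of the Vandermonde system must run in polynomial time in the formal field model. Since $d$ and $q$ are constant, the number of unknowns is polynomial in $m$, and Gaussian elimination over $\mathbf{F}$ then gives a polynomial-time reduction.
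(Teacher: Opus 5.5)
Your proposal is correct and is essentially the paper's proof. The paper also interpolates with $\mathbf{T_{n}}$ and solves the Vandermonde system for the counts $\#_{M}(G,x)$, indexed by the distinct product values $x \in X(G)$. It then uses the unique exponent representation over $\{g_t\}$ to send each $x$ to the corresponding $\mathbf{z}$-monomial $\widehat{y}(x)$. So your $c_{\mathbf{w}}$ is just $\#_{M}(G,x_{\mathbf{w}})$, and you invoke uniqueness to make the nodes distinct rather than to make $\widehat{y}$ well defined.

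One small tightening: multiplicative independence of the \emph{given} $g_t$ follows directly from the uniqueness clause of \cref{definition: generatingSet}. A relation $\mathbf{g}^{\mathbf{v}}=1$ with $\mathbf{v}\neq 0$ would give a second exponent vector for any entry of $M$, so you need not appeal to the construction in the proof of \cref{lemma: generatingSet}.
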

\begin{proof}
    For any $n \geq 1$ and graph $G = (V, E)$, recall from
    \cref{equation: thickeningEqn} that
    \begin{equation}\label{Z_M-expression}
        Z_{M}(\mathbf{T_{n}}G)
        = \sum_{x \in X(G)} x^{n} \cdot \text{\#}_{M}(G, x),
    \end{equation}
    where $X(G)$ and $\text{\#}_{M}(G, x)$ are as in
    \cref{equation: thickeningX(G),equation: thickeningNumMappings}.
    Since $X(G)$ is a finite set with $|X(G)| \le |E|^{O(1)}$,
    querying $\PlGH(M)$ for $n \in [|X(G)|]$
    gives us a full-rank Vandermonde system in the variables
    $\{\text{\#}_{M}(G, x)\}_{x \in X(G)}$.
    Hence all $\text{\#}_{M}(G, x)$ can be computed in polynomial time.

    Next, each $x \in X(G)$ can be written as
    $x = \prod_{i, j \in [q]} M_{ij}^{k_{ij}}$
    for some $\mathbf{k} = (k_{ij}) \in \mathcal{P}_{q^{2}}(|E|)$.
    Since the entries of $M$ are generated by $\{g_t\}_{t \in [d]}$,
    every such $x$ admits a unique representation
    $$x = (-1)^{e^{x}_{0}} \cdot 
    g_{1}^{e^{x}_{1}} \cdots g_{d}^{e^{x}_{d}},
    \qquad e^{x}_{t} = \sum_{i, j \in [q]} k_{ij} e_{ijt},$$
    with $e^{x}_{t} \in \mathbb{Z}_{\ge 0}$ since
    all $e_{ijt}, k_{ij} \ge 0$.

    Define a map $\widehat{y}: X(G) \to \mathbb{R}$ by
    $$\widehat{y}(x)
    = (-1)^{e^{x}_{0}} \cdot 
    z_{1}^{e^{x}_{1}} \cdots z_{d}^{e^{x}_{d}},$$
    where
    $\mathbf{z} = (z_{1}, \dots, z_{d})$ are the parameters
    of $N \in \Sym{q}{}(\mathbb{R})$.
    For any $\mathbf{k} \in \mathcal{P}_{q^{2}}(|E|)$,
    this definition ensures that
    $$\widehat{y} \left(\prod_{i, j \in [q]} M_{ij}^{k_{ij}}\right)
    = \prod_{i, j \in [q]} N_{ij}^{k_{ij}}.$$
    Let
    $$Y(G) = \left\{\prod_{i, j \in [q]} N_{ij}^{k_{ij}}
    \ \middle|\ \mathbf{k} \in \mathcal{P}_{q^{2}}(|E|) \right\}.$$

    For any labeling $\sigma : V \to [q]$, define
    $k_{ij} = |\{(u,v) \in E : \sigma(u)=i, \sigma(v)=j\}|$.
    Then
    $$\widehat{y}\left( \prod_{(u,v)\in E}
    M_{\sigma(u)\sigma(v)} \right)
    = \widehat{y}\left(\prod_{i, j \in [q]}M_{ij}^{k_{ij}}\right)
    = \prod_{i, j \in [q]} N_{ij}^{k_{ij}}
    = \prod_{(u,v)\in E} N_{\sigma(u)\sigma(v)}.$$
    Hence, for each $y \in Y(G)$,
    $$\left\{ \sigma : V \to [q] \ \middle|\ 
    \prod_{(u,v)\in E} N_{\sigma(u)\sigma(v)} = y \right\}
    = \bigsqcup_{\substack{x \in X(G):\\ \widehat{y}(x)=y}}
    \left\{ \sigma : V \to [q] \ \middle|\ 
    \prod_{(u,v)\in E} M_{\sigma(u)\sigma(v)} = x \right\}.$$
    
    Therefore, we obtain
    $$\text{\#}_{N}(G, y)
    = \sum_{\substack{x \in X(G):\\ \widehat{y}(x) = y}}
    \text{\#}_{M}(G, x).$$

    Using the values $\text{\#}_{M}(G, x)$ already determined,
    we can compute
    \begin{align*}
        Z_{N}(G)
        &= \sum_{y \in Y(G)} y \cdot \text{\#}_{N}(G, y)
        = \sum_{x \in X(G)} \widehat{y}(x) \cdot \text{\#}_{M}(G, x).
    \end{align*}
    Therefore, $Z_{N}(G)$ can be computed in polynomial time
    given oracle access to $\PlGH(M)$, implying
    $\PlGH(N) \leq \PlGH(M)$.
\end{proof}



\cref{lemma: thickeningGeneralLemma} lets us prove
\cref{lemma: thickeningLemma},
but we shall first need the following
elementary lemma.

\begin{lemma}\label{lemma: MequivalentCM}
    Let $M \in \Sym{q}{}(\mathbb{R}_{\neq 0})$, and
    let $c \in \mathbb{R}_{\neq 0}$.
    Then $\PlGH(M) \equiv \PlGH(cM)$.
\end{lemma}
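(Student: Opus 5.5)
The argument is a direct rescaling of the partition function, giving a reduction in each direction. For any graph $G = (V,E)$ we have
$$Z_{cM}(G) = \sum_{\sigma: V \to [q]} \prod_{(u,v)\in E} c\, M_{\sigma(u)\sigma(v)} = c^{|E|} Z_M(G),$$
since every edge contributes exactly one factor of $c$ regardless of $\sigma$. Hence $\PlGH(cM) \le \PlGH(M)$: on input a planar $G$, query the oracle for $Z_M(G)$ and multiply by $c^{|E|}$.

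For the converse direction, apply the same identity with $M$ replaced by $cM$ and scalar $c^{-1}$. This gives $Z_M(G) = c^{-|E|} Z_{cM}(G)$, so $\PlGH(M) \le \PlGH(cM)$. The division is legitimate because $c \neq 0$. Both reductions leave the input graph unchanged, so planarity is trivially preserved. We do not even need edge gadgets here; if we wanted the paper's language, $G$ itself plays the role of $\mathbf{K}G$ with $\mathbf{K}$ a single edge.

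The only point needing care is the model of computation of \autoref{appendix: modelComputation}. We must check that multiplying by $c^{\pm|E|}$ is a polynomial-time operation in the field ${\bf F} = \mathbb{Q}(M, c)$, or in ${\bf F} = \mathbb{Q}(M)$ when $c$ lies there. Since $c$ is a fixed constant, $c^{|E|}$ is computed by $O(\log |E|)$ multiplications in ${\bf F}$. The representation size grows at most linearly in $|E|$ for fixed ${\bf F}$, so this cost is polynomial.

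Alternatively, we can argue at the level of $\COUNT$ using \cref{lemma: full_count}. Every assignment $\sigma$ has product value $x$ for $M$ exactly when it has value $c^{|E|}x$ for $cM$. Therefore $\#_{cM}(G, c^{|E|}x) = \#_M(G,x)$, and the two $\COUNT$ problems are trivially interreducible. I expect no genuine obstacle; the only thing to state explicitly is this bookkeeping about the field over which $\PlGH(cM)$ is posed.
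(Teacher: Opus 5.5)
Your proof is correct and is the paper's argument: both rest on the identity $Z_{cM}(G) = c^{|E|} Z_M(G)$, used in each direction since $c \neq 0$. Your remarks on the underlying field and the alternative via \cref{lemma: full_count} are harmless bookkeeping that the paper's proof leaves implicit.
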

\begin{proof}
    Note that given any $G = (V, E)$
    $$Z_{cM}(G) = \sum_{\sigma: V \rightarrow [q]}
    \prod_{(u, v) \in E}(cM)_{\sigma(u)\sigma(v)}
    = c^{|E|} \sum_{\sigma: V \rightarrow [q]}
    \prod_{(u, v) \in E}cM_{\sigma(u)\sigma(v)}
    = c^{|E|}Z_{M}(G).$$
    Therefore, oracle access to one of $\PlGH(M)$ or $\PlGH(cM)$
    gives the other.
\end{proof}

We are now ready to prove
\cref{lemma: thickeningLemma}.

\thickeningLemma*
\begin{proof}
    Since $M \in \Sym{q}{}(\mathbb{R}_{\neq 0})$
    is generated by $\{g_t\}_{t \in [d]}$, we know that
    there exist integers $e_{ijt} \in \mathbb{Z}$ s.t.,
    $M_{ij} = (-1)^{e_{ij0}} \cdot
    g_{1}^{e_{ij1}} \cdots g_{d}^{e_{ijd}}$
    for all $i, j \in [q]$.
    Now, we let $e_{t}^{*} = \min_{i, j \in [q]}e_{ijt}$
    for all $t \in [d]$, and let
    $c = \prod_{t \in [d]}g_{t}^{-e_{t}^{*}}$.
    From \cref{lemma: MequivalentCM}, we see that
    $\PlGH(M) \equiv \PlGH(cM)$.
    Moreover, we also see that the entries of the matrix
    $cM$ are also generated by $\{g_t\}_{t \in [d]}$
    such that $(cM)_{ij} = (-1)^{e_{ij0}} \cdot
    g_{1}^{e_{ij1} - e_{1}^{*}} \cdots g_{d}^{e_{ijd} - e_{d}^{*}}$
    for all $i, j \in [q]$.
    From our choice of $e_{t}^{*}$, it follows that
    $e_{ijt} - e_{t}^{*} \geq 0$ for all $i, j \in [q]$, $t \in [d]$.
    Therefore, \cref{lemma: thickeningGeneralLemma}
    implies that $\PlGH(\mathcal{T}^{*}_{M}(\mathbf{z})) \leq \PlGH(cM)
    \equiv \PlGH(M)$ for all
    $\mathbf{z} = (z_{1}, \dots, z_{d}) \in \mathbb{R}^{d}$.
\end{proof}

\cref{lemma: simpleThickening,lemma: thickeningSingleVariable}
now follow as corollaries:
\simpleThickening*
\begin{proof}
    Let the entries of $M \in \Sym{q}{}(\mathbb{R}_{> 0})$
    be generated by some $\{g_t\}_{t \in [d]}$,
    such that $M_{ij} = g_{1}^{e_{ij1}} \cdots g_{d}^{e_{ijd}}$
    for all $i, j \in [q]$.
    Let $e_{t}^{*} = \min_{i, j \in [q]}e_{ijt}$ for $t \in [d]$.
    Then, we note from \cref{definition: mathcalTStar}, that
    $$\mathcal{T}^{*}_{M}(g_{1}^{\theta}, \dots, g_{d}^{\theta})_{ij}
    = (g_{1}^{e_{ij1}} \cdots g_{d}^{e_{ijt}})^{\theta} \cdot
    (g_{1}^{-e_{1}^{*}}\cdots g_{d}^{-e_{d}^{*}})^{\theta}
    = (M_{ij})^{\theta} \cdot c^{\theta},$$
    for $c = (g_{1}^{-e_{1}^{*}}\cdots g_{d}^{-e_{d}^{*}})$.
    Importantly, $c$ is independent of all $i, j \in [q]$.
    So, $\mathcal{T}^{*}_{M}(g_{1}^{\theta}, \dots, g_{d}^{\theta})
    = c^{\theta} \cdot \mathcal{T}_{M}(\theta)$ for all
    $\theta \in \mathbb{R}$.
    Therefore, \cref{lemma: thickeningLemma,lemma: MequivalentCM}
    imply that
    $$\PlGH(\mathcal{T}_{M}(\theta)) \equiv 
    \PlGH(\mathcal{T}^{*}_{M}(g_{1}^{\theta}, \dots, g_{d}^{\theta}))
    \leq \PlGH(M),$$
    for all $\theta \in \mathbb{R}$.
\end{proof}

\thickeningSingleVariable*
\begin{proof}
    Let $\{g_t\}_{t \in [d]}$ be a
    generating set for the
    entries of $M$, with
    $M_{ij} = 
    g_{1}^{e_{ij1}} \cdots g_{d}^{e_{ijd}}$,
    $e_{t}^{*} = \min_{i, j \in [q]}e_{ijt}$, 
    ($i, j \in [q], t \in [d]$). 
    From \cref{lemma: thickeningLemma}, 
    $\PlGH(\mathcal{T}^{*}_{M}(\mathbf{z})) 
    \leq \PlGH(M)$ for all
    $\mathbf{z} \in \mathbb{R}^{d}$.

    Define
    $\mathcal{E}_{M}: \mathbb{R}^{d} \rightarrow
    \Sym{q}{}(\mathbb{R})$, such that
    $\mathcal{E}_{M}(y_{1}, \dots, y_{d}) =
    \mathcal{T}^{*}_{M}(e^{y_{1}}, \dots, e^{y_{d}})$.
    By the definition of generators, if $M_{ij} = M_{i'j'}$ then
    $(\mathcal{E}_{M}(\mathbf{y}))_{ij} =
    (\mathcal{E}_{M}(\mathbf{y}))_{i'j'}$.
    From \cref{definition: mathcalTStar},
    we see that $\mathcal{E}_{M}(\ln(g_{1}), \dots,
    \ln(g_{d})) = \mathcal{T}^{*}_{M}(g_{1}, \dots, g_{d})
    = c \cdot M$ for some $c \in \mathbb{R}_{> 0}$.
    Define
    $\delta_{M}: \Sym{q}{}(\mathbb{R}) \rightarrow
    \mathbb{R}$, 
    by
    $$\delta_{M}(A) = \min_{\substack{i \leq j, \, i' \leq j': \\ 
    M_{ij} \neq M_{i'j'}}} |A_{ij} - A_{i'j'}|,$$
    which measures the smallest absolute difference 
    between entries of $A$ corresponding to 
    distinct entries of $M$.
    Trivially, $\delta_{M}(c \cdot M) > 0$.
    Since $\delta_{M}(A)$ is a continuous function in the
    entries of $A$, there  exists
    an open ball $\mathcal{O}$ centered at $(\ln(g_{1}), \dots,
    \ln(g_{d}))$, such that
    $\delta_{M}(\mathcal{E}_{M}(\mathbf{y})) > 0$
    for all $\mathbf{y} \in \mathcal{O}$.
    Moreover, if $M \in \Sym{q}{F}(\mathbb{R}_{> 0})$,
    we note that since the eigenvalues of a matrix
    are continuous as a function of its entries,
    there  exists an open ball $\mathcal{O}'$ centered at 
    $(\ln(g_1), \dots, \ln(g_d))$, such that
    $\det(\mathcal{E}_{M}(\mathbf{y})) \neq 0$
    for all $\mathbf{y} \in \mathcal{O}'$.
    In that case, we replace $\mathcal{O}$ 
    with the smaller ball $\mathcal{O} \cap \mathcal{O}'$
    going forward.
    
    In particular, we can find some
    $\mathbf{y}^{*} = (y_{1}^{*}, \dots, y_{d}^{*}) \in 
    \mathbb{Q}_{> 0}^{d} \cap \mathcal{O}$.
    Now,  for any $(i, j)$ and
    $(i', j')$, if $M_{ij} < M_{i'j'}$, 
    we claim that  $(\mathcal{E}_{M}(\mathbf{y}^{*}))_{ij} <
    (\mathcal{E}_{M}(\mathbf{y}^{*}))_{i'j'}$.
    Otherwise,
     by the Intermediate Value
    Theorem, there  exists some 
    $t \in [0, 1]$ such that $\mathcal{E}_{M}(\mathbf{y})_{ij} = 
    \mathcal{E}_{M}(\mathbf{y})_{i'j'}$ where
    $\mathbf{y} = t \cdot \mathbf{y}^{*} + (1 - t) \cdot
    (\ln(g_{1}), \dots, \ln(g_{d})) \in \mathcal{O}$.
    This is a contradiction
    to $\delta_{M}(\mathcal{E}_{M}(\mathbf{y})) > 0$
    for all $\mathbf{y} \in \mathcal{O}$.
    Hence,
    $$M_{ij} < M_{i'j'} \iff 
    \mathcal{E}_{M}(\ln(g_{1}), \dots, \ln(g_{d}))_{ij} <
    \mathcal{E}_{M}(\ln(g_{1}), \dots, \ln(g_{d}))_{i'j'} \iff 
    \mathcal{E}_{M}(\mathbf{y}^{*})_{ij} < 
    \mathcal{E}_{M}(\mathbf{y}^{*})_{i'j'}.$$

    Since $\mathbf{y}^{*} \in \mathbb{Q}_{> 0}^{d}$,
    we know that there exists some $\kappa \in \mathbb{Z}_{> 0}$,
    such that $\kappa \cdot \mathbf{y}^{*} \in \mathbb{Z}_{> 0}^{d}$.
    Now, we define
    $$x_{ij} = \sum_{t \in [d]}\kappa y^{*}_{t}(e_{ijt} - e_{t}^{*}),
    \qquad \text{ for all } i \leq j \in [q].$$
    Since $\kappa y^{*}_{t} \in \mathbb{Z}_{> 0}$, and
    $(e_{ijt} - e_{t}^{*}) \in \mathbb{Z}_{\geq 0}$,
    we see that $x_{ij} \in \mathbb{Z}_{\geq 0}$ for
    all $i \leq j \in [q]$.
    It is also easy to see that $e^{x_{ij}} = 
    \mathcal{E}_{M}(\kappa \cdot \mathbf{y}^{*})_{ij}$
    for all $i, j \in [q]$.
    Therefore, $M_{ij} < M_{i'j'} \iff 
    x_{ij} < x_{i'j'}$.

    We can now define $\widehat{\mathcal{T}}_{M}: \mathbb{R}_{> 0} 
    \rightarrow \Sym{q}{}(\mathbb{R})$, such that
    $\widehat{\mathcal{T}}_{M}(z)_{ij} = z^{x_{ij}}$
    for all $i, j \in [q]$.
    We see that $\widehat{\mathcal{T}}_{M}(z) = 
    \mathcal{E}_{M}(\ln(z)(\kappa \cdot \mathbf{y}^{*}))
    \leq \PlGH(M)$.
    Moreover, if $M \in \Sym{q}{F}(\mathbb{R}_{> 0})$,
    then $\det(\widehat{\mathcal{T}}_{M}(e^{\nicefrac{1}{\kappa}}))
    = \det(\mathcal{E}_{M}(\mathbf{y}^{*})) \neq 0$.
\end{proof}

%% file: A2.3.Stretching_Gadgets.tex
\section{Stretching Gadgets}\label{appendix: stretchingAppendix}

In this section, we shall explore the stretching gadgets,
and gadget interpolation more rigorously.
For $m, n \ge 1$, let
$$\mathcal{P}_{m}(n) = \left\{\mathbf{x} = (x_{i})_{i \in [m]}
\in (\mathbb{Z}_{\geq 0})^{m} \hspace{0.08cm}\Big|\hspace{0.1cm}
\sum_{i \in [m]}x_{i} = n\right\}.$$

Given a matrix $M \in \Sym{q}{pd}(\mathbb{R})$,
we know that there exists an orthonormal matrix
(not necessarily unique) 
of unit eigenvectors $H$
with entries $H_{ij} \in \mathbb{R}$,
and a diagonal matrix of eigenvalues
$D = \diag(\lambda_{1}, \dots, \lambda_{q})$,
such that $M = HDH^{\tt{T}}$.
Therefore,
$$(\mathbf{S_{n}}(M))_{ij} = 
(M^{n})_{ij} = (H_{i1}H_{j1}) \lambda_{1}^{n} + \dots +
(H_{iq}H_{jq}) \lambda_{q}^{n},$$
for all $i, j \in [q]$, and all $n \geq 1$.

It then follows that
\begin{equation}\label{eqn: stretching}
    Z_{\mathbf{S_{n}}(M)}(G) = \sum_{\mathbf{k} \in \mathcal{P}_{q}(|E|)}c_{H}(G, \mathbf{k}) \cdot \left(\lambda_{1}^{k_{1}} \cdots \lambda_{q}^{k_{q}} \right)^{n},
\end{equation}
where
\begin{equation}\label{equation: stretchingCH(G)}
    c_{H}(G, \mathbf{k}) = 
    \sum_{\sigma: V \rightarrow [q]}
    \left( \sum_{\substack{E_{1} \sqcup \dots \sqcup E_{q}
    = E\\|E_{i}| = k_{i}}} \left( \prod_{i \in [q]}
    \prod_{(u, v) \in E_{i}}H_{\sigma(u)i}H_{\sigma(v)i}
    \right) \right)
\end{equation}
depends only on $G$ and the orthogonal matrix $H$, but not on $D$.

We can now prove \cref{lemma: stretchingLemma}.

\stretchingLemma*
\begin{proof}
    Recall from \cref{eqn: stretching} that for any graph $G = (V, E)$,
    $$Z_{M}(\mathbf{S_{n}}G) = 
    \sum_{\mathbf{k} \in \mathcal{P}_{q}(|E|)}
    c_{H}(G, \mathbf{k}) \cdot \left(
    \lambda_{1}^{k_{1}} \cdots \lambda_{q}^{k_{q}}\right)^{n},$$
    where $c_{H}(G, \mathbf{k})$ is defined in
    \cref{equation: stretchingCH(G)}.

    Define the set (i.e., without multiplicity)
    $$\Lambda_{D}(G) = \left\{ \prod_{i \in [q]} \lambda_{i}^{k_{i}}
    \ \Big|\  \mathbf{k} \in \mathcal{P}_{q}(|E|) \right\}.$$
    For each $\mu \in \Lambda_{D}(G)$, define
    $$X_{D}(G, \mu) = \left\{ \mathbf{k} \in \mathcal{P}_{q}(|E|)
    \ \Big|\ \prod_{i \in [q]}\lambda_{i}^{k_{i}} = \mu \right\},
    \qquad c_{H, D}(G, \mu)
    = \sum_{\mathbf{k} \in X_{D}(G, \mu)} c_{H}(G, \mathbf{k}).$$
    Substituting, we obtain
    $$Z_{M}(\mathbf{S_{n}}G) = \sum_{\mu \in \Lambda_{D}(G)}
    c_{H, D}(G, \mu) \cdot \mu^{n}.$$

    Since $\Lambda_{D}(G)$ is a set, its elements are distinct and 
    $|\Lambda_{D}(G)| \leq |E|^{O(1)}$.
    With oracle access to $\PlGH(M)$, we can compute 
    $Z_{M}(\mathbf{S_{n}}G)$ for
    $n \in [|\Lambda_{D}(G)|]$.
    This yields a full-rank Vandermonde system, which can be solved in polynomial time
    to obtain all $c_{H, D}(G, \mu)$.

    Now consider $Z_{N}(G)$ for an arbitrary
    $N = H\Delta H^{\tt T}$, where $\Delta = D^{\theta}$ for
    $\theta \in \mathbb{R}_{\neq 0}$.
    The eigenvalues of $N$ are $(\lambda_{1}^{\theta}, \dots, \lambda_{q}^{\theta})$.
    Define $\Lambda_{\Delta}(G)$ analogously to $\Lambda_{D}(G)$, replacing
    $D_{ii} = \lambda_{i}$ by $\Delta_{ii} = \lambda_{i}^{\theta}$.
    Similarly, define $X_{\Delta}(G, \nu)$ for each $\nu \in \Lambda_{\Delta}(G)$, and
    $$c_{H, \Delta}(G, \nu)
    = \sum_{\mathbf{k} \in X_{\Delta}(G, \nu)}
    c_{H}(G, \mathbf{k}).$$
    Then
    $$Z_{N}(G) = \sum_{\nu \in \Lambda_{\Delta}(G)}
    c_{H, \Delta}(G, \nu) \cdot \nu.$$

    For any $\nu \in \Lambda_{\Delta}(G)$, there exists
    $\mathbf{k} \in \mathcal{P}_{q}(|E|)$ such that
    $$\nu = \prod_{i \in [q]}(\lambda_{i}^{\theta})^{k_{i}}
    = \left(\prod_{i \in [q]}\lambda_{i}^{k_{i}}\right)^{\theta}.$$
    Since each $\lambda_{i} > 0$ and $\theta \neq 0$,
    we have a unique $\mu = \nu^{\nicefrac{1}{\theta}} \in \Lambda_{D}(G)$.
    Consequently,
    $$X_{\Delta}(G, \nu) = X_{D}(G, \nu^{\nicefrac{1}{\theta}}),
    \qquad c_{H, \Delta}(G, \nu) = c_{H, D}(G, \nu^{\nicefrac{1}{\theta}}).$$
    Therefore, using the previously computed $c_{H, D}(G, \mu)$ values,
    we can evaluate
    $$Z_{N}(G) = \sum_{\nu \in \Lambda_{\Delta}(G)}
    c_{H, D}(G, \nu^{\nicefrac{1}{\theta}}) \cdot \nu.$$
    Hence, $\PlGH(\mathcal{S}_{M}(\theta)) \leq \PlGH(M)$ for all $\theta \in \mathbb{R}_{\neq 0}$.
\end{proof}

In~\cref{lemma: stretchingLemma}, the reduction is also valid for $\theta=0$ (trivially).
But we will not need that fact.

%% file: A2.4.Bridge_Gadgets.tex
\section{Bridge Gadgets}\label{appendix: bridgingAppendix}

In this section, we shall explore the bridging gadgets,
and gadget interpolation more rigorously.
For $m, n \ge 1$, let
$$\mathcal{P}_{m}(n) = \left\{\mathbf{x} = (x_{i})_{i \in [m]}
\in (\mathbb{Z}_{\geq 0})^{m} \hspace{0.08cm}\Big|\hspace{0.1cm}
\sum_{i \in [m]}x_{i} = n\right\}.$$
We shall now prove \cref{lemma: bridgingLemma}.

\bridgingLemma*
\begin{proof}
    We know that for any $n \geq 1$, and given any
    planar graph $G = (V, E)$,
    $$Z_{M}(\mathbf{B_{n}}G) = 
    \sum_{\sigma: V \rightarrow [q]} \prod_{(u, v) \in E}
    \sum_{x, y \in [q]}M_{\sigma(u)x}(M_{xy})^{n}M_{\sigma(v)y}.$$
    
    As in \cref{appendix: thickeningAppendix},
    we  define the \emph{set}
    $$X(G) = \left\{ \prod_{i, j \in [q]}
    M_{ij}^{k_{ij}} \Big|\hspace{0.1cm}
    \mathbf{k} \in \mathcal{P}_{q^{2}}(|E|) \right\}.$$
    For each $\mathbf{k} \in \mathcal{P}_{q^{2}}(|E|)$,
    we also define
    $$Y(G, \mathbf{k}) = 
    \left\{(\mathbf{x}, \mathbf{y}) \in [q]^{|E|} \times [q]^{|E|}
    \hspace{0.08cm}\Big|\hspace{0.1cm} 
    (\forall\ 
    i, j \in [q])~\left[\ k_{ij} = |\{t: x_{t} = i, y_{t} = j\}| \ \right]\  \right\}.$$
    Finally, for each $x \in X(G)$, we define
    $$c(G, x) = \sum_{\sigma: V \rightarrow [q]}\left(
    \sum_{\substack{\mathbf{k} \in \mathcal{P}_{q^{2}}(|E|):\\
    \prod_{i, j \in [q]}M_{ij}^{k_{ij}} = x}}\left(
    \sum_{\mathbf{x}, \mathbf{y} \in Y(G, \mathbf{k})}\left(
    \prod_{(u, v) \in E} M_{\sigma(u)x_{u}}M_{\sigma(v)y_{v}}
    \right)\right)\right).$$
    
    Now, we note that
    $$Z_{M}(\mathbf{B_{n}}G) = 
    Z_{\mathbf{B_{n}}(M)}(G) = \sum_{x \in X(G)}x^{n} \cdot
    c(G, x).$$
    Since each $x \in X(G)$ is distinct, and
    since $|X(G)| \leq |E|^{O(1)}$, we can use
    oracle access to $\PlGH(M)$ to compute
    $Z_{M}(\mathbf{B_{n}}G)$ for $n \in [|X_{M}(G)|]$.
    We will then have a full rank Vandermonde system of
    linear equations, which can be solved in
    polynomial time to find
    $c(G, x)$ for all $x \in X(G)$.

    We will now note that each $x \in X(G)$ is the product
    of entries of the matrix $M$.
    Since $M \in \text{Sym}_{q}(\mathbb{R}_{> 0})$,
    this implies that every element of $X(G)$
    is positive.
    Moreover, we see that for every $\mathbf{k} \in
    \mathcal{P}_{q^{2}}(|E|)$, and for every
    $\theta \in \mathbb{R}$,
    $$\left(\prod_{i, j \in [q]}M_{ij}^{k_{ij}}\right)^{\theta}
    = \prod_{i, j \in [q]}(M_{ij}^{\theta})^{k_{ij}}$$
    is well defined.
    Therefore, for any $\theta \in \mathbb{R}$, 
    we can compute the quantity
    $$\sum_{x \in X(G)} x^{\theta} \cdot c(G, x)
    = \sum_{\sigma: V \rightarrow [q]} \prod_{(u, v) \in E}
    \left(\sum_{x, y \in [q]}
    M_{\sigma(u)x}(M_{xy})^{\theta}M_{\sigma(v)y}\right)
    = Z_{\mathcal{B}_{M}(\theta)}(G)$$
    in polynomial time. This proves that
    $\PlGH(\mathcal{B}_{M}(\theta)) \leq 
    \PlGH(M)$ for all $\theta \in \mathbb{R}$.
\end{proof}

%% file: A2.5.Loop_Gadgets.tex
\section{Loop Gadgets}\label{appendix: loopAppendix}

In this section, we shall explore the loop gadgets,
and gadget interpolation more rigorously.
For $m, n \ge 1$, let
$$\mathcal{P}_{m}(n) = \left\{\mathbf{x} = (x_{i})_{i \in [m]}
\in (\mathbb{Z}_{\geq 0})^{m} \hspace{0.08cm}\Big|\hspace{0.1cm}
\sum_{i \in [m]}x_{i} = n\right\}.$$

We recall that 
$$Z_{M}(\mathbf{L_{n}}G) = 
\sum_{\sigma: V \rightarrow [q]} \prod_{(u, v) \in E}
(M_{\sigma(u)\sigma(u)})^{n} M_{\sigma(u) \sigma(v)}
(M_{\sigma(v) \sigma(v)})^{n} = Z_{\mathbf{L_{n}}(M)}(G),
\quad \text{ for all } n \geq 1,$$
where $\mathbf{L_{n}}(M) = (\diag(M))^{n} \cdot M \cdot (\diag(M))^{n}$.

We can now prove
\cref{lemma: loopLemma}.
The proof is very similar to the proof of
\cref{lemma: thickeningLemma} in
\cref{appendix: thickeningAppendix}.
We first prove the following lemma.

\begin{lemma}\label{lemma: loopLemmaGeneral}
    Let $M \in \Sym{q}{}(\mathbb{R}_{> 0})$.
    Let $A \in \Sym{q}{}(\mathbb{R}_{\geq 0})$ be
    a diagonal matrix with positive diagonal entries,  
    that are generated by some $\{g_t\}_{t \in [d]}$,
    such that $A_{ii} = g_{1}^{e_{ii1}} \cdots g_{d}^{e_{iid}}$
    for all $i \in [q]$.
    Furthermore, assume that $e_{iit} \geq 0$ for all
    $i \in [q]$, $t \in [d]$, and that
    $\PlGH(A^{n}MA^{n}) \leq \PlGH(M)$ for all $n \geq 1$.
    Then, for any symmetric matrix
    $B \in \Sym{q}{}(\mathbb{R})$
    of the form
    $$B_{ii} = z_{1}^{e_{ii1}}
    \cdots z_{d}^{e_{iid}},$$
    with parameters
    $\mathbf{z} = (z_{1}, \dots, z_{d}) \in \mathbb{R}^{d}$,
    we have $\PlGH(BMB) \leq \PlGH(M)$.
\end{lemma}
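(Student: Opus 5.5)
The plan mirrors the proof of \cref{lemma: thickeningGeneralLemma}, with the loop exponents playing the role of the edge exponents. Fix a planar graph $G = (V,E)$. For each $n \geq 1$ we have
$$Z_{A^{n}MA^{n}}(G) = \sum_{\sigma: V \to [q]} \prod_{(u,v)\in E} A_{\sigma(u)\sigma(u)}^{n} M_{\sigma(u)\sigma(v)} A_{\sigma(v)\sigma(v)}^{n}.$$
Each edge $(u,v)$ contributes one factor of $A_{\sigma(u)\sigma(u)}$ and one of $A_{\sigma(v)\sigma(v)}$. Hence the total $A$-weight of $\sigma$ is $\prod_{w \in V} A_{\sigma(w)\sigma(w)}^{\deg(w)}$, raised to the $n$-th power. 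This quantity lies in the set
$$X(G) = \Big\{\prod_{i\in[q]} A_{ii}^{k_i} : \mathbf{k} \in \mathcal{P}_{q}(2|E|)\Big\},$$
which has size $|E|^{O(1)}$. Grouping terms by this weight gives
$$Z_{A^nMA^n}(G) = \sum_{x \in X(G)} x^{n}\, c(G,x), \qquad c(G,x) = \sum_{\sigma:\ \prod_w A_{\sigma(w)\sigma(w)}^{\deg w} = x}\ \prod_{(u,v)\in E} M_{\sigma(u)\sigma(v)},$$
and $c(G,x)$ does not depend on $n$.

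By hypothesis, $\PlGH(A^{n}MA^{n}) \leq \PlGH(M)$ for every $n$. So we query $n = 1, \dots, |X(G)|$. The elements of $X(G)$ are distinct and positive, so the Vandermonde system has full rank, and solving it recovers every $c(G,x)$ in polynomial time.

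Next we relabel the weights. Since the $A_{ii}$ are generated by $\{g_t\}$ with nonnegative exponents, each $x = \prod_i A_{ii}^{k_i}$ has the unique representation $x = \prod_t g_t^{e^{x}_t}$ with $e^x_t = \sum_i k_i e_{iit} \geq 0$. Uniqueness of the exponent vector is exactly where the generating-set property is used. Define $\widehat{y}(x) = \prod_t z_t^{e^x_t}$. This is well defined even when some $z_t$ are zero or negative, because all exponents are nonnegative integers. It also satisfies $\widehat{y}\big(\prod_w A_{\sigma(w)\sigma(w)}^{\deg w}\big) = \prod_w B_{\sigma(w)\sigma(w)}^{\deg w}$. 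Therefore
$$Z_{BMB}(G) = \sum_{\sigma}\prod_{w} B_{\sigma(w)\sigma(w)}^{\deg w}\prod_{(u,v)} M_{\sigma(u)\sigma(v)} = \sum_{x\in X(G)} \widehat{y}(x)\, c(G,x),$$
which we compute from the recovered coefficients.

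The main point requiring care is this relabeling. Distinct $x$ may map to the same $\widehat{y}(x)$, but that is harmless: we only sum. What we must check is that $\widehat{y}$ depends only on $x$ and not on the choice of $\mathbf{k}$. This follows from the uniqueness of the representation in terms of the generating set. The statement treats $B$ as diagonal, since only its diagonal entries are specified, so $BMB$ is symmetric as required.
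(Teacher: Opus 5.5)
Your proposal is correct and follows the paper's proof essentially step for step. It groups assignments by the $A$-weight $\prod_{w} A_{\sigma(w)\sigma(w)}^{\deg w}$, interpolates over $n \in [|X(G)|]$ with a Vandermonde system, and relabels via $\widehat{y}$ using uniqueness of the generating-set exponents. The only cosmetic difference is that you evaluate $Z_{BMB}(G)=\sum_{x} \widehat{y}(x)\,c(G,x)$ directly, whereas the paper passes through the intermediate set $Y(G)$ and the coefficients $c_{B,M}(G,y)$.
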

\begin{proof}
    We note that
    $$Z_{A^{n}MA^{n}}(G) = 
    \sum_{\sigma: V \rightarrow [q]}
    \left(\prod_{(u, v) \in E} M_{\sigma(u) \sigma(v)} \right)
    \left(\prod_{v \in V}(A_{\sigma(v) \sigma(v)})
    ^{\deg(v) \cdot n} \right),$$
    for all $n \geq 1$.

    We now define the \emph{set}
    $$X(G) = \left\{\prod_{i\in [q]}
    A_{ii}^{k_{i}}\hspace{0.08cm}\Big|\hspace{0.1cm}
    \mathbf{k} = (k_{i})_{i \in [q]} \in
    \mathcal{P}_{q}(2|E|)\right\}, \text{ and }$$
    $$c_{A, M}(G, x) = \sum_{\substack{\sigma: V \rightarrow [q]:\\
    x = \prod_{v}(A_{\sigma(v) \sigma(v)})^{\deg(v)}}}
    \prod_{(u, v) \in E}M_{\sigma(u) \sigma(v)}.$$
    We see that
    $$Z_{A^{n}MA^{n}}(G) = 
    \sum_{x \in X(G)} x^{n} \cdot c_{A, M}(G, x).$$
    Crucially, we note that $c_{A, M}(G, x)$ is
    independent of $n$.
    Since $X(G)$ is a set,
    its elements are distinct and 
    $|X(G)| \leq |E|^{O(1)}$.
    From our assumption that
    $\PlGH(A^{n}MA^{n}) \leq \PlGH(M)$,
    oracle access to $\PlGH(M)$
    allows us to compute
    $Z_{A^{n}MA^{n}}(G)$ for
    $n \in [|X(G)|]$.
    This yields a full-rank Vandermonde system,
    which can be solved in polynomial time
    to obtain all $c_{A, M}(G, x)$.

    Next, each $x \in X(G)$ can be written as
    $x = \prod_{i \in [q]}A_{ii}^{k_{i}}$
    for some $\mathbf{k} = (k_{i})_{i \in [q]}
    \in \mathcal{P}_{q}(2|E|)$.
    Since $\{g_t\}_{t \in [d]}$ is a generating
    set of $\{A_{ii}\}_{i \in [q]}$, we note that
    each $x \in X(G)$ admits a unique representation
    $$x = g_{1}^{e_{1}^{x}} \cdots g_{d}^{e_{d}^{x}},
    \qquad e_{t}^{x} = \sum_{i \in [q]}k_{i}e_{iit},$$
    with $e_{t}^{x} \in \mathbb{Z}_{\geq 0}$,
    since all $e_{iit}$, $k_{i} \geq 0$.

    We now define the map
    $\widehat{y}: X(G) \rightarrow \mathbb{R}$ by
    $$\widehat{y}(x) = z_{1}^{e_{1}^{x}} \cdots z_{d}^{e_{d}^{x}},$$
    where $\mathbf{z} = (z_{1}, \dots, z_{d})$ are the parameters
    of $B \in \Sym{q}{}(\mathbb{R})$.
    Then, for any $\mathbf{k} \in \mathcal{P}_{q}(2|E|)$,
    this definition ensures that
    $$\widehat{y}\left(\prod_{i \in [q]}A_{ii}^{k_{i}}\right)
    = \prod_{i \in [q]}B_{ii}^{k_{i}}.$$
    Now, we define the \emph{set}
    $$Y(G) = \left\{\prod_{i\in [q]}
    B_{ii}^{k_{i}}\hspace{0.08cm}\Big|\hspace{0.1cm}
    \mathbf{k} = (k_{i})_{i \in [q]} \in
    \mathcal{P}_{q}(2|E|)\right\}.$$

    For any labeling $\sigma: V \rightarrow [q]$, define
    $\mathbf{k}$ such that $k_{i} = 
    \sum_{v \in V: \sigma(v) = i} \deg(v)$.
    Then,
    $$\widehat{y}\left(\prod_{v \in V}
    (A_{\sigma(v) \sigma(v)})^{\deg(v)} \right)
    = \widehat{y} \left(\prod_{i \in [q]}
    A_{ii}^{k_{i}}\right)
    = \prod_{i \in [q]} B_{ii}^{k_{i}}
    = \prod_{v \in V} (B_{\sigma (v) \sigma (v)})^{\deg(v)}.$$
    Hence, for each $y \in Y(G)$,
    $$\left\{ \sigma : V \to [q] \ \middle|\ 
    \prod_{v\in V} (B_{\sigma(v)\sigma(v)})^{\deg(v)} = y \right\}
    = \bigsqcup_{\substack{x \in X(G):\\ \widehat{y}(x)=y}}
    \left\{ \sigma : V \to [q] \ \middle|\ 
    \prod_{v \in V} (A_{\sigma(v)\sigma(v)})^{\deg(v)} = x \right\}.$$
    
    Therefore, we obtain
    $$c_{B, M}(G, y) = \sum_{\substack{x \in X(G):\\
    \widehat{y}(x) = y}} c_{A, M}(G, x).$$
    Using the values of $c_{A, M}(G, x)$ we have already determined,
    we can compute
    $$Z_{BMB}(G) 
    = \sum_{y \in Y(G)}y \cdot c_{B, M}(G, y)
    = \sum_{x \in X(G)} \widehat{y}(x) \cdot c_{A, M}(G, x).$$
    Therefore, $\PlGH(BMB) \leq 
    \PlGH(M)$.
\end{proof}

Just as was the case in
\cref{appendix: thickeningAppendix}, we can use
\cref{lemma: loopLemmaGeneral,lemma: MequivalentCM} to prove
\cref{lemma: loopLemma}.

\loopLemma*
\begin{proof}
    We let $A = \diag(M)$ be the diagonal
    matrix with the same diagonal entries as $M$.
    Since the diagonal entries of $A$
    are generated by $\{g_t\}_{t \in [d]}$, we know that
    there exist integers $e_{iit} \in \mathbb{Z}$ s.t.,
    $A_{ii} = g_{1}^{e_{ii1}} \cdots g_{d}^{e_{iid}}$
    for all $i \in [q]$.
    Now, we let $e_{t}^{*} = \min_{i \in [q]}e_{iit}$
    for all $t \in [d]$, and let
    $c = \prod_{t \in [d]}g_{t}^{-e_{t}^{*}}$.
    From \cref{lemma: MequivalentCM}, we see that
    $\PlGH((c A)^{n} M (c A)^{n}) 
    \equiv \PlGH(A^{n} M A^{n})$.
    Since $\mathbf{L_{n}}(M) = A^{n}MA^{n}$, we know that
    $\PlGH(A^{n}MA^{n}) \leq \PlGH(M)$ for all $n \geq 1$.
    So, it follows that $\PlGH((c A)^{n}M(c A)^{n}) \leq \PlGH(M)$
    for all $n \geq 1$ as well.
    
    Moreover, we also see that the diagonal entries of the matrix
    $(c A)$ are also generated by
    $\{g_t\}_{t \in [d]}$
    such that $(c A)_{ii} =
    g_{1}^{e_{ii1} - e_{1}^{*}} \cdots g_{d}^{e_{iid} - e_{d}^{*}}$
    for all $i \in [q]$.
    From our choice of $e_{t}^{*}$, it follows that
    $e_{iit} - e_{t}^{*} \geq 0$ for all $i \in [q]$, $t \in [d]$.
    Therefore, \cref{lemma: loopLemmaGeneral}
    implies that $\PlGH(\mathcal{L}^{*}_{M}
    (\mathbf{z})) \leq \PlGH(M)$
    for all $\mathbf{z} = (z_{1}, \dots, z_{d}) \in \mathbb{R}^{d}$.
\end{proof}

The proof of \cref{lemma: loopLemmaSimple} now follows
as an immediate corollary.

\loopLemmaSimple*
\begin{proof}
    Let the diagonal entries of 
    $M \in \Sym{q}{}(\mathbb{R}_{> 0})$
    be generated by some $\{g_t\}_{t \in [d]}$,
    such that $M_{ii} = g_{1}^{e_{ii1}} \cdots g_{d}^{e_{iid}}$
    for all $i, j \in [q]$.
    Let $e_{t}^{*} = \min_{i\in [q]}e_{iit}$ for $t \in [d]$.
    Then, we note from \cref{definition: mathcalLStar}, that
    $$\mathcal{D}^{*}_{M}(g_{1}^{\theta}, \dots, g_{d}^{\theta})_{ii}
    = (g_{1}^{e_{ii1}} \cdots g_{d}^{e_{iit}})^{\theta} \cdot
    (g_{1}^{-e_{1}^{*}}\cdots g_{d}^{-e_{d}^{*}})^{\theta}
    = (M_{ii})^{\theta} \cdot c^{\theta},$$
    for $c = (g_{1}^{-e_{1}^{*}}\cdots g_{d}^{-e_{d}^{*}})$.
    Importantly, $c$ is independent of $i \in [q]$,
    so, $\mathcal{D}^{*}_{M}(g_{1}^{\theta}, \dots, g_{d}^{\theta})
    = c^{\theta} \cdot \mathcal{D}_{M}(\theta)$ for all
    $\theta \in \mathbb{R}$.
    Therefore, \cref{lemma: loopLemma,lemma: MequivalentCM}
    imply that
    $$\PlGH(\mathcal{L}_{M}(\theta)) \equiv
    \PlGH(\mathcal{D}_{M}(\theta) 
    \cdot M \cdot \mathcal{D}_{M}(\theta)) 
    \equiv \PlGH(\mathcal{L}^{*}_{M}(g_{1}^{\theta}, 
    \dots, g_{d}^{\theta}))
    \leq \PlGH(M),$$
    for all $\theta \in \mathbb{R}$.
\end{proof}

%% file: A4.Diagonal_Distinct_Hardness.tex
\section{Hardness of diagonal-distinct matrices}\label{appendix: diagonalDistinctHardness}

In this section, we will prove
\cref{theorem: diagonalDistinctHardness} 
by completing the induction step.

We will first prove \cref{theorem: inductionStep}.

\inductionStep*
\begin{proof}
	We first use \cref{lemma: thickeningSingleVariable} to
	find integers $(x_{ij})$, s.t.,
	$M_{ij} < M_{i'j'} \iff x_{ij} < x_{i'j'}$.
	Let $x^{*} = \min_{i, j}x_{ij}$,
	then
	$x_{ij} = x^{*} \iff M_{ij} = m^{*}$.
	As in the proof of \cref{theorem: magnitudeDistinctHardness},
	we may assume that $x^{*} = 0$,
	since $\PlGH(z^{-x^{*}}\widehat{\mathcal{T}}_{M}(z)) 
	\equiv \PlGH(\widehat{\mathcal{T}}_{M}(z))$.
	We also know from \cref{lemma: thickeningSingleVariable}
	that $\det(\widehat{\mathcal{T}}_{M}(z))$ is not the zero
	polynomial.
	So, there exists some $z^{*} \in \mathbb{R}$, s.t., 
	for all $z > z^{*}$, $\det(\widehat{\mathcal{T}}_{M}(z)) \neq 0$.
	We also note that since $M$ is strongly diagonal-distinct,
	$x_{i_{1}i_{1}} \notin \{x_{i_{2}j}: j \in [q]\}$
	for all $i_{1} \neq i_{2} \in [q]$.
	So, $f_{i_{1}i_{2}}(z) = 
	\sum_{j}z^{2x_{i_{1}j}} - \sum_{j}z^{2x_{i_{2}j}}$
	is not the zero polynomial for all 
	$i_{1} \neq i_{2} \in [q]$.
	So, there must exist some $z^{**}$,
	s.t., for all $z > z^{**}$,
	$f_{i_{1}i_{2}}(z) \neq 0$
	for all $i_{1} \neq i_{2} \in [q]$.
	
	We now pick some prime integer $p >
	\max(q, z^{*}, z^{**})$.
	From our choice of $p > z^{*}$, we see that
	$M' = \mathbf{S_{2}}(\widehat{\mathcal{T}}_{M}(p))
	= (\widehat{\mathcal{T}}_{M}(p))^{2}
	\in\Sym{q}{pd}(\mathbb{Z}_{> 0})$.
	For any $i \in [q]$, let $c(i) = 
	|\{j \in [q]: M_{ij} = m^{*} \}|$.
	Then, we see that $M'_{ii} \equiv c(i) \mod p$.
	From our choice of $M$ and $p > q$, we see that
	$M'_{i^{*}i^{*}} \equiv 0 \mod p$,
	and $M'_{i_{1}^{\dagger}i_{1}^{\dagger}} \not\equiv 
	0 \mod p$, and
	$M'_{i_{2}^{\dagger}i_{2}^{\dagger}} \not\equiv 
	0 \mod p$.
	We let $\{g_t\}_{t \in [d]}$ be the set of
	primes that divide the diagonal entries of $M'$.
	We name
	$g_{1} = p$, and consider
	$M'' = \mathcal{L}^{*}_{M'}(0, 1, \dots, 1)$.
	
	We consider the principal sub-matrix $N$ of $M'$
	formed by the rows and columns in the  set
	$I(N) = \{i \in [q]: M'_{ii} \not\equiv 0 \mod p \}$.
	We note that $i^{*} \notin I(N)$, and
	$i_{1}^{\dagger}, i_{2}^{\dagger} \in I(N)$,
	so $2 \leq |I(N)| < q$.
	Since $M' \in \Sym{q}{pd}(\mathbb{R}_{> 0})$,
	$N \in \Sym{|I(N)|}{pd}(\mathbb{R}_{> 0})$, and
	from our choice of $p > z^{**}$, we also find that
	$M'$ (and consequently $N$) is diagonal-distinct.
	So, from \nameref{induction: inductionHyp},
	we see that $\PlGH(N) \equiv \PlGH(M'')
	\leq \PlGH(M)$ is $\#$P-hard.
\end{proof}

We will now prove \cref{lemma: diagDistinctImpliesMatrixProperties},
but first, we shall need the following lemma:

\begin{lemma}\label{lemma: intraRowDistinctLoop}
	Let $M \in \Sym{q}{pd}(\mathbb{R}_{> 0})$ be
	diagonal distinct.
	Let $\mathcal{F}$ be some countable family of
	$\Sym{q}{}(\mathbb{R})$-polynomials, such that
	$\rho(M) \neq 0$ for all $\rho \in \mathcal{F}$.
	Then, there exists some
	intra-row-distinct $N \in \Sym{q}{pd}(\mathbb{R}_{> 0})$
	such that $\rho(N) \neq 0$ for all $\rho \in \mathcal{F}$, and
	$\PlGH(N) \leq \PlGH(M)$.
\end{lemma}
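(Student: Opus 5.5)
The plan is to use the loop gadget family $\mathcal{L}_{M}(\theta) = \mathcal{D}_{M}(\theta)\cdot M\cdot \mathcal{D}_{M}(\theta)$ from \cref{definition: mathcalL}, together with \cref{lemma: analyticGadgets}. Two facts make this family suitable. First, \cref{lemma: loopLemmaSimple} gives $\PlGH(\mathcal{L}_{M}(\theta)) \leq \PlGH(M)$ for every $\theta \in \mathbb{R}$. Second, $\mathcal{L}_{M}(0) = M$, so every $\rho \in \mathcal{F}$ has a witness point at $\theta = 0$.

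The path stays inside the right open set for every $\theta$. Take $U = \Sym{q}{pd}(\mathbb{R}_{>0})$, which is open in $\Sym{q}{}(\mathbb{R})$. For every $\theta$, the matrix $\mathcal{D}_{M}(\theta)$ is diagonal with positive entries. Hence $\mathcal{L}_{M}(\theta)$ is congruent to $M$ and is positive definite, and its entries $(M_{ii}M_{jj})^{\theta}M_{ij}$ are positive. So $\mathcal{L}_{M}: \mathbb{R} \to U$, and each entry is analytic in $\theta$.

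The key step is to exhibit one $\theta$ at which $\phi_{\rm{row}}(\mathcal{L}_{M}(\theta)) \neq 0$. Fix a row $i$ and indices $j < j'$. Then
$$\mathcal{L}_{M}(\theta)_{ij} - \mathcal{L}_{M}(\theta)_{ij'} = M_{ii}^{\theta}\left(M_{ij}M_{jj}^{\theta} - M_{ij'}M_{j'j'}^{\theta}\right).$$
Taking logarithms, equality holds exactly when
$$\ln M_{ij} + \theta \ln M_{jj} = \ln M_{ij'} + \theta \ln M_{j'j'}.$$
This is an affine equation in $\theta$. Because $M$ is diagonal distinct, $M_{jj} \neq M_{j'j'}$, so $\ln M_{jj} \neq \ln M_{j'j'}$ and the slopes differ. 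The equation therefore has at most one solution $\theta$.

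There are finitely many triples $(i,j,j')$, so only finitely many $\theta$ are excluded. Any $\theta$ outside this finite set makes $\mathcal{L}_{M}(\theta)$ intra-row-distinct. This is the only non-formal point in the argument, and it is exactly where the diagonal distinctness hypothesis is used.

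Finally, apply \cref{lemma: analyticGadgets} with $Z = \mathcal{O} = \mathbb{R}$, $\mathcal{K} = \mathcal{L}_{M}$, and the countable family $\mathcal{F} \cup \{\phi_{\rm{row}}\}$ of polynomial, hence $U$-analytic, functions. Each function has a witness point: $\theta = 0$ for each $\rho \in \mathcal{F}$, and the point found above for $\phi_{\rm{row}}$. The lemma then yields a single $\theta^{*}$ at which all of these functions are nonzero simultaneously.

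Set $N = \mathcal{L}_{M}(\theta^{*})$. Then $N \in \Sym{q}{pd}(\mathbb{R}_{>0})$, $N$ is intra-row-distinct, $\rho(N) \neq 0$ for all $\rho \in \mathcal{F}$, and $\PlGH(N) \leq \PlGH(M)$ by \cref{lemma: loopLemmaSimple}.
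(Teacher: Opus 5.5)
Your proof is correct and follows the paper's route: the loop family $\mathcal{L}_{M}(\theta)$, diagonal distinctness to show no row-difference vanishes identically, and \cref{lemma: analyticGadgets} to combine this with $\mathcal{F}$. Two of your steps are slightly cleaner than the paper's. The congruence $\mathcal{D}_{M}(\theta) M \mathcal{D}_{M}(\theta)$ gives positive definiteness for every $\theta$, where the paper restricts to a neighborhood of $0$; and your at-most-one-root argument yields a single witness for $\phi_{\rm{row}}$, where the paper finds separate witnesses for each pair via a large-$n$ limit.
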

\begin{proof}
	Since $M$ is diagonal distinct, 
	we may assume that
	$\phi_{\rm{diag}} \in \mathcal{F}$ (function $\phi_{\rm{diag}}$ from 
	\cref{defn:phi-psi-function-for-diagonal-distinct}).
	We will now consider $\mathbf{L_{n}}(M)$ for $n \geq 1$.
	We note that for all $i \in [q]$, 
	and $j_{1} \neq j_{2} \in [q]$,
	$$\frac{\mathbf{L_{n}}(M)_{ij_{1}}}
	{\mathbf{L_{n}}(M)_{ij_{2}}} = 
	\frac{M_{ij_{1}}(M_{ii} \cdot M_{j_{1}j_{1}})^{n}}
	{M_{ij_{2}}(M_{ii} \cdot M_{j_{2}j_{2}})^{n}} = 
	\frac{M_{ij_{1}}}{M_{ij_{2}}} \left(\frac
	{M_{j_{1}j_{1}}}{M_{j_{2}j_{2}}} \right)^{n}.$$
	Since $M$ is diagonal distinct, we note that
	$(\nicefrac{M_{j_{1}j_{1}}}{M_{j_{2}j_{2}}}) \neq 1$.
	So for large  $n$, 
	$\mathbf{L_{n}}(M)_{ij_{1}} \neq 
	\mathbf{L_{n}}(M)_{ij_{2}}$.
	
	This implies that for each $i \in [q]$,
	and $j_{1} \neq j_{2} \in [q]$, there exists some
	$\theta_{ij_{1}j_{2}} \in \mathbb{R}$ such that
	$(\mathcal{L}_{M}(\theta_{ij_{1}j_{2}}))_{ij_{1}} - 
	(\mathcal{L}_{M}(\theta_{ij_{1}j_{2}}))_{ij_{2}} \neq 0$.
	Let $\zeta_{i, j_{1}, j_{2}}: A \mapsto (A_{ij_{1}} - A_{ij_{2}})$ for all $i \in [q], j_{1} \neq j_{2} \in [q]$.
	
	Since $M \in \Sym{q}{pd}(\mathbb{R}_{> 0})$,
	and eigenvalues are a continuous function of the entries
	of a matrix, and $\mathcal{L}_{M}(0) = M$,
	we know that there exists an open
	neighborhood $\mathcal{O}$ 
	around $\theta = 0$, s.t., for all $\theta \in \mathcal{O}$,
	$\mathcal{L}_{M}(\theta) \in \Sym{q}{pd}(\mathbb{R}_{> 0})$.
	We can now use \cref{lemma: analyticGadgets} to find
	$\theta^{*} \in \mathcal{O}$, such that
	$N = \mathcal{L}_{M}(\theta^{*}) \in 
	\Sym{q}{pd}(\mathbb{R}_{> 0})$ satisfies:
	$\rho(N) \neq 0$ for all $\rho \in \mathcal{F}$,
	$\zeta_{ij_{1}j_{2}}(N) \neq 0$
	for all $i \in [q], j_{1} \neq j_{2} \in [q]$,
	and $\PlGH(N) \leq \PlGH(M)$.
	Thus,
	$\phi_{\rm{row}}(N) = \prod_{i \in [q]}\prod_{j_{1} < j_{2} \in [q]}
	\zeta_{ij_{1}j_{2}}(N) \neq 0$, i.e., $N$ is intra-row-distinct.
	So, $N$ is the required matrix.
\end{proof}

\diagDistinctImpliesMatrixProperties*
\begin{proof}
	By \cref{lemma: intraRowDistinctLoop},
	there exists a diagonal distinct and intra-row-distinct
	$M' \in \Sym{q}{pd}(\mathbb{R}_{> 0})$
	with $\PlGH(M') \leq \PlGH(M)$.
	Using \cref{lemma: I-closeStretching},
	we obtain a diagonal distinct, intra-row-distinct,
	and I-close
	$N \in \Sym{q}{pd}(\mathbb{R}_{> 0})$
	satisfying $\PlGH(N) \leq \PlGH(M') \leq \PlGH(M)$.
	Since $N_{ii}$ are all distinct and strictly larger
	than all off-diagonal entries,
	$N$ is also strongly diagonal-distinct.
	So, this $N$ is the required matrix.
\end{proof}

Now, we will begin the proof of
\cref{theorem: diagDistinctAllProperties}.
We shall break this theorem into several
smaller lemmas.

\begin{lemma}\label{lemma: diagDistinctAntiDiagonal}
	Let $N \in \Sym{q}{pd}(\mathbb{R}_{> 0})$ be a diagonal dominant, strongly diagonal distinct,
	intra-row-distinct.
	Then, either $\PlGH(N)$ is $\#$P-hard, or
	after a simultaneous permutation of rows/columns,
	\begin{specialenumerate}\setcounter{specialenumeratei}{0}
		\item \label{item: antiDiagMin}
		$ N_{ii^{\rm R}} = \min_{i,j \in [q]} N_{ij}$
		for all $i \in [q]$, where $i^{\rm R} = q+1-i$.
	\end{specialenumerate}
\end{lemma}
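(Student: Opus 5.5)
We apply \cref{theorem: inductionStep} directly to $N$, and then read off the structure of the minimum entries when its hypotheses fail. Since $N \in \Sym{q}{pd}(\mathbb{R}_{>0})$, it is in particular full rank, so $N \in \Sym{q}{F}(\mathbb{R}_{>0})$. It is also strongly diagonal distinct by assumption. Hence \cref{theorem: inductionStep} applies to $N$ once we exhibit the row index $i^{*}$ and the pair $i_{1}^{\dagger} \neq i_{2}^{\dagger}$.

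Let $m^{*} = \min_{i,j} N_{ij}$. Diagonal dominance gives $N_{ii} > N_{jk}$ for every $i$ and every $j \neq k$ (here $q \geq 2$). So $m^{*}$ is never attained on the diagonal: it is attained at some $(i_{1}, j_{1})$ with $i_{1} \neq j_{1}$. By symmetry it is also attained at $(j_{1}, i_{1})$. Taking $i_{1}^{\dagger} = i_{1}$ and $i_{2}^{\dagger} = j_{1}$ supplies the two distinct rows required by \cref{theorem: inductionStep}.

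\emph{Case 1: some row of $N$ does not contain $m^{*}$.} Then that row index serves as $i^{*}$, since every entry in it is $> m^{*}$. \cref{theorem: inductionStep} then shows $\PlGH(N)$ is $\#$P-hard.

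\emph{Case 2: every row contains $m^{*}$.} By intra-row-distinctness, each row $i$ contains $m^{*}$ in exactly one column, which we call $\pi(i)$. We have $\pi(i) \neq i$ because diagonal entries are never minimal. By symmetry $N_{\pi(i) i} = m^{*}$, and uniqueness of the minimum in row $\pi(i)$ forces $\pi(\pi(i)) = i$. Thus $\pi$ is a fixed-point-free involution of $[q]$. In particular $q$ is even, and $[q]$ splits into $q/2$ pairs $\{a_{k}, \pi(a_{k})\}$.

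We now rename the domain by sending $a_{k} \mapsto k$ and $\pi(a_{k}) \mapsto q+1-k$ for $k \in [q/2]$, and apply the corresponding simultaneous permutation of rows and columns. After this renaming $\pi(i) = q+1-i = i^{\rm R}$ for every $i$. Hence $N_{i i^{\rm R}} = m^{*} = \min_{i,j} N_{ij}$ for all $i$, which is property (\ref{item: antiDiagMin}).

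The renaming preserves every hypothesis on $N$, as well as the complexity of $\PlGH(N)$, so the lemma follows. The only real point is checking that the hypotheses of \cref{theorem: inductionStep} are met, namely that the two minimal positions lie in distinct rows; this is exactly where diagonal dominance and symmetry are used.
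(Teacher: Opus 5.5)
Your proof is correct and follows the paper's route. Diagonal dominance and symmetry give two distinct rows attaining $m^{*}$, so \cref{theorem: inductionStep} disposes of the case where some row avoids $m^{*}$; otherwise intra-row-distinctness gives a unique minimal column $r(i) \neq i$ in each row, and you rename so that $r(i) = i^{\rm R}$. Your explicit observation that this map is a fixed-point-free involution, so $q$ must be even in the non-hard case, justifies the renaming more cleanly than the paper's inductive relabeling, and it already gives the parity fact the paper only derives later in \cref{lemma: diagDistinctSecondLastRow}.
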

\begin{proof}
	Let $m^{*} = \min_{i,j \in [q]} N_{ij}$.
	As $N$ is diagonal dominant, $m^{*}$ must occur
	off diagonal,
	i.e., there exist distinct
	$i^{\dagger} \neq j^{\dagger}$ with
	$N_{i^{\dagger} j^{\dagger}} = 
	N_{j^{\dagger} i^{\dagger}} = m^{*}.$
	If there exists $i^{*}$ such that
	$N_{i^{*}j} > m^{*}$ for all $j$,
	then by \cref{theorem: inductionStep},
	$\PlGH(N) \leq \PlGH(M)$ is $\#$P-hard.
	
	Otherwise, each row of $N$ must
	contain at least one occurrence of $m^{*}$.
	Since $N$ is intra-row-distinct and symmetric,
	this occurrence is unique:
	for each $i \in [q]$, there exists a unique $r(i) \neq i$
	such that $N_{ir(i)} = m^{*}$.
	Finally, in order to  get \ref{item: antiDiagMin},
	we rename the domain elements $[q]$
	to satisfy $r(i) = q+1-i = i^{\rm{R}}$.
	As $r(1) \ne 1$, we rename $\{2, \ldots, q\}$, a
	simultaneous permutation of rows and columns, to have $r(1) = q$.
	If $q > 2$, then $r(2) \neq 1, 2, q$,
	since $N$ is  intra-row-distinct and symmetric.
	Then we rename $\{3, \ldots, q-1\}$ to get $r(2) = q-1$.
	Continuing inductively, we have
	$r(i) = i^{\rm{R}}$ for all $i \in [q]$.
\end{proof}

\begin{lemma}\label{lemma: diagDistinctRestrictions}
	Let $N$ satisfy all the properties of~\cref{lemma: diagDistinctAntiDiagonal}.
	Then, either $\PlGH(N)$ is $\#$P-hard, or
	after a simultaneous permutation of rows/columns,
	$N$ also satisfies:
	\begin{specialenumerate}\setcounter{specialenumeratei}{1}
		\item \label{item: diagD}
		$\exists\, d \in \mathbb{R}$ with 
		$N_{ii}N_{i^{\rm R}i^{\rm R}} = d$ for all $i$,
		\item \label{item: rowCi}
		$\forall\, i \in [q],\ \exists\, c_i \in \mathbb{R}$ such that $N_{ij}N_{ij^{\rm R}} = c_i$ for all $j$,
		\item \label{item: diagOrder}
		$N_{11} > N_{22} > \dots > N_{qq}$.
	\end{specialenumerate}
\end{lemma}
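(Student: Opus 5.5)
The plan is to reuse the mechanism of \cref{lemma: diagDistinctAntiDiagonal}: apply interpolation gadgets that preserve the hypotheses, and read off structural constraints from the fact that the resulting matrix must again fail to be \#P-hard. Suppose $\PlGH(N)$ is not \#P-hard. Every matrix $N' = \mathcal{K}(\theta)$ obtained from $N$ by a gadget family near the identity parameter must also have the anti-diagonal minimum structure of \ref{item: antiDiagMin}. This holds provided $N'$ remains positive definite, diagonal dominant, strongly diagonal distinct and intra-row-distinct. These are open conditions, and by \cref{lemma: analyticGadgets} they can be inherited for a generic parameter close to the identity. Moreover, the pairing $r$ of each row's minimum position can only change through a tie. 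By continuity in $\theta$, it is therefore the same pairing $i \mapsto i^{\rm R}$ for all nearby $\theta$.

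For \ref{item: rowCi}, I would use the loop family $\mathcal{L}_N(\theta)$, whose entries are $N_{ij}(N_{ii}N_{jj})^\theta$. Requiring that $\mathcal{L}_N(\theta)_{ii^{\rm R}}$ remain the row minimum for all small $\theta$ of both signs gives $N_{ii^{\rm R}}N_{i^{\rm R}i^{\rm R}}^{\theta} \le N_{ij}N_{jj}^\theta$. Since $N_{ii^{\rm R}} < N_{ij}$ strictly, this alone is vacuous for small $\theta$, so instead I would pass to the exponent picture of \cref{lemma: thickeningSingleVariable} combined with $\mathcal{T}^*$. Using the multivariate thickening $\mathcal{T}^{*}_{N}(\mathbf{z})$, any $\mathbf{z}$ in the open region where the hypotheses hold must still have its row minima on the anti-diagonal. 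Thickening with large exponents makes minimal entries correspond to minimizing linear forms $\langle \mathbf{e}_{ij}, \mathbf{y}\rangle$, with $\mathbf{y}=\ln \mathbf{z}$ ranging over an open cone. The minimum must then be attained at $(i,i^{\rm R})$ for all such directions.

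The cleanest route is to form the stretched matrix $N^{2}$ (or $\mathcal{S}_N$) after thickening, where diagonal entries become $\sum_j N_{ij}^{2}$, and then combine it with the product gadget. Here is the key point for \ref{item: rowCi} and \ref{item: diagD}. For the bridge-type matrix $(\mathbf{T}_n(N)\cdot P)$ with $P$ the reversal permutation realized by $\mathbf{S}$-type products, one computes $(N\,\mathbf{T}_n(N)\,N)$-style entries. The row minimality of the anti-diagonal for all $n$ forces equality of leading exponents. I would first try the simpler observation that the matrix $\mathcal{B}_N(\theta)$ must itself satisfy \ref{item: antiDiagMin} for all $\theta$ near $1$. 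Expanding $\mathcal{B}_N(\theta)_{ij}=\sum_{x,y}N_{ix}N_{xy}^{\theta}N_{jy}$ and comparing derivatives in $\theta$ of the equalities forced between rows, since the minimizing positions cannot move, yields identities of the form $N_{ij}N_{ij^{\rm R}}$ independent of $j$. Symmetric application of the same principle to the diagonal (via $\mathcal{L}_N$ and \ref{item: antiDiagMin} applied to the pair $i,i^{\rm R}$) yields $N_{ii}N_{i^{\rm R}i^{\rm R}}=d$.

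Finally, \ref{item: diagOrder} is just a relabeling. Strong diagonal distinctness lets us sort the diagonal strictly. I would need to check that the relabeling preserves $r(i)=i^{\rm R}$. This follows from \ref{item: diagD}: $N_{ii}N_{i^{\rm R}i^{\rm R}}=d$ forces the pairing to be order-reversing on the sorted diagonal, so if $N_{ii}>N_{jj}$ then $N_{i^{\rm R}i^{\rm R}}<N_{j^{\rm R}j^{\rm R}}$. Sorting in decreasing order therefore places partners symmetrically.

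The main obstacle is the second paragraph: extracting exact multiplicative identities from the mere persistence of the minimum's position. The inequalities are strict and open, so persistence near a point carries no information. I expect to need an argument that drives the gadget parameter far away, as in the prime trick of \cref{theorem: magnitudeDistinctHardness}, so that some other entry would become a strictly smaller row entry unless the products coincide. That would produce a row with no minimal entry and invoke \cref{theorem: inductionStep}.
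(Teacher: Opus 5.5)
\paragraph{Gap 1: the rigidity comes from equality, not from the position of the minima.}
The obstacle you flag in your last paragraph is real, but the fix you suggest (pushing the parameter far away, a prime-type trick) is not the missing idea. What carries the information is the \emph{equality} that \ref{item: antiDiagMin} imposes, not the position of the row minima. Every $N_{ii^{\rm R}}$ equals the global minimum, so all anti-diagonal entries coincide. Equality is an analytic condition, so a violation anywhere can be transported back to the base point.

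Concretely, suppose $\mathcal{L}_N(\theta_0)_{ii^{\rm R}} \neq \mathcal{L}_N(\theta_0)_{jj^{\rm R}}$ for a single $\theta_0\in\mathbb{R}$, however far from $0$. Then \cref{lemma: analyticGadgets}, applied with $A\mapsto A_{ii^{\rm R}}-A_{jj^{\rm R}}$, $\psi_{\rm diag}$ and $\phi_{\rm row}$, yields $\theta^*$ arbitrarily close to $0$ at which the inequality persists. At such $\theta^*$ the open condition $\delta(A)=\min_k(\min_{l\neq k^{\rm R}}A_{kl}-A_{kk^{\rm R}})>0$ still pins every row minimum to the anti-diagonal. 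The row with the largest anti-diagonal entry then avoids the global minimum, and \cref{theorem: inductionStep} gives hardness. So in the non-hard case $N_{ii^{\rm R}}(N_{ii}N_{i^{\rm R}i^{\rm R}})^{\theta}$ is independent of $i$ for all $\theta$, which is \ref{item: diagD}. Your third paragraph hints at this (``the equalities forced between rows''). Your second and last paragraphs, however, show you have not isolated it as the mechanism.

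\paragraph{Gap 2: the bridge gadget near $\theta=1$ is not near $N$.}
The step that actually fails as written is \ref{item: rowCi}. You claim $\mathcal{B}_N(\theta)$ satisfies \ref{item: antiDiagMin} with the pairing $i\mapsto i^{\rm R}$ for all $\theta$ near $1$, via the continuity argument of your first paragraph. That argument needs the family to pass through $N$, but $\mathcal{B}_N(1)=N^3$. Nothing forces $N^3$ to be diagonal dominant or to keep its row minima at $(i,i^{\rm R})$. Already for $q=2$, $N=\left(\begin{smallmatrix} a & 0.9\\ 0.9 & 1\end{smallmatrix}\right)$ with $a\geq 2$ satisfies all the hypotheses, yet $(N^3)_{12}>(N^3)_{22}$. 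So neither \cref{lemma: diagDistinctAntiDiagonal} nor your pairing-persistence argument applies near $\theta=1$; at best you would get an identity for some unknown pairing.

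The paper returns to a neighborhood of $N$ by stretching:
\begin{itemize}
\item Choose $\theta'$ near $1$ with $\mathcal{B}_N(\theta')_{jj^{\rm R}}\neq \mathcal{B}_N(\theta')_{j'j'^{\rm R}}$ and with $\mathcal{S}_{\mathcal{B}_N(\theta')}(\nicefrac{1}{3})$ close to $N$.
\item This inequality holds for $\mathcal{S}_{\mathcal{B}_N(\theta')}(\theta)$ at $\theta=1$, so \cref{lemma: analyticGadgets} gives an exponent near $\nicefrac{1}{3}$ where it still holds and $\delta>0$.
\item The argument of the previous paragraph then gives hardness.
\end{itemize}
Only after this do you get $\sum_{x,y}N_{xj}N_{yj^{\rm R}}N_{xy}^{\theta}=\sum_{x,y}N_{xj'}N_{yj'^{\rm R}}N_{xy}^{\theta}$ for all $\theta$. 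It is strong diagonal distinctness, which you never invoke, that isolates the coefficient of $N_{ii}^{\theta}$ and gives $N_{ij}N_{ij^{\rm R}}=N_{ij'}N_{ij'^{\rm R}}$.

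Your relabeling argument deriving \ref{item: diagOrder} from \ref{item: diagD} is correct, and cleaner than the paper's iterative swaps.
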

\begin{proof}
	Define $\delta: \Sym{q}{}(\mathbb{R}) \rightarrow \mathbb{R}$ as:
	$$\delta(A) = \min_{i \in [q]}\left(
	\left(\min_{j \neq i^{\rm{R}} \in [q]}A_{ij}\right) 
	- A_{i i^{\rm{R}}} \right).$$
	Since $N$ is intra-row-distinct, 
	and satisfies \ref{item: antiDiagMin},
	$\delta(N) > 0$.
	
	Consider the gadget $\mathcal{L}_{N}(\theta)$
	from \cref{definition: mathcalL}.
	As its entries vary continuously in
	$\theta$ and $\mathcal{L}_{N}(0) = N$,
	there exists an open neighborhood $\mathcal{O}$ of $0$ such that
	$\delta(\mathcal{L}_{N}(\theta)) > 0$, and 
	$\mathcal{L}_{N}(\theta)\in \Sym{q}{pd}(\mathbb{R}_{>0})$ 
	for all $\theta \in \mathcal{O}$.
	
	Suppose  there exists some $\theta \in \mathbb{R}$,
	and $i \neq j$ with 
	$(\mathcal{L}_{N}(\theta))_{ii^{\rm R}} \neq 
	(\mathcal{L}_{N}(\theta))_{jj^{\rm R}}$.
	We note that $\zeta: A \mapsto A_{i i^{\rm R}} - A_{j j^{\rm R}}$
	is a $\Sym{q}{}(\mathbb{R})$-polynomial.
	So, by \cref{lemma: analyticGadgets}, 
	we can choose $\theta^{*} \in \mathcal{O}$
	so that $N' = \mathcal{L}_{N}(\theta^{*})$ satisfies:
	$\rho(N') \neq 0$ for all
	$\rho \in \{\psi_{\rm{diag}}, \phi_{\rm{row}}, \zeta\}$, where
	$\psi_{\rm{diag}}, \phi_{\rm{row}}$ are as in
	\cref{defn:phi-psi-function-for-diagonal-distinct,defn:phi-function-for-intra-row-distinct}.
	Since $\theta^{*} \in \mathcal{O}$,
	our choice of $\mathcal{O}$ ensures that
	$\delta(N') > 0$, so
	the minimum entry $m^{*}$ of $N'$ 
	occurs on some anti-diagonal entry,
	as every entry not on the anti-diagonal is greater than some other entry.
	We now let $i^{*} = \arg\max_{i \in [q]}N'_{i i^{\rm{R}}}$
	(break ties arbitrarily).
	Since $\zeta(N') \neq 0$ we have $N'_{i^{*}(i^{*})^{\rm{R}}} 
	> m^{*}$. 
	Since $\delta(N') > 0$,
	we have
	$N'_{i^{*}j} \geq N'_{i^{*}(i^{*})^{\rm{R}}}$ for all $j \in [q]$.
	Thus, $N'_{i^{*}j} > m^*$ for all $j \in [q]$. By \cref{theorem: inductionStep}, 
	$\PlGH(N')\leq \PlGH(N)$ is $\#$P-hard.
	
	Otherwise, we have $(\mathcal{L}_{N}(\theta))_{i i^{\rm{R}}}
	= (\mathcal{L}_{N}(\theta))_{j j^{\rm{R}}}$
	for all $i,j,\theta$.
	Then for all $i,j,\theta$,
	$$(N_{ii}N_{i^{\rm{R}} i^{\rm{R}}})^{\theta}N_{i i^{\rm{R}}} 
	= (N_{jj}N_{j^{\rm{R}} j^{\rm{R}}})^{\theta}N_{j j^{\rm{R}}},$$
	which implies the existence of a constant $d$ with
	$N_{ii}N_{i^{\rm{R}} i^{\rm{R}}} = d$ for all $i$, proving
	\ref{item: diagD}.
	
	Now consider the gadget $\mathcal{B}_{N}(\theta)$
	from \cref{definition: mathcalB}. 
	Fix an arbitrary pair $j, j' \in [q]$, and
	assume that $(\mathcal{B}_{N}(\theta))_{j j^{\rm{R}}} \neq 
	(\mathcal{B}_{N}(\theta))_{j' (j')^{\rm{R}}}$
	for some $\theta$.
	We note that $\mathcal{B}_{N}(1) = N^{3}
	\in \Sym{q}{pd}(\mathbb{R}_{> 0})$.
	As the entries of $\mathcal{B}_{N}(\theta)$
	vary continuously in $\theta$, we note that
	there exists an open neighborhood
	$\mathcal{O}'$ of $1$, such that
	$\mathcal{B}_{N}(\theta) \in \Sym{q}{pd}(\mathbb{R}_{> 0})$
	for all $\theta \in \mathcal{O}'$.
	So, for all $\theta \in \mathcal{O}'$,
	the matrix $\mathcal{S}_{\mathcal{B}_{N}(\theta)}
	(\nicefrac{1}{3})$
	is well-defined, and moreover, its entries vary
	continuously in $\theta$.
	Since $\mathcal{S}_{\mathcal{B}_{N}(1)}
	(\nicefrac{1}{3}) = N$, we see that there
	exists an open neighborhood $\mathcal{O}''
	\subset \mathcal{O}'$ of $1$, 
	such that for all $\theta \in \mathcal{O}''$,
	$\mathcal{S}_{\mathcal{B}_{N}(\theta)}
	(\nicefrac{1}{3}) \in \Sym{q}{pd}(\mathbb{R}_{> 0})$,
	and
	$\delta(\mathcal{S}_{\mathcal{B}_{N}(\theta)}
	(\nicefrac{1}{3})) > 0$.
	
	Now, since 
	$(\mathcal{B}_{N}(\theta))_{j j^{\rm{R}}} \neq 
	(\mathcal{B}_{N}(\theta))_{j' (j')^{\rm{R}}}$
	for some $\theta$ by our assumption,
	we can use \cref{lemma: analyticGadgets} to
	find some $\theta' \in \mathcal{O}''$
	such that $N' = \mathcal{B}_{N}(\theta')$
	satisfies:
	$(N')_{j j^{\rm{R}}}
	\neq (N')_{j' (j')^{\rm{R}}}$.
	We note that from our choice of
	$\mathcal{O}''$,
	$N' \in \Sym{q}{pd}(\mathbb{R}_{> 0})$,
	$\mathcal{S}_{N'}(\nicefrac{1}{3}) \in 
	\Sym{q}{pd}(\mathbb{R}_{> 0})$,
	and $\delta(\mathcal{S}_{N'}(\nicefrac{1}{3})) > 0$.
	Now, since the entries of
	$\mathcal{S}_{N'}(\theta)$ are continuous
	in $\theta$, we note that there exists some
	open neighborhood $\mathcal{O}'''$ of
	$\nicefrac{1}{3}$, such that
	$\mathcal{S}_{N'}(\theta) \in 
	\Sym{q}{pd}(\mathbb{R}_{> 0})$,
	and $\delta(\mathcal{S}_{N'}(\theta)) > 0$
	for all $\theta \in \mathcal{O}'''$.
	So, we can use \cref{lemma: analyticGadgets} again to
	find some $\theta^{*} \in \mathcal{O}'''$
	such that $N'' = \mathcal{S}_{N'}(\theta^{*})
	\in \Sym{q}{pd}(\mathbb{R}_{> 0})$
	satisfies:
	$\delta(N'') > 0$, and
	$(N'')_{j j^{\rm{R}}} \neq 
	(N'')_{j' (j')^{\rm{R}}}$.
	From here, we can follow along the same argument
	as we did with $\mathcal{L}_{N}(\theta)$
	to show that
	$\PlGH(N)$ is $\#$P-hard.
	
	Otherwise, we assume
	equality $(\mathcal{B}_{N}(\theta))_{j j^{\rm{R}}} = 
	(\mathcal{B}_{N}(\theta))_{j' (j')^{\rm{R}}}$ holds for all $\theta$, giving
	$$\sum_{x,y \in [q]} N_{x j}N_{y j^{\rm{R}}}(N_{xy})^{\theta}
	= \sum_{x,y \in [q]} N_{x j'}N_{y (j')^{\rm{R}}}(N_{xy})^{\theta}.$$
	Since $N$ is strongly diagonal-distinct,
	equality for all $\theta$ forces equality of coefficients
	of each $(N_{ii})^{\theta}$ term. To see this, we can move the term
	$(N_{ii})^{\theta}$ with coefficient $N_{ij}N_{i j^{\rm{R}}} - N_{i j'}N_{i (j')^{\rm{R}}}$ to one side of the equation, and leave everything else
	on the other side. If this coefficient is nonzero, then
	no matter what cancellation may occur on the other side of the equation, 
	it cannot match exactly the order of $(N_{ii})^{\theta}$ as $\theta
	\rightarrow \infty$. Hence
	$N_{ij}N_{i j^{\rm{R}}} = N_{i j'}N_{i (j')^{\rm{R}}}$
	for all $i \in [q]$, and since this is valid for all $j, j' \in [q]$, 
	we have established \ref{item: rowCi}.
	
	We will now use \ref{item: diagD} to establish
	\ref{item: diagOrder} by renaming the
	domain elements $[q]$. The property \ref{item: diagOrder} is trivial
	if $q=1$; we may assume $q \ge 2$.
	Let $j = \arg\max_{i}(N_{ii})$; as $N$ is diagonal-distinct,
	this $j$ is uniquely defined.
	Then, \ref{item: diagD} implies that
	$\arg\min_{i}(N_{ii}) = j^{\rm{R}}$. 
	Since $N$ is diagonal-distinct, $j^{\rm{R}} \ne j$.
	Now, we permute the rows/columns of $N$
	with $\sigma \in S_{q}$, s.t., 
	$\sigma(1) = j$, $\sigma(j) = 1$,
	$\sigma(q) = j^{\rm{R}}$, $\sigma(j^{\rm{R}}) = q$,
	and $\sigma(i) = i$ for all other $i \in [q]$.
	(We note that if $j=1$, this $\sigma$ is the identity map.
	If $j=q$, the four equations reduce to just $\sigma(1) = q$ and 
	$\sigma(q) = 1$. As $j \ne j^{\rm{R}}$, these four equations
	are consistent in all cases.) 
	We note that after this permutation,
	the matrix $N$ still satisfies
	\ref{item: antiDiagMin}, \ref{item: diagD},
	\ref{item: rowCi},
	and also satisfies
	$N_{11} > N_{ii} > N_{qq}$ for $2 \leq i \leq q-1$.
	Continuing inductively, we obtain
	\ref{item: diagOrder}.
\end{proof}

\begin{lemma}\label{lemma: diagDistinctLoopRestrictions}
	Let $N$ satisfy the properties of
	~\cref{lemma: diagDistinctRestrictions}.
	Then, $\PlGH(N)$ is $\#$P-hard, unless $N$
	also satisfies the condition that
	there exists some $\alpha \in \mathbb{R}_{> 0}$, such that:
	\begin{specialenumerate} \setcounter{specialenumeratei}{4}
		\item \label{item: LNalsoBad}
		$\mathcal{L}_{N}(\alpha)$ satisfies all of
		\rm{\ref{item: antiDiagMin} -- \ref{item: diagOrder}},
		\item \label{item: LNRowq}
		$\mathcal{L}_{N}(\alpha)_{iq} =  
		\mathcal{L}_{N}(\alpha)_{qq}$
		for all $i \in [q]$,
		\item \label{item: LNRow1}
		$\mathcal{L}_{N}(\alpha)_{i1} =  
		\mathcal{L}_{N}(\alpha)_{ii}$ for all $i \in [q]$.
	\end{specialenumerate}
\end{lemma}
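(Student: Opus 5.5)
The plan is to run \cref{lemma: diagDistinctRestrictions} along the whole one-parameter family $\theta \mapsto \mathcal{L}_{N}(\theta)$ and locate a critical parameter $\alpha$ at which the structure forced by \ref{item: antiDiagMin}--\ref{item: diagOrder} degenerates. First I check that the family stays in the right class. Since $\mathcal{L}_{N}(\theta) = \mathcal{D}_{N}(\theta) N \mathcal{D}_{N}(\theta)$ with $\mathcal{D}_{N}(\theta)$ positive diagonal, it is congruent to $N$. Hence $\mathcal{L}_{N}(\theta) \in \Sym{q}{pd}(\mathbb{R}_{>0})$ for \emph{every} $\theta$, not only near $0$. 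Its diagonal $(N_{ii})^{2\theta+1}$ stays distinct and keeps the order \ref{item: diagOrder} for $\theta > -\nicefrac{1}{2}$. By \cref{lemma: loopLemmaSimple}, $\PlGH(\mathcal{L}_{N}(\theta)) \leq \PlGH(N)$. Using \cref{lemma: analyticGadgets} with $\psi_{\rm diag}$ and $\phi_{\rm row}$, every open interval of $\theta$ contains a point where $\mathcal{L}_{N}(\theta)$ is strongly diagonal distinct and intra-row-distinct. At such a point, \cref{theorem: inductionStep} (which needs only full rank and strong diagonal distinctness) applies. So the dichotomy "either hard, or the global minimum occurs in every row" can be tested at generic $\theta$ in any interval.

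Next I compute the entries explicitly: $\mathcal{L}_{N}(\theta)_{ij} = N_{ij}(N_{ii}N_{jj})^{\theta}$. By \ref{item: diagD}, the anti-diagonal entries are $N_{ii^{\rm R}} d^{\theta}$. By \ref{item: rowCi}, within row $i$ the products of paired entries $\mathcal{L}_{N}(\theta)_{ij}\mathcal{L}_{N}(\theta)_{ij^{\rm R}} = c_i (N_{ii}^2 d)^{\theta}$ are independent of $j$. So in each row the entries come in pairs symmetric (in log scale) about a row-dependent centre. The anti-diagonal entry is the row minimum, so its partner $\mathcal{L}_{N}(\theta)_{ii}$ is the row maximum. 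Now look at row $q$. As $\theta$ grows, $\mathcal{L}_{N}(\theta)_{qj}$ grows like $(N_{qq}N_{jj})^{\theta}$, fastest for $j=1$ and slowest for $j=q$. Its minimum $\mathcal{L}_{N}(\theta)_{q1}$ therefore increases relative to its maximum $\mathcal{L}_{N}(\theta)_{qq}$. Let $\alpha > 0$ be the first $\theta$ where some ordering relation of \ref{item: antiDiagMin} (within a row, or between anti-diagonal entries) becomes an equality. Such an $\alpha$ exists by the growth-rate comparison together with $N_{11} > N_{qq}$, which comes from diagonal dominance.

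On $(0,\alpha)$ we can apply \cref{lemma: diagDistinctRestrictions} at generic points, so either we get hardness or \ref{item: antiDiagMin}--\ref{item: diagOrder} hold there with the same labelling. The identities in \ref{item: diagD}, \ref{item: rowCi} are polynomial identities in the exponentials, so by analyticity they extend to all $\theta$, in particular to $\alpha$. This gives \ref{item: LNalsoBad}, with the minimality in \ref{item: antiDiagMin} holding at $\alpha$ as a closed condition. For slightly larger $\theta$ the anti-diagonal pattern breaks: some row stops containing the global minimum. A generic point there then gives hardness by \cref{theorem: inductionStep}, exactly as in the $\delta$-argument of \cref{lemma: diagDistinctRestrictions}.

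The main obstacle is identifying which equality happens at $\alpha$ and showing it must be the one in \ref{item: LNRowq}. In row $q$, once $\min = \max$, the pairing with constant product $c_q(N_{qq}^2d)^{\alpha}$ forces all pairs, and hence the whole row, to equal $\mathcal{L}_{N}(\alpha)_{qq}$. Then \ref{item: LNRow1} should follow by symmetry, since the $i$-th entry of column $1$ pairs with the diagonal of row $i$. I would first rule out the other possible first collisions, namely an interior pair in some row $i<q$ meeting, or two anti-diagonal values separating. For these I would compare growth exponents $\log(N_{ii}N_{jj})$ using \ref{item: diagOrder}, and show each such collision either occurs strictly later or itself produces a row without the global minimum, hence hardness.
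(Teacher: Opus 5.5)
Your entry computations are correct, and one is sharper than what the paper uses. Since $\mathcal{L}_{N}(\theta)_{ij}=N_{ij}(N_{ii}N_{jj})^{\theta}$, properties \ref{item: diagD} and \ref{item: rowCi} hold for $\mathcal{L}_{N}(\theta)$ identically in $\theta$. All anti-diagonal entries equal $m^{*}d^{\theta}$, so ``two anti-diagonal values separating'' can never happen, and only the minimality in \ref{item: antiDiagMin} can fail.

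There is, however, a genuine gap at exactly the step you call the main obstacle, and the mechanism you offer for it is wrong as stated. You assert that just past $\alpha$ some row loses the global minimum, so a generic nearby point is hard by \cref{theorem: inductionStep}. This fails in the very case the lemma carves out. Suppose the first collision is the row-$q$ one and nothing outside row and column $q$ collides at $\alpha$. By \ref{item: LNRowq}, $\mathcal{L}_{N}(\theta)_{iq}/\mathcal{L}_{N}(\theta)_{qq}=(N_{iq}/N_{qq})(N_{ii}/N_{qq})^{\theta}$ equals $1$ at $\alpha$ and increases. So for $\theta>\alpha$ the new global minimum is the diagonal entry $\mathcal{L}_{N}(\theta)_{qq}$, which sits in a single row. \cref{theorem: inductionStep} needs the minimum in two distinct rows $i_{1}^{\dagger}\neq i_{2}^{\dagger}$; if your claim held, $\PlGH(N)$ would always be hard at this stage. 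For the other possible first collisions you never say which row avoids the minimum, or why the minimum is off-diagonal. So ``ruling them out'' is left undone.

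Two facts fill the gap. First, for every $\theta\ge 0$ the minimum of row $1$ is $\mathcal{L}_{N}(\theta)_{1q}$, because $\mathcal{L}_{N}(\theta)_{1j}/\mathcal{L}_{N}(\theta)_{1q}=(N_{1j}/N_{1q})(N_{jj}/N_{qq})^{\theta}>1$ for $j\neq q$. Second, every diagonal entry exceeds $\mathcal{L}_{N}(\theta)_{qq}$ by \ref{item: diagOrder}. With these you could repair your route: a first collision earlier than the row-$q$ one is off-diagonal and avoids row $1$, hence gives hardness.

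The paper's route is simpler. It defines $\alpha$ directly by $\mathcal{L}_{N}(\alpha)_{1q}=\mathcal{L}_{N}(\alpha)_{qq}$ via the intermediate value theorem, using $N_{1q}<N_{qq}$ and $N_{11}>N_{qq}$ (the latter is from \ref{item: diagOrder}, not diagonal dominance). Suppose the anti-diagonal value is not the global minimum at $\alpha$. Then the minimum lies strictly below every diagonal entry and every row-$1$ entry, an open condition that persists near $\alpha$. So the minimum is off-diagonal (in two rows) and absent from row $1$, and a generic nearby $\theta$ is hard by \cref{theorem: inductionStep}. Otherwise \ref{item: antiDiagMin} holds at $\alpha$, giving \ref{item: LNalsoBad}, and your pairing argument in row $q$ gives \ref{item: LNRowq}.

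For \ref{item: LNRow1} the reason is not symmetry. Apply \ref{item: rowCi} in row $i$ to get $\mathcal{L}_{N}(\alpha)_{i1}\mathcal{L}_{N}(\alpha)_{iq}=\mathcal{L}_{N}(\alpha)_{ii}\mathcal{L}_{N}(\alpha)_{ii^{\rm R}}$. Then use $\mathcal{L}_{N}(\alpha)_{iq}=\mathcal{L}_{N}(\alpha)_{qq}=\mathcal{L}_{N}(\alpha)_{1q}=\mathcal{L}_{N}(\alpha)_{ii^{\rm R}}$ and cancel.
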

\begin{proof}
	Consider the gadget $\mathcal{L}_{N}(\theta)$
	from \cref{definition: mathcalL}.
	Let
	$$\delta(\theta) :=\frac{\mathcal{L}_{N}(\theta)_{1q}}
	{\mathcal{L}_{N}(\theta)_{qq}} = 
	\frac{N_{1q}(N_{11}N_{qq})^{\theta}}
	{N_{qq}(N_{qq}N_{qq})^{\theta}} =
	\frac{N_{1q}}{N_{qq}}\left(
	\frac{N_{11}}{N_{qq}}\right)^{\theta}.$$
	
	Since $N$ satisfies \ref{item: diagOrder}
	and we have $q >1$,
	for large  $\theta$ we have
	$\delta(\theta) > 1$.
	Moreover, since $N$ satisfies \ref{item: antiDiagMin},
	and $N$ is intra-row-distinct,
	we have $N_{1q} < N_{qq}$.
	So, we see that
	$\delta(0) < 1$.
	Therefore, the Intermediate Value Theorem implies
	that there must be some $\alpha > 0$, such that
	$\delta(\alpha) = 1$, i.e.,
	$\mathcal{L}_{N}(\alpha)_{1q}
	= \mathcal{L}_{N}(\alpha)_{qq}$.
	
	Let us assume the following claim is true for now,
	we shall prove it later:
	\begin{restatable}{claim}{LAlphaRestrictions}
		\label{claim: LAlphaRestrictions}
		Unless $\PlGH(N)$ is $\#$P-hard,
		$\mathcal{L}_{N}(\alpha)$ satisfies all of
		\rm{\ref{item: antiDiagMin} -- \ref{item: diagOrder}}.
	\end{restatable}
	
	Now we assume \cref{claim: LAlphaRestrictions} is true.
	Note that $N$ satisfies \ref{item: LNalsoBad} which is assumed in the lemma.
	Consequently, since $\mathcal{L}_{N}(\alpha)$
	satisfies \ref{item: rowCi} 
	for the row $q$, we see that for our choice of $\alpha$,
	$$\mathcal{L}_{N}(\alpha)_{iq} \cdot
	\mathcal{L}_{N}(\alpha)_{i^{\rm{R}} q} = 
	\mathcal{L}_{N}(\alpha)_{1q} \cdot 
	\mathcal{L}_{N}(\alpha)_{qq}
	= \left(\mathcal{L}_{N}(\alpha)_{qq} \right)^{2},$$
	for all $i \in [q]$.
	Since $\mathcal{L}_{N}(\alpha)$ satisfies \ref{item: antiDiagMin},
	we see that $\min_{i, j}\mathcal{L}_{N}(\alpha)_{ij}
	= \mathcal{L}_{N}(\alpha)_{1q}$, 
	and by the choice of $\alpha$ we also have $\mathcal{L}_{N}(\alpha)_{1q} =
	\mathcal{L}_{N}(\alpha)_{qq}$.
	Then, $\mathcal{L}_{N}(\alpha)_{iq} \cdot
	\mathcal{L}_{N}(\alpha)_{i^{\rm{R}} q}$ is a product of two positive factors
	both $\ge \mathcal{L}_{N}(\alpha)_{qq}$, and yet  the product equals its square.
	This implies that
	$$\mathcal{L}_{N}(\alpha)_{iq} = 
	\mathcal{L}_{N}(\alpha)_{i^{\rm{R}} q} = 
	\mathcal{L}_{N}(\alpha)_{qq}, \qquad
	\text{for all } i \in [q].$$
	So, we see that $N$ satisfies \ref{item: LNRowq},
	i.e., for the chosen $\alpha$,
	$\mathcal{L}_{N}(\alpha)_{iq} =  
	\mathcal{L}_{N}(\alpha)_{qq}$
	for all $i \in [q]$.

	Moreover, since $\mathcal{L}_{N}(\alpha)$
	satisfies property \ref{item: rowCi} for row $i$,
	we also see that
	$$\mathcal{L}_{N}(\alpha)_{iq} 
	\cdot \mathcal{L}_{N}(\alpha)_{i1} = 
	\mathcal{L}_{N}(\alpha)_{ii} 
	\cdot \mathcal{L}_{N}(\alpha)_{i i^{\rm{R}}},$$
	for all $i \in [q]$.
	Since $\mathcal{L}_{N}(\alpha)$ satisfies
	\ref{item: antiDiagMin} and \ref{item: LNRowq},
	we also know that
	$$\mathcal{L}_{N}(\alpha)_{i i^{\rm{R}}} 
	= \mathcal{L}_{N}(\alpha)_{1q} 
	= \mathcal{L}_{N}(\alpha)_{qq}
	= \mathcal{L}_{N}(\alpha)_{iq},$$
	for all $i \in [q]$.
	So, by canceling this positive factor in the equation above,
	we find that $N$ satisfies \ref{item: LNRow1} as well, 
	i.e., for the chosen $\alpha$,
	$\mathcal{L}_{N}(\alpha)_{i1} =  
	\mathcal{L}_{N}(\alpha)_{ii}$ for all $i \in [q]$.
\end{proof}

We will now prove
\cref{claim: LAlphaRestrictions}.

\LAlphaRestrictions*
\begin{proof}
	Since $\alpha > 0$, we see that
	$$N_{ii} > N_{jj} \iff 
	\mathcal{L}_{N}(\alpha)_{ii} = (N_{ii})^{2\alpha + 1}
	> (N_{jj})^{2\alpha + 1} =  \mathcal{L}_{N}(\alpha)_{jj}.$$
	Since $N$ satisfies \ref{item: diagOrder},
	it trivially follows that $\mathcal{L}_{N}(\alpha)$
	satisfies \ref{item: diagOrder} as well.
	
	Let us now assume that $\mathcal{L}_{N}(\alpha)$
	fails to satisfy one of the following equations:
	\begin{equation}\label{equation: mathcalLAlphaEqns}
		\begin{aligned}
			&\sum_{i \in [q]}\left(\mathcal{L}_{N}(\alpha)_{1q}
			- \mathcal{L}_{N}(\alpha)_{i i^{\rm{R}}}\right)^{2} = 0,\\
			&\sum_{i \in [q]}\left(\mathcal{L}_{N}(\alpha)_{11}
			\mathcal{L}_{N}(\alpha)_{qq} - 
			\mathcal{L}_{N}(\alpha)_{ii}
			\mathcal{L}_{N}(\alpha)_{i^{\rm{R}} i^{\rm{R}}}\right)^{2} = 0,\\
			&\sum_{i \in [q]}\sum_{j \in [q]} \left(
			\mathcal{L}_{N}(\alpha)_{i1}
			\mathcal{L}_{N}(\alpha)_{iq}
			- \mathcal{L}_{N}(\alpha)_{ij}
			\mathcal{L}_{N}(\alpha)_{i j^{\rm{R}}}\right)^{2} = 0.
		\end{aligned}
	\end{equation}
	
	Note that if $\mathcal{L}_{N}(\alpha)$ fails to satisfy
	any of these equations, then
	$\mathcal{L}_{N}(\alpha)$ must fail to satisfy at least one of
	\ref{item: antiDiagMin}, \ref{item: diagD}, or
	\ref{item: rowCi}.    
	Let $\zeta$ be the $\Sym{q}{}(\mathbb{R})$-polynomial
	corresponding to (any one) offending equation, s.t.,
	$\zeta(\mathcal{L}_{N}(\alpha)) \neq 0$.
	We note that since $N$ is
	strongly diagonal distinct, and intra-row-distinct,
	$\rho(\mathcal{L}_{N}(0)) = \rho(N) \neq 0$
	for $\rho \in \{\psi_{\rm{diag}}, \phi_{\rm{row}} \}$ from
	\cref{defn:phi-psi-function-for-diagonal-distinct,defn:phi-function-for-intra-row-distinct}.
	Moreover, since $N \in \Sym{q}{pd}(\mathbb{R}_{> 0})$,
	and $N$ is diagonal dominant,
	we know that there exists some open neighborhood
	$\mathcal{O}$ of $0$, s.t., for all $\theta \in \mathcal{O}$,
	$\mathcal{L}_{N}(\alpha) \in \Sym{q}{pd}(\mathbb{R}_{> 0})$
	is also diagonal dominant.
	
	But then, \cref{lemma: analyticGadgets} immediately
	implies that we can find some
	$\theta^{*} \in \mathcal{O}$, s.t.,
	$\mathcal{L}_{N}(\theta^{*}) \in 
	\Sym{q}{pd}(\mathbb{R}_{> 0})$ is
	diagonal dominant, and intra-row-distinct, but also
	satisfies the condition that $\zeta(\mathcal{L}_{N}(\theta^{*}))
	\neq 0$.
	But then, from \cref{lemma: diagDistinctRestrictions},
	we have that $\PlGH(\mathcal{L}_{N}(\theta^{*}))
	\leq \PlGH(N)$ is $\#$P-hard.
	
	We will now assume that $\mathcal{L}_{N}(\alpha)$
	satisfies \cref{equation: mathcalLAlphaEqns}.
	Then, the only way that $\mathcal{L}_{N}(\alpha)$
	fails to satisfy all of
	\ref{item: antiDiagMin} -- \ref{item: diagOrder} is if
	$\min_{i, j \in [q]}\mathcal{L}_{N}(\alpha)_{ij}
	\neq \mathcal{L}_{N}(\alpha)_{1q}$.
	Since we cannot express this property as 
	$[F(A) \neq 0]$ for some $\Sym{q}{}(\mathbb{R})$-polynomial $F$,
	we have to deal with this case differently.
	We note that since we have no guarantee that
	$\mathcal{L}_{N}(\alpha)$ is strongly diagonal-distinct, 
	or even that
	$\mathcal{L}_{N}(\alpha) \in \Sym{q}{F}(\mathbb{R}_{> 0})$,
	we cannot hope to use \cref{theorem: inductionStep}
	directly to prove the hardness of
	$\PlGH(\mathcal{L}_{N}(\alpha))$.
	
	Instead, we define $\delta: \Sym{q}{}(\mathbb{R}) \rightarrow
	\mathbb{R}$, s.t.,
	$$\delta(A) = \min\left\{\min_{i \in [q]}A_{ii},
	\min_{i \in [q]}A_{i1}\right\} - 
	\min_{i, j \in [q]}A_{ij}.$$
	We note that from our choice of $\alpha$,
	$\mathcal{L}_{N}(\alpha)_{qq} =
	\mathcal{L}_{N}(\alpha)_{1q}$.
	So, we find that from our assumption about
	$\mathcal{L}_{N}(\alpha)$,
	and the fact that $\mathcal{L}_{N}(\alpha)$
	satisfies \ref{item: diagOrder},
	$$\min_{i, j}\mathcal{L}_{N}(\alpha)_{ij} <
	\mathcal{L}_{N}(\alpha)_{1q} 
	= \mathcal{L}_{N}(\alpha)_{qq} \leq 
	\mathcal{L}_{N}(\alpha)_{ii}, \quad
	\text{for all } i \in [q].$$
	Similarly, we note that
	since $N$ satisfies \ref{item: antiDiagMin} and  \ref{item: diagOrder}, 
	and is intra-row-distinct,
	$N_{1q} < N_{i1}$ for all $i \in [q-1]$.
	It follows that,
	$$\mathcal{L}_{N}(\alpha)_{1q}
	= N_{1q}\left(N_{11}N_{qq}\right)^{\alpha}
	< N_{1i}\left(N_{11}N_{ii}\right)^{\alpha}
	= \mathcal{L}_{N}(\alpha)_{1i}, \quad
	\text{for all } i \in [q-1].$$
	So,
	$$\min_{i, j}\mathcal{L}_{N}(\alpha)_{ij}
	< \mathcal{L}_{N}(\alpha)_{1q}
	\leq \mathcal{L}_{N}(\alpha)_{i1}, \quad 
	\text{for all } i \in [q].$$
	Therefore, we see that $\delta(\mathcal{L}_{N}(\alpha)) > 0$.
	
	We will now consider $\mathcal{L}_{N}(\theta)$
	as a function of $\theta$.
	Since the entries of $\mathcal{L}_{N}(\theta)$
	are continuous as a function of $\theta$,
	we see that there must exist an open neighborhood
	$\mathcal{O}'$ around $\alpha$, such that
	$\delta(\mathcal{L}_{N}(\theta)) > 0$ for all
	$\theta \in \mathcal{O}'$.
	Since $N$ is strongly diagonal distinct,
	and intra-row-distinct, we also see that
	$\rho(\mathcal{L}_{N}(0)) = \rho(N) \neq 0$
	for all $\rho \in \{\psi_{\rm{diag}}, \phi_{\rm{row}}\}$.
	Moreover, we note that since $N \in
	\Sym{q}{pd}(\mathbb{R}_{> 0})$,
	$\det(\mathcal{L}_{N}(0)) \neq 0$.
	Therefore, \cref{lemma: analyticGadgets} lets us
	find some $\theta^{*} \in \mathcal{O}'$ such that
	$N' = \mathcal{L}_{N}(\theta^{*}) \in
	\Sym{q}{F}(\mathbb{R}_{> 0})$ satisfies:
	$\rho(N') \neq 0$ for all
	$\rho \in \{\psi_{\rm{diag}}, \phi_{\rm{row}}\}$,
	$\delta(N') > 0$, and $\PlGH(N') \leq \PlGH(N)$.
	
	Let $m^{*} = \min_{i, j}(N')_{ij}$.
	Since $\delta(N') > 0$, we note that for row $1$,
	$(N')_{i1} > m^{*}$ for all $i \in [q]$.
	Moreover, since $(N')_{ii} > m^{*}$ for all $i \in [q]$,
	there must exist some $i^{\dagger} \neq j^{\dagger}$,
	s.t., $(N')_{i^{\dagger}j^{\dagger}} = m^{*}$.
	So, \cref{theorem: inductionStep} implies that
	$\PlGH(N') \leq \PlGH(N)$ is $\#$P-hard, 
	unless $\mathcal{L}_{N}(\alpha)$ satisfies
	\ref{item: antiDiagMin} -- \ref{item: diagOrder}.
\end{proof}

\begin{lemma}\label{lemma: diagDistinctSecondLastRow}
	Let $N$ satisfy the properties of
	~\cref{lemma: diagDistinctLoopRestrictions}.
	Then, $\PlGH(N)$ is $\#$P-hard, unless $N$
	also satisfies the following conditions
	(for the same $\alpha \in \mathbb{R}_{> 0}$ from 
	\cref{lemma: diagDistinctLoopRestrictions}):
	\begin{specialenumerate} \setcounter{specialenumeratei}{7}
		\item \label{item: LNsecondLastMin}
		$|\{i \in [q]: \mathcal{L}_{N}(\alpha)_{(q-1)i} = 
		\mathcal{L}_{N}(\alpha)_{qq}\}| = \nicefrac{q}{2}$.
		\item \label{item: LNsecondLastMax}
		$|\{i \in [q]: \mathcal{L}_{N}(\alpha)_{(q-1)i} = 
		\mathcal{L}_{N}(\alpha)_{(q-1)(q-1)}\}| = \nicefrac{q}{2}$.
	\end{specialenumerate}
\end{lemma}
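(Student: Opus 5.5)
Throughout, write $A = \mathcal{L}_{N}(\alpha)$, $m = A_{qq}$ and $D = A_{(q-1)(q-1)}$. The plan is to pin down row $q-1$ of $A$ completely from the conditions already available, by feeding $A$ through further planar gadgets (thickening, then stretching) and using \cref{lemma: analyticGadgets} to transfer the resulting constraints back to $A$.

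\textbf{Step 1: what we already know about row $q-1$.} By \cref{lemma: diagDistinctLoopRestrictions}, $A$ satisfies \ref{item: antiDiagMin}--\ref{item: diagOrder}, \ref{item: LNRowq} and \ref{item: LNRow1}. By \ref{item: LNRowq} the whole $q$-th column of $A$ equals $m$, which is the global minimum by \ref{item: antiDiagMin}. Applying \ref{item: rowCi} to row $q-1$ with $j=q$, together with $A_{(q-1)1}=D$ from \ref{item: LNRow1}, gives
\[
c_{q-1} = m D .
\]
Hence every pair in row $q-1$ satisfies $A_{(q-1)j}\,A_{(q-1)j^{\rm R}} = mD$ with both factors $\ge m$. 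Writing $A_{(q-1)j} = \sqrt{mD}\,e^{s_j}$, the exponents satisfy $s_{j^{\rm R}} = -s_j$ and $|s_j| \le s := \tfrac12\ln(D/m)$. Both \ref{item: LNsecondLastMin} and \ref{item: LNsecondLastMax} then reduce to showing $s_j \in \{\pm s\}$ for every $j$. Given the pairing $s_{j^{\rm R}}=-s_j$, this immediately yields exactly $q/2$ entries of each value (for odd $q$ the middle index would force $s=0$, i.e.\ $D=m$, contradicting strong diagonal distinctness).

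\textbf{Step 2: a one-parameter family with the same structure.} I would apply $\mathcal{T}_{A}(\theta)$ (\cref{lemma: simpleThickening}) for $\theta$ near $1$. Thickening raises entries to a power, so it preserves the multiplicative relations \ref{item: diagD}, \ref{item: rowCi}, \ref{item: LNRowq}, \ref{item: LNRow1}, as well as the order relations. Next I would compose with $\mathbf{S_2}$, or more flexibly with $\mathcal{S}$ and $\mathcal{B}$ as in the proof of \cref{lemma: diagDistinctRestrictions}, taking $\mathcal{T}_{A}(\theta)$ for $\theta$ near $1$ and then its (real) stretching power near $\nicefrac{1}{2}$ so as to return to a neighbourhood of $A$. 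Exactly as in \cref{lemma: diagDistinctRestrictions} and \cref{claim: LAlphaRestrictions}, \cref{lemma: analyticGadgets} shows that, unless $\PlGH(N)$ is $\#$P-hard, the resulting matrices must again satisfy relation \ref{item: diagD}, since a violation is witnessed by a $\Sym{q}{}(\mathbb{R})$-polynomial being nonzero on a nearby planar-reducible matrix. This identity must hold for all $\theta$ in an interval, hence identically.

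\textbf{Step 3 (main obstacle): extracting $s_j\in\{\pm s\}$.} Expanding the identity $(\mathcal{T}_{A}(\theta)^2)_{ii}\,(\mathcal{T}_{A}(\theta)^2)_{i^{\rm R}i^{\rm R}} = \text{const}$ gives an exponential-polynomial identity in $\theta$ of the form
\[
\Bigl(\sum_j A_{ij}^{2\theta}\Bigr)\Bigl(\sum_j A_{i^{\rm R}j}^{2\theta}\Bigr) = \Bigl(\sum_j A_{1j}^{2\theta}\Bigr)\Bigl(\sum_j A_{qj}^{2\theta}\Bigr),
\]
valid for all $\theta$. The right-hand side is controlled by rows $1$ and $q$; row $q$ is known from \ref{item: LNRowq} and column symmetry. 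I would take $i=q-1$, so that $i^{\rm R}=2$, and compare exponential terms as in the coefficient-matching argument of \cref{lemma: diagDistinctRestrictions}. Row $q$ and row $1$ consist of entries equal to $m$ or lying in $\{A_{jj}\}$ via \ref{item: LNRow1}, so the right side is a sum of exponentials with bases in a set like $\{m^2, mD, \ldots\}$. Matching the coefficients of these bases against the products $A_{(q-1)j}A_{2k}$ on the left should force every $A_{(q-1)j}$ to lie in $\{m, D\}$. The difficulty is ruling out accidental coincidences between bases; I expect to handle this by choosing the perturbation $\theta^*$ generically, keeping $\psi_{\rm diag}$ and $\phi_{\rm row}$ nonzero, and, if the direct matching is not decisive, by additionally using the convexity inequality $\sum_j \cosh(2\theta s_j) \le q\cosh(2\theta s)$, with equality iff all $|s_j| = s$.
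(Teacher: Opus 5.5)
Your Step~1 is correct and is essentially how the paper concludes. Since $c_{q-1}=mD$ with $D>m$, each pair $\{j,j^{\rm R}\}$ of row $q-1$ contains at most one entry equal to $m$. Hence \ref{item: LNsecondLastMin}, \ref{item: LNsecondLastMax} and the parity of $q$ all follow once exactly $q/2$ entries of row $q-1$ equal $m$. The identity you target in Step~2 is also the right one; it is \cref{claim: AlphaGSatisfiesDiagD}.

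\textbf{The gap is Step~3, which is the heart of the lemma and is never carried out.} The tools you propose cannot carry it out either.
\begin{itemize}
\item Choosing a perturbation generically is irrelevant. The identity concerns the fixed matrix $A=\mathcal{L}_N(\alpha)$, whose coincidences (constant column $q$, $A_{1j}=A_{jj}$) are forced, not accidental.
\item The $\cosh$ bound constrains only $\sum_j A_{(q-1)j}^{2\theta}$. The identity, however, only fixes its product with the unknown row-$2$ sum.
\end{itemize}
Concretely, let $a$ and $b$ be the numbers of entries equal to $m$ in rows $q-1$ and $2$ of $A$. Matching the smallest (or the largest) exponential on both sides yields only $ab=q$. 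For example, $q=6$, $a=2$, $b=3$ satisfies this, and you give no argument that the remaining terms exclude such cases.

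\textbf{The missing idea is that $b=2$.} This comes from $N$, not from the listed properties of $A$:
\begin{itemize}
\item $A_{2q}=A_{2(q-1)}=m$ by \ref{item: LNRowq} and \ref{item: antiDiagMin} for $A$.
\item For $i\le q-2$, intra-row-distinctness and \ref{item: antiDiagMin} for $N$ give $N_{2i}>N_{2(q-1)}$, and \ref{item: diagOrder} for $N$ gives $N_{ii}>N_{(q-1)(q-1)}$.
\item Hence $A_{2i}=N_{2i}(N_{22}N_{ii})^{\alpha}>N_{2(q-1)}(N_{22}N_{(q-1)(q-1)})^{\alpha}=m$.
\end{itemize}
Now divide your identity by $m^{4\theta}$ and let $\theta\to-\infty$. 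The left side tends to $2a$. The right side tends to $q$, because row $q$ is constantly $m$ and row $1$ is $(A_{11},\dots,A_{qq})$ with unique minimum $A_{qq}=m$. So $a=q/2$, and Step~1 finishes the proof. There is no need to show directly that every entry lies in $\{m,D\}$.

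\textbf{A secondary point on Step~2.} Returning to a neighbourhood of $A$ does not put you in the setting of \cref{lemma: diagDistinctRestrictions}. The matrix $A$ is not diagonal dominant, since $A_{1q}=A_{qq}$, and that open condition cannot be produced by \cref{lemma: analyticGadgets}. The paper instead uses the family $\mathcal{T}_{\mathcal{L}_{\mathcal{S}_N(\theta_1)}(\theta_2)}(\theta_3)$. This equals $N$ at $(1,0,1)$ and $\mathcal{T}_A(\theta)$ at $(1,\alpha,\theta)$. The paper perturbs near $N$, where the hypotheses hold, with the violation witnessed at $(1,\alpha,\theta)$.
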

\begin{proof}    
	We note that since $N \in \Sym{q}{pd}(\mathbb{R}_{> 0})$,
	and $\mathcal{S}_{N}(1) = N$, there exists an open
	ball $\mathcal{O} \subset \mathbb{R}$ around
	$1$, such that
	$\mathcal{S}_{N}(\theta) \in \Sym{q}{pd}(\mathbb{R}_{> 0})$,
	for all $\theta \in \mathcal{O}$.
	We consider $\mathcal{K}_{N}: \mathcal{O} \times 
	\mathbb{R}^{2} \rightarrow
	\Sym{q}{}(\mathbb{R}_{> 0})$ defined as
	$$\mathcal{K}_{N}(\theta_{1}, \theta_{2}, \theta_{3})
	= \left(\mathcal{T}_{\mathcal{L}_{
			(\mathcal{S}_{N}(\theta_{1}))}
		(\theta_{2})}(\theta_{3})\right).$$
	Since $\mathcal{S}_{N}(\theta_{1}) \in 
	\Sym{q}{}(\mathbb{R}_{> 0})$
	for all $\theta_{1} \in \mathcal{O}$,
	we see that $\mathcal{K}_{N}
	(\theta_{1}, \theta_{2}, \theta_{3})_{ij}$
	is a well-defined analytic
	function for all $i, j \in [q]$.
	
	Let us now fix $\alpha$
	from \cref{lemma: diagDistinctLoopRestrictions}, and consider
	$\mathbf{S_{2}}(\mathcal{K}_{N}(1, \alpha, \theta))$ 
	as a function of $\theta$.
	We note that $\mathcal{S}_{N}(1) = N$, and 
	$\mathcal{K}_{N}(1, \alpha, \theta)
	= \mathcal{T}_{\mathcal{L}_{N}(\alpha)}(\theta)$
	for all $\theta \in \mathbb{R}$.
	From \ref{item: LNRowq}, we
	know that $\mathcal{L}_{N}(\alpha)_{qi} = 
	\mathcal{L}_{N}(\alpha)_{qq} = (N_{qq})^{2\alpha + 1}$
	for all $i \in [q]$.
	From \ref{item: LNRow1}, we also know that
	$\mathcal{L}_{N}(\alpha)_{1i} = 
	\mathcal{L}_{N}(\alpha)_{ii} = (N_{ii})^{2\alpha + 1}$
	for all $i \in [q]$.
	So,
	$$\mathbf{S_{2}}(\mathcal{K}_{N}(1, \alpha, \theta))_{qq} = 
	\sum_{i \in [q]}(\mathcal{L}_{N}(\alpha)_{qi})^{2\theta}
	= qN_{qq}^{2\theta(2\alpha+1)},$$
	$$\mathbf{S_{2}}(\mathcal{K}_{N}(1, \alpha, \theta))_{11} = 
	\sum_{i \in [q]}(\mathcal{L}_{N}(\alpha)_{1i})^{2\theta}
	= \sum_{i \in [q]}N_{ii}^{2\theta(2\alpha+1)}.$$
	We also know that $N$ satisfies \ref{item: diagOrder}.
	So,
	$$\frac{\mathbf{S_{2}}(\mathcal{K}_{N}(1, \alpha, \theta))_{qq}}
	{(N_{qq})^{2\theta(2\alpha+1)}}
	= q, 
	\text{ and}$$
	$$\lim\limits_{\theta \rightarrow -\infty}
	\frac{\mathbf{S_{2}}(\mathcal{K}_{N}(1, \alpha, \theta))_{11}}
	{(N_{qq})^{2\theta(2\alpha+1)}}
	=  1 +  \lim\limits_{\theta \rightarrow -\infty}
	\sum\limits_{i < q} (N_{ii}/N_{qq})^{2\theta(2\alpha+1)} = 1.$$
	Let us assume the following claim is true for now,
	we shall prove it later:
	
	\begin{restatable}{claim}{AlphaGSatisfiesDiagD}
		\label{claim: AlphaGSatisfiesDiagD}
		Unless $\PlGH(N)$ is $\#$P-hard,
		$\mathbf{S_{2}}(\mathcal{K}_{N}(1, \alpha, \theta))$ 
		satisfies \ref{item: diagD} for all
		$\theta \in \mathbb{R}$.
	\end{restatable}
	
	Now, \cref{claim: AlphaGSatisfiesDiagD} implies that (recall
	$1^{\rm R} = q$ and $2^{\rm R}=q-1$)
	\begin{equation}\label{eqn: mathcalG1q}
		\lim\limits_{\theta \rightarrow -\infty}
		\frac{\mathbf{S_{2}}
			(\mathcal{K}_{N}(1, \alpha, \theta))_{11} \cdot
			\mathbf{S_{2}}(\mathcal{K}_{N}(1, \alpha, \theta))_{qq}}
		{(N_{qq})^{4\theta(2\alpha+1)}}
		= q = 
		\lim\limits_{\theta \rightarrow -\infty}
		\frac{\mathbf{S_{2}}
			(\mathcal{K}_{N}(1, \alpha, \theta))_{22} \cdot
			\mathbf{S_{2}}
			(\mathcal{K}_{N}(1, \alpha, \theta))_{2^{\rm R}2^{\rm R}}}
		{(N_{qq})^{4\theta(2\alpha+1)}}.
	\end{equation}
	
	Let us now focus on row $2$ of $\mathcal{L}_{N}(\alpha)$.
	Since $N$ satisfies \ref{item: antiDiagMin},
	and is intra-row-distinct,
	we know that $N_{22^{\rm R}} < N_{2i}$ for all
	$i \neq 2^{\rm R} \in [q]$.
	Moreover, since $N$ satisfies \ref{item: diagOrder},
	we also know that $N_{ii} > N_{2^{\rm R}2^{\rm R}}$
	for all $i \in [q-2]$.
	Therefore,
	$$\mathcal{L}_{N}(\alpha)_{22^{\rm R}} = 
	N_{22^{\rm R}}(N_{22}N_{2^{\rm R}2^{\rm R}})^{\alpha} <
	N_{2i}(N_{22}N_{ii})^{\alpha}
	= \mathcal{L}_{N}(\alpha)_{2i}, \quad
	\text{for all } i \in [q-2].$$
	Since $\mathcal{L}_{N}(\alpha)$ satisfies
	\ref{item: antiDiagMin}, and \ref{item: LNRowq}, 
	we also know that
	$\mathcal{L}_{N}(\alpha)_{22^{\rm R}} = \mathcal{L}_{N}(\alpha)_{1q}
	= \mathcal{L}_{N}(\alpha)_{2q} = \mathcal{L}_{N}(\alpha)_{qq} = (N_{qq})^{2\alpha+1}$.
	So, we see that
	$$\lim\limits_{\theta \rightarrow -\infty}
	\frac{\mathbf{S_{2}}(\mathcal{K}_{N}(1, \alpha, \theta))_{22}}
	{(N_{qq})^{2\theta(2\alpha+1)}} = 2.$$
	
	So, \cref{eqn: mathcalG1q} implies that
	$$\lim\limits_{\theta \rightarrow -\infty}
	\frac{\mathbf{S_{2}}
		(\mathcal{K}_{N}(1, \alpha, \theta))_{2^{\rm R}2^{\rm R}}}{(N_{qq})^{2\theta(2\alpha+1)}} = \frac{q}{2}.$$
	But this implies that
	$|\{i \in [q]: \mathcal{L}_{N}(\alpha)_{i(q-1)}
	= \mathcal{L}_{N}(\alpha)_{qq}\}| = \nicefrac{q}{2}$,
	which proves \ref{item: LNsecondLastMin}.
	A consequence of \ref{item: LNsecondLastMin}
	is also that if $q$ is odd, then,
	$\PlGH(N)$ is $\#$P-hard.
	So, $q$ must be even.
	This implies that for all $i \in [q]$,
	$i \neq q + 1 - i = i^{\rm{R}}$.
	
	From \ref{item: LNalsoBad}, we know that
	$\mathcal{L}_{N}(\alpha)$ satisfies
	\ref{item: rowCi} for row $(q-1)$.
	So, for all $i \in [q]$,
	$$\mathcal{L}_{N}(\alpha)_{i(q-1)}
	\mathcal{L}_{N}(\alpha)_{i^{\rm{R}} (q-1)} = 
	\mathcal{L}_{N}(\alpha)_{(q-1)(q-1)}
	\mathcal{L}_{N}(\alpha)_{2(q-1)} = 
	\mathcal{L}_{N}(\alpha)_{(q-1)(q-1)}
	\mathcal{L}_{N}(\alpha)_{qq}.$$
	Since $\mathcal{L}_{N}(\alpha)$ satisfies
	\ref{item: diagOrder}, we know that
	$\mathcal{L}_{N}(\alpha)_{(q-1)(q-1)}
	> \mathcal{L}_{N}(\alpha)_{qq}$.
	So, we see that it is not possible that
	$$\mathcal{L}_{N}(\alpha)_{i (q-1)} = 
	\mathcal{L}_{N}(\alpha)_{i^{\rm{R}} (q-1)} =
	\mathcal{L}_{N}(\alpha)_{qq},$$
	as that would imply
	$\mathcal{L}_{N}(\alpha)_{i(q-1)}
	\mathcal{L}_{N}(\alpha)_{i^{\rm{R}} (q-1)}
	= (\mathcal{L}_{N}(\alpha)_{qq})^{2} <
	\mathcal{L}_{N}(\alpha)_{(q-1)(q-1)}
	\mathcal{L}_{N}(\alpha)_{qq}$.
	Thus,  each pair $\{\mathcal{L}_{N}(\alpha)_{i (q-1)}, 
	\mathcal{L}_{N}(\alpha)_{i^{\rm{R}} (q-1)}\}$ has most one
	element that is equal to $\mathcal{L}_{N}(\alpha)_{qq}$.
	Now, \ref{item: LNsecondLastMin} says the total is $\nicefrac{q}{2}$;
	therefore exactly one from  each pair $\{\mathcal{L}_{N}(\alpha)_{i (q-1)}, 
	\mathcal{L}_{N}(\alpha)_{i^{\rm{R}} (q-1)}\}$ 
	is equal to $\mathcal{L}_{N}(\alpha)_{qq}$.
	%
	and, thus the other element of the pair
	must equal
	$\mathcal{L}_{N}(\alpha)_{(q-1)(q-1)}$.
	This proves \ref{item: LNsecondLastMax}.
\end{proof}

Now, we complete the proof of
\cref{claim: AlphaGSatisfiesDiagD}.
\AlphaGSatisfiesDiagD*
\begin{proof}
	We define $\delta: \Sym{q}{}(\mathbb{R}) \rightarrow 
	\mathbb{R}$ as:
	$$\delta(A) = \min \left\{\left(\min_{i \in [q-1]}
	\left(A_{ii} - A_{(i+1)(i+1)}\right)\right),
	\left(\min_{i \in [q]}A_{ii}
	- \max_{i \in [q], j \neq i \in [q]}A_{ij}\right)
	\right\}.$$
	Since $N$ satisfies \ref{item: diagOrder},
	and is diagonal dominant, $\delta(N) > 0$.
	Since $\mathcal{K}_{N}(1, 0, 1) = N$,
	and the entries of $\mathcal{K}_{N}(\theta_{1},
	\theta_{2}, \theta_{3})$ are continuous
	in $(\theta_{1}, \theta_{2}, \theta_{3})$,
	we may assume that there exists an open
	cube $\mathcal{O}_{1} \times 
	\mathcal{O}_{2} \times \mathcal{O}_{3} \subset
	\mathcal{O} \times \mathbb{R}^{2}$,
	s.t., $\delta(\mathcal{K}_{N}(\theta_{1},
	\theta_{2}, \theta_{3})) > 0$, and
	$\mathcal{K}_{N}(\theta_{1},
	\theta_{2}, \theta_{3}) \in \Sym{q}{pd}(\mathbb{R}_{> 0})$
	for all $(\theta_{1},
	\theta_{2}, \theta_{3}) \in 
	\mathcal{O}_{1} \times 
	\mathcal{O}_{2} \times \mathcal{O}_{3}$.
	We now define $\psi_{\rm{diag}}': A \mapsto
	\psi_{\rm{diag}}(A^{2})$, and
	$\phi_{\rm{row}}': A \mapsto \phi_{\rm{row}}(A^{2})$
	for $\psi_{\rm{diag}}, \phi_{\rm{row}}$ in
	\cref{defn:phi-psi-function-for-diagonal-distinct}, and
	\cref{defn:phi-function-for-intra-row-distinct}.
	Since $\rho(\mathcal{S}_{N}(\nicefrac{1}{2})) \neq 0$
	for $\rho \in \{\psi_{\rm{diag}}', \phi_{\rm{row}}'\}$,
	we may use \cref{lemma: analyticGadgets} to find
	$\theta' \in \mathcal{O}$, s.t.,
	$\rho(\mathcal{S}_{N}(\theta')) \neq 0$
	for both $\rho \in \{\psi_{\rm{diag}}', \phi_{\rm{row}}'\}$.
	In other words,
	$\rho(\mathcal{K}_{N}(\theta', 0, 1)) \neq 0$
	for both $\rho \in \{\psi_{\rm{diag}}', \phi_{\rm{row}}'\}$.
	
	We now assume that $\mathbf{S_{2}}(\mathcal{K}_{N}
	(1, \alpha, \theta))$
	does not satisfy \ref{item: diagD}.
	So,
	$\zeta(\mathbf{S_{2}}(\mathcal{K}_{N}(1, \alpha, \theta))) \neq 0$
	for $\zeta: A \mapsto \sum_{i}(A_{11}A_{qq} - 
	A_{ii}A_{i^{\rm{R}} i ^{\rm{R}}})^{2}$.
	We can now define $\zeta_{2}: A \mapsto \zeta(A^{2})$.
	Then, clearly,
	$\zeta_{2}(\mathcal{K}_{N}(1, \alpha, \theta)) \neq 0$.
	
	We can now use \cref{lemma: analyticGadgets} to
	find some $(\theta_{1}^{*}, \theta_{2}^{*},
	\theta_{3}^{*}) \in
	\mathcal{O}_{1} \times \mathcal{O}_{2} \times
	\mathcal{O}_{3}$, such that
	$N' = \mathcal{K}_{N}(\theta_{1}^{*}, \theta_{2}^{*},
	\theta_{3}^{*}) \in \Sym{q}{pd}(\mathbb{R}_{> 0})$
	satisfies:
	$\rho(N') \neq 0$ for all $\rho \in 
	\{\psi_{\rm{diag}}', \phi_{\rm{row}}', \zeta_{2} \}$, and
	$\delta(N') > 0$.
	
	Therefore, $\mathbf{S_{2}}(N') \in \Sym{q}{pd}(\mathbb{R}_{> 0})$
	satisfies:
	$\rho(\mathbf{S_{2}}(N')) \neq 0$ for all
	$\rho \in \{\psi_{\rm{diag}}, \phi_{\rm{row}}, \zeta \}$.
	We note that since $N' \in \Sym{q}{pd}(\mathbb{R}_{> 0})$,
	$\mathcal{S}_{\mathbf{S_{2}}(N')}(\nicefrac{1}{2}) = N'$.
	Since $\delta(N') > 0$,
	and $N' \in \Sym{q}{pd}(\mathbb{R}_{> 0})$
	we know that there exists some open neighborhood
	$\mathcal{O}'$ of $\nicefrac{1}{2}$, such that
	$\mathcal{S}_{\mathbf{S_{2}}(N')}(\theta) \in 
	\Sym{q}{pd}(\mathbb{R}_{> 0})$, and
	$\delta(\mathcal{S}_{\mathbf{S_{2}}(N')}(\theta)) > 0$
	for all $\theta \in \mathcal{O}'$.
	We use \cref{lemma: analyticGadgets} again,
	to find some $\theta^{*} \in \mathcal{O}'$
	such that $N'' = \mathcal{S}_{\mathbf{S_{2}}(N')}(\theta^{*})
	\in \Sym{q}{pd}(\mathbb{R}_{> 0})$
	satisfies the conditions:
	$\rho(N'') \neq 0$ for
	$\rho \in \{\psi_{\rm{diag}}, \phi_{\rm{row}}, \zeta \}$,
	and $\delta(N'') > 0$.
	From our choice of $\delta$,
	this implies that
	$N''$ is diagonal dominant, and that
	$(N'')_{11} > (N'')_{22} > \dots > (N'')_{qq}$.
	Now, let us assume that there exists some
	permutation $\sigma$ of $[q]$, such that
	$(N'')^{\sigma}$ satisfies \ref{item: diagOrder}.
	But then, $(N'')_{11} > \dots > (N'')_{qq}$ implies that
	$\sigma$ must be the trivial identity permutation.
	But we had observed that $\zeta(N'') \neq 0$
	while defining $N''$.
	So, from \cref{lemma: diagDistinctRestrictions},
	we see that $\PlGH(N'') \leq \PlGH(N') \leq
	\PlGH(N)$ is $\#$P-hard.
\end{proof}

\begin{lemma}\label{lemma: diagDistinctAleph}
	Let $N$ satisfy the properties of
	~\cref{lemma: diagDistinctSecondLastRow}.
	Then, $\PlGH(N)$ is $\#$P-hard, unless
	$N$ also satisfies:
	\begin{specialenumerate} \setcounter{specialenumeratei}{9}
		\item \label{item: LNAleph}
		$\aleph(N) = 0$, where
		$\aleph: \Sym{q}{}(\mathbb{R}_{> 0})
		\rightarrow \mathbb{R}$ is defined as:
	\end{specialenumerate}
	$$\aleph(A) = \Big(\ln(A_{1q}) - \ln(A_{qq}) \Big)
	\Big(\ln(A_{(q-1)(q-1)}) - \ln(A_{qq})\Big) - 
	\Big(\ln(A_{11}) - \ln(A_{qq})\Big)
	\Big(\ln(A_{(q-1)q}) - \ln(A_{qq}) \Big).$$
\end{lemma}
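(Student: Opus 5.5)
The plan is to show that $\aleph(N)=0$ already follows from the properties obtained in \cref{lemma: diagDistinctLoopRestrictions}, so that no new gadget argument is needed. If this works, the ``$\#$P-hard unless'' clause is inherited from \cref{lemma: diagDistinctLoopRestrictions} and \cref{lemma: diagDistinctSecondLastRow}: either $\PlGH(N)$ is $\#$P-hard, or $N$ satisfies \ref{item: LNalsoBad}--\ref{item: LNsecondLastMax} for some $\alpha>0$. In the second case we derive \ref{item: LNAleph} directly. All logarithms make sense because $N\in\Sym{q}{pd}(\mathbb{R}_{>0})$.

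Introduce the shorthand
$a=\ln N_{1q}-\ln N_{qq}$, $b=\ln N_{11}-\ln N_{qq}$, $c=\ln N_{(q-1)q}-\ln N_{qq}$ and $d=\ln N_{(q-1)(q-1)}-\ln N_{qq}$. With this notation $\aleph(N)=ad-bc$. By \cref{definition: mathcalL} we have
$\mathcal{L}_{N}(\alpha)_{ij}=N_{ij}(N_{ii}N_{jj})^{\alpha}$, and in particular $\mathcal{L}_{N}(\alpha)_{qq}=N_{qq}^{2\alpha+1}$.

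First, property \ref{item: LNRowq} with $i=1$ gives $\mathcal{L}_{N}(\alpha)_{1q}=\mathcal{L}_{N}(\alpha)_{qq}$. This is exactly how $\alpha$ was chosen in the proof of \cref{lemma: diagDistinctLoopRestrictions}. Taking logarithms gives $\ln N_{1q}+\alpha(\ln N_{11}+\ln N_{qq})=(2\alpha+1)\ln N_{qq}$, that is, $a+\alpha b=0$.

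Second, property \ref{item: LNRowq} with $i=q-1$ gives $\mathcal{L}_{N}(\alpha)_{(q-1)q}=\mathcal{L}_{N}(\alpha)_{qq}$. The same computation yields $c+\alpha d=0$.

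Now eliminate $\alpha$. The vector $(1,\alpha)^{\tt T}$ is a nonzero vector in the kernel of the $2\times 2$ matrix with rows $(a,b)$ and $(c,d)$. Hence that matrix is singular, so $ad-bc=0$, which is $\aleph(N)=0$. Equivalently, $b>0$ by \ref{item: diagOrder}, so $\alpha=-a/b$, and substituting into $c+\alpha d=0$ gives $bc-ad=0$. The hypothesis $q\ge 2$ ensures $q-1\ne q$, so the two equations are genuinely distinct instances of \ref{item: LNRowq}.

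The main point needing care is not computational. It is confirming that we may invoke \ref{item: LNRowq} for the same $\alpha$ in both rows, and that the case analysis is exhaustive. Both hold by the statement of \cref{lemma: diagDistinctLoopRestrictions}, since $\alpha$ is fixed once and row $q$ of $\mathcal{L}_N(\alpha)$ is constant. I expect the real role of $\aleph$ to come later, in \cref{lemma: diagDistinctHardness}. There one perturbs $N$ by a gadget (for instance $\mathcal{T}$ or $\mathcal{S}$ followed by $\mathcal{L}$) to obtain some $N'$ with $\aleph(N')\neq0$ via \cref{lemma: analyticGadgets}, using that $\aleph$ is a $\Sym{q}{}(\mathbb{R}_{>0})$-analytic function.
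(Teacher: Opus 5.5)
Your proposal is correct and is essentially the paper's proof. Both apply \ref{item: LNRowq} to rows $1$ and $q-1$ with the same $\alpha$ to get $a+\alpha b=0$ and $c+\alpha d=0$, and then eliminate $\alpha$. The only cosmetic difference is that the paper writes $-\alpha$ as both ratios $a/b$ and $c/d$, dividing by quantities that are positive by \ref{item: diagOrder}. Your kernel/determinant argument needs no division, and it also does not need the two rows to be distinct, so your remark about $q\ge 2$ is unnecessary but harmless.
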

\begin{proof}
	First, we note that $N \in \Sym{q}{pd}(\mathbb{R}_{> 0})$
	satisfies \ref{item: diagOrder}, therefore
	$\ln(N_{ii}) - \ln(N_{qq}) > 0$ for all $i \in [q-1]$.
	Now, since $N$ satisfies \ref{item: LNRowq}, i.e., 
	$\mathcal{L}_{N}(\alpha)_{iq} =
	\mathcal{L}_{N}(\alpha)_{qq}$, (and is positive so the fractions below are well defined)
	we see that
	$$1 = \frac{\mathcal{L}_{N}(\alpha)_{iq}}
	{\mathcal{L}_{N}(\alpha)_{qq}} = 
	\frac{N_{iq}(N_{ii}N_{qq})^{\alpha}}
	{N_{qq}(N_{qq}N_{qq})^{\alpha}} = 
	\frac{N_{iq}}{N_{qq}} \left(\frac{N_{ii}}{N_{qq}}\right)^{\alpha},
	\quad \text{for all } i \in [q].$$
	So,
	$$\alpha = \frac{\left(\ln(N_{qq}) - \ln(N_{iq})\right)}
	{\left(\ln(N_{ii}) - \ln(N_{qq}) \right)}, \quad
	\text{for all } i \in [q-1].$$
	This implies that
	$$\frac{\left(\ln(N_{1q}) - \ln(N_{qq})\right)}
	{\left(\ln(N_{11}) - \ln(N_{qq}) \right)} = -\alpha = 
	\frac{\left(\ln(N_{(q-1)q}) - \ln(N_{qq})\right)}
	{\left(\ln(N_{(q-1)(q-1)}) - \ln(N_{qq}) \right)}.$$
	Therefore, $\aleph(N) = 0$.
\end{proof}

We have  completed the proof of \cref{theorem: diagDistinctAllProperties}.
Now, to finish the proof of \cref{theorem: diagonalDistinctHardness},
we prove
\cref{lemma: diagDistinctHardness}.

\diagDistinctHardness*
\begin{proof}
	We note that since $N \in \Sym{q}{pd}(\mathbb{R}_{> 0})$,
	and $\mathcal{S}_{N}(1) = N$, there exists an open
	ball $\mathcal{O} \subset \mathbb{R}$ around
	$1$, such that
	$\mathcal{S}_{N}(\theta) \in \Sym{q}{pd}(\mathbb{R}_{> 0})$
	for all $\theta \in \mathcal{O}$.
	We consider $\mathcal{K}_{N}: \mathcal{O} \times 
	\mathbb{R}^{2} \rightarrow
	\Sym{q}{}(\mathbb{R}_{> 0})$ defined as
	$$\mathcal{K}_{N}(\theta_{1}, \theta_{2}, \theta_{3})
	= \mathcal{T}_{\mathcal{L}_{
			(\mathcal{S}_{N}(\theta_{1}))}
		(\theta_{2})}(\theta_{3}).$$
	Since $\mathcal{S}_{N}(\theta_{1}) \in 
	\Sym{q}{}(\mathbb{R}_{> 0})$
	for all $\theta_{1} \in \mathcal{O}$,
	we see that $\mathcal{K}_{N}(\theta_{1}, \theta_{2}, \theta_{3})_{ij}$
	is a well-defined analytic
	function for all $i, j \in [q]$.
	Moreover, we note that
	$\mathcal{K}_{N}(1, 0, 1) = N$.
	We define $\psi_{\rm{diag}}': A \mapsto
	\psi_{\rm{diag}}(A^{2})$, and
	$\phi_{\rm{row}}': A \mapsto \phi_{\rm{row}}(A^{2})$
	for $\psi_{\rm{diag}}, \phi_{\rm{row}}$ in
	\cref{defn:phi-psi-function-for-diagonal-distinct}, and
	\cref{defn:phi-function-for-intra-row-distinct}.
	Since $\rho(\mathcal{S}_{N}(\nicefrac{1}{2})) \neq 0$
	for $\rho \in \{\psi_{\rm{diag}}', \phi_{\rm{row}}'\}$,
	we may use \cref{lemma: analyticGadgets} to find
	$\theta' \in \mathcal{O}$, s.t.,
	$\rho(\mathcal{S}_{N}(\theta')) \neq 0$
	for both $\rho \in \{\psi_{\rm{diag}}', \phi_{\rm{row}}'\}$.
	This also means that there exists some
	$\theta' \in \mathcal{O}$, s.t.,
	$\rho(\mathcal{K}_{N}(\theta', 0, 1)) \neq 0$
	for both $\rho \in \{\psi_{\rm{diag}}', \phi_{\rm{row}}'\}$.
	
	Setting this aside, we recall that
	as a consequence of \cref{lemma: diagDistinctAleph},
	$\PlGH(N)$ is $\#$P-hard, unless
	$N$ satisfies \ref{item: antiDiagMin} -- \ref{item: LNAleph}.
	Let $\alpha \in \mathbb{R}_{> 0}$ be the real that 
	satisfies \ref{item: LNalsoBad} -- \ref{item: LNsecondLastMax}
	for the matrix $N$.
	We will now assume the following claim is true for now, we shall
	prove it later.
	\begin{restatable}{claim}{AlphaGThetaNotConstant}
		\label{claim: AlphaGThetaNotConstant}
		There exists some
		$\theta^{*} \in \mathbb{R}$, such that
		$\aleph\Big(\mathbf{S_{2}}\big(\mathcal{K}_{N}(1, \alpha, \theta^{*})
		\big)\Big)  \neq 0$.
	\end{restatable}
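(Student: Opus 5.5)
The plan is to compute the four entries of $A(\theta) := \mathbf{S_{2}}(\mathcal{K}_{N}(1,\alpha,\theta)) = (\mathcal{T}_{L}(\theta))^{2}$ explicitly, where $L := \mathcal{L}_{N}(\alpha)$, using the structure that \ref{item: LNRowq}, \ref{item: LNRow1} and \ref{item: LNsecondLastMax} impose on $L$. Then reduce $\aleph(A(\theta)) \equiv 0$ to an identity between two exponential sums, and refute that identity by asymptotics.

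\textbf{Step 1: the four entries.} Write $m = L_{qq}$, $r_{i} = L_{ii}/m$ and $s = r_{q-1}$. By \ref{item: diagOrder} for $L$ we have $r_{1} > r_{2} > \dots > r_{q-1} = s > r_{q} = 1$. By \ref{item: LNRowq}, row $q$ of $L$ is constantly $m$. By \ref{item: LNRow1}, $L_{1i} = L_{ii}$. By \ref{item: LNsecondLastMin} and \ref{item: LNsecondLastMax}, row $q-1$ consists of $q/2$ copies of $m$ and $q/2$ copies of $sm$. Hence
$$A_{qq} = q m^{2\theta},\quad A_{11} = m^{2\theta}\sum_{i} r_{i}^{2\theta},\quad A_{1q} = m^{2\theta}\sum_{i} r_{i}^{\theta},$$
$$A_{(q-1)(q-1)} = \tfrac{q}{2} m^{2\theta}(1+s^{2\theta}),\quad A_{(q-1)q} = \tfrac{q}{2} m^{2\theta}(1+s^{\theta}).$$
Put $F(t) = \ln\bigl(\tfrac1q\sum_{i} r_{i}^{t}\bigr)$ and $G(t) = \ln\bigl(\tfrac12(1+s^{t})\bigr)$. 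Then
$$\aleph(A(\theta)) = F(\theta)G(2\theta) - F(2\theta)G(\theta).$$

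\textbf{Step 2: reduce to a functional identity.} Suppose for contradiction that this vanishes for all $\theta$. Since $s > 1$, $G(t) \neq 0$ for $t \neq 0$, so $H = F/G$ is analytic on $(0,\infty)$ and satisfies $H(\theta) = H(2\theta)$. As $t \to \infty$, $F(t) = t\ln r_{1} - \ln q + o(1)$ (using that $r_{1}$ is the unique maximum) and $G(t) = t\ln s - \ln 2 + o(1)$. Hence $H(t) \to \kappa := \ln r_{1}/\ln s > 0$. Iterating $H(\theta) = H(2^{k}\theta)$ gives $H \equiv \kappa$ on $(0,\infty)$. By analyticity (both sides are real-analytic on $\mathbb{R}$) $F = \kappa G$ on all of $\mathbb{R}$, i.e.
$$\tfrac1q \sum_{i} r_{i}^{t} = 2^{-\kappa}(1+s^{t})^{\kappa} \quad \text{for all } t.$$

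\textbf{Step 3: the main obstacle, refuting this identity.} I would compare asymptotic expansions as $t \to -\infty$, where $r_{q} = 1$ is the unique smallest base and $r_{q-1} = s$ is the next one. The left side is $\tfrac1q\bigl(1 + s^{t} + o(s^{t})\bigr)$. The right side is $2^{-\kappa}\bigl(1 + \kappa s^{t} + o(s^{t})\bigr)$. Matching constants gives $2^{\kappa} = q$, and matching the $s^{t}$ coefficients gives $\kappa = 1$; hence $q = 2$. With $\kappa = 1$ the identity would read $\sum_{i} r_{i}^{t} = 1 + s^{t}$, which is only consistent if $q = 2$. But for $q = 2$ a diagonal distinct $N$ already has $\PlGH(N)$ $\#$P-hard by \cref{theorem: domain2Hardness}, so we may assume $q \ge 3$ (indeed $q$ even, so $q \ge 4$). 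This gives the desired contradiction, so some $\theta^{*}$ has $\aleph(A(\theta^{*})) \neq 0$.

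The delicate points I expect to check carefully are, first, that the $o(\cdot)$ terms in both limits are genuinely of lower order, which follows because all bases are distinct by strict \ref{item: diagOrder}. Second, the passage from $H \equiv \kappa$ on $(0,\infty)$ to the identity on all of $\mathbb{R}$ needs analyticity of $F - \kappa G$, which holds since both are logarithms of positive exponential sums.
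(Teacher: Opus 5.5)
Your proof is correct. Your Step 1 reduction $\aleph(A(\theta)) = F(\theta)G(2\theta) - F(2\theta)G(\theta)$ is exactly the paper's $L_1D_2 - L_2D_1$; the difference is in how you refute $\aleph \equiv 0$.

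The paper skips your Step 2 entirely. It expands the product expression directly as $\theta \to -\infty$, with $\epsilon = s^{\theta} \to 0$:
\[ F(\theta) = -\ln q + \epsilon + o(\epsilon), \quad F(2\theta) = -\ln q + O(\epsilon^{2}), \]
\[ G(\theta) = -\ln 2 + \epsilon + o(\epsilon), \quad G(2\theta) = -\ln 2 + O(\epsilon^{2}). \]
Hence $F(\theta)G(2\theta) - F(2\theta)G(\theta) = (\ln q - \ln 2)\epsilon + o(\epsilon)$, which is nonzero for large negative $\theta$ once $q \ge 3$.

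Your route instead first combines the scaling relation $H(\theta) = H(2\theta)$, the $\theta \to +\infty$ limit, and analytic continuation. This turns the vanishing hypothesis into the exact identity $\frac1q\sum_i r_i^{t} = \bigl(\frac12(1+s^{t})\bigr)^{\kappa}$, and only then do you run the $-\infty$ expansion. That is valid, and it buys two things:
\begin{itemize}
\item the structural fact that vanishing of $\aleph$ would force $F$ to be a constant multiple of $G$;
\item a second contradiction, since $\kappa = \ln r_1/\ln s > 1$ when $q \ge 3$.
\end{itemize}
The cost is the analyticity and functional-equation machinery, and none of it is needed: the same first-order comparison you carry out in Step 3 already works on the product expression directly.

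Both arguments use $q \ge 3$ at the same point. For $q = 2$ one has $F = G$, so $\aleph \equiv 0$ and the claim genuinely fails; in the paper's setting $q \ge 3$ comes from the induction on $q$.
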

	
	We now define $\delta_{1}: \Sym{q}{}(\mathbb{R}_{> 0}) \rightarrow
	\mathbb{R}$ as:
	$$\delta_{1}(A) = \min_{\substack{i \in [q], i_{1}, j_{1} \in [q]\\
			i_{2} \neq j_{2} \in [q]}}\left(\frac{A_{ii}^{2}}
	{A_{i_{1}j_{1}}A_{i_{2}j_{2}}} \right),$$
	and $\delta_{2}: \Sym{q}{}(\mathbb{R}_{> 0}) \rightarrow
	\mathbb{R}$ as:
	$$\delta_{2}(A) = \min_{i \in [q-1], j \in [q]}\left(
	\frac{A_{ii}}{A_{(i+1)j}}\right)^{2}.$$
	
	Since $N$ is I-close,
	we see from \cref{definition: IClose} that
	$N_{ii}^{2} > (\nicefrac{2}{3})^{2} = 
	\nicefrac{4}{9}$ for all $i \in [q]$.
	Moreover,
	$N_{i_{1}j_{1}} < (\nicefrac{4}{3})$, and
	since $i_{2} \neq j_{2}$,
	$N_{i_{2}j_{2}} < (\nicefrac{1}{3})$. So,
	$N_{i_{1}j_{1}}N_{i_{2}j_{2}} < 
	(\nicefrac{4}{3})(\nicefrac{1}{3}) = \nicefrac{4}{9}$
	for all $i_{1}, j_{1} \in [q], i_{2} \neq j_{2} \in [q]$.
	So, $\delta_{1}(N) > 1$.
	Since $N$ satisfies \ref{item: diagOrder}
	and is diagonal dominant,
	we also see that $\delta_{2}(N) > 1$.
	
	Since $\delta_{1}(A), \delta_{2}(A)$ are continuous
	in the entries of $A$, we know that there exists an
	open neighborhood $\mathcal{O}_{1} \times \mathcal{O}_{2}
	\subset \mathcal{O} \times \mathbb{R}$ around
	$(1, 0)$ such that
	for both
	$\rho \in \{\delta_{1}, \delta_{2}\}$,
	the number
	$\rho(\mathcal{L}_{\mathcal{S}_{N}(\theta_{1})}(\theta_{2}))$ is 
	sufficiently close to the quantity $\rho(N)$
	which is greater than 1.
	Specifically, we may let $\mu = \min \left( 
	\left(\delta_{1}(N)\right)^{\nicefrac{1}{2}}, 
	\left(\delta_{2}(N)\right)^{\nicefrac{1}{2}}\right)$.
	Since $\delta_{1}(N) > 1$, and $\delta_{2}(N) > 1$,
	we see that $\mu > 1$, and we also see that
	$\mu < \rho(N)$ for $\rho \in \{\delta_{1}, \delta_{2} \}$.
	So, we may assume
	$\rho(\mathcal{L}_{\mathcal{S}_{N}(\theta_{1})}(\theta_{2}))
	> \mu$,  for all
	$(\theta_{1}, \theta_{2}) \in 
	\mathcal{O}_{1} \times \mathcal{O}_{2}$.
	Moreover, we may assume that
	$\mathcal{L}_{\mathcal{S}_{N}(\theta_{1})}(\theta_{2}) \in
	\Sym{q}{pd}(\mathbb{R}_{> 0})$ for all 
	$(\theta_{1}, \theta_{2}) \in 
	\mathcal{O}_{1} \times \mathcal{O}_{2}$.
	
	We will now let
	$U = \{\mathcal{L}_{\mathcal{S}_{N}
		(\theta_{1})}(\theta_{2}):
	(\theta_{1}, \theta_{2}) \in 
	\mathcal{O}_{1} \times \mathcal{O}_{2}\}$, and
	consider $\mathbf{S_{2}}(\mathcal{T}_{A}(\theta))$
	for $A \in U$.
	From our choice of $U$, we
	see that
	$\delta_{1}(A) > \mu$, and $\delta_{2}(A) > \mu$
	for all $A \in U$.
	Now, we see that for all $A \in U$, and for all
	$i \in [q-1]$,
	$$\frac{\mathbf{S_{2}}(\mathcal{T}_{A}(\theta))_{ii}}
	{\mathbf{S_{2}}(\mathcal{T}_{A}(\theta))_{(i+1)(i+1)}}
	= \frac{\sum_{j \in [q]}(A_{ij})^{2\theta}}
	{\sum_{j \in [q]}(A_{(i+1)j})^{2\theta}}
	= \frac{1 + \sum_{j \neq i \in [q]}
		(\nicefrac{A_{ij}}{A_{ii}})^{2\theta}}
	{\sum_{j \in [q]}
		(\nicefrac{A_{(i+1)j}}{A_{ii}})^{2\theta}}
	\ge
	\frac{1}
	{\sum_{j \in [q]}
		(\nicefrac{A_{(i+1)j}}{A_{ii}})^{2\theta}}
	> \frac{\mu^{\theta}}{q}.$$
	The last inequality here follows from the fact that
	$\left(\nicefrac{A_{ii}}{A_{(i+1)j}}\right)^{2} \geq 
	\delta_{2}(A) > \mu$ for all $i \in [q-1]$, $j \in [q]$.
	
	We also see that
	for all $i, j_{1} \neq j_{2} \in [q]$,
	$$\frac{\mathbf{S_{2}}(\mathcal{T}_{A}(\theta))_{ii}}
	{\mathbf{S_{2}}(\mathcal{T}_{A}(\theta))_{j_{1}j_{2}}}
	= \frac{\sum_{k \in [q]}(A_{ik})^{2\theta}}
	{\sum_{k \in [q]}(A_{j_{1}k}A_{j_{2}k})^{\theta}}
	= \frac{1 + \sum_{k \neq i \in [q]}
		(\nicefrac{A_{ik}}{A_{ii}})^{2\theta}}
	{\sum_{k \in [q]}
		(\nicefrac{A_{j_{1}k}A_{j_{2}k}}
		{A_{ii}^{2}})^{\theta}}
	\ge
	\frac{1}
	{\sum_{k \in [q]}
		(\nicefrac{A_{j_{1}k}A_{j_{2}k}}
		{A_{ii}^{2}})^{\theta}}
	> \frac{\mu^{\theta}}{q}.$$
	The last inequality here follows from the fact that
	$\left(\nicefrac{A_{ii}^{2}}
	{A_{j_{1}k}A_{j_{2}k}}\right) 
	\geq \delta_{1}(A) > \mu$,
	for all $i \in [q]$, $j_{1} \neq j_{2} \in [q]$.
	
	Similarly, we can also see that for all $i \in [q]$,
	$$\frac{\mathbf{S_{2}}(\mathcal{T}_{A}(\theta))_{ii}}
	{\sum_{j \neq i}\mathbf{S_{2}}(\mathcal{T}_{A}
		(\theta))_{ij}}
	= \frac{\sum_{k \in [q]}(A_{ik})^{2\theta}}
	{\sum_{j \neq i \in [q]}\sum_{k \in [q]}(A_{ik}A_{jk})^{\theta}}
	= \frac{1 + \sum_{k \neq i \in [q]}
		(\nicefrac{A_{ik}}{A_{ii}})^{2\theta}}
	{\sum_{j \neq i \in [q]}\sum_{k \in [q]}
		(\nicefrac{A_{ik}A_{jk}}
		{A_{ii}^{2}})^{\theta}}
	> \frac{\mu^{\theta}}{q(q-1)}.$$
	
	So, there exists some $\overline{\theta} \in \mathbb{R}$, such that
	for all $\theta > \overline{\theta}$, and $A \in U$,
	\begin{equation}\label{eqn: N'diagOrder}
		\frac{\mathbf{S_{2}}(\mathcal{T}_{A}(\theta))_{ii}}
		{\mathbf{S_{2}}(\mathcal{T}_{A}(\theta))_{(i+1)(i+1)}} > 1
		\text{ for all } i \in [q-1],
	\end{equation}
	\begin{equation}\label{eqn: N'diagDominance}
		\frac{\mathbf{S_{2}}(\mathcal{T}_{A}(\theta))_{ii}}
		{\mathbf{S_{2}}(\mathcal{T}_{A}(\theta))_{j_{1}j_{2}}} > 1
		\text{ for all } i \in [q], j_{1} \neq j_{2} \in [q],
	\end{equation}
	\begin{equation}\label{eqn: N'pd}
		\frac{\mathbf{S_{2}}(\mathcal{T}_{A}(\theta))_{ii}}
		{\sum_{j \neq i}\mathbf{S_{2}}(\mathcal{T}_{A}
			(\theta))_{ij}} > 1, \text{ for all } i \in [q].
	\end{equation}

	Since $\aleph$ is 
	a $\Sym{q}{}(\mathbb{R}_{> 0})$-analytic function,
	we can construct a new 
	$\Sym{q}{}(\mathbb{R}_{> 0})$-analytic function
	$\aleph_{2}: A \mapsto
	\aleph(A^{2})$.
	Now, \cref{claim: AlphaGThetaNotConstant} implies that
	$\aleph_{2}(\mathcal{K}_{N}(1, \alpha, \theta^{*}))
	\neq 0$ for some $\theta^{*} \in \mathbb{R}$.
	We also recall from the start of this proof (before stating \cref{claim: AlphaGThetaNotConstant}) that we found
	some $\theta' \in \mathcal{O}$, such that
	$\rho(\mathcal{K}_{N}(\theta', 0, 1)) \neq 0$
	for all $\rho \in \{\psi_{\rm{diag}}', \phi_{\rm{row}}'\}$.
	
	So, we can use \cref{lemma: analyticGadgets} to find
	some $(\theta_{1}^{*}, \theta_{2}^{*}, \theta_{3}^{*})
	\in \mathcal{O}_{1} \times \mathcal{O}_{2} \times
	(\overline{\theta}, \infty)$, such that
	$\mathcal{K}_{N}(\theta_{1}^{*}, \theta_{2}^{*}, \theta_{3}^{*})
	\in \Sym{q}{}(\mathbb{R}_{> 0})$ satisfies:
	$\rho(\mathcal{K}_{N}
	(\theta_{1}^{*}, \theta_{2}^{*}, \theta_{3}^{*})) 
	\neq 0$ for
	$\rho \in \{\aleph_{2}, \psi_{\rm{diag}}',
	\phi_{\rm{row}}'\}$, and
	$\PlGH(\mathcal{K}_{N}
	(\theta_{1}^{*}, \theta_{2}^{*}, \theta_{3}^{*})) 
	\leq \PlGH(N)$.
	
	We now let $N' =  \mathbf{S_{2}}
	(\mathcal{K}_{N}(\theta_{1}^{*}, \theta_{2}^{*}, \theta_{3}^{*}))$.
	We see that $\rho(N') \neq 0$ for
	$\rho \in \{\aleph, \psi_{\rm{diag}}, \phi_{\rm{row}}\}$.
	Moreover, from our choice of $\overline{\theta}$
	(\cref{eqn: N'pd}),
	we ensured that
	$(N')_{ii} > \sum_{j \neq i \in [q]}|(N')_{ij}|$
	for all $i \in [q]$.
	So, by the Gershgorin Circle Theorem,
	$N' \in \Sym{q}{pd}(\mathbb{R}_{> 0})$.
	We also see that (\cref{eqn: N'diagOrder,eqn: N'diagDominance})
	$$(N')_{11} > (N')_{22} > \cdots > (N')_{qq} > (N')_{ij}
	\text{ for all } i \neq j \in [q].$$
	So, $N'$ is diagonal dominant, strongly diagonal distinct,
	and intra-row-distinct.
	Now, let us assume that there exists some
	permutation $\sigma \in S_{q}$, s.t., 
	$(N')^{\sigma}$ satisfies \ref{item: diagOrder}.
	But we note that $(N')_{11} > \dots > (N')_{qq}$ without the permutation $\sigma$.
	Therefore, for $(N')^{\sigma}$ to satisfy
	\ref{item: diagOrder},
	this permutation $\sigma$ must be the
	trivial identity permutation.
	But then, $\aleph(N') \neq 0$.
	So, $N' \in \Sym{q}{pd}(\mathbb{R}_{> 0})$ is a
	diagonal dominant, strongly diagonal distinct,
	intra-row-distinct matrix that either fails to satisfy
	\ref{item: diagOrder},
	or fails to satisfy \ref{item: LNAleph}.
\end{proof}

To finish the proof
of \cref{lemma: diagDistinctHardness},
we now prove \cref{claim: AlphaGThetaNotConstant}.

\AlphaGThetaNotConstant*
\begin{proof}
	Let $N' = \mathcal{L}_{N}(\alpha)$.
	For convenience, let us rename
	$N'_{ii} = x_{i}$.
	Since $N$ satisfies \ref{item: LNalsoBad},
	we know that $x_{1} > x_{2} > \dots > x_{q}$.
	Since $N$ satisfies \ref{item: LNRow1},
	and \ref{item: LNRowq}, we see that
	$(N')_{i1} = x_{i}$ for all $i \in [q]$,
	and $(N')_{iq} = x_{q}$ for all $i \in [q]$.
	Since $N$ satisfies \ref{item: LNsecondLastMin},
	and \ref{item: LNsecondLastMax},
	we also see that $|\{i \in [q]: (N')_{(q-1)i} = x_{q}\}|
	= \nicefrac{q}{2}$, and
	$|\{i \in [q]: (N')_{(q-1)i} = x_{q-1}\}|
	= \nicefrac{q}{2}$.
	So,
	$$(\mathcal{T}_{N'}(\theta)^{2})_{11} = 
	\sum_{i \in [q]} (x_{i})^{2\theta}, \quad
	(\mathcal{T}_{N'}(\theta)^{2})_{qq} = 
	q(x_{q})^{2\theta}, \quad 
	(\mathcal{T}_{N'}(\theta)^{2})_{(q-1)(q-1)} = 
	\frac{q}{2}(x_{q})^{2\theta} + \frac{q}{2}(x_{q-1})^{2\theta},$$
	$$(\mathcal{T}_{N'}(\theta)^{2})_{1q} = 
	(x_{q})^{\theta}\sum_{i \in [q]} (x_{i})^{\theta}, \quad
	(\mathcal{T}_{N'}(\theta)^{2})_{(q-1)q} = 
	\frac{q}{2}(x_{q})^{2\theta} + \frac{q}{2}(x_{q}x_{q-1})^{\theta}.$$
	For convenience, we let
	$$S_{1}(\theta) = \sum_{i \in [q]}(x_{i})^{\theta}, \quad
	S_{2}(\theta) = \sum_{i \in [q]} (x_{i})^{2\theta}.$$
	So, we see that
	\begin{align*}
		\aleph(\mathcal{T}_{N'}(\theta)^{2})
		&= \Big(\ln\left((x_{q})^{\theta}S_{1}(\theta)\right)
		- \ln\left(q(x_{q})^{2\theta}\right) \Big)
		\Big(\ln\left(\frac{q}{2}(x_{q})^{2\theta} 
		+ \frac{q}{2}(x_{q-1})^{2\theta}\right) 
		- \ln\left(q(x_{q})^{2\theta}\right)\Big)\\
		& \qquad - 
		\Big(\ln\left(S_{2}(\theta)\right) 
		- \ln\left(q(x_{q})^{2\theta}\right)\Big)
		\Big(\ln\left(\frac{q}{2}(x_{q})^{2\theta} 
		+ \frac{q}{2}(x_{q}x_{q-1})^{\theta}\right) 
		- \ln\left(q(x_{q})^{2\theta}\right) \Big)\\
		&= \left(\ln\left(\sum_{i \in [q]}
		(\nicefrac{x_{i}}{x_{q}})^{\theta}
		\right) -\ln(q)\right)
		\left(\ln\left(1 + (\nicefrac{x_{q-1}}{x_{q}})^{2\theta}\right)
		- \ln(2)\right)\\
		& \qquad - 
		\left(\ln\left(\sum_{i \in [q]}
		(\nicefrac{x_{i}}{x_{q}})^{2\theta}
		\right) -\ln(q)\right)
		\left(\ln\left(1 + (\nicefrac{x_{q-1}}{x_{q}})^{\theta}\right)
		- \ln(2)\right).
	\end{align*}
	
	Now, let us assume that
	$\aleph(\mathbf{S_{2}}(\mathcal{K}_{N}(1, \alpha, \theta))) = 0$
	for  all $\theta \in \mathbb{R}$.
	This implies that 
	$\aleph(\mathcal{T}_{N'}(\theta)^{2}) = 0$
	for all $\theta \in \mathbb{R}$.
	This means that $L_{1}(\theta)D_{2}(\theta) 
	= L_{2}(\theta)D_{1}(\theta)$ for all $\theta \in \mathbb{R}$,
	where $L_{1}(\theta) = \ln(\sum_{i}
	(\nicefrac{x_{i}}{x_{q}})^{\theta}) - \ln(q)$,
	$L_{2}(\theta) = \ln(\sum_{i}
	(\nicefrac{x_{i}}{x_{q}})^{2\theta}) - \ln(q)$,
	$D_{1}(\theta) = \ln(1 + (\nicefrac{x_{q-1}}{x_{q}})^{\theta})
	- \ln(2)$, and
	$D_{2}(\theta) = \ln(1 + (\nicefrac{x_{q-1}}{x_{q}})^{2\theta})
	- \ln(2)$.
	
	We will now consider the asymptotic behavior of
	$L_{1}(\theta), L_{2}(\theta),  D_{1}(\theta), D_{2}(\theta)$
	as $\theta \rightarrow - \infty$.    
	We let $\epsilon(\theta) = (\nicefrac{x_{q-1}}{x_{q}})^{\theta}$.
	Then, the Taylor series expansions of $D_{1}(\theta)$
	and $D_{2}(\theta)$
	tell us that
	$$D_{1}(\theta) = \epsilon(\theta) + o(\epsilon(\theta))
	- \ln(2), \qquad D_{2}(\theta) =
	\epsilon(\theta)^{2} + o(\epsilon(\theta)^{2}) - \ln(2).$$
	Since $x_{1} > \dots > x_{q-1} > x_{q}$,
	we see that $(\nicefrac{x_{i}}{x_{q}})^{\theta} =
	o(\epsilon(\theta))$ for all $i \in [q-2]$.
	So, the Taylor series expansions of $L_{1}(\theta)$ and
	$L_{2}(\theta)$ tell us that
	$$L_{1}(\theta) = \epsilon(\theta) + o(\epsilon(\theta))
	- \ln(q), \qquad
	L_{2}(\theta) = \epsilon(\theta)^{2} + o(\epsilon(\theta)^{2})
	- \ln(q).$$
	So, we find that
	$$L_{1}(\theta)D_{2}(\theta) = \ln(2)\ln(q) - \ln(2)\epsilon(\theta) + o(\epsilon(\theta)), \qquad
	L_{2}(\theta)D_{1}(\theta) = \ln(2)\ln(q) - \ln(q)\epsilon(\theta)
	+ o(\epsilon(\theta)).$$
	But since $q \geq 3$, we know that $\ln(2) \neq \ln(q)$.
	This proves the claim. 
\end{proof}

With that, the proof of \cref{theorem: diagonalDistinctHardness}
is complete.

%% file: A5.Diagonal_Distinct_Dichotomy.tex
\section{Dichotomy for diagonal-distinct matrices}\label{appendix: diagonalDistinctDichotomy}

\subsection{Hardness of full rank matrices}

We shall first prove
\cref{lemma: fullRankHardness}.

\fullRankHardness*
\begin{proof}
	Since $M$ is diagonal distinct, we may
	assume $M_{11} > M_{22} > 
	\dots > M_{qq}$.
	So, we see that
	$$\lim_{\theta \rightarrow \infty}
	\left(\frac{(\mathcal{L}_{M}(\theta)^{2})_{ii}}{(M_{11}M_{ii})^{2 \theta}}\right) = M_{i1}^2, ~~
	\text{for all } i \in [q].$$
	This implies that
	$$\lim_{\theta \rightarrow \infty}
	\frac{(\mathcal{L}_{M}(\theta)^{2})_{jj}}
	{(\mathcal{L}_{M}(\theta)^{2})_{ii}} = 0,
	\quad \text{for all } i < j \in [q].$$
	Since $\det(\mathcal{L}_{M}(0)) = \det(M) \neq 0$,
	\cref{lemma: analyticGadgets}
	lets us pick a large 
	$\theta^{*}$ such that $\mathcal{L}_{M}(\theta^{*})$ has full rank,
	$N = \mathbf{S_{2}}(\mathcal{L}_{M}(\theta^{*}))$ is positive definite
	as a square of a full ranked symmetric matrix, thus
	$N  \in 
	\Sym{q}{pd}(\mathbb{R}_{> 0})$, 
	and is diagonal distinct.
	From \cref{theorem: diagonalDistinctHardness}, 
	$\PlGH(N) \leq \PlGH(M)$ is $\#$P-hard.
\end{proof}

\subsection{Dichotomy for positive valued matrices}

Let us first briefly discuss
the gadget used in
~\cite{govorov2020dichotomy}.
Given a planar multi-graph
$G = (V, E)$ with a planar embedding,
the graph $\mathbf{R_{1, 1}}G$ is constructed as follows:

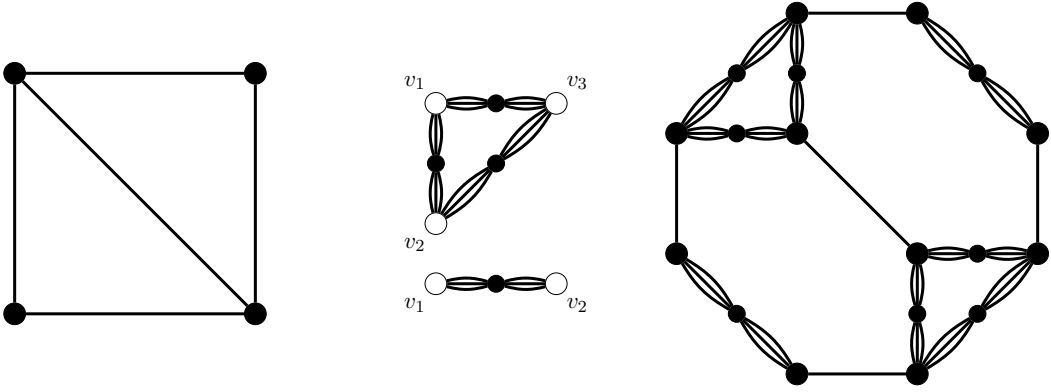
\begin{figure}
	\centering
	\scalebox{0.8}{\input{images/ring}}
	\caption{A graph $G$, the ring gadget $\mathbf{R_{2, 3}}$
		on vertices of degree $3$ and $2$, and $\mathbf{R_{2, 3}}G$.}
	\label{fig:ring}
\end{figure}

We first label the edges incident to each vertex
in a clockwise manner.  
For each vertex $v \in V$,
we arbitrarily pick one incident edge and label it $1$.  
Moving clockwise around $v$,
we label the remaining edges $2, 3, \dots, \deg(v)$.  
This defines a labeling function
$$\ell_v: \{ (u,v) \in E \} \rightarrow [\deg(v)],$$
which we apply to all vertices $v \in V$.

We then construct $\mathbf{R_{1, 1}}G$ by replacing each vertex
$v \in V$ with a simple cycle of length $\deg(v)$.  
The vertices of this cycle are denoted
$(v_1, v_2, \dots, v_{\deg(v)})$,
ordered clockwise according to $\ell_v$.
Intuitively, this cycle forms a “ring” around $v$,
with each $v_i$ corresponding to one incident edge of $v$.

For each edge $(u, v) \in E$, let $\ell_u((u,v)) = i$
and $\ell_v((u,v)) = j$.  
We then connect $u_i$ and $v_j$ with an edge
that replaces the original edge  $(u,v)$. By making the cycles sufficiently small,
planarity is preserved globally.
Therefore, $\mathbf{R_{1, 1}}G$ is a planar multi-graph.

Now, we can construct $\mathbf{R_{m, n}}G$ for $m, n \geq 1$
as follows:
First, we construct $\mathbf{R_{1, 1}}G$.
Then, we replace every edge along the simple cycles
(i.e., excluding the edges already present in $G$),
with a path of length $m$, and then
replace each of these $m$ edges along each of 
these paths with $n$ parallel edges.
Clearly, we see that $\mathbf{R_{m, n}}G$ is 
also a planar multi-graph.

This implies that all proofs in
~\cite{govorov2020dichotomy} that use
the $\mathbf{R_{m, n}}G$ gadget construction
to prove $\GH(N) \leq \GH(M)$ can be
strengthened to prove that
$\PlGH(N) \leq \PlGH(M)$ with
no additional effort.
We will now discuss some of the results from
~\cite{govorov2020dichotomy}.

\begin{notation}\label{notation: RMsimilarity}
	Let $M \in \Sym{q}{}(\mathbb{R}_{> 0})$.
	For $i, j \in [q]$, we let $i \sim j$ iff
	there exists a constant $c$ such that
	$M_{ik} = c \cdot M_{jk}$ for all $k \in [q]$.
	We let $\mathcal{I}(M)$ be the equivalence
	classes on $[q]$ formed by this relation.
\end{notation}

We note that for any
$M \in \Sym{q}{}(\mathbb{R}_{> 0})$,
$\text{rank}(M) \leq |\mathcal{I}(M)| \leq q$.
We can now define $\iota: \mathcal{I}(M)
\rightarrow [q]$ as
$\iota(I) = i$, for an arbitrary $i \in I$.
Note that our choice of $\iota$ may be arbitrary,
but once this $\iota$ is fixed, we can now
define the following:

\begin{definition}\label{definition: mathcalR}
	Let $M \in \Sym{q}{}(\mathbb{R}_{> 0})$,
	with $\mathcal{I}(M)$ and $\iota: \mathcal{I}(M)
	\rightarrow [q]$ defined as above.
	We now define $w_{\theta}: \mathcal{I}(M)
	\rightarrow \mathbb{R}$ as:
	$$w_{\theta}(I) = \sum_{i \in I}
	\left(\frac{M_{ii}}{M_{\iota(I)\iota(I)}}\right)^{\theta}.$$
	Then, we define $\mathcal{R}_{M}(\theta) 
	\in \Sym{|\mathcal{I}(M)|}{}(\mathbb{R}_{> 0})$,
	such that
	$$\mathcal{R}_{M}(\theta)_{IJ}
	= M_{\iota(I)\iota(J)}
	\left(\frac{w_{(2\theta+1)}(I)}{w_{2\theta}(I)}\right)
	\left(\frac{w_{(2\theta+1)}(J)}{w_{2\theta}(J)}\right).$$
\end{definition}

\begin{remark*}
	Note that $\mathcal{R}_{M}(\theta)$ is
	dependent on the choice of $\iota$, but
	for any fixed $\iota$, $\mathcal{R}_{M}(\theta)$
	is well-defined.
\end{remark*}

The following theorem is the technical statement
proved in Theorem 3.1 of
~\cite{govorov2020dichotomy}.

\begin{theorem}[\cite{govorov2020dichotomy}]
	\label{theorem: twinReduction}
	Let $M \in \Sym{q}{}(\mathbb{R}_{> 0})$.
	Then, there exists some $n^{*} \in \mathbb{Z}_{> 0}$,
	such that for all integers $n > n^{*}$,
	$\mathcal{R}_{M}(n) \in \Sym{|\mathcal{I}(M)|}{F}(\mathbb{R}_{> 0})$,
	and $\PlGH(\mathcal{R}_{M}(n)) \leq \PlGH(M)$.
\end{theorem}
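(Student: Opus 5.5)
The plan is to follow the proof of Theorem~3.1 of~\cite{govorov2020dichotomy}. The only new point is that every gadget it uses is a ring construction $\mathbf{R_{m,n}}G$, possibly composed with thickening and stretching, so it preserves planarity. Fix a planar embedding of the input $G$ and build $\mathbf{R_{m,n}}G$. In this graph each vertex $v$ becomes a cycle $v_1,\dots,v_{\deg(v)}$. Each cycle edge carries the signature $\mathbf{S_m}(\mathbf{T_n}(M)) = (\mathbf{T_n}(M))^m$, and each original edge $(u,v)$ carries $M$ between the corresponding ring vertices $u_i$ and $v_j$.

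The first step is to interpolate in $m$. Write $\mathbf{T_n}(M)=HDH^{\tt T}$. As in the proof of \cref{lemma: stretchingLemma}, $Z_M(\mathbf{R_{m,n}}G)$ is a sum over products of eigenvalue powers raised to the $m$-th power. A Vandermonde system in $m$ then lets us replace $D^m$ by $D^{\theta}$ for any real $\theta$; we do not need $\mathbf{T_n}(M)$ to be positive definite, since we can pass to even powers. Taking the spectral limit $\theta\to 0$ is too crude, because it returns $G$ itself. Instead we work with the combinatorial expansion of the ring.

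Along a ring, consecutive vertices carrying labels $a$ and $b$ contribute $(M_{ab})^{n}$ at the $\theta=1$ level. As $n$ grows, a monochromatic ring in class $I$ is weighted by $\sum_{i\in I}$ of powers of $M_{ii}$. The twin relation of \cref{notation: RMsimilarity} enters at this point. If $i\sim j$ with $M_{i\cdot}=cM_{j\cdot}$, then $i$ and $j$ act identically on the original edges up to a scalar. Rings that mix labels inside one class therefore factor through the class, while rings mixing different classes are exponentially suppressed in $n$. Collecting these contributions produces, for each class, the normalized weights $w_{2n}(I)$ and $w_{2n+1}(I)$ with $\iota(I)$ as reference. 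A second interpolation in the thickening parameter, with the same generating-set bookkeeping as in \cref{lemma: thickeningGeneralLemma}, then lets us isolate the term in which the whole ring is in a single class, and absorb the per-edge factors $w_{2n+1}(I)/w_{2n}(I)$ into the two endpoints of each original edge. The result is exactly $Z_{\mathcal{R}_M(n)}(G)$, which gives $\PlGH(\mathcal{R}_M(n))\le\PlGH(M)$.

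The last step is full rank. The matrix $\mathcal{R}_M(n)=W_n\,M_{\iota}\,W_n$ is positive, where $W_n$ is a positive diagonal matrix and $M_{\iota}$ is the principal submatrix of $M$ on the representatives $\iota(I)$. So it suffices that $M_{\iota}$, restricted to the class representatives, is nonsingular after the $n$-dependent reweighting. The rows of $M_{\iota}$ are pairwise non-proportional, and positivity of all entries persists. I would show $\det \mathcal{R}_M(n)\neq 0$ for all large integers $n$ by writing the determinant as a generalized exponential polynomial in $n$, with bases the entries and diagonal ratios. I would then argue that its leading term does not cancel, using pairwise non-proportionality of rows as in the dominance arguments of \cref{lemma: thickeningSingleVariable}.

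I expect this last step to be the main obstacle. Non-proportional rows do not by themselves force full rank of $M_{\iota}$, so the large-$n$ reweighting must be what breaks the degeneracy. If the direct determinant argument fails, I would fall back on the corresponding rank argument in~\cite{govorov2020dichotomy}, which depends only on entry values and carries over verbatim to this planar setting.
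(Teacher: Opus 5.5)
Your top-level plan, which imports Theorem~3.1 of \cite{govorov2020dichotomy} and observes that the ring gadget $\mathbf{R_{m,n}}G$ (composed with thickening and stretching) stays planar, is exactly what the paper does; it gives no argument beyond that. The reconstruction you add on top of it has two genuine problems.

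First, the obstacle you flag in the full-rank step is not a technical one. Since $\mathcal{R}_M(n)=W_nM_\iota W_n$ with $W_n$ an invertible diagonal matrix, $\det\mathcal{R}_M(n)=\det(W_n)^2\det(M_\iota)$. A diagonal congruence never changes rank, so there is no exponential polynomial in $n$ to analyze, and the ``large-$n$ reweighting'' cannot break any degeneracy. As you suspected, non-proportional rows do not force $\det(M_\iota)\neq 0$. Take the twin-free matrix $M=\left(\begin{smallmatrix}1&2&3\\2&3&4\\3&4&5\end{smallmatrix}\right)$: every class is a singleton, $\mathcal{R}_M(n)=M$ for all $n$, and $\det M=0$. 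So the full-rank clause fails in general for $\mathcal{R}_M$ as defined here. No argument can establish it, neither yours nor a verbatim import from \cite{govorov2020dichotomy}. A correct version needs an extra hypothesis or an extra step, for instance passing first to a nonsingular matrix.

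Second, the reduction sketch does not work as stated. Mixed-class rings are not exponentially suppressed in $n$. A ring's weight is a product of $(M_{ab})^n$ over consecutive labels, so if an entry between classes exceeds the entries within classes (e.g.\ $M_{11}=M_{22}=1$, $M_{12}=10$), alternating rings dominate. Interpolation in $n$ also cannot isolate the single-class rings, because different label patterns can share a base. Moreover, for any path length $m\ge 1$ the internal path vertices contribute class power sums as factors, never as divisors, so the denominators in $\mathcal{R}_M(n)$ cannot come out of that expansion.

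The step you discarded is the one that works. When $M$ has twins, $\mathbf{T_n}(M)$ is singular, and the Vandermonde system in $m$ recovers only the coefficients of products of nonzero eigenvalues. Evaluating at exponent $0$ therefore replaces each ring edge by the orthogonal projection $\Pi_n$ onto the column space of $\mathbf{T_n}(M)$, not by $I$, so it does not return $G$. Write $M_{a\cdot}=c_aM_{\iota(I)\cdot}$ for $a\in I$, and assume $\mathbf{T_n}(M_\iota)$ is nonsingular (an extra condition you would have to secure). Then $(\Pi_n)_{ab}=c_a^nc_b^n/\sum_{k\in I}c_k^{2n}$ when $a,b\in I$, and $(\Pi_n)_{ab}=0$ when $a,b$ lie in different classes. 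This forces each ring into a single class and supplies the denominators. Summing over the port labels gives the endpoint factor $\sum_{a\in I}c_a^{2n+1}/\sum_{a\in I}c_a^{2n}$. Because $M_{aa}/M_{\iota(I)\iota(I)}=c_a^2$, this equals $w_{n+1/2}(I)/w_n(I)$ in the paper's notation, not $w_{2n+1}(I)/w_{2n}(I)$. So your claim that the result is ``exactly'' $Z_{\mathcal{R}_M(n)}(G)$ also needs checking against the exponent convention in $w_\theta$.
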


\begin{remark*}
	We note once again that this choice of $n^{*}$
	is dependent on $\iota$, but regardless
	of our choice of $\iota$, 
	\cref{theorem: twinReduction} proves that
	some $n^{*} \in \mathbb{Z}_{> 0}$ exists.
\end{remark*}

We can now prove the following dichotomy theorem.

\positiveDichotomy*
\begin{proof}
	First, we assume $M = \mathbf{x}\mathbf{x}^{\tt{T}}$
	for some $\mathbf{x} = (x_{1}, \dots, x_{q})$.
	Then, 
	given any
	graph $G = (V, E)$,
	$$Z_{M}(G) = \sum_{\sigma: V \rightarrow [q]}
	\prod_{(u, v) \in E}(x_{\sigma(u)}x_{\sigma(v)})
	= \prod_{v \in V}\left(\sum_{i \in [q]}x_{i}^{\deg(v)}\right)$$
	can be computed in polynomial time.
	So,
	$\PlGH(M)$ (as well as $\GH(M)$) is polynomial time
	tractable.
	
	Now, let us assume that $M$ has rank $\geq 2$.
	Then, $|\mathcal{I}(M)| \geq 2$.
	Since the choice of $\iota$ may be arbitrary,
	we may choose $\iota$, such that
	$\iota(I) = \arg\max_{i \in I}M_{ii}$.
	We note that since $M$ is diagonal distinct,
	this $\iota$ is uniquely defined.
	Now, for all $I \in \mathcal{I}(M)$,
	$$\lim_{\theta \rightarrow \infty}w_{\theta}(I) = 
	\lim_{\theta \rightarrow \infty}\left( 1 + \sum_{
		i \in I: i \neq \iota(I)}(\nicefrac{M_{ii}}
	{M_{\iota(I)\iota(I)}})^{\theta}\right) = 1.$$
	This implies that
	for all $I \neq J \in \mathcal{I}(M)$,
	$$\lim_{\theta \rightarrow \infty}
	\frac{\mathcal{R}_{M}(\theta)_{II}}
	{\mathcal{R}_{M}(\theta)_{JJ}} = 
	\lim_{\theta \rightarrow \infty} 
	\frac{M_{\iota(I)\iota(I)}}
	{M_{\iota(J)\iota(J)}}
	\left(\frac{w_{(2\theta + 1)}(I)}{w_{2\theta}(I)}\right)^{2}
	\left(\frac{w_{2\theta}(J)}{w_{(2\theta+1)}(J)}\right)^{2}
	= \frac{M_{\iota(I)\iota(I)}}
	{M_{\iota(J)\iota(J)}} \neq 1.$$
	So, we can find some integer $n > n^{*}$
	such that
	$\mathcal{R}_{M}(n) \in \Sym{|\mathcal{I}(M)|}{F}
	(\mathbb{R}_{> 0})$
	is diagonal distinct.
	So, \cref{lemma: fullRankHardness} and
	\cref{theorem: twinReduction} imply that
	$\PlGH(\mathcal{R}_{M}(n)) \leq \PlGH(M)$ is $\#$P-hard.
\end{proof}

\subsection{Dichotomy for non-negative valued matrices}

We will now end this section with the proof
of \cref{theorem: nonNegativeDichotomy}.
We note that the proof of a lemma in~\cite{cai2013graph}
also works in the planar setting, and can be
restated as follows.

\begin{theorem}[\cite{cai2013graph} 
	(Lemma 4.6, p.~940,
	its proof uses 
	Lemma 4.1, p.~937, called the  first pinning lemma)]
	\label{theorem: domainSeparable}
	Let $M \cong M_{1} \oplus \dots \oplus M_{r} \in
	\Sym{q}{}(\mathbb{R})$.
	Then, $\PlGH(M)$ is polynomial time tractable if
	all $\PlGH(M_{i})$ are polynomial time tractable, and
	is $\#$P-hard 
	if $\PlGH(M_{i})$ is $\#$P-hard  for some $i \in [r]$.
\end{theorem}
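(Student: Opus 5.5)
The plan is to follow the proof of Lemma~4.6 of~\cite{cai2013graph} and check that every graph construction it uses is planar. After renaming, let $[q] = B_1 \sqcup \dots \sqcup B_r$, where $M_{ij} = 0$ whenever $i, j$ lie in different blocks and $M|_{B_k} = M_k$. The basic observation is that for a \emph{connected} input $G = (V, E)$, every $\sigma$ with nonzero weight maps all of $V$ into a single block. This gives
$$Z_M(G) = \sum_{k \in [r]} Z_{M_k}(G).$$
For a general $G$ with connected components $G_1, \dots, G_s$, we have $Z_M(G) = \prod_t Z_M(G_t)$. (Isolated vertices are components that contribute the factor $q$.)

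\textbf{Tractability.} Components of a planar graph are planar and can be found in linear time. For each component we call the polynomial-time algorithms for the $\PlGH(M_k)$, sum the results over $k$, and multiply over components. This direction is routine.

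\textbf{Hardness.} Suppose $\PlGH(M_1)$ is $\#$P-hard. It suffices to compute $Z_{M_1}(G)$ for connected planar $G$, since disconnected inputs factor over components. Fix a vertex $u$ of $G$ on the outer face and set $v_i(G) = Z_M(G, u \mapsto i)$, the partition function with $u$ pinned to $i$. Then $Z_{M_1}(G) = \sum_{i \in B_1} v_i(G)$, so I would prove a planar version of the first pinning lemma. Call $i \equiv j$ if $v_i(H) = v_j(H)$ for every connected planar graph $H$ rooted on its outer face, and let the classes be $C_1, \dots, C_m$. The key steps, in order:
\begin{enumerate}
\item The one-point union of rooted planar graphs, glued at their roots, is again planar with the root on the outer face, and it multiplies the vectors $(v_i)_i$ pointwise. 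The family of achievable vectors is therefore closed under pointwise products.
\item Since the classes are finite in number, finitely many rooted planar graphs $H_1, \dots, H_\ell$ suffice to separate every pair of classes.
\item Glue $G$ at $u$ to products $H_1^{a_1} \cdots H_\ell^{a_\ell}$, with $a_t$ ranging over polynomially many exponent choices. Querying the oracle gives $Z_M$ of each glued graph, which equals $\sum_c \bigl(\prod_t v_c(H_t)^{a_t}\bigr) S_c(G)$, where $S_c(G) = \sum_{i \in C_c} v_i(G)$. Because the classes are separated, this is a full-rank generalized Vandermonde system, and solving it recovers each $S_c(G)$.
\end{enumerate}

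A class $C_c$ may meet several blocks; this is the expected subtlety, since identical blocks cannot be separated. I handle it directly. Inside a class all $v_i(G)$ coincide, so
$$Z_{M_1}(G) = \sum_{c} \frac{|C_c \cap B_1|}{|C_c|}\, S_c(G),$$
and the constants $|C_c \cap B_1|/|C_c|$ depend only on $M$.

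The main obstacle is step 3: making sure the interpolation is genuinely Vandermonde despite zero entries, and keeping the number of queries polynomial. Some $v_c(H_t)$ may vanish, and distinct classes must have distinct monomial values. I would first replace the $H_t$ by a single combined gadget $H = \prod_t H_t^{b_t}$, with the exponents $b_t$ chosen so that the values $v_c(H)$ are pairwise distinct; this is possible because the classes are separated and there are only finitely many of them. I would then use the powers $H^a$ for $a = 0, \dots, m-1$, which yields an ordinary Vandermonde system. Any class on which $v_c(H) = 0$ still appears exactly once, at the node $0$, and this remains valid because of the $a = 0$ term.
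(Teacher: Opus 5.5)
Your route is the one the paper relies on. The paper gives no new argument: it only asserts that the proof of Lemma~4.6 of \cite{cai2013graph} carries over to the planar setting. That proof splits connected inputs as $Z_M(G)=\sum_k Z_{M_k}(G)$ and uses the first pinning lemma, and its only construction, gluing rooted graphs at a vertex, preserves planarity. The following parts of your proposal are correct: the tractability direction, the reduction to connected $G$, the weighting $Z_{M_1}(G)=\sum_c \frac{|C_c\cap B_1|}{|C_c|}S_c(G)$, and steps~1--2.

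The gap is in how you resolve the ``main obstacle.'' You claim that exponents $b_t$ can be chosen so that a single glued gadget $H=\prod_t H_t^{b_t}$ takes pairwise distinct values on all classes. Pairwise separability does not imply this, and the claim fails for matrices covered by the theorem. Take $M=[1]\oplus\left(\begin{smallmatrix}0&1\\1&0\end{smallmatrix}\right)\oplus[0]\oplus\left(\begin{smallmatrix}2&1\\1&1\end{smallmatrix}\right)$; its last block is $\#$P-hard by \cref{theorem: domain2Hardness}. For every connected rooted $H$:
\begin{itemize}
\item pinning the root to vertex $1$ gives $1$;
\item pinning it to $2$ (or $3$) gives $1$ if $H$ is bipartite and $0$ otherwise;
\item pinning it to $4$ gives $1$ if $H$ has no edges and $0$ otherwise.
\end{itemize}
The classes $\{1\}$, $\{2,3\}$, $\{4\}$ are pairwise separated, by a triangle and by a single edge. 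Yet no rooted graph gives them three distinct values, because the values always lie in $\{0,1\}$ and the first is always $1$. So the single Vandermonde system in powers of $H$ cannot be set up for this $M$.

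The repair is to drop the single gadget. Your step~3 as first stated already works; it only needs justification.
\begin{itemize}
\item Index queries by $a\in\{0,\dots,m-1\}^{\ell}$. The coefficient column of class $c$ is then $\bigotimes_{t=1}^{\ell}(1,y_{c,t},\dots,y_{c,t}^{m-1})$, with $y_{c,t}=v_c(H_t)$ and $0^0=1$ (exponent $0$ means $H_t$ is not attached).
\item For each $t$, the vectors $(1,y,\dots,y^{m-1})$ over the at most $m$ distinct values $y$ are linearly independent. Hence their tensor products over distinct tuples are also independent.
\item Separated classes have distinct tuples $(y_{c,1},\dots,y_{c,\ell})$, so the system has full column rank $m$.
\item A fixed invertible $m\times m$ subsystem recovers every $S_c(G)$ from a constant number of oracle calls. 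The count is constant because $m\le q$, so there is no polynomial-count issue at all.
\end{itemize}
Alternatively, refine the partition one separating gadget at a time, as in the standard proof of the first pinning lemma. Use powers of a single $H_t$ to split the current part by the values of $H_t$, then recurse; the depth is at most $m$. With either fix, the rest of your proof goes through.
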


To prove \cref{theorem: nonNegativeDichotomy},
we shall also need the following lemma.

\begin{lemma}\label{lemma: BDiagonalDistinct}
	Let $M \in \Sym{q}{}(\mathbb{R}_{\geq 0})$ be
	diagonal distinct.
	Then, there exist $\theta_{1}, \theta_{2} \in \mathbb{R}$
	such that $\mathcal{B}_{\mathcal{T}_{M}(\theta_{1})}(\theta_{2})
	\in \Sym{q}{}(\mathbb{R}_{\geq 0})$ is also
	diagonal distinct.
\end{lemma}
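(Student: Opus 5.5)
The approach is to treat the diagonal entries of $\mathcal{B}_{\mathcal{T}_{M}(\theta_{1})}(\theta_{2})$ as generalized exponential sums in $(\theta_{1},\theta_{2})$, show that each pairwise difference is not identically zero, and then pick a generic point. Throughout I restrict to $\theta_{1},\theta_{2} > 0$, with the convention $0^{\theta} = 0$, so that zero entries of $M$ stay zero and all positive entries vary analytically.

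By \cref{definition: mathcalB} and \cref{definition: mathcalT},
$$\mathcal{B}_{\mathcal{T}_{M}(\theta_{1})}(\theta_{2})_{ii} = \sum_{\substack{x, y \in [q]:\\ M_{ix}M_{iy}M_{xy} > 0}} (M_{ix}M_{iy})^{\theta_{1}} (M_{xy})^{\theta_{1}\theta_{2}}.$$
For $i < j$, let $F_{ij}(\theta_{1},\theta_{2})$ be the difference of the $i$-th and $j$-th diagonal entries. Each $F_{ij}$ is real analytic on the open connected set $(0,\infty)^{2}$. The argument of \cref{lemma: analyticGadgets} applies verbatim to finitely many analytic functions of $(\theta_{1},\theta_{2})$: their zero sets have measure zero. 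So it suffices to show that no $F_{ij}$ vanishes identically, and then any point outside the union of the zero sets gives the required $(\theta_{1},\theta_{2})$. Nonnegativity of the resulting matrix is automatic.

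To show $F_{ij} \not\equiv 0$, I substitute $s = \theta_{1}$ and $t = \theta_{1}\theta_{2}$, which range independently over $(0,\infty)^{2}$. Then each diagonal entry becomes $\sum_{(p,v)} c_{i}(p,v)\, p^{s} v^{t}$, where $c_{i}(p,v)$ is the number of pairs $(x,y)$ with $M_{ix}M_{iy} = p > 0$ and $M_{xy} = v > 0$. Distinct pairs $(\ln p, \ln v)$ give linearly independent exponentials $e^{s\ln p + t \ln v}$. Hence $F_{ij} \equiv 0$ forces the multisets
$$S_{i} = \{\!\{(M_{ix}M_{iy}, M_{xy})\}\!\}, \qquad S_{j} = \{\!\{(M_{jx}M_{jy}, M_{xy})\}\!\},$$
taken over positive pairs, to coincide. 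If $M_{ii} = 0$ and $M_{jj} > 0$, then the triple $(i,i)$ already distinguishes the two sides, unless the pair $(M_{jj}^{2}, M_{jj})$ is matched by some other pair in $S_{i}$. That matching case is the situation that needs the argument below.

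The main obstacle is deriving a contradiction with $M_{ii} \neq M_{jj}$ from $S_{i} = S_{j}$, because $S_{i}$ records only products and entries, not the diagonal directly. My first attempt is to order the $i$'s by diagonal value and take $i$ with $M_{ii}$ maximal among the rows in question. I would then compare the largest element of $S_{i}$ under the weighting $p \cdot v^{K}$, for $K$ large and for $K$ small, against the diagonal term $(M_{ii}^{2}, M_{ii})$, hoping to isolate $M_{ii}$ as an extremal coordinate that must then appear in $S_{j}$ with the same pattern. If this extremal bookkeeping fails, the fallback is to precompose with the loop gadget, replacing $M$ by $\mathcal{L}_{M}(\theta_{0})$ restricted to its positive entries. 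For large $\theta_{0}$ this reweights the pairs $(x,y)$ by $(M_{xx}M_{yy})^{\theta_{0}}$, which makes the $(i,i)$ term dominate in a way controlled by the distinct diagonal. I would then absorb that extra parameter back into the statement's form via the analyticity argument, or argue that the identity of multisets must persist under this reweighting and so yields the contradiction.
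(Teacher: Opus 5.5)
Your reduction is correct and is essentially the paper's first step. With $s=\theta_1$ and $t=\theta_1\theta_2$ ranging independently, $F_{ij}\equiv 0$ forces the multisets $S_i$ and $S_j$ to coincide; the paper phrases this as matching the coefficient of $(M_{kk})^{\theta_1\theta_2}$ for each $k$. Your final generic choice of $(\theta_1,\theta_2)$ is exactly the paper's use of \cref{lemma: analyticGadgets}.

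The gap is that the proposal stops at the one step that carries the content: you never derive a contradiction from $S_i=S_j$. The extremal idea is not worked out, and there is no reason it should isolate $M_{ii}$. Under weights $p\,v^{K}$, the extremal elements of $S_i$ are dictated by the largest entries of $M$, which may be off-diagonal and unrelated to row $i$. The loop-gadget fallback does not prove this lemma at all, for two reasons. First, the statement is about the specific family $\mathcal{B}_{\mathcal{T}_M(\theta_1)}(\theta_2)$. A point of a three-parameter family where the diagonal is distinct says nothing about the slice $\theta_0=0$, since an analytic function can be nonzero yet vanish identically on that slice. Second, for $\theta_0>0$ the loop gadget kills any row with $M_{ii}=0$.

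The missing idea is a parity count inside a fixed second coordinate. Fix $k$ with $M_{kk}>0$ and restrict $S_i$ to pairs whose second coordinate is $M_{kk}$.
\begin{itemize}
\item Because $M$ is diagonal distinct, the only diagonal cell $(x,x)$ with $M_{xx}=M_{kk}$ is $(k,k)$. It contributes $M_{ik}^{2}$ once, when $M_{ik}>0$.
\item Every off-diagonal cell $(x,y)$ with $M_{xy}=M_{kk}$ is matched with $(y,x)$, which contributes the same first coordinate $M_{ix}M_{iy}$.
\item Hence in this sub-multiset every value occurs an even number of times, except $M_{ik}^{2}$, which occurs an odd number of times.
\item Comparing with the corresponding sub-multiset of $S_j$ gives $M_{ik}=M_{jk}$, including the case where both vanish.
\end{itemize}
If $M_{ii},M_{jj}>0$, taking $k=i$ and $k=j$ yields $M_{ii}=M_{ji}=M_{ij}=M_{jj}$, a contradiction.

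Now suppose, say, $M_{ii}=0$, which can happen for at most one index. Then parity only gives $M_{ik}=M_{jk}$ for $k\neq i$, in particular $M_{ij}=M_{jj}>0$. Pairs $(x,y)$ with $x,y\neq i$ contribute identically to $S_i$ and $S_j$. $S_i$ receives nothing from pairs with $x=i$ or $y=i$. $S_j$ receives at least the element coming from $(x,y)=(i,j)$. So $|S_j|>|S_i|$, a contradiction. You were right to flag this zero-diagonal case; it needs this separate count rather than the parity argument.
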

\begin{proof}
	We write $\Theta$ for $(\theta_{1}, \theta_{2})
	\in \mathbb{R}^{2}$.
	Let $\mathcal{K}_{M}(\Theta) 
	= \mathcal{B}_{\mathcal{T}_{M}(\theta_{1})}(\theta_{2})$.
	Note that for all $i, j \in [q]$,
	$$\mathcal{K}_{M}(\Theta)_{ij}
	= \sum_{x, y \in [q]}(M_{ix}M_{jy})^{\theta_{1}}
	(M_{xy})^{\theta_{1} \theta_{2}}.$$
	Let us fix some arbitrary $i \neq j \in [q]$.
	Assume that for all $\Theta \in \mathbb{R}^{2}$,
	$\mathcal{K}_{M}(\Theta)_{ii} = 
	\mathcal{K}_{M}(\Theta)_{jj}$.
	For all $k \in [q]$, this means that
	the coefficients of $(M_{kk})^{\theta_{1} \theta_{2}}$
	must be equal. So,
	$$\sum_{\substack{(a, b) \in [q]:\\
			M_{ab} = M_{kk}}}(M_{ia}M_{ib})^{\theta_{1}}
	= \sum_{\substack{(a, b) \in [q]:\\
			M_{ab} = M_{kk}}}(M_{ja}M_{jb})^{\theta_{1}}.$$
	But since $M$ is diagonal distinct, we see that
	$M_{ab} = M_{kk}$ implies that either
	$(a, b) = (k, k)$, or $a \neq b$.
	In other words, we have that
	$$(M_{ik})^{2\theta_{1}} + 
	\sum_{\substack{a\ <\ b\ \in\ [q]:\\
			M_{ab} = M_{kk}}}2(M_{ia}M_{ib})^{\theta_{1}}
	= (M_{jk})^{2\theta_{1}} + 
	\sum_{\substack{a\ <\ b\ \in\ [q]:\\
			M_{ab} = M_{kk}}}2(M_{ja}M_{jb})^{\theta_{1}}.$$
	
	The term $(M_{ik})^{2\theta_{1}}$, after possibly combining with
	other equal terms of the left side of the equation, has an odd coefficient.
	If this term does not coincide with  the term $(M_{jk})^{2\theta_{1}}$
	on the right side of the equation then this term would have an even 
	coefficient, which is a contradiction. Thus,
	since this is true for all $\theta_{1} \in \mathbb{R}$,
	it must be the case that
	$M_{ik} = M_{jk}$.
	But since that is true for all $k \in [q]$,
	we find that $M_{ii} = M_{ji} = M_{ij} = M_{jj}$,
	which contradicts our assumption
	that $M$ is diagonal distinct.
	Therefore, for all $i, j \in [q]$,
	there  exists some
	$\Theta_{ij} = (\theta_{ij, 1}, \theta_{ij, 2})
	\in \mathbb{R}^{2}$,
	such that
	$\mathcal{K}_{M}(\Theta_{ij})_{ii} \neq 
	\mathcal{K}_{M}(\Theta_{ij})_{jj}$.
	
	We may now define $\Sym{q}{}(\mathbb{R})$-polynomials
	$\zeta_{ij}$ for $i \neq j \in [q]$, such that
	$\zeta_{ij}: A \mapsto 
	(A_{ii} - A_{jj})$.
	Now, we can use \cref{lemma: analyticGadgets} to find
	$\Theta^{*} = (\theta_{1}^{*}, \theta_{2}^{*})
	\in \mathbb{R}^{2}$, such that
	$\zeta_{ij}(\mathcal{K}_{M}(\Theta^{*})) \neq 0$
	for all $i \neq j \in [q]$.
	This $\mathcal{K}_{M}(\Theta^{*})$ is the required
	diagonal distinct matrix.
\end{proof}

\nonNegativeDichotomy*
\begin{proof}
	The theorem is obvious if $q = 1$.
	So we assume $q \geq 2$.
	
	We write $M \cong M'$ if for some permutation $\sigma \in S_q$,
	we have $M^{(\sigma)} = M'$, where the matrix $M^{(\sigma)}$
	is obtained by permuting rows and columns of $M$ simultaneously according to
	$\sigma$.
	It is clear from
	\cref{theorem: domainSeparable} that
	if 
	$M \cong M_{1} \oplus \dots \oplus M_{r}$
	for rank $1$ or rank $0$ matrices $M_{i}$,
	then $\PlGH(M)$ is polynomial time tractable.
	Now, let us assume that if
	$M \cong M_{1} \oplus \dots \oplus M_{r}$, then at least
	one of $M_{i}$ has rank at least $2$.
	
	Suppose  $M$ has a row/column
	that is all-0.
	Since $M$ is diagonal distinct,
	it has at most one such row/column.
	We may permute the rows and columns
	of $M$ so that this zero row/column is
	the last row/column.
	In this case,
	$M = M' \oplus \mathbf{0}$ for some
	$M' \in \Sym{q-1}{}(\mathbb{R}_{\geq 0})$
	and $\mathbf{0} \in \Sym{1}{}(\mathbb{R})$ is the
	matrix with the single zero entry.
	Trivially, $\PlGH(M) \equiv \PlGH(M')$.
	The dichotomy statement of the theorem
	for $M$ and the corresponding statement
	for $M'$ are equivalent.
	So we may replace our matrix $M$ with $M'$,
	such that $M$ has no rows/columns with all-0 entries.
	
	We now define a graph (with possible loops) $\Gamma$ on $[q]$
	where an edge  $(i,j) \in [q] \times [q]$ exists
	iff $M_{i,j} \ne 0$. Note that $i=j$ is allowed, and it is a  loop if $M_{i,i} \ne 0$.
	Since $M$ is diagonal distinct,
	all vertices of $\Gamma$ have self-loops,
	except possibly one (and if so, we name it $q$).
	We  can decompose $M$ as a direct
	sum of blocks that correspond to the connected
	components of $\Gamma$.
	So, $M \cong M_{1} \oplus \dots \oplus M_{r}$
	for some $r \leq q$, where each
	$M_{i} \in \Sym{q_{i}}{}(\mathbb{R}_{\geq 0})$,
	for $\sum_{i}q_{i} = q$.
	
	We  claim that $i$ and $j$ belong to
	the same connected component of $\Gamma$ 
	if and only if $(M^q)_{ij} > 0$
	The reverse direction is obvious.
	We prove the forward direction.
	This is also obvious if
	every vertex has a self-loop.
	Suppose vertex $q$ has no self-loop.
	If $i \neq j$ or $i = j \neq q$, 
	then at least one of $i$ or $j$ has a self-loop,
	and so $(M^q)_{ij} > 0$. If $i = j = q$,
	since the $q$-th row is not
	all-0, there exists some $i \neq q$, such that 
	$M_{iq} > 0$. 
	Then, as $q \geq 2$, by using a loop at $i$
	if $q$ is odd, we find that $(M^q)_{qq} > 0$.
	Consequently, we find that
	$M^s \cong M_{1}^{s} \oplus \dots 
	\oplus M_{r}^{s}$, where
	$M_{i}^{s} \in \Sym{q_{i}}{}(\mathbb{R}_{> 0})$
	for all $s \geq q$.
	
	
	Recall that we had assumed that if
	$M \cong M_{1} \oplus \dots \oplus M_{r}$, then at least
	one of $M_{i}$ has rank at least $2$.
	Without loss of generality, we may
	assume that ${\rm rank}(M_{1}) \geq 2$. In particular, $q_1 \ge 2$.
	
	Let  $\Theta = (\theta_{1}, \theta_{2})$.  We will now consider
	$\mathcal{K}_{M}(\Theta) =
	\mathcal{B}_{\mathcal{T}_{M}(\theta_{1})}(\theta_{2})$.
	We note that
	$\mathcal{K}_{M}(\Theta) \cong
	\mathcal{K}_{M_{1}}(\Theta)
	\oplus \dots \oplus 
	\mathcal{K}_{M_{r}}(\Theta)$,
	for all $\Theta \in \mathbb{R}^{2}$.
	Note that the entries of 
	$\mathcal{K}_{M_{1}}(\Theta)$ 
	are continuous in $\theta_{1}$ and $\theta_{2}$,
	and the eigenvalues of a matrix are continuous
	as a function of the entries of the matrix.
	Moreover, since 
	$\mathcal{K}_{M_{1}}(1, 1) = M_{1}^{3}$, 
	we see that
	${\rm rank}(\mathcal{K}_{M_{1}}(1, 1)) = 
	{\rm rank}(M_{1}^{3}) = {\rm rank}(M_{1}) \geq 2$,
	(that the rank equality ${\rm rank}(M_{1}^{3}) = {\rm rank}(M_{1})$ holds 
	is easily seen by the spectral decomposition for real symmetric matrices).
	We see that there exists
	some open neighborhood
	$\mathcal{O} \subset \mathbb{R}^{2}$ around
	$(1, 1)$ such that ${\rm rank}(\mathcal{K}_{M_{1}}(\Theta))
	= {\rm rank}(\mathcal{K}_{M_{1}}(1, 1)) \geq 2$,
	for all $\Theta \in \mathcal{O}$.
	
	Given any $B \in \Sym{q}{}(\mathbb{R}_{\geq 0})$,
	we  define a $\Sym{q}{}(\mathbb{R})$-polynomial
	$$\zeta_{B}: A \mapsto \prod_{\substack{i, j \in [q]:\\
			B_{ij} > 0}}(A_{ij}).$$
	Since $\mathcal{K}_{M}(1, 1) = M^3$, we see that trivially,
	$\zeta_{M^{3}}(\mathcal{K}_{M}(1, 1)) \neq 0$.
	So, using \cref{lemma: BDiagonalDistinct},
	we can find some $\Theta^{*} = 
	(\theta_{1}^{*}, \theta_2^{*}) \in
	\mathcal{O}$, such that
	$N_{1} = \mathcal{K}_{M}(\Theta^{*}) \in
	\Sym{q}{}(\mathbb{R}_{\geq 0})$
	satisfies:
	$N_{1} \cong N_{1, 1} \oplus \dots \oplus N_{1,  r}$
	for $N_{1, i} \in \Sym{q_{i}}{}(\mathbb{R}_{\geq 0})$,
	$\phi_{\rm{diag}}(N_{1}) \neq 0$,
	$\zeta_{M^{3}}(N_{1}) \neq 0$,
	${\rm rank}(N_{1, 1}) \geq 2$,
	and $\PlGH(N_{1}) \leq \PlGH(M)$.
	From our construction of
	the graph $\Gamma$, we see that
	$(M^{3})_{ij} > 0$ if and only if
	$i$ and $j$ are connected in $\Gamma$
	by a path of length $3$.
	So, we see that since $\zeta_{M^{3}}(N_{1}) \neq 0$,
	$(N_{1})_{ij} > 0$ whenever
	$\Gamma$ has a path of length $3$ between $i$ and $j$.
	
	We now repeat this process all over again
	with $N_{1}$ in place of $M$, to find
	$N_{2} \in \Sym{q}{}(\mathbb{R}_{\geq 0})$,
	such that
	$N_{2} \cong N_{2, 1} \oplus \dots \oplus N_{2, r}$,
	$\phi_{\rm{diag}}(N_{2}) \neq 0$,
	$\zeta_{(N_{1})^{3}}(N_{2}) \neq 0$,
	${\rm rank}(N_{2, 1}) \geq 2$, and
	$\PlGH(N_{2}) \leq \PlGH(N_{1}) \leq \PlGH(M)$.
	We recall that $(N_{1})_{ij} > 0$ whenever
	$\Gamma$ has a path of length $3$
	between $i$ and $j$.
	So, $((N_{1})^{3})_{ij} > 0$ whenever
	$\Gamma$ has a path of length $9$ between 
	$i$ and $j$.
	So, $\zeta_{(N_{1})^{3}}(N_{2}) \neq 0$ 
	implies that $\zeta_{M^{9}}(N_{2}) \neq 0$.
	
	In effect, we find that if we repeat this process up to
	$t = \lceil \log_{3}(q) \rceil$  times,
	we construct
	$N_{t} \cong N_{t, 1} \oplus
	\dots \oplus N_{t, r}$, such that
	$\phi_{\rm{diag}}(N_{t}) \neq 0$,
	$\zeta_{M^{(3^t)}}(N_{t}) \neq 0$,
	${\rm rank}(N_{t, 1}) \geq 2$, and
	$\PlGH(N_{t}) \leq \PlGH(M)$.
	
	Since $(M_{1})^{(3^t)} \in \Sym{q_1}{}(\mathbb{R}_{> 0})$
	because of our choice of $t$,
	we see that $N_{t, 1} \in \Sym{q_1}{}(\mathbb{R}_{> 0})$ with
	${\rm rank}(N_{t, 1}) \geq 2$, and
	since $\phi_{\rm{diag}}(N_{t}) \neq 0$,
	$N_{t, 1}$ is diagonal distinct.
	So, from \cref{theorem: positiveDichotomy},
	we see that $\PlGH(N_{t, 1})$ is $\#$P-hard.
	So, from \cref{theorem: domainSeparable},
	we see that $\PlGH(N_{t}) \leq \PlGH(M)$ is
	also $\#$P-hard.
\end{proof}

%% file: images/ring.tex
\begin{tikzpicture}[line join=miter, draw opacity=1]

    \node[circle, draw=black, fill=black] (A) at (-2, -2){};
    \node[circle, draw=black, fill=black] (B) at (-2, 2){};
    \node[circle, draw=black, fill=black] (C) at (2, 2){};
    \node[circle, draw=black, fill=black] (D) at (2, -2){};
    
    \draw[line width=0.5mm, black] (A) -- (B);
    
    \draw[line width=0.5mm, black] (B) -- (C);
    
    \draw[line width=0.5mm, black] (C) -- (D);
    
    \draw[line width=0.5mm, black] (D) -- (A);
    
    \draw[line width=0.5mm, black] (B) -- (D);

\begin{scope}[xshift = 6cm]

    \node[circle, draw=black, fill=white] (A1) at (-1, -1.5){};
    \node[circle, draw=black, fill=white] (A2) at (1, -1.5){};

    \node[draw=none] at (-1.35, -1.85) {$v_{1}$};
    \node[draw=none] at (1.35, -1.85) {$v_{2}$};

    \node[circle, draw=black, fill=black, inner sep=0pt, minimum size=8] (A1A2) at (0, -1.5){};

    \foreach \angle in {15} {
      \draw[line width=0.5mm] (A1) to[bend left=\angle] (A1A2);
      \draw[line width=0.5mm] (A1) to[bend right=\angle] (A1A2);
    }
    \draw[line width=0.5mm] (A1) to (A1A2);
    \foreach \angle in {15} {
      \draw[line width=0.5mm] (A1A2) to[bend left=\angle] (A2);
      \draw[line width=0.5mm] (A1A2) to[bend right=\angle] (A2);
    }
    \draw[line width=0.5mm] (A1A2) to (A2);


    \node[circle, draw=black, fill=white] (D1) at (-1, 1.5){};
    \node[circle, draw=black, fill=white] (D2) at (-1, -0.5){};
    \node[circle, draw=black, fill=white] (D3) at (1, 1.5){};

    \node[draw=none] at (-1.35, 1.85) {$v_{1}$};
    \node[draw=none] at (-1.35, -0.85) {$v_{2}$};
    \node[draw=none] at (1.35, 1.85) {$v_{3}$};

    \node[circle, draw=black, fill=black, inner sep=0pt, minimum size=8] (D1D2) at (-1, 0.5){};
    \node[circle, draw=black, fill=black, inner sep=0pt, minimum size=8] (D2D3) at (0, 0.5){};
    \node[circle, draw=black, fill=black, inner sep=0pt, minimum size=8] (D3D1) at (0, 1.5){};

    \foreach \angle in {15} {
      \draw[line width=0.5mm] (D1) to[bend left=\angle] (D1D2);
      \draw[line width=0.5mm] (D1) to[bend right=\angle] (D1D2);
    }
    \draw[line width=0.5mm] (D1) to (D1D2);
    \foreach \angle in {15} {
      \draw[line width=0.5mm] (D1D2) to[bend left=\angle] (D2);
      \draw[line width=0.5mm] (D1D2) to[bend right=\angle] (D2);
    }
    \draw[line width=0.5mm] (D1D2) to (D2);

    \foreach \angle in {15} {
      \draw[line width=0.5mm] (D2) to[bend left=\angle] (D2D3);
      \draw[line width=0.5mm] (D2) to[bend right=\angle] (D2D3);
    }
    \draw[line width=0.5mm] (D2) to (D2D3);
    \foreach \angle in {15} {
      \draw[line width=0.5mm] (D2D3) to[bend left=\angle] (D3);
      \draw[line width=0.5mm] (D2D3) to[bend right=\angle] (D3);
    }
    \draw[line width=0.5mm] (D2D3) to (D3);

    \foreach \angle in {15} {
      \draw[line width=0.5mm] (D3) to[bend left=\angle] (D3D1);
      \draw[line width=0.5mm] (D3) to[bend right=\angle] (D3D1);
    }
    \draw[line width=0.5mm] (D3) to (D3D1);
    \foreach \angle in {15} {
      \draw[line width=0.5mm] (D3D1) to[bend left=\angle] (D1);
      \draw[line width=0.5mm] (D3D1) to[bend right=\angle] (D1);
    }
    \draw[line width=0.5mm] (D3D1) to (D1);
    
\end{scope}

\begin{scope}[xshift = 12cm]

    \node[circle, draw=black, fill=black] (A1) at (-3, -1){};
    \node[circle, draw=black, fill=black] (A2) at (-1, -3){};

    \node[circle, draw=black, fill=black, inner sep=0pt, minimum size=8] (A1A2) at (-2, -2){};

    \foreach \angle in {15} {
      \draw[line width=0.5mm] (A1) to[bend left=\angle] (A1A2);
      \draw[line width=0.5mm] (A1) to[bend right=\angle] (A1A2);
    }
    \draw[line width=0.5mm] (A1) to (A1A2);
    \foreach \angle in {15} {
      \draw[line width=0.5mm] (A1A2) to[bend left=\angle] (A2);
      \draw[line width=0.5mm] (A1A2) to[bend right=\angle] (A2);
    }
    \draw[line width=0.5mm] (A1A2) to (A2);


    \node[circle, draw=black, fill=black] (B1) at (-1, 1){};
    \node[circle, draw=black, fill=black] (B2) at (-3, 1){};
    \node[circle, draw=black, fill=black] (B3) at (-1, 3){};

    \node[circle, draw=black, fill=black, inner sep=0pt, minimum size=8] (B1B2) at (-2, 1){};
    \node[circle, draw=black, fill=black, inner sep=0pt, minimum size=8] (B2B3) at (-2, 2){};
    \node[circle, draw=black, fill=black, inner sep=0pt, minimum size=8] (B3B1) at (-1, 2){};

    \foreach \angle in {15} {
      \draw[line width=0.5mm] (B1) to[bend left=\angle] (B1B2);
      \draw[line width=0.5mm] (B1) to[bend right=\angle] (B1B2);
    }
    \draw[line width=0.5mm] (B1) to (B1B2);
    \foreach \angle in {15} {
      \draw[line width=0.5mm] (B1B2) to[bend left=\angle] (B2);
      \draw[line width=0.5mm] (B1B2) to[bend right=\angle] (B2);
    }
    \draw[line width=0.5mm] (B1B2) to (B2);

    \foreach \angle in {15} {
      \draw[line width=0.5mm] (B2) to[bend left=\angle] (B2B3);
      \draw[line width=0.5mm] (B2) to[bend right=\angle] (B2B3);
    }
    \draw[line width=0.5mm] (B2) to (B2B3);
    \foreach \angle in {15} {
      \draw[line width=0.5mm] (B2B3) to[bend left=\angle] (B3);
      \draw[line width=0.5mm] (B2B3) to[bend right=\angle] (B3);
    }
    \draw[line width=0.5mm] (B2B3) to (B3);

    \foreach \angle in {15} {
      \draw[line width=0.5mm] (B3) to[bend left=\angle] (B3B1);
      \draw[line width=0.5mm] (B3) to[bend right=\angle] (B3B1);
    }
    \draw[line width=0.5mm] (B3) to (B3B1);
    \foreach \angle in {15} {
      \draw[line width=0.5mm] (B3B1) to[bend left=\angle] (B1);
      \draw[line width=0.5mm] (B3B1) to[bend right=\angle] (B1);
    }
    \draw[line width=0.5mm] (B3B1) to (B1);


    \node[circle, draw=black, fill=black] (C1) at (3, 1){};
    \node[circle, draw=black, fill=black] (C2) at (1, 3){};

    \node[circle, draw=black, fill=black, inner sep=0pt, minimum size=8] (C1C2) at (2, 2){};

    \foreach \angle in {15} {
      \draw[line width=0.5mm] (C1) to[bend left=\angle] (C1C2);
      \draw[line width=0.5mm] (C1) to[bend right=\angle] (C1C2);
    }
    \draw[line width=0.5mm] (C1) to (C1C2);
    \foreach \angle in {15} {
      \draw[line width=0.5mm] (C1C2) to[bend left=\angle] (C2);
      \draw[line width=0.5mm] (C1C2) to[bend right=\angle] (C2);
    }
    \draw[line width=0.5mm] (C1C2) to (C2);


    \node[circle, draw=black, fill=black] (D1) at (1, -1){};
    \node[circle, draw=black, fill=black] (D2) at (1, -3){};
    \node[circle, draw=black, fill=black] (D3) at (3, -1){};

    \node[circle, draw=black, fill=black, inner sep=0pt, minimum size=8] (D1D2) at (1, -2){};
    \node[circle, draw=black, fill=black, inner sep=0pt, minimum size=8] (D2D3) at (2, -2){};
    \node[circle, draw=black, fill=black, inner sep=0pt, minimum size=8] (D3D1) at (2, -1){};
    
    \foreach \angle in {15} {
      \draw[line width=0.5mm] (D1) to[bend left=\angle] (D1D2);
      \draw[line width=0.5mm] (D1) to[bend right=\angle] (D1D2);
    }
    \draw[line width=0.5mm] (D1) to (D1D2);
    \foreach \angle in {15} {
      \draw[line width=0.5mm] (D1D2) to[bend left=\angle] (D2);
      \draw[line width=0.5mm] (D1D2) to[bend right=\angle] (D2);
    }
    \draw[line width=0.5mm] (D1D2) to (D2);

    \foreach \angle in {15} {
      \draw[line width=0.5mm] (D2) to[bend left=\angle] (D2D3);
      \draw[line width=0.5mm] (D2) to[bend right=\angle] (D2D3);
    }
    \draw[line width=0.5mm] (D2) to (D2D3);
    \foreach \angle in {15} {
      \draw[line width=0.5mm] (D2D3) to[bend left=\angle] (D3);
      \draw[line width=0.5mm] (D2D3) to[bend right=\angle] (D3);
    }
    \draw[line width=0.5mm] (D2D3) to (D3);

    \foreach \angle in {15} {
      \draw[line width=0.5mm] (D3) to[bend left=\angle] (D3D1);
      \draw[line width=0.5mm] (D3) to[bend right=\angle] (D3D1);
    }
    \draw[line width=0.5mm] (D3) to (D3D1);
    \foreach \angle in {15} {
      \draw[line width=0.5mm] (D3D1) to[bend left=\angle] (D1);
      \draw[line width=0.5mm] (D3D1) to[bend right=\angle] (D1);
    }
    \draw[line width=0.5mm] (D3D1) to (D1);


    \draw[line width=0.5mm, black] (A1) -- (B2);
    \draw[line width=0.5mm, black] (B3) -- (C2);
    \draw[line width=0.5mm, black] (C1) -- (D3);
    \draw[line width=0.5mm, black] (D2) -- (A2);
    \draw[line width=0.5mm, black] (B1) -- (D1);

\end{scope}

\end{tikzpicture}

%% file: A6.Quantum.tex
\section{Diagonal distinctness and quantum automorphisms}\label{appendix: quantumAppendix}
In this section, we prove the results in
\cref{sec: quantumBody}. Before proving
\cref{theorem: fullGadgetSeparationNonNegative}, we briefly present the theory of $\qut(M)$.

\subsection{The Quantum Automorphism Group}

To apply the theory of \cite{manvcinska2020quantum, cai2023planar}, we need a
more general definition of an edge gadget, in which the labeled vertices
may be the same and the signature may not be symmetric. 
A \emph{2-bilabeled graph} or \emph{binary gadget} $\vk$ is a graph with distinguished
vertices $\ell_1,\ell_2 \in V(\vk)$, possibly with $\ell_1 = \ell_2$.
Define $\vk(M)$ as in \cref{equation: signature}. Note that, if $\ell_1 = \ell_2$, then
$\vk(M)$ is diagonal.
\begin{definition}[$\Gadget(M)$,$\PlGadget(M)$]
	For $M \in \Sym{q}{}(\mathbb{R})$, let $\Gadget(M) := \{\vk(M) : \text{binary gadget } \vk\}$, and define $\PlGadget(M) \subset \Gadget(M)$ to be the signatures of planar binary gadgets.
\end{definition}
Recall that $I$ and $J$ are the identity and all-ones matrices, respectively.
\begin{proposition}\label{proposition: ij}
	For any $M$, we have $I,J \in \PlGadget(M)$. Any $\vk$ with no edges satisfies
	$\vk(M) \in \text{span}_{\mathbb{R}}(I,J)$.
\end{proposition}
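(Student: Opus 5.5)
This follows directly from \cref{equation: signature} applied to a few explicit small binary gadgets.

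For $I$, take the binary gadget $\vk$ with a single vertex $v$, no edges, and $\ell_1 = \ell_2 = v$. The sum in \cref{equation: signature} ranges over maps $\tau:\{v\}\to[q]$ with $\tau(v)=i$ and $\tau(v)=j$. This set is empty unless $i=j$, and when $i=j$ it contains exactly one map. The product over the empty edge set equals $1$, so $\vk(M)_{ij} = [i=j]$, i.e.\ $\vk(M) = I$.

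For $J$, take two vertices $\ell_1 \neq \ell_2$ and no edges. For each pair $(i,j)$ there is exactly one $\tau$ with $\tau(\ell_1)=i$ and $\tau(\ell_2)=j$, and its empty product is $1$, so $\vk(M) = J$. Both gadgets are trivially planar, and their labeled vertices lie on the outer face. Hence $I, J \in \PlGadget(M)$.

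For the second claim, let $\vk$ have no edges, with $|V(\vk)| = n$. Every weight in \cref{equation: signature} is an empty product equal to $1$, so $\vk(M)_{ij}$ simply counts the maps $\tau$ compatible with the constraints $\tau(\ell_1)=i$ and $\tau(\ell_2)=j$. There are two cases. If $\ell_1 = \ell_2$, the count is $q^{n-1}$ when $i=j$ and $0$ otherwise, giving $q^{n-1} I$. If $\ell_1 \neq \ell_2$, the count is $q^{n-2}$ for every pair $(i,j)$, giving $q^{n-2} J$. In either case $\vk(M) \in \mathrm{span}_{\mathbb{R}}(I,J)$.

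There is no real obstacle here. The only points needing care are that the empty product equals $1$ and that the $\ell_1 = \ell_2$ case forces $i = j$.
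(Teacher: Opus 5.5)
Your proposal is correct and matches the paper's proof. The paper also splits on whether the two labels share a vertex, obtaining $q^k I$ or $q^k J$ with $k$ the number of unlabeled isolated vertices, and the membership $I,J \in \PlGadget(M)$ is just the $k=0$ case, which you write out explicitly.
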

\begin{proof}
	Any $\vk$ with no edges is composed of isolated vertices.
	If the two labels of $\vk$ are on distinct vertices,
	then $\vk(M) = q^k J$, and if the labels are
	on the same vertex, then $\vk(M) = q^k I$, where $k$ is the number
	of unlabeled isolated vertices.
\end{proof}

We briefly introduce $\qut(M)$ and its orbits and orbitals. For more background on these topics, see
\cite{manvcinska2020quantum}. A \emph{quantum permutation matrix} $\uc = (u_{ij})_{i,j=1}^q$
is a matrix with entries from a $C^*$-algebra (note that the algebra of bounded linear operators on
some Hilbert space is a prototypical example of a  $C^*$-algebra; for example, in the setting of the nonlocal game defining quantum
isomorphism, the entries
of $\uc$ are the quantum players' measurement operators \cite{atserias2019quantum}) satisfying
\[
\text{(i)}~~ u_{ij}^* = u_{ij} \qquad \text{(ii)}~~ \sum_j u_{ij} = \sum_i u_{ij}
= \mathbf{1} \qquad \text{(iii)}~~ u_{ij} u_{ij'} = \delta_{jj'} u_{ij} ~\text{  and  }~
u_{ij} u_{i'j} = \delta_{ii'} u_{ij},
\]
where $\mathbf{1}$ is identity element of the $C^*$-algebra.
Crucially, the entries of $\uc$ do not necessarily commute. However,
if the entries of $\uc$ come from $\mathbb{C}$, then relations (i), (ii), (iii) force $\uc$ to be an
ordinary (0-1-valued) permutation matrix.

The \emph{quantum automorphism group} $\qut(M)$ of $M$ is a highly abstract object that we
will not precisely define here. It suffices to say that $\qut(M)$ is determined by its
\emph{fundamental representation}, a quantum permutation matrix $\uc$ subject to the
additional constraint $\uc M = M \uc$ \cite{cai2023planar}, an analogue of the fact that an ordinary
permutation matrix $P$ represents a classical automorphism of $M$ if and only if $P M = M P$.
Continuing this analogy, we consider the following constructions.
\begin{definition}[{\cite[Lemma 3.2, Lemma 3.4, Definition 3.5]{lupini2020nonlocal}}] \label{definition: orbits}
	Let $M \in \Sym{q}{}(\mathbb{R})$ and $\uc$ be the fundamental representation of $\qut(M)$.
	The relations $\sim_1$ on $[q]$ defined by $i \sim_1 j \iff u_{ij} \neq 0$ and
	$\sim_2$ on $[q] \times [q]$ defined by $(i,i') \sim_2 (j,j') \iff u_{ij} u_{i'j'} \neq 0$
	are equivalence relations.
	Define the \emph{orbits} and \emph{orbitals} of $\qut(M)$ to be the equivalence classes
	of $\sim_1$ and $\sim_2$, respectively.
\end{definition}
The next statement is a special case of \cite[Theorem 3.10]{lupini2020nonlocal}.
\begin{lemma}
	Let $\uc$ be the fundamental representation of $\qut(M)$ and 
	$K \in \Mat{q}{}(\mathbb{R})$. We have $\uc K = K \uc$ if and only if $K$ is constant
	on the orbitals of $\qut(M)$.
	\label{lemma: orbitals}
\end{lemma}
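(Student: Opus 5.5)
Since this lemma is stated as a special case of \cite[Theorem 3.10]{lupini2020nonlocal}, the plan is to verify it directly from the defining relations of a quantum permutation matrix. The argument is an entrywise comparison of $\uc K$ and $K\uc$, with the orbital relation $\sim_2$ of \cref{definition: orbits} mediating between the two sides. Throughout I use relations (i)--(iii): each $u_{ij}$ is a self-adjoint projection, the projections in every row and every column are mutually orthogonal, and each row and each column sums to $\mathbf{1}$.

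\emph{($\Leftarrow$)} Suppose $K$ is constant on the orbitals. For fixed $i,j$, expand using column $j$ and row $i$ summing to $\mathbf{1}$:
\[
(\uc K)_{ij}=\sum_{k}u_{ik}K_{kj}=\sum_{k,l}u_{ik}K_{kj}u_{lj},\qquad
(K\uc)_{ij}=\sum_{l}K_{il}u_{lj}=\sum_{k,l}u_{ik}K_{il}u_{lj}.
\]
Hence $(\uc K-K\uc)_{ij}=\sum_{k,l}(K_{kj}-K_{il})\,u_{ik}u_{lj}$. Now fix $k,l$. Since $u_{lj}$ is self-adjoint, $u_{ik}u_{lj}\neq 0$ forces $u_{lj}u_{ik}\neq0$. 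In the convention of \cref{definition: orbits} this says $(l,i)\sim_2(j,k)$, so constancy of $K$ on orbitals gives $K_{li}=K_{jk}$. By symmetry of $K$ (in our setting $K=\vk(M)$ with $M$ symmetric; otherwise orbitals are closed under transposition), this is $K_{il}=K_{kj}$. So every nonzero term has a vanishing coefficient, and $\uc K=K\uc$.

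\emph{($\Rightarrow$)} Suppose $\uc K=K\uc$. The goal is to show that $u_{ij}u_{i'j'}\neq0$ implies $K_{ii'}=K_{jj'}$. First, for each $i,j'$,
\[
\sum_k u_{ik}K_{kj'}=\sum_l K_{il}u_{lj'}.
\]
Left-multiplying by $u_{ij}$ and using row orthogonality of the $u_{i\cdot}$ gives
\[
u_{ij}K_{jj'}=\sum_l K_{il}\,u_{ij}u_{lj'}.
\]
Right-multiplying by $u_{i'j'}$ and using column orthogonality of the $u_{\cdot j'}$ then collapses the sum to $l=i'$:
\[
K_{jj'}\,u_{ij}u_{i'j'}=K_{ii'}\,u_{ij}u_{i'j'}.
\]
If $u_{ij}u_{i'j'}\neq0$, the scalars must agree, so $K_{ii'}=K_{jj'}$.

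The computations themselves are routine. The only step needing care is bookkeeping the index convention of $\sim_2$ (which pair is "source" and which "target") together with the transpose issue in ($\Leftarrow$). I expect that to be the main place where an error could slip in.
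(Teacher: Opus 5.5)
Your proof is correct. The paper does not prove this lemma; it quotes it as a special case of \cite[Theorem 3.10]{lupini2020nonlocal}. Your entrywise computation from relations (i)--(iii) therefore gives a self-contained replacement for that citation, and it proves slightly more than is stated. Both directions use only that $\uc$ is a quantum permutation matrix. They use neither the relation $\uc M = M\uc$ nor the fact, quoted in \cref{definition: orbits}, that $\sim_2$ is an equivalence relation, because you read ``constant on orbitals'' as the implication $(a,b)\sim_2(c,d)\Rightarrow K_{ab}=K_{cd}$, which is how the paper reads it too.

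One point to tidy up in ($\Leftarrow$): the transpose issue you flagged comes only from reversing the product. By \cref{definition: orbits}, $u_{ik}u_{lj}\neq 0$ says literally that $(i,l)\sim_2(k,j)$, with row indices first and column indices second. Constancy then gives $K_{il}=K_{kj}$ at once, which is exactly the vanishing of the coefficient $K_{kj}-K_{il}$. There is no need to pass to $u_{lj}u_{ik}$ and transpose back.

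Of the two justifications you give for that transpose step, only the second is valid. Symmetry of $K$ is not available, for three reasons:
\begin{itemize}
\item the lemma is stated for arbitrary $K\in\Mat{q}{}(\mathbb{R})$;
\item binary gadget signatures need not be symmetric even when $M$ is, and the proof of \cref{theorem: qutOrbits} explicitly handles asymmetric ones;
\item the paper relies on the lemma for general $K$ when it concludes, in the proof of \cref{theorem: undecidable_appendix}, that $\qut(M)$ and $\qut(M')$ have the same orbitals.
\end{itemize}
Closure of orbitals under transposition does hold, by the same adjoint identity $(u_{ab}u_{cd})^*=u_{cd}u_{ab}$ you already used. So your argument stands as written, though the direct reading above is shorter.
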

That $K$ is constant
on the orbitals of $\qut(M)$ means that $(i,i') \sim_2 (j,j')$  implies $K_{ii'} = K_{jj'}$.
The following theorem is a special case of \cite[Theorem 4]{cai2023planar}. It shows that
the span of planar binary gadget signatures 
exactly captures those matrices `invariant under
$\qut(M)$'.
\begin{theorem}
	\label{theorem: duality}
	Let $\uc$ be the fundamental representation of $\qut(M)$ and
	$K \in \Mat{q}{}(\mathbb{C})$. We have $\uc K = K \uc$ if and only if
	$K \in \text{span}_{\mathbb{C}}(\PlGadget(M))$.
\end{theorem}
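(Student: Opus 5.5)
The plan is to reduce the statement to Woronowicz's Tannaka--Krein duality, following the strategy of Man\v{c}inska and Roberson~\cite{manvcinska2020quantum} adapted to weighted matrices as in~\cite{cai2023planar}. For each $k,l \geq 0$, I would work with the more general class of planar $(k,l)$-bilabeled graphs: planar graphs carrying $k$ "input" and $l$ "output" labels (repetitions allowed), arranged in order around the outer face. Each such graph $\vk$ has a signature $\vk(M) \in \mathbb{R}^{q^l \times q^k}$, defined as in \cref{equation: signature}. The binary gadgets of $\PlGadget(M)$ are exactly the case $k=l=1$. Let $\mathcal{C}_M(k,l)$ be the $\mathbb{C}$-span of these signatures. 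The claim will then be the $k=l=1$ instance of the equality
\[
\mathcal{C}_M(k,l) = \mathrm{Hom}(\uc^{\otimes k}, \uc^{\otimes l}),
\]
where the right-hand side consists of the matrices $T$ satisfying $T\uc^{\otimes k} = \uc^{\otimes l} T$.

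\textbf{Step 1: the spans form a category.} I will show that the collection $\{\mathcal{C}_M(k,l)\}$ is a rigid $C^*$-tensor category, i.e., it is closed under the needed operations. Each operation corresponds to a planar graph construction:
\begin{itemize}
\item composition of signatures is gluing the outputs of one graph to the inputs of another;
\item the tensor product is placing two graphs side by side;
\item the adjoint is swapping inputs and outputs (the signatures are real, so this is just the transpose);
\item duality morphisms come from a single vertex carrying two labels.
\end{itemize}
In every case planarity is preserved and the labels stay on the outer face in the right cyclic order. I would also record the generators of the category:
\begin{itemize}
\item the single-edge graph, whose signature is $M$;
\item the one-vertex graphs with various numbers of labels, whose signatures are the tensors $\delta_{i_1=\cdots=i_r}$;
\item the edgeless graphs, whose signatures are, via \cref{proposition: ij} and its multi-label generalization, the noncrossing-partition maps generating the intertwiners of $S_q^+$.
\end{itemize}

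\textbf{Step 2: the easy containment $\mathcal{C}_M \subseteq \mathrm{Hom}$.} I will check that every generator intertwines the appropriate tensor powers of $\uc$:
\begin{itemize}
\item for the single-edge graph this is exactly the defining relation $\uc M = M\uc$;
\item for the one-vertex tensors it follows from relations (ii) and (iii) of a quantum permutation matrix, namely the orthogonality of projections within rows and columns together with the row and column sums.
\end{itemize}
Since intertwiner spaces are closed under composition, tensor product and adjoint, every element of $\mathcal{C}_M$ is an intertwiner. This gives the direction "$K \in \text{span}_{\mathbb{C}}(\PlGadget(M)) \Rightarrow \uc K = K\uc$".

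\textbf{Step 3: the reverse containment via duality.} By Woronowicz's Tannaka--Krein theorem, a concrete rigid $C^*$-tensor category containing the $S_q^+$ intertwiners (the Temperley--Lieb / noncrossing-partition category) corresponds to a quantum subgroup of $S_q^+$. Its fundamental representation is universal for the relations encoded by the category. Here the category is generated by $M$ together with the $S_q^+$ category, so the corresponding quantum group is precisely the one cut out by $\uc M = M\uc$, which is $\qut(M)$. Hence the intertwiner spaces of $\qut(M)$ equal the category generated by $M$ and the $S_q^+$ morphisms. Finally I must show that this generated category coincides with $\mathcal{C}_M$, which amounts to showing that every planar bilabeled graph decomposes into the generators using only planar compositions and tensor products.

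\textbf{Main obstacle.} I expect the decomposition in Step 3 to be the hardest part. The difficulty is cutting an arbitrary planar bilabeled graph, with labels on the outer face, into horizontal slices that each consist of a single edge or vertex tensor placed next to identity strands. My first approach would be to adapt Man\v{c}inska--Roberson's argument: choose a planar drawing in which all labels are on the boundary, perturb it so that every vertex and every edge crosses a distinct horizontal level, and read the graph off level by level. Weights do not change the combinatorics. The only adjustments I expect are to allow multi-edges and loops, which are handled as compositions of edge generators, and to check that the real symmetric $M$ keeps every generator self-adjoint, so that the category is indeed a $C^*$-category.
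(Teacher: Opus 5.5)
Your proposal is correct and follows essentially the same route as the paper. The paper does not reprove this statement; it cites it as a special case of \cite[Theorem 4]{cai2023planar}, which is the weighted version of exactly the Man\v{c}inska--Roberson argument you outline: planar bilabeled graphs are generated by the single edge and the one-vertex graphs, and Woronowicz's Tannaka--Krein duality identifies the category generated by $M$ and the $S_q^+$ intertwiners with the intertwiner spaces of the quantum automorphism group.

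One correction to the ordering: the decomposition of an arbitrary planar bilabeled graph into these generators is already needed in your Step 2, not only in Step 3. Closure of intertwiners under composition, tensor product and adjoint by itself only shows that the subcategory generated by the basic graphs consists of intertwiners, not all of $\mathcal{C}_M$.
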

The fundamental representation $\uc$ of the quantum symmetric group $S_q^+$ is
subject to no relations other than the relations (i),(ii),(iii) making $\uc$ a
quantum permutation matrix.
Equivalently, $S_q^+ = \qut(\mathbf{0})$
(where $\mathbf{0} \in \Sym{q}{}(\mathbb{R})$ denotes
the all-0 matrix), since $\uc \mathbf{0} = \mathbf{0} \uc$
is always satisfied. We now prove the following stronger version of \cref{proposition: quantumSymmetric}.
\begin{proposition}
	Let $M \in \Sym{q}{}(\mathbb{R}_{\geq 0})$. Then the following are
	equivalent.
	\begin{enumerate}[label={\rm(\arabic*)}]
		\item $\qut(M) = S_q^+$,
		\item $\PlEdge(M) \subset {\rm span}_{\mathbb{R}}(I,J)$,
		\item $M \in {\rm span}_{\mathbb{R}}(I,J)$ .
	\end{enumerate}
\end{proposition}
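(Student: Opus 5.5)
I will prove the cycle (1)$\Rightarrow$(3)$\Rightarrow$(2)$\Rightarrow$(1), using \cref{theorem: duality} and \cref{lemma: orbitals} as the bridge between the algebraic and combinatorial sides.

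For (1)$\Rightarrow$(3): if $\qut(M) = S_q^+$, the fundamental representation $\uc$ is a generic quantum permutation matrix, subject only to (i)--(iii). The commutant of $S_q^+$ in $\Mat{q}{}(\mathbb{C})$ is $\mathrm{span}(I,J)$. I would argue this via \cref{lemma: orbitals}: the orbitals of $S_q^+$ are exactly the diagonal $\{(i,i)\}$ and the off-diagonal $\{(i,i'): i\neq i'\}$. Relation (iii) prevents a diagonal pair from being equivalent to an off-diagonal one. Conversely, one must show that $u_{ij}u_{i'j'}\neq 0$ for all $i\ne i'$, $j\ne j'$ in the universal algebra. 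For this, exhibit a representation of $C(S_q^+)$, e.g.\ the classical one through $C(S_q)$ when $q\ge 2$, in which the element is nonzero: a classical permutation sends $i\mapsto j$, $i'\mapsto j'$. Since $\uc M = M\uc$ by definition of $\qut(M)$, the matrix $M$ is constant on orbitals, so $M\in\mathrm{span}(I,J)$.

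For (3)$\Rightarrow$(2): write $M = aI + bJ$. The fundamental representation of $S_q^+$ commutes with $I$ (trivially) and with $J$ (by (ii): $\uc J = J = J\uc$). Hence $\uc$ satisfies the defining relation of $\qut(M)$, so the universal property gives $\qut(M)=S_q^+$. Then \cref{theorem: duality} says every $K\in \PlGadget(M)$ commutes with the $S_q^+$ representation, so it lies in $\mathrm{span}(I,J)$ by the commutant computation above. In particular $\PlEdge(M)\subset\PlGadget(M)\subset\mathrm{span}_{\mathbb{R}}(I,J)$, and realness follows since the entries are real. Alternatively, this step can be done combinatorially: for $M=aI+bJ$, $Z$ of any 2-labeled graph is invariant under every permutation of $[q]$, so its signature is $S_q$-invariant and lies in $\mathrm{span}(I,J)$. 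This avoids quantum groups entirely, and I would present this elementary argument.

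For (2)$\Rightarrow$(1): the single edge gadget gives $M\in\PlEdge(M)\subset\mathrm{span}(I,J)$. Then (3) holds, and the previous paragraph already shows $\qut(M)=S_q^+$. So the real content is (3)$\Leftrightarrow$(1).

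The main obstacle is the commutant claim, i.e.\ that the orbitals of $S_q^+$ are just the two classes. Separating pairs within each class requires the nonvanishing of $u_{ij}u_{i'j'}$ in the universal $C^*$-algebra. I would handle it through the surjection $C(S_q^+)\to C(S_q)$ onto classical functions, which reduces it to $2$-transitivity of $S_q$; for $q=1$ the claim is trivial. An alternative route uses \cref{theorem: duality} with $M=\mathbf 0$: planar gadgets for the zero matrix are edgeless, so by \cref{proposition: ij} their span is $\mathrm{span}(I,J)$. This gives the commutant directly, and it is the cleanest route I would use first.
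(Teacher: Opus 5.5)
Your proposal is correct, and its quantum ingredients are the same as the paper's. The first is duality at $M=\mathbf{0}$ combined with the fact that edgeless gadgets have signatures in $\mathrm{span}(I,J)$. The second is the observation that $I$ and $J$ commute with every quantum permutation matrix, so $\uc M = M\uc$ imposes no new relation when $M\in\mathrm{span}_{\mathbb{R}}(I,J)$.

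What differs is the routing. The paper goes (1)$\Rightarrow$(2)$\Rightarrow$(3)$\Rightarrow$(1):
\begin{itemize}
\item under (1), $\mathrm{span}_{\mathbb{C}}(\PlGadget(M))=\mathrm{span}_{\mathbb{C}}(\PlGadget(\mathbf{0}))\subseteq\mathrm{span}_{\mathbb{C}}(I,J)$, since any gadget with an edge has signature $\mathbf{0}$ at $\mathbf{0}$;
\item the single edge then gives (3);
\item the commutation remark gives (1).
\end{itemize}
Each step is one line, and no separate commutant computation is needed.

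You instead put $M$ itself into the commutant of $S_q^+$ to get (1)$\Rightarrow$(3). You then prove (3)$\Rightarrow$(2) classically: for $M=aI+bJ$, every signature satisfies $\mathbf{K}(M)_{\sigma(i)\sigma(j)}=\mathbf{K}(M)_{ij}$ for all $\sigma\in S_q$, hence lies in $\mathrm{span}(I,J)$. This buys a slightly stronger, quantum-free statement: for such $M$, every signature in $\Gadget(M)$, planar or not, lies in $\mathrm{span}(I,J)$. It also confines the quantum machinery to (1)$\Leftrightarrow$(3).

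Your orbital alternative for the commutant is also sound. Let $P_\sigma$ be the permutation matrix of $\sigma$. The characters $u_{ab}\mapsto (P_\sigma)_{ab}$ of the universal algebra show $u_{ij}u_{i'j'}\neq 0$ for $i\neq i'$, $j\neq j'$, by $2$-transitivity of $S_q$. You only need that all diagonal pairs share an orbital and all off-diagonal pairs share an orbital. You do not need these two orbitals to be distinct, so the appeal to relation (iii) is unnecessary.

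Two presentational points. First, your (2)$\Rightarrow$(1) reuses the $\uc J=J\uc$ argument. So what you actually prove is (1)$\Rightarrow$(3), (3)$\Rightarrow$(1), (3)$\Rightarrow$(2) and (2)$\Rightarrow$(3), not the announced cycle. That suffices, but if you present only the elementary proof of (3)$\Rightarrow$(2), keep the commutation argument as an explicit (3)$\Rightarrow$(1) step. Second, planar gadgets for $\mathbf{0}$ are not edgeless. Rather, those with an edge have zero signature and so contribute nothing to the span.
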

\begin{proof}
	We show (1) $\implies$ (2) $\implies$ (3) $\implies$ (1). 
	Suppose $\qut(M) = S_q^+$. Then $\qut(M)$ and 
	$S_q^+ = \qut(\mathbf{0})$ have the same
	fundamental representation, so, by \cref{theorem: duality}, 
	$\text{span}_{\mathbb{C}}(\PlGadget(M)) = \text{span}_{\mathbb{C}}(\PlGadget(\mathbf{0}))$. 
	If a binary gadget $\vk$ satisfies $\vk(\mathbf{0}) 
	\neq \mathbf{0}$, then
	$\vk$ must have no edges, so, by \cref{proposition: ij}, $\vk(\mathbf{0}) \in \text{span}_{\mathbb{R}}(I,J)$. Therefore
	$\PlEdge(M) \subset \PlGadget(M) \subset \text{span}_{\mathbb{C}}(\PlGadget(M))
	= \text{span}_{\mathbb{C}}(I,J)$, and, since $M$ is real-valued,
	$\PlEdge(M) \subset \text{span}_{\mathbb{R}}(I,J)$.
	
	Assume $\PlEdge(M) \subset \text{span}_{\mathbb{R}}(I,J)$.
	The gadget composed of a single edge with two labeled endpoints has signature
	$M \in \PlEdge(M)$, so $M \in \text{span}_{\mathbb{R}}(I,J)$. 
	
	Finally, assume $M \in \text{span}_{\mathbb{R}}(I,J)$. It is known that $I$ and $J$, and hence all matrices in $\text{span}_{\mathbb{R}}(I,J)$, commute with any quantum permutation matrix $\uc$ (see e.g. \cite[Theorem 3.10]{lupini2020nonlocal}; this also follows from 
	\cref{proposition: ij} or direct calculation). Thus the relation $\uc M = M \uc$ does not impose any new constraints on $\uc$, so $\qut(M) = S_q^+$.
\end{proof}

It is known that $i,j \in [q]$ are in the same orbit of $\qut(M)$ if and only if they are 
indistinguishable by unary planar gadgets \cite[Lemma 10]{cai2023planar}.
For diagonal pairs $(i,i)$ and $(j,j)$, the notion of orbit and orbital are the
same: $(i,i) \sim_2 (j,j)$ if and only if $i \sim_1 j$, as $u_{ij}u_{ij} = u_{ij}$. Therefore similar reasoning with orbitals in place of orbits yields
an analogous result for binary planar gadgets.
\begin{lemma}\label{lemma: diagonal_orbit}
	Let $M \in \Sym{q}{}(\mathbb{R})$. Then
	$\mathbf{K}(M)_{ii} = \mathbf{K}(M)_{jj}$ 
	for every $\mathbf{K}(M) \in \PlGadget(M)$ 
	if and only if $i$ and $j$ are in the same
	orbit of $\qut(M)$.
\end{lemma}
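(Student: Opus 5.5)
The plan is to prove both directions of \cref{lemma: diagonal_orbit} through the duality \cref{theorem: duality} and the orbital characterization \cref{lemma: orbitals}, using the observation made just before the statement. For diagonal pairs, $(i,i) \sim_2 (j,j)$ holds iff $u_{ij}u_{ij} = u_{ij} \neq 0$, i.e., iff $i \sim_1 j$.

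For the forward direction ($i \sim_1 j$ implies equal diagonals), take any $\vk(M) \in \PlGadget(M)$. By \cref{theorem: duality}, $\uc \vk(M) = \vk(M)\uc$. By \cref{lemma: orbitals}, $\vk(M)$ is then constant on the orbitals of $\qut(M)$. Since $i \sim_1 j$ gives $(i,i) \sim_2 (j,j)$, we get $\vk(M)_{ii} = \vk(M)_{jj}$.

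For the converse, suppose $i \not\sim_1 j$. Let $O$ be the orbit of $i$, and take $K = \sum_{k \in O} E_{kk}$, the diagonal $0$-$1$ indicator matrix of $O$. We will check that $K$ is constant on orbitals. An orbital containing a diagonal pair $(k,k)$ consists only of diagonal pairs: if $u_{kj}u_{kj'} \neq 0$, relation (iii) forces $j = j'$. Such an orbital is therefore $\{(k',k') : k' \sim_1 k\}$. Every other orbital contains only off-diagonal pairs, where $K$ vanishes. So $K$ is constant on each orbital, and \cref{lemma: orbitals} gives $\uc K = K\uc$.

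By \cref{theorem: duality}, $K \in \text{span}_{\mathbb{C}}(\PlGadget(M))$, say $K = \sum_t c_t \vk_t(M)$. Then $1 = K_{ii} \neq K_{jj} = 0$ forces $\vk_t(M)_{ii} \neq \vk_t(M)_{jj}$ for some $t$. Since each $\vk_t(M)$ lies in $\PlGadget(M)$, this proves the contrapositive.

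The main point to get right is the claim that diagonal pairs form their own orbitals, which is exactly relation (iii). Two facts need a quick confirmation, both of which I expect to be routine given how the paper states them. First, \cref{theorem: duality} must apply with real $M$ but complex span. Second, \cref{lemma: orbitals} must hold for the matrix $K$ built here.
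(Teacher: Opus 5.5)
Your proof is correct and follows the paper's route. In the converse you use exactly the paper's matrix $L$, the indicator of the orbital of $(i,i)$, which you correctly identify via relation (iii) as the diagonal indicator of the orbit of $i$; you then apply \cref{lemma: orbitals} and \cref{theorem: duality} as the paper does. In the easy direction you invoke \cref{lemma: orbitals} to get that $\mathbf{K}(M)$ is constant on orbitals, whereas the paper multiplies $(\uc\mathbf{K}(M))_{ij} = (\mathbf{K}(M)\uc)_{ij}$ on the left by $u_{ij}$ and applies relation (iii) directly; both arguments are valid.
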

\begin{proof}
	$(\Leftarrow)$: Suppose $u_{ij} \neq 0$, and let 
	$\mathbf{K}(M) \in \PlGadget(M)$.
	\cref{theorem: duality} gives $\uc \mathbf{K}(M) =
	\mathbf{K}(M) \uc$. Hence
	\begin{equation}
		\sum_\ell u_{i\ell} \mathbf{K}(M)_{\ell j} = 
		(\uc \mathbf{K}(M))_{ij} = 
		(\mathbf{K}(M) \uc)_{ij}
		= \sum_\ell \mathbf{K}(M)_{i\ell} u_{\ell j}.
		\label{equation: uk_ku}
	\end{equation}
	Multiplying both sides of \cref{equation: uk_ku} on the left by $u_{ij} \neq 0$ and applying
	property (iii) of a quantum permutation matrix gives
	\[
	u_{ij} \mathbf{K}(M)_{jj} =  
	\sum_\ell u_{ij} u_{i\ell} \mathbf{K}(M)_{\ell j}
	= \sum_\ell \mathbf{K}(M)_{i\ell} u_{ij} u_{\ell j} 
	= \mathbf{K}(M)_{ii} u_{ij},
	\]
	and it follows that $\mathbf{K}(M)_{ii} = \mathbf{K}(M)_{jj}$.
	
	$(\Rightarrow)$: We prove the contrapositive. Suppose $i$ and $j$ are in different orbits
	of $\qut(M)$. Then $(i,i)$ and $(j,j)$ are in different orbitals of
	$\qut(M)$, as $u_{ij} u_{ij} = u_{ij} = 0$. Define $L \in \Mat{q}{}(\mathbb{R})$ by
	$$L_{xy} = \begin{cases} 1 & (x,y) \sim_2 (i,i) \\ 0 & \text{otherwise}\end{cases}.$$
	By definition, $L$ is constant on the orbitals of $\qut(M)$, so
	\cref{lemma: orbitals} and \cref{theorem: duality} give 
	$L \in \text{span}_{\mathbb{C}}(\PlGadget(M))$: there are 
	$c_1,\ldots,c_r \in \mathbb{C}$ and
	$\mathbf{K}_{1}(M),\ldots,\mathbf{K}_{r}(M) \in \PlGadget(M)$ such that
	$L = \sum_{\ell = 1}^r c_\ell \cdot \mathbf{K_{\ell}}(M)$. Now
	\[
	\sum_{\ell = 1}^r c_\ell \cdot 
	\mathbf{K_{\ell}}(M)_{ii} = L_{ii} = 1 \neq 0 = L_{jj}
	= \sum_{\ell = 1}^r c_\ell \cdot 
	\mathbf{K_{\ell}}(M)_{jj},
	\]
	so there is some $\mathbf{K_{\ell}}(M)$ such that
	$\mathbf{K_{\ell}}(M)_{ii} \neq \mathbf{K_{\ell}}(M)_{jj}$.
\end{proof}

If $M \in \Sym{q}{}(\mathbb{R}_{\geq 0})$, then it suffices to consider symmetric planar edge gadget signatures.
\qutOrbits*
\begin{proof}
	By \cref{lemma: diagonal_orbit}, it suffices to show that, 
	if $\mathbf{K'}(M)_{ii} \neq \mathbf{K'}(M)_{jj}$ for
	some $\mathbf{K'}(M) \in \PlGadget(M)$, then there is a
	$\mathbf{K}(M) \in \PlEdge(M)$ such that 
	$\mathbf{K}(M)_{ii} \neq \mathbf{K}(M)_{jj}$. 
	First observe that
	the entries of $\mathbf{K'}(M)$ are nonnegative, 
	as they are sums of products of entries of $M$. 
	$\mathbf{K'}(M)$ can fail to be in $\PlEdge(M)$ in two ways:
	its underlying binary
	gadget $\vk'$ may have both labels on the same vertex, 
	or $\mathbf{K'}(M)$ may not be symmetric.
	In the first case, we have $\mathbf{K'}(M) = 
	\diag(\mathbf{x})$ for 
	some $\mathbf{x} \in (\mathbb{R}_{\geq 0})^q$. 
	Construct an edge gadget $\vk$ as the disjoint union of two copies of $\vk'$,
	but with the originally doubly-labeled vertex of $\vk'$ now labeled $\ell_{1}$ in the
	first copy and $\ell_{2}$ in the second copy. 
	Now $\mathbf{K}(M) := \mathbf{x} \mathbf{x}^{\tt{T}} \in \PlEdge(M)$ and, by
	nonnegativity, $\mathbf{K}(M)_{ii} = 
	(\mathbf{K'}(M)_{ii})^2 \neq 
	(\mathbf{K'}(M)_{jj})^2 = 
	\mathbf{K}(M)_{jj}$.
	
	In the second case, $\mathbf{K'}(M)$ is asymmetric, 
	so the two labels of $\vk'$ are on distinct vertices.
	Construct $\vk$ by again creating two copies $\vk'$, then merging $\ell_1$ of the first copy with $\ell_2$ of the second copy and
	vice-versa. Now $\mathbf{K}(M) \in 
	\Sym{q}{}(\mathbb{R}_{\geq 0})$
	such that $\mathbf{K}(M)_{ij} = \mathbf{K'}(M)_{ij} \cdot
	\mathbf{K'}(M)_{ji}$ for all $i, j \in [q]$.
	So,
	$\mathbf{K}(M) \in \PlEdge(M)$.
	Again, $\mathbf{K}(M)_{ii} = 
	(\mathbf{K'}(M)_{ii})^2
	\neq (\mathbf{K'}(M)_{jj})^2 =
	\mathbf{K}(M)_{jj}$.
\end{proof}

For \cref{proposition: autOrbits}, the classical version of \cref{theorem: qutOrbits}, we apply
the following classical version of \cref{theorem: duality}, a special case of \cite[Theorem 5]{young2022equality} (see also \cite[Lemma 2.5]{lovasz2006rank})\footnote{Lovász \cite{lovasz2006rank} considers
	only those 2-labeled graphs with labels on distinct vertices. As a result, 
	\cite[Lemma 2.5]{lovasz2006rank} only applies to \emph{twin-free} $M$.}.
Here, we view $\aut(M)$ as the group of permumtation matrices representing the permumtations in $\aut(M)$.
\begin{theorem}[\cite{young2022equality}]\label{theorem: classical_duality}
	Let $K \in \Mat{q}{}(\mathbb{R})$. We have $PK = KP$ for every
	$P \in \aut(M)$ if and only if $K \in \text{span}_{\mathbb{C}}(\Gadget(M))$.
\end{theorem}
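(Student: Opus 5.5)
The plan is to prove the two directions separately. The direction ($\Leftarrow$) is routine. The direction ($\Rightarrow$) goes through a single $q$-labeled ``template'' gadget, in the spirit of Lov\'asz's treatment of connection matrices.

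\textbf{Direction ($\Leftarrow$).} For a permutation matrix $P$ of $\sigma$, the condition $PK = KP$ says exactly that $K_{\sigma(x)\sigma(y)} = K_{xy}$ for all $x,y$. This condition is preserved under complex linear combinations, so it suffices to check it for a single signature $\vk(M)$. Take $\sigma \in \aut(M)$ and substitute $\tau \mapsto \sigma\circ\tau$ in \cref{equation: signature}. Since $M_{\sigma(a)\sigma(b)} = M_{ab}$, this substitution is a weight-preserving bijection between the maps counted in $\vk(M)_{xy}$ and those counted in $\vk(M)_{\sigma(x)\sigma(y)}$. The same argument works when $\ell_1=\ell_2$.

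\textbf{Direction ($\Rightarrow$), reduction to orbitals.} By the same observation, $K$ commutes with all of $\aut(M)$ iff $K$ is constant on the orbitals of $\aut(M)$ acting on $[q]\times[q]$. It therefore suffices to show that the indicator matrix $L^{(i,j)}$ of the orbital of each pair $(i,j)$ lies in $\text{span}_{\mathbb{C}}(\Gadget(M))$.

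\textbf{Construction of the template gadget.} Take vertices $v_1,\dots,v_q$. Fix exponents $m=(m_{ab})_{a\le b}$ with $m_{ab}\in\mathbb{Z}_{\ge 0}$, and join $v_a,v_b$ by $m_{ab}$ parallel edges (loops when $a=b$). Put $\ell_1 = v_i$ and $\ell_2 = v_j$, which is allowed to coincide when $i=j$. Call this binary gadget $\vk_m$. Its signature is
\[
\vk_m(M)_{xy} = \sum_{\tau:[q]\to[q],\ \tau(i)=x,\ \tau(j)=y}\ \prod_{a\le b} M_{\tau(a)\tau(b)}^{m_{ab}},
\]
where $\tau(a)$ is the image of $v_a$.

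The functions $A\mapsto\prod A_{ab}^{m_{ab}}$, with $m_{ab}=0$ allowed (simply omit that edge), are all the monomials in the entries of $A$. The set $\{(M_{\tau(a)\tau(b)})_{a,b} : \tau\}$ is a finite set of matrices, and monomials separate the points of any finite set. Hence some finite linear combination $\sum_m c_m \vk_m(M)$ equals
\[
L_{xy} = \#\{\tau:\tau(i)=x,\ \tau(j)=y,\ M_{\tau(a)\tau(b)} = M_{ab}\ \forall a,b\}.
\]
This step also handles zero entries of $M$ correctly. So $L\in\text{span}_{\mathbb{C}}(\Gadget(M))$. Every $\sigma\in\aut(M)$ contributes to $L$, so $L$ is positive on the whole orbital of $(i,j)$.

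\textbf{Main obstacle: the support of $L$.} We need $L$ to vanish outside the orbital of $(i,j)$. The difficulty is that a value-preserving map $\tau$ need not be a bijection when $M$ has twins. The first thing I would try is to show that every value-preserving $\tau$ satisfies $(\tau(i),\tau(j)) = (\sigma(i),\sigma(j))$ for some $\sigma\in\aut(M)$, as follows.
\begin{itemize}
\item If $\tau(a)=\tau(b)$, then value preservation forces $M_{ac}=M_{bc}$ for all $c$, so $a$ and $b$ are twins.
\item Conversely, rows that are equal are sent to rows that are equal. Hence $\tau$ induces a map between twin classes that preserves the quotient matrix.
\item Since $\tau$ also preserves $M$ restricted to a system of class representatives, and $[q]$ is finite, this induced map should be a bijection of twin classes.
\item Then I would argue, by counting class sizes along the value-preserving monoid (for instance, iterating $\tau$ to an idempotent power), that the induced map respects class sizes. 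Any class-size-preserving bijection of twin classes lifts to a genuine automorphism agreeing with $\tau$ on $i$ and $j$.
\end{itemize}

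If this counting step fails, the fallback is to enlarge the template. Add extra labeled copies of each twin class, so that the count $L$ is replaced by a Hadamard-type combination that forces injectivity of $\tau$, and then reuse the same monomial-separation argument.

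Once the support of $L$ is exactly the orbital of $(i,j)$, $L$ is a positive multiple of $L^{(i,j)}$: for every pair in the orbital the count $L_{xy}$ is the same, because composing with automorphisms permutes the contributing $\tau$. Hence $L^{(i,j)} \in \text{span}_{\mathbb{C}}(\Gadget(M))$, which completes the proof.
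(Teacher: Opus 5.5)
\textbf{Verdict: the proposal has a genuine gap.} Your ($\Leftarrow$) direction, the reduction to orbital indicators, and the interpolation step that produces $L$ are all fine. The step you flag as the main obstacle, however, is false whenever $M$ has twins, and the class-size counting cannot rescue it.

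First, suppose $i\neq j$ are twins. The map $\tau$ that sends $j$ to $i$ and fixes everything else is value-preserving. So $L_{ii}>0$, even though no automorphism sends $(i,j)$ to a diagonal pair. Concretely, for $M=J$ on $[2]$ with $i=1$, $j=2$, you get $L=J$, which is not a multiple of $J-I$.

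Second, the induced bijection of twin classes need not preserve class sizes. Take $M=\left(\begin{smallmatrix}2&2&1\\2&2&1\\1&1&2\end{smallmatrix}\right)$. The map $(1,2,3)\mapsto(3,3,1)$ is value-preserving and swaps the class $\{1,2\}$ with the class $\{3\}$. But $\aut(M)=\{\mathrm{id},(1\,2)\}$ fixes $3$. So even the template with $\ell_1=\ell_2=v_3$ gives $L_{11}>0$, outside the orbital $\{(3,3)\}$.

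Your second bullet is also unjustified as written: value preservation only shows that rows $\tau(a)$ and $\tau(b)$ agree on the image of $\tau$. As it stands, your argument is complete only for twin-free $M$. There $\tau(a)=\tau(b)$ forces $a=b$, so every value-preserving $\tau$ is an automorphism. That is essentially the twin-free setting of Lov\'asz's lemma, which the paper's footnote contrasts with Young's general theorem. The paper does not reprove the statement; it cites Young's Theorem~5.

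The missing idea is to force injectivity of $\tau$ by M\"obius inversion over the partition lattice $\Pi_q$ of $\{v_1,\dots,v_q\}$. Your \emph{Hadamard-type combination} fallback does not identify this.

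For $\pi\in\Pi_q$, let $\mathbf{K}_m/\pi$ be the binary gadget obtained by identifying the vertices in each block. Edges inside a block become loops. If $v_i$ and $v_j$ share a block, both labels land on the merged vertex. This is exactly why gadgets with $\ell_1=\ell_2$ are indispensable: for $M=J$ on $[2]$, gadgets with distinct labels span only $\mathbb{C}J$. Since $(\mathbf{K}_m/\pi)(M)_{xy}$ is the sum over maps $\tau$ constant on the blocks of $\pi$, M\"obius inversion gives
\[
\sum_{\pi\in\Pi_q}\mu(\hat 0,\pi)\,(\mathbf{K}_m/\pi)(M)_{xy}=\sum_{\substack{\tau\ \text{bijective}\\ \tau(i)=x,\ \tau(j)=y}}\ \prod_{a\le b}M_{\tau(a)\tau(b)}^{m_{ab}},\qquad \mu(\hat 0,\pi)=\prod_{B\in\pi}(-1)^{|B|-1}(|B|-1)!.
\]
Now apply your interpolating combination $\sum_m c_m$ on top of this. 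The result is $\#\{\sigma\in\aut(M):\sigma(i)=x,\ \sigma(j)=y\}$. This equals $|\mathrm{Stab}(i,j)|$ on the orbital of $(i,j)$ and $0$ elsewhere. Hence the orbital indicator lies in $\mathrm{span}_{\mathbb{C}}(\Gadget(M))$, with no analysis of non-injective $\tau$ needed.
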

Now, following the proof of \cite[Lemma 5]{young2022equality}, we obtain a classical version of \cref{lemma: diagonal_orbit}.
\begin{lemma}\label{lemma: classical_diagonal_orbit}
	Let $M \in \Sym{q}{}(\mathbb{R})$. Then
	$\mathbf{K}(M)_{ii} = \mathbf{K}(M)_{jj}$ 
	for every $\mathbf{K}(M) \in \Gadget(M)$ 
	if and only if $i$ and $j$ are in the same
	orbit of $\aut(M)$.
\end{lemma}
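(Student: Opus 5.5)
We mirror the proof of \cref{lemma: diagonal_orbit}, using \cref{theorem: classical_duality} in place of \cref{theorem: duality} and ordinary permutation matrices $P \in \aut(M)$ in place of the fundamental representation $\uc$.

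For $(\Leftarrow)$, suppose $\sigma \in \aut(M)$ satisfies $\sigma(i) = j$, and let $P$ be its permutation matrix. For any $\mathbf{K}(M) \in \Gadget(M)$ we have $\mathbf{K}(M) \in \text{span}_{\mathbb{C}}(\Gadget(M))$, so \cref{theorem: classical_duality} gives $P\mathbf{K}(M) = \mathbf{K}(M)P$. Equivalently, $\mathbf{K}(M)_{\sigma(x)\sigma(y)} = \mathbf{K}(M)_{xy}$ for all $x,y$. Setting $x = y = i$ yields $\mathbf{K}(M)_{jj} = \mathbf{K}(M)_{ii}$. One could also verify this directly from \cref{equation: signature} by composing each assignment $\tau$ with $\sigma$, since $M_{\sigma(a)\sigma(b)} = M_{ab}$.

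For $(\Rightarrow)$, we prove the contrapositive. Suppose $i$ and $j$ lie in different orbits of $\aut(M)$. Let $O$ be the $\aut(M)$-orbit of $(i,i)$ in $[q]\times[q]$, which is $\{(\sigma(i),\sigma(i)) : \sigma \in \aut(M)\}$. Define $L_{xy} = 1$ if $(x,y) \in O$ and $L_{xy} = 0$ otherwise. Then $L$ is invariant under simultaneous conjugation by every $P \in \aut(M)$, so $PL = LP$ for all such $P$. By \cref{theorem: classical_duality}, $L = \sum_\ell c_\ell \mathbf{K}_\ell(M)$ for some gadget signatures $\mathbf{K}_\ell(M) \in \Gadget(M)$ and scalars $c_\ell \in \mathbb{C}$. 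Since $(j,j) \notin O$ (otherwise some automorphism would map $i$ to $j$), we get $L_{ii} = 1 \neq 0 = L_{jj}$. Hence some $\mathbf{K}_\ell(M)$ must satisfy $\mathbf{K}_\ell(M)_{ii} \neq \mathbf{K}_\ell(M)_{jj}$.

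The main subtlety is confirming that \cref{theorem: classical_duality} applies to $\Gadget(M)$, which allows both labels on one vertex and asymmetric signatures. This is why the lemma is phrased for $\Gadget(M)$ rather than $\Edge(M)$; the footnote about twin-free $M$ is exactly the reason we do not restrict to $\Edge(M)$ here. Passing afterwards to symmetric edge gadgets for nonnegative $M$, as in \cref{proposition: autOrbits}, would then use the same doubling constructions as in the proof of \cref{theorem: qutOrbits}.
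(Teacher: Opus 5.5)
Your proof is correct and essentially the same as the paper's. Your $L$, the indicator of the $\aut(M)$-orbit of $(i,i)$, is exactly the paper's diagonal matrix with $L_{xx}=1$ iff $x$ lies in the orbit of $i$, and the conclusion via \cref{theorem: classical_duality} is identical; your treatment of $(\Leftarrow)$, which the paper simply calls immediate, is also fine.
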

\begin{proof}
	$(\Leftarrow)$ is immediate. Conversely, suppose $i$ and $j$ are in
	different orbits of $\aut(M)$, and define the diagonal matrix 
	$L \in \Mat{q}{}(\mathbb{R})$ by:
	$$L_{xx} = \begin{cases} 1 & x \sim i \\ 0 & \text{otherwise}\end{cases}.$$ 
	By definition, $L$ is constant on the orbits of $\aut(M)$,
	so $PL = LP$ for every $P \in \aut(M)$. Thus \cref{theorem: classical_duality} gives
	$L \in \text{span}_{\mathbb{C}}(\Gadget(M))$ so, as in the proof of
	\cref{theorem: qutOrbits}, one of the terms 
	$\mathbf{K}(M) \in \Gadget(M)$ of $L$ must satisfy
	$\mathbf{K}(M)_{ii} \neq \mathbf{K}(M)_{jj}$.
\end{proof}
Now we obtain the following result, asserted in \cref{sec: quantumBody}, by \cref{lemma: classical_diagonal_orbit} via the
same argument by which \cref{theorem: qutOrbits} follows from \cref{lemma: diagonal_orbit}.
\autOrbits*

\subsection{Diagonal distinctness from pairwise distinguishability}
In this subsection, we shall prove
\cref{theorem: fullGadgetSeparationNonNegative}.
Before proving the dichotomy for all
$M \in \Sym{q}{}(\mathbb{R}_{\geq 0})$,
let us prove the dichotomy for
$M \in \Sym{q}{}(\mathbb{R}_{> 0})$.

\begin{lemma}\label{lemma: makeDiagonalDistinct}
	Let $M \in \Sym{q}{}(\mathbb{R}_{\geq 0})$.
	Assume that for all $x < y \in [q]$,
	there is a $\mathbf{K_{xy}}(M) \in \PlEdge(M)$ such that $\mathbf{K_{xy}}(M)_{xx} \neq \mathbf{K_{xy}}(M)_{yy}$ and $\mathbf{K_{xy}}(M)_{ii} > 0$
	for every $i \in [q]$.
	Then there is a diagonal distinct $\mathbf{K}(M) 
	\in \PlEdge(M)$.
\end{lemma}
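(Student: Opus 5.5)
The plan is to combine the separating gadgets $\mathbf{K_{xy}}$ by a planar "product" operation whose diagonal is a polynomial in the diagonals of the factors, and then pick exponents that make this polynomial take pairwise distinct values.

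The natural operation is parallel composition. Given edge gadgets $\mathbf{K}_1, \dots, \mathbf{K}_m$, all planar with both labels on the outer face, identify all their $\ell_1$'s and all their $\ell_2$'s. The result is again a planar edge gadget with labels on the outer face, since we nest the copies between the two labeled vertices. Its signature is the entrywise (Hadamard) product $\mathbf{K}_1(M) \circ \cdots \circ \mathbf{K}_m(M)$. This product is symmetric, so it lies in $\PlEdge(M)$. Taking $n_{xy}$ parallel copies of each $\mathbf{K_{xy}}$ gives, for every tuple $\mathbf{n} = (n_{xy})_{x<y} \in \mathbb{Z}_{\geq 0}^{\binom{q}{2}}$, a gadget whose signature has diagonal entries
$$D_i(\mathbf{n}) = \prod_{x<y} \left(\mathbf{K_{xy}}(M)_{ii}\right)^{n_{xy}}, \qquad i \in [q].$$
Here the zero tuple corresponds to the edgeless gadget; to stay inside $\PlEdge$ we may instead require all $n_{xy} \geq 1$, which changes nothing below.

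Next I write $a^{(xy)}_i = \ln \mathbf{K_{xy}}(M)_{ii}$. This is well defined because the hypothesis guarantees every diagonal entry is strictly positive. Then
$$\ln D_i(\mathbf{n}) = \langle \mathbf{n}, \mathbf{a}_i \rangle, \qquad \text{where } \mathbf{a}_i = \left(a^{(xy)}_i\right)_{x<y}.$$
For each pair $i \neq j$, the vector $\mathbf{a}_i - \mathbf{a}_j$ is nonzero, because its $(i,j)$ coordinate (with $i<j$ ordered) is $\ln \mathbf{K_{ij}}(M)_{ii} - \ln \mathbf{K_{ij}}(M)_{jj} \neq 0$. So $D_i(\mathbf{n}) = D_j(\mathbf{n})$ holds only when $\mathbf{n}$ lies on the hyperplane $H_{ij} = \{\mathbf{n} : \langle \mathbf{n}, \mathbf{a}_i - \mathbf{a}_j\rangle = 0\}$.

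There are finitely many such hyperplanes, and a finite union of proper hyperplanes cannot contain the integer lattice points of $\mathbb{Z}_{\geq 1}^{\binom{q}{2}}$. One concrete way to see this is to take $\mathbf{n} = (1, t, t^2, \dots)$. Then each $\langle \mathbf{n}, \mathbf{a}_i - \mathbf{a}_j\rangle$ is a nonzero polynomial in $t$, so it has finitely many roots, and any large enough integer $t$ avoids all of them. Alternatively, the countable-avoidance spirit of \cref{lemma: analyticGadgets} applies here. With such an $\mathbf{n}$, the combined gadget $\mathbf{K}$ has a diagonal-distinct signature $\mathbf{K}(M) \in \PlEdge(M)$.

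The only step that needs care is the planarity and symmetry bookkeeping for parallel composition. We must check that gluing several outer-face 2-labeled planar graphs at both labels keeps the labels on a common face, and that the signature is exactly the Hadamard product. Positivity of the diagonal is used only to take logarithms; without it, a zero diagonal entry would collapse the multiplicative separation.
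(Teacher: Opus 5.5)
Your proposal is correct and follows essentially the same route as the paper. The paper also glues $z_{xy}$ parallel copies of each $\mathbf{K_{xy}}$ at both labels to get a Hadamard-product signature, takes logarithms using positivity of the diagonals, and notes that $\prod_{i<j}\sum_{x<y} z_{xy}\bigl(\log \mathbf{K_{xy}}(M)_{ii}-\log \mathbf{K_{xy}}(M)_{jj}\bigr)$ is a nonzero polynomial (the coefficient of $z_{ij}$ in the $(i,j)$ factor is nonzero), so it cannot vanish on all of $(\mathbb{Z}_{>0})^{q(q-1)/2}$. Your moment-curve choice $\mathbf{n}=(1,t,t^2,\dots)$ is just an explicit way to produce such a point.
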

\begin{proof}
	Given $\mathbf{z} = (z_{xy})_{x < y \in [q]}
	\in (\mathbb{Z}_{> 0})^{\nicefrac{q(q-1)}{2}}$, we construct
	the edge gadget $\mathbf{K_{z}}$ from a disjoint union of $z_{xy}$ copies of each
	$\mathbf{K_{xy}}$ by merging all of the vertices 
	labeled $\ell_{1}$, then merging all of the vertices
	labeled $\ell_{2}$. 
	Equivalently, construct $\mathbf{K_{z}}$ from the thickening gadget
	$\mathbf{T_{n}}$ for $n = \sum_{x < y} z_{xy}$ 
	by replacing each edge with the 
	appropriate $\mathbf{K_{xy}}$. 
	Clearly, $\mathbf{K_{z}}$ is a planar gadget, and
	$\mathbf{K_{z}}(M) \in \PlEdge(M)$ for all
	$\mathbf{z}$.    
	It follows that for all $i, j \in [q]$,
	$$\mathbf{K_{z}}(M)_{ij} = \prod_{x < y \in [q]}
	\Big((\mathbf{K_{xy}}(M))_{ij}\Big)^{z_{ij}}.$$
	
	Every $\mathbf{K_{xy}}(M)$ has 
	positive diagonal entries, so for every
	$\mathbf{z} \in 
	(\mathbb{Z}_{> 0})^{\nicefrac{q(q-1)}{2}}$,
	$\mathbf{K_{z}}(M)_{ii} > 0$ for all $i \in [q]$.
	
	We will now show that there exists some
	$\mathbf{z}^{*}$ such that 
	$\mathbf{K_{z^{*}}}(M)$ is diagonal distinct.
	Note that pairwise distinctness
	of $\mathbf{K_{z}}(M)_{ii}$ for $i \in [q]$ is equivalent to pairwise distinctness
	of the values $\log (\mathbf{K_{z}}(M)_{ii}) 
	= \sum_{x < y} z_{xy} \cdot
	\log(\mathbf{K_{xy}}(M)_{ii})$.
	This is equivalent to a choice of $\mathbf{z}^{*}$ 
	on which the
	polynomial
	\begin{align*}
		\zeta(\mathbf{z})
		&= \prod_{i < j \in [q]} \left(
		\sum_{x < y \in [q]} z_{xy} \cdot 
		\log(\mathbf{K_{xy}}(M)_{ii})
		- \sum_{x < y \in [q]} z_{xy} \cdot 
		\log(\mathbf{K_{xy}}(M)_{jj})
		\right)\\
		&= \prod_{i < j \in [q]}
		\sum_{x < y \in [q]} z_{xy} \cdot 
		\big(\log(\mathbf{K_{xy}}(M)_{ii}) - 
		\log(\mathbf{K_{xy}}(M)_{jj})\big)
	\end{align*}
	is nonzero.
	For any $i < j\in [q]$,
	by our assumption about the gadget
	$\mathbf{K}_{ij}(M)$,
	we see that the coefficient of $z_{ij}$ in
	$$\sum_{x < y \in [q]}z_{xy} \cdot \big(
	\log(\mathbf{K_{xy}}(M)_{ii}) - 
	\log(\mathbf{K_{xy}}(M)_{jj})\big)$$
	is: 
	$\log(\mathbf{K}_{ij}(M)_{ii}) - 
	\log(\mathbf{K}_{ij}(M)_{jj}) \neq 0$.
	Thus each factor of $\zeta(\mathbf{z})$ is nonzero,
	and so $\zeta(\mathbf{z})$ is not the zero polynomial.
	Since $(\mathbb{Z}_{> 0})$ is infinite,
	and $\zeta$ is non-zero, it cannot vanish on all
	of $(\mathbb{Z}_{> 0})^{\nicefrac{q(q-1)}{2}}$.
	Hence there exists some
	$\mathbf{z}^{*} \in 
	(\mathbb{Z}_{> 0})^{\nicefrac{q(q-1)}{2}}$,
	such that $\mathbf{K_{z^{*}}}(M) \in \PlEdge(M)$ 
	is diagonal distinct.
\end{proof}

We can now also prove the following:

\begin{theorem}\label{theorem: fullGadgetSeparation}
	Let $M \in \Sym{q}{}(\mathbb{R}_{> 0})$.
	Assume that for all $x < y \in [q]$, there exists
	some $\mathbf{K_{xy}}(M) \in \PlEdge(M)$ such that
	$\mathbf{K_{xy}}(M)_{xx} \neq \mathbf{K_{xy}}(M)_{yy}$.
	Then, $\PlGH(M)$ is polynomial-time tractable
	if $M$ is rank $1$,
	and in all other cases, $\PlGH(M)$ is $\#$P-hard.   
\end{theorem}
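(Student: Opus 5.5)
The plan is to reduce to the positive-valued dichotomy \cref{theorem: positiveDichotomy} by producing a diagonal distinct planar edge gadget signature of $M$. Tractability in the rank $1$ case is the same direct product computation as before. If $M = \mathbf{x}\mathbf{x}^{\tt T}$ then $Z_M(G) = \prod_{v}\sum_i x_i^{\deg(v)}$. So from now on we assume ${\rm rank}(M) \geq 2$.

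\textbf{Step 1.} Since $M \in \Sym{q}{}(\mathbb{R}_{> 0})$, every planar edge gadget signature $\mathbf{K_{xy}}(M)$ has strictly positive entries. Each entry is a nonempty sum of products of positive numbers, because the labels are distinct vertices. In particular all diagonal entries $\mathbf{K_{xy}}(M)_{ii} > 0$. The hypotheses of \cref{lemma: makeDiagonalDistinct} therefore hold, and we get a diagonal distinct $\mathbf{K}(M) \in \PlEdge(M)$. By \cref{eq:signature_matrix}, $\PlGH(\mathbf{K}(M)) \leq \PlGH(M)$, and $\mathbf{K}(M) \in \Sym{q}{}(\mathbb{R}_{> 0})$.

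\textbf{Step 2.} We must make sure the gadget signature does not collapse to rank $1$; this is the main obstacle. If ${\rm rank}(\mathbf{K}(M)) \geq 2$, then \cref{theorem: positiveDichotomy} gives $\#$P-hardness immediately. Otherwise we repair the gadget. Replacing $\mathbf{K}(M)$ by a bridge-type composite $M \cdot \mathbf{K}(M) \cdot M$ (a planar series composition) is not enough, since it still has rank $\le 1$.

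What I would try first is to combine $\mathbf{K}(M)$ with $M$ in parallel. Take the gadget $\mathbf{K_{z}}$ formed by merging $z_0$ parallel copies of the single edge with $z_1$ copies of $\mathbf{K}$. Its signature is the Hadamard product $A(z_0,z_1)_{ij} = M_{ij}^{z_0}\,\mathbf{K}(M)_{ij}^{z_1}$. Thickening interpolation (\cref{lemma: thickeningLemma}, or simply \cref{lemma: simpleThickening} applied to this signature) lets us pass to real exponents.

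\textbf{Step 3.} We then use \cref{lemma: analyticGadgets} with the analytic map $(\theta_0,\theta_1) \mapsto \mathcal{T}$-type matrix $A_{ij} = M_{ij}^{\theta_0}\mathbf{K}(M)_{ij}^{\theta_1}$ on $\mathbb{R}^2$. The family consists of $\phi_{\rm diag}$ together with some $2\times 2$ minor $\rho$. We need to check that each is nonzero somewhere:
\begin{itemize}
\item $\phi_{\rm diag}$ is nonzero at $(0,1)$.
\item Some $2\times 2$ minor of $M$ is nonzero, since ${\rm rank}(M)\ge 2$. The corresponding minor $\rho$ is then nonzero at $(1,0)$.
\end{itemize}
A generic $(\theta_0^*,\theta_1^*)$ therefore gives a diagonal distinct positive matrix of rank $\ge 2$ that reduces to $\PlGH(M)$. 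Applying \cref{theorem: positiveDichotomy} to it completes the proof.

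The delicate point is that the reduction must be valid for real $(\theta_0,\theta_1)$ jointly. I would justify this by applying \cref{lemma: thickeningLemma} to the thickened parallel gadget with a common generating set for the entries of $M$ and $\mathbf{K}(M)$, which realizes all monomial exponent choices.
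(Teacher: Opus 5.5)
Your outline matches the paper's argument. You use \cref{lemma: makeDiagonalDistinct} to get a diagonal distinct $\mathbf{K}(M)$, put $\mathbf{K}$ in parallel with an edge, and interpolate to real exponents. You then apply \cref{lemma: analyticGadgets} with $\phi_{\rm diag}$ and a nonvanishing $2\times 2$ minor, and finish with \cref{theorem: positiveDichotomy}.

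The gap is exactly the step you flagged. Nothing you cite gives $\PlGH(A(\theta_0,\theta_1)) \leq \PlGH(M)$ on an open subset of $\mathbb{R}^2$. That is what \cref{lemma: analyticGadgets} needs, because the generic point it returns must itself be reachable by a reduction.
\begin{itemize}
\item Applying \cref{lemma: simpleThickening} to the signature $A(z_0,z_1)$ only gives $A(\theta z_0,\theta z_1)$. These are points on countably many rays through the origin, which form a null set.
\item Applying \cref{lemma: thickeningLemma} with a common generating set $\{g_t\}$ does not help either. Write $M_{ij}=\prod_t g_t^{a_{ijt}}$ and $\mathbf{K}(M)_{ij}=\prod_t g_t^{b_{ijt}}$. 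The lemma only substitutes $g_t\mapsto w_t$, producing entries $\prod_t w_t^{z_0a_{ijt}+z_1b_{ijt}-e_t^*}$. A generator shared by $M$ and $\mathbf{K}(M)$ is rescaled in both at once, so the $M$-exponent and the $\mathbf{K}(M)$-exponent cannot be varied independently. If everything is a power of a single $g$, you get only a one-parameter family.
\end{itemize}
So your two witnesses, $(0,1)$ for $\phi_{\rm diag}$ and $(1,0)$ for the minor, are not known to lie in a common reachable analytic family.

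This is easy to repair in either of two ways.
\begin{itemize}
\item \emph{Direct two-variable interpolation.} Group the assignments $\sigma$ of $G$ by the pair $(x,y)=\bigl(\prod_{e}M_{\sigma(u)\sigma(v)},\prod_{e}\mathbf{K}(M)_{\sigma(u)\sigma(v)}\bigr)$, so that $Z_{A(z_0,z_1)}(G)=\sum_{(x,y)}x^{z_0}y^{z_1}c(x,y)$. For each fixed $z_1$, a Vandermonde system in $z_0$ recovers $\sum_y y^{z_1}c(x,y)$ for every $x$. A second Vandermonde system in $z_1$ then recovers each $c(x,y)$, using polynomially many oracle calls. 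After that, $Z_{A(\theta_0,\theta_1)}(G)=\sum x^{\theta_0}y^{\theta_1}c(x,y)$ for all real $\theta_0,\theta_1$.
\item \emph{Fix $\theta_0=1$, as the paper does.} One edge in parallel with $n$ copies of $\mathbf{K}$ has signature $M_{ij}\mathbf{K}(M)_{ij}^{n}$. A single Vandermonde system then gives $\PlGH(\mathcal{K}_M(\theta))\leq\PlGH(M)$ for all $\theta\in\mathbb{R}$, where $\mathcal{K}_M(\theta)_{ij}=M_{ij}\mathbf{K}(M)_{ij}^{\theta}$. On this line your minor is nonzero at $\theta=0$, since $\mathcal{K}_M(0)=M$. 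The point $(0,1)$ is no longer available for $\phi_{\rm diag}$. Instead, $\mathcal{K}_M(\theta)_{ii}/\mathcal{K}_M(\theta)_{jj}=(M_{ii}/M_{jj})\bigl(\mathbf{K}(M)_{ii}/\mathbf{K}(M)_{jj}\bigr)^{\theta}$ tends to $0$ or $\infty$, so $\phi_{\rm diag}(\mathcal{K}_M(\theta))\neq 0$ for large $\theta$.
\end{itemize}
With either repair the rest of your argument goes through, and the preliminary case split on ${\rm rank}(\mathbf{K}(M))$ becomes unnecessary.
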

\begin{proof}
	When ${\rm rank}(M) = 1$, we already know that
	$\PlGH(M)$ is 
	tractable from the proof of
	\cref{theorem: positiveDichotomy}
	in \autoref{appendix: diagonalDistinctDichotomy}.
	So, we may assume that
	${\rm rank}(M) \geq 2$.
	
	Note that since $M \in \Sym{q}{}(\mathbb{R}_{> 0})$,
	each 
	$\mathbf{K_{xy}}(M)_{ii} > 0$ for all $i \in [q]$. Therefore, by
	\cref{lemma: makeDiagonalDistinct},
	there exists some $\mathbf{K}(M)
	\in \PlEdge(M)$, such that
	$\mathbf{K}(M)$ is diagonal distinct.
	Given $n \in \mathbb{Z}_{\geq 0}$, we will
	now construct a gadget $\mathbf{K_{n}}$
	by starting with the gadget $\mathbf{T_{(n+1)}}$,
	and then replacing all but one of the parallel edges
	with a copy of $\mathbf{K}$.
	Clearly,  $\mathbf{K_{n}}$ is planar, and
	for all $i, j \in [q]$, we see that
	$$\mathbf{K_{n}}(M)_{ij} = M_{ij} \cdot 
	\big(\mathbf{K}(M)_{ij} \big)^{n}.$$
	
	Given any planar graph
	$G = (V, E)$,
	by computing $Z_{\mathbf{K_{n}}(M)}(G)
	= Z_{M}(\mathbf{K_{n}}G)$ for all $n \in [|E|]$,
	we obtain a Vandermonde system of linear equations of polynomial size.
	We can combine all equal terms $\mathbf{K}(M)_{ij} $ to get
	a full ranked Vandermonde system.
	Solving this system of linear equations shows that
	$\PlGH(\mathcal{K}_{M}(\theta)) \leq \PlGH(M)$
	for all $\theta \in \mathbb{R}$, where
	$\mathcal{K}_{M}(\theta) \in \Sym{q}{}(\mathbb{R}_{> 0})$
	is defined as:
	$\mathcal{K}_{M}(\theta)_{ij} = M_{ij} \cdot 
	(\mathbf{K}(M)_{ij})^{\theta}$.
	
	We note that $\mathcal{K}_{M}(0) = M$, and from our
	assumption about $M$, we note that
	${\rm rank}(M) \geq 2$.
	Since the entries of $\mathcal{K}_{M}(\theta)$ are
	continuous as a function of $\theta$, and
	eigenvalues are continuous as a function of the
	entries of the matrix, we see that there
	exists an open neighborhood
	$\mathcal{O} \subset \mathbb{R}$ around $0$,
	such that ${\rm rank}(\mathcal{K}_{M}(\theta)) \geq 2$
	for all $\theta \in \mathcal{O}$.
	Now, consider some $i \neq j \in [q]$.
	From \cref{lemma: makeDiagonalDistinct},
	we may assume without loss of generality that
	$\mathbf{K}(M)_{ii} < \mathbf{K}(M)_{jj}$.
	Then,
	$$\lim\limits_{\theta \rightarrow \infty}
	\left(\frac{\mathcal{K}_{M}(\theta)_{ii}}
	{\mathcal{K}_{M}(\theta)_{jj}} \right)
	= \left(\frac{M_{ii}}{M_{jj}} \right) \cdot
	\lim\limits_{\theta \rightarrow \infty} \left(
	\frac{\mathbf{K}(M)_{ii}}{\mathbf{K}(M)_{jj}}\right)^{\theta}
	= 0.$$
	Therefore, we see that for large enough $\theta$,
	$\phi_{\rm{diag}}(\mathcal{K}_{M}(\theta)) \neq 0$
	(where $\phi_{\rm{diag}}$ is as defined in
	\cref{defn:phi-psi-function-for-diagonal-distinct}).
	But then, \cref{lemma: analyticGadgets} lets us find
	some $\theta^{*} \in \mathcal{O}$ such that
	$N = \mathcal{K}_{M}(\theta^{*}) \in 
	\Sym{q}{}(\mathbb{R}_{> 0})$
	satisfies: $\phi_{\rm{diag}}(N) \neq 0$,
	and $\PlGH(N) \leq \PlGH(M)$.
	From our choice of $\theta^{*} \in \mathcal{O}$,
	we also see that ${\rm rank}(N) \geq 2$.
	So, \cref{theorem: positiveDichotomy}
	implies that $\PlGH(N) \leq \PlGH(M)$ is
	$\#$P-hard.
\end{proof}

\begin{remark*}
	The main challenge with extending
	the dichotomy to all
	$M \in \Sym{q}{}(\mathbb{R}_{\geq 0})$
	is that
	\cref{lemma: makeDiagonalDistinct} requires each $\mathbf{K_{xy}}(M)$ to have
	strictly positive diagonal entries, which may not hold if $M$ has 0 entries.
	We need the $\mathbf{K}$ given by \cref{lemma: makeDiagonalDistinct} to have
	strictly positive diagonal entries because in the key step where
	we construct $\mathbf{K_{z}}(M)$,
	if any of the gadgets $\mathbf{K_{xy}}(M)$
	has a zero diagonal entry: $\mathbf{K_{xy}}(M)_{ii} = 0$
	for some $i \in [q]$, that forces
	$\mathbf{K_{z}}(M)_{ii} = 0$.
	So, we need to be careful to avoid that situation.
\end{remark*}

\begin{notation}
	For $M \in \Sym{q}{}(\mathbb{R}_{\geq 0})$,
	we let $\Gamma_{M}
	= ([q], E(\Gamma_{M}))$ denote the
	\emph{underlying graph} of $M$,
	namely $(i, j) \in E(\Gamma_{M})$ if and only if
	$M_{ij} > 0$.
\end{notation}

\begin{lemma}\label{lemma: makeDiagonalDistinctBipartite}
	Let $M \in \Sym{q}{}(\mathbb{R}_{\geq 0})$
	(for $q \geq 2$),
	such that the underlying graph $\Gamma_{M}$ 
	is \emph{connected} and bipartite.
	Assume that for all $x < y \in [q]$, there
	is a $\mathbf{K_{xy}}(M) \in \PlEdge(M)$
	such that $\mathbf{K_{xy}}(M)_{xx} \neq 
	\mathbf{K_{xy}}(M)_{yy}$.
	Then, for all $x < y \in [q]$,
	$\mathbf{K_{xy}} \in \PlEdge(M)$
	also satisfies the condition:
	$\mathbf{K_{xy}}(M)_{ii} \neq 0$ for all $i \in [q]$.
\end{lemma}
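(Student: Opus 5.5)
The plan is to show that, for connected bipartite $\Gamma_{M}$, the diagonal of \emph{any} edge gadget signature is either identically zero or everywhere positive. The hypothesis $\mathbf{K_{xy}}(M)_{xx} \neq \mathbf{K_{xy}}(M)_{yy}$ excludes the identically-zero case, which gives the claim.

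\textbf{Step 1 (merged graph).} Fix a gadget $\mathbf{K}$ and $i \in [q]$. By \cref{equation: signature}, $\mathbf{K}(M)_{ii}$ is a sum over maps $\tau$ with $\tau(\ell_1)=\tau(\ell_2)=i$ of products of entries of $M$. Each term is non-negative. A term is positive exactly when $\tau$ is a homomorphism into $\Gamma_{M}$, i.e.\ every edge of $\mathbf{K}$ is sent to a pair $(a,b)$ with $M_{ab}>0$. Let $\mathbf{K}'$ be the graph obtained from $\mathbf{K}$ by identifying $\ell_1$ and $\ell_2$ into one vertex $\ell$; any $\ell_1\ell_2$ edge becomes a loop at $\ell$. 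Then $\mathbf{K}(M)_{ii} > 0$ if and only if there is a homomorphism $\mathbf{K}' \to \Gamma_{M}$ sending $\ell \mapsto i$.

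\textbf{Step 2 (characterize existence).} I claim such a homomorphism exists if and only if $\mathbf{K}'$ is bipartite (loopless, with no odd cycle). This condition does not depend on $i$.
\begin{itemize}
\item Necessity: $\Gamma_{M}$ is bipartite, so it has no loops and $M_{aa}=0$ for all $a$. A homomorphic image of an odd closed walk or a loop is again an odd closed walk or a loop, which $\Gamma_{M}$ does not contain.
\item Sufficiency: since $\Gamma_{M}$ is connected and $q \geq 2$, the vertex $i$ has some neighbor $j$ with $M_{ij}>0$. Send each vertex in the component of $\ell$ to $i$ or $j$ according to its parity class relative to $\ell$. Send every other component of $\mathbf{K}'$ onto the same edge $\{i,j\}$ by its own bipartition. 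All edges then land on $\{i,j\}$, and $M_{ij}>0$.
\end{itemize}

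\textbf{Step 3 (conclude).} By Step 2, for every gadget $\mathbf{K}$ either $\mathbf{K}(M)_{ii}=0$ for all $i$, or $\mathbf{K}(M)_{ii}>0$ for all $i$. Apply this to $\mathbf{K_{xy}}$. Since $\mathbf{K_{xy}}(M)_{xx} \neq \mathbf{K_{xy}}(M)_{yy}$, they cannot both be $0$, so the first alternative fails. Hence $\mathbf{K_{xy}}(M)_{ii} \neq 0$ for all $i \in [q]$.

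The only delicate point is the sufficiency direction of Step 2. It uses connectivity (in fact only that every vertex has a neighbor) and must handle components of $\mathbf{K}'$ not containing $\ell$, as well as loops created by merging $\ell_1$ and $\ell_2$. Planarity plays no role.
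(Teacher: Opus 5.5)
Your proof is correct and takes essentially the same approach as the paper. A nonzero diagonal entry at $x$ or $y$ forces the gadget to be bipartite with $\ell_1,\ell_2$ on the same side, which is exactly your condition that the merged graph $\mathbf{K}'$ is bipartite; then sending that side to $i$ and the other side to a neighbour $j$ of $i$ gives a positive term. Identifying $\ell_1$ and $\ell_2$ is just a tidier packaging of the paper's case split on whether the two labels lie in the same connected component.
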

\begin{proof}
	It is well-known that,
	if there is at least one homomorphism from graph $H$ to bipartite graph
	$\Gamma$, then $H$ must be bipartite.
	Similarly, since $\mathbf{K_{xy}}(M)_{xx} \neq
	\mathbf{K_{xy}}(M)_{yy}$, 
	one of $\mathbf{K_{xy}}(M)_{xx}$ or 
	$\mathbf{K_{xy}}(M)_{yy}$ is nonzero, 
	so $\mathbf{K_{xy}}$ must be bipartite.
	Now, assume that the two labeled vertices 
	of $\mathbf{K_{xy}}$ are
	$\ell_{1}$ and $\ell_{2}$.
	First, we assume that
	they belong to the same connected component of
	$\mathbf{K_{xy}}$, but
	belong to  different partitions.
	Since one of $\mathbf{K_{xy}}(M)_{xx}$ or 
	$\mathbf{K_{xy}}(M)_{yy}$ is nonzero,
	this implies that the odd-length
	path from $\ell_{1}$ to $\ell_{2}$ in $\mathbf{K_{xy}}$
	had to be mapped to an even length path
	between $x$ and $x$, or $y$ and $y$ in
	$\Gamma_{M}$.
	This contradiction therefore implies that
	if $\ell_{1}$ and $\ell_{2}$ belong to the 
	same connected component of 
	$\mathbf{K_{xy}}$, then
	$\ell_{1}$ and $\ell_{2}$ must be in the same
	partition of $\mathbf{K_{xy}}$.
	
	Since $\Gamma_{M}$ is connected,
	we note that every $i \in [q]$ is adjacent to some other
	$j \in [q]$ in $\Gamma_{M}$.
	We note that if $\ell_{1}$ and $\ell_{2}$
	belong to different connected components
	of $\mathbf{K_{xy}}$, we can trivially
	define a bipartition of $\mathbf{K_{xy}}$
	such that $\ell_{1}$ and $\ell_{2}$ belong
	to the same partition.
	We have also seen that if $\ell_1$ and $\ell_2$ belong to the 
	same connected component, then they must
	belong to the same partition.
	So, in any case, there exists a bipartition
	of $\mathbf{K_{xy}}$ such that
	$\ell_{1}$, and $\ell_{2}$ belong
	to the same partition.
	Now, we can define a map
	$\mathbf{K_{xy}}$ to $\Gamma_{M}$
	that sends the partition containing
	$\ell_{1}$ and $\ell_{2}$
	to $i$ and the other partition to $j$.
	We see that this map is a homomorphism, so
	$\mathbf{K_{xy}}(M)_{ii} 
	\geq (M_{ij})^{|E(\mathbf{K_{xy}})|} > 0$, as $M_{ij} > 0$. 
\end{proof}
For the non-bipartite case, we apply the ``connector'' idea of Lovász and
Szegedy \cite[Theorem 1.4]{lovasz2009contractors} and Dvo{\v{r}}{\'a}k
\cite[Lemma 11]{dvorak2010recognizing}: by replacing every edge of
$\mathbf{K_{xy}}$ by a linear combination of sufficiently long paths, we
may assume the underlying graph admits homomorphisms to $M$.
\begin{lemma}\label{lemma: makeDiagonalDistinctNonBipartite}
	Let $M \in \Sym{q}{}(\mathbb{R}_{\geq 0})$
	(for $q \geq 2$),
	such that the underlying graph $\Gamma_{M}$ 
	is \emph{connected} but not bipartite.
	Assume that for all $x < y \in [q]$, there
	is a $\mathbf{K_{xy}}(M) \in \PlEdge(M)$
	such that $\mathbf{K_{xy}}(M)_{xx} \neq 
	\mathbf{K_{xy}}(M)_{yy}$.
	Then, there exist $\mathbf{J_{xy}} \in \PlEdge(M)$
	for all $x < y \in [q]$, such that
	$\mathbf{J_{xy}}(M)_{xx} \neq 
	\mathbf{J_{xy}}(M)_{yy}$, and
	$\mathbf{J_{xy}}(M)_{ii} \neq 0$ for all $i \in [q]$.
\end{lemma}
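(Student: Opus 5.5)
Starting from $\mathbf{K_{xy}}$, I would build $\mathbf{J_{xy}}$ in two stages: first produce a gadget with \emph{some} strictly positive diagonal entry everywhere, then combine it with $\mathbf{K_{xy}}$ while keeping the separation of $x$ and $y$.

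\textbf{Step 1: a positive everywhere gadget.} Since $\Gamma_{M}$ is connected and not bipartite, it is primitive, so $M^{n}$ has all entries strictly positive for all sufficiently large $n$. The stretching gadget $\mathbf{S_{n}}$ then gives $\mathbf{S_{n}}(M) = M^{n} \in \Sym{q}{}(\mathbb{R}_{>0})$. This is the Lovász--Szegedy/Dvořák ``connector'' observation in its simplest form.

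\textbf{Step 2: replace edges by paths.} For an odd integer $n$ (chosen large), let $\mathbf{K_{xy}}^{(n)}$ be obtained from $\mathbf{K_{xy}}$ by replacing every edge with a path of length $n$. This preserves planarity and keeps the labels on the outer face. Its signature is $\mathbf{K_{xy}}(M^{n})$. The underlying graph of $M^{n}$ is complete with loops, so every map $\tau$ contributes positively, and hence $\mathbf{K_{xy}}(M^{n})_{ii}>0$ for all $i$.

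The difficulty is that we need $\mathbf{K_{xy}}(M^{n})_{xx}\neq \mathbf{K_{xy}}(M^{n})_{yy}$ for \emph{some} admissible $n$. The plan is to use the spectral decomposition $M=\sum_{k}\lambda_{k}h_{k}h_{k}^{\tt T}$. Then $\mathbf{K_{xy}}(M^{n})_{xx}-\mathbf{K_{xy}}(M^{n})_{yy}$ is an exponential polynomial in $n$: it is a finite sum $\sum_{\mu} c_{\mu}\mu^{n}$, where each $\mu$ ranges over products $\prod_{e}\lambda_{k_{e}}$. At $n=1$ this quantity is nonzero by hypothesis, so not all $c_{\mu}$ vanish. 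A nonzero exponential polynomial $\sum c_\mu \mu^n$, with the $\mu$ distinct reals, has only finitely many zeros among the odd $n$ after grouping $\mu$ with $-\mu$. Grouping as $(c_\mu - c_{-\mu})$ for odd $n$ might vanish, so I would instead allow both parities. For even $n$ the coefficients group as $c_\mu+c_{-\mu}$, and for odd $n$ as $c_\mu-c_{-\mu}$. Both groupings cannot be identically zero unless all $c_\mu=0$. Hence the difference is nonzero for infinitely many $n$ of some parity, in particular for arbitrarily large $n$, where primitivity also guarantees positivity. Choose such an $n$ and set $\mathbf{J_{xy}}=\mathbf{K_{xy}}^{(n)}$.

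\textbf{Main obstacle.} I expect the hardest step to be making the ``nonzero at $n=1$ implies nonzero for some large $n$'' argument rigorous. The concern is repeated or negative eigenvalues causing cancellation, since the $\mu$'s can coincide or be negatives of each other. The parity-splitting argument above is what I would try first. If it proves delicate, the fallback is to interpolate as in \cref{lemma: stretchingLemma}: a Vandermonde argument in the variables $\mu^{n}$ shows the coefficient vector itself is determined, so its vanishing for all large $n$ forces vanishing at $n=1$. Finally, the resulting signature $\mathbf{K_{xy}}(M^n)$ is symmetric because $\mathbf{K_{xy}}(M)$ was, so $\mathbf{J_{xy}}(M)\in\PlEdge(M)$.
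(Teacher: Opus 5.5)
Your route is different from the paper's, and its core is sound. The paper picks $m$ with $(M^{k})_{ii}>0$ for all $k\ge m$ and shows $M\in\mathrm{span}_{\mathbb{R}}(M^{m},\dots,M^{m+q})$ by a Vandermonde argument on the eigenvalues. It then expands $\mathbf{K_{xy}}(M)$ multilinearly over the edges into a linear combination of gadgets $\mathbf{K^{\psi}_{xy}}$, in which each edge $e$ becomes a path of length $\psi(e)\in\{m,\dots,m+q\}$. Since the combination separates $x$ and $y$, some single $\mathbf{K^{\psi}_{xy}}$ must. You instead stretch every edge by the same $n$ and treat $f(n)=\mathbf{K_{xy}}(M^{n})_{xx}-\mathbf{K_{xy}}(M^{n})_{yy}=\sum_{\mu}c_{\mu}\mu^{n}$ as an exponential polynomial. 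Your parity split is correct. If $c_{\mu}+c_{-\mu}=c_{\mu}-c_{-\mu}=0$ for every $\mu>0$, then $f\equiv 0$ on $n\ge 1$, contradicting $f(1)\neq 0$. Otherwise the largest base with nonzero grouped coefficient dominates, so $f(n)\neq 0$ for all large $n$ of that parity. Your version gives one uniform stretch and separation for all large $n$ of a parity. The paper's is purely linear-algebraic, avoids asymptotics, and bounds path lengths by $m+q$.

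The step that fails is the last one. Symmetry of $\mathbf{K_{xy}}(M)$ does not imply symmetry of $\mathbf{K_{xy}}(M^{n})$. Membership $\mathbf{K_{xy}}(M)\in\PlEdge(M)$ only says the signature is symmetric at this particular $M$; the 2-labeled graph need not be symmetric, and stretching can break the symmetry. For example, take $q\ge 3$ and $M$ with unit diagonal and generic positive off-diagonal entries. Let $\mathbf{K_{xy}}$ be an edge with a loop at $\ell_{1}$, in parallel with $\mathbf{S_{2}}$. Its signature $M_{ij}(M^{2})_{ij}$ is symmetric and separates every pair. Its $n$-stretch, however, has signature $(M^{n})_{ii}(M^{n})_{ij}(M^{2n})_{ij}$, which is not symmetric once the $(M^{n})_{ii}$ are distinct. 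So your $\mathbf{J_{xy}}(M)$ need not lie in $\PlEdge(M)$.

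The repair is the one the paper uses; it needs it too, since its mixed-length $\mathbf{K^{\psi}_{xy}}$ are asymmetric in general. Take two copies of your stretched gadget and identify $\ell_{1}$ of each with $\ell_{2}$ of the other. The result is planar with both labels on the outer face. With $A=\mathbf{K_{xy}}(M^{n})$, its signature $A_{ij}A_{ji}$ is symmetric. Its diagonal entries $A_{ii}^{2}$ stay positive and still separate $x$ from $y$, because $A_{xx},A_{yy}\ge 0$ are distinct.
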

\begin{proof}
	By nonnegativity, $(M^k)_{ii} \neq 0$ if and only if 
	there is a length-$k$ walk in $\Gamma_{M}$ 
	from vertex $i$ back to itself. 
	Since $\Gamma_{M}$ is connected and not bipartite,
	there exists a smallest $m \geq 2$ such that, 
	for every $k \geq m$, every vertex $i$ has
	a length-$k$ walk to itself, so
	$(M^k)_{ii} \neq 0$ for every $i \in [q]$.
	We first show that
	\begin{equation}\label{eq:nspan}
		M \in \text{span}_{\mathbb{R}}(M^{m}, \ldots M^{m+q}).
	\end{equation} 
	Write $M = H \diag(\lambda_1,\ldots,\lambda_q) H^{\tt{T}}$ 
	for orthogonal $H$, so
	$M^k = H \diag(\lambda_1^k,\ldots,\lambda_q^k) H^{\tt{T}}$, 
	and hence \cref{eq:nspan} is
	equivalent to $\diag(\lambda_1,\ldots,\lambda_q) \in \text{span}_{\mathbb{R}}(
	\diag(\lambda_1^m,\ldots,\lambda_q^m), \ldots \diag(\lambda_1^{m+q},\ldots,\lambda_q^{m+q}))$.
	Without loss of generality, 
	let $\lambda_1,\ldots,\lambda_s$, with $s \leq q$, 
	be the unique non-zero eigenvalues of $M$. 
	Then
	\cref{eq:nspan} is equivalent to 
	\begin{equation}\label{eq:nspan2}
		\diag(\lambda_1,\ldots,\lambda_s) \in \text{span}_{\mathbb{R}}(
		\diag(\lambda_1^m,\ldots,\lambda_s^m), \ldots \diag(\lambda_1^{m+q},\ldots,\lambda_s^{m+q})).
	\end{equation}
	Upon multiplying its $i$th column by the nonzero scalar $\lambda_i^{-m}$ for
	every $i \in [s]$, the matrix
	\[
	L = 
	\begin{bmatrix}
		\lambda_1^m & \lambda_2^m & \ldots & \lambda_s^m \\
		\lambda_1^{m+1} & \lambda_2^{m+1} & \ldots & \lambda_s^{m+1} \\
		\vdots & \vdots & \ldots & \vdots \\
		\lambda_1^{m+q} & \lambda_2^{m+q} & \ldots & \lambda_s^{m+q} \\
	\end{bmatrix}
	\]
	becomes a Vandermonde matrix with $q$ rows and $s \leq q$ columns. Since
	$\lambda_1,\ldots,\lambda_s$ are distinct, $L$ has rank $s$, 
	so the rows of $L$ span
	$\mathbb{R}^s$. This proves \cref{eq:nspan2}, hence also \cref{eq:nspan}.
	
	Now write $M = \sum_{i=0}^q c_i M^{m+i}$ for $c_0,\ldots,c_q \in \mathbb{R}_{\neq 0}$. 
	Fix $x < y \in [q]$.
	For every $i,j \in [q]$, substituting into \cref{equation: signature} gives
	\newcommand\numberthis{\addtocounter{equation}{1}\tag{\theequation}}
	\begin{align*}
		\mathbf{K_{xy}}(M)_{ij} 
		&= \sum_{\substack{\tau: V(\mathbf{K_{xy}}) \to [q] \\ \tau(\ell_1) = i,
				\tau(\ell_2) = j}} \prod_{(u,v) \in E(\mathbf{K_{xy}})} 
		\sum_{i=0}^q c_i (M^{m+i})_{\tau(u)\tau(v)} \\
		&= \sum_{\psi: E(\mathbf{K_{xy}}) \to \{m,\ldots,m+q\}} 
		\left(\prod_{e \in E(\mathbf{K_{xy}})} c_{\psi(e)-m}\right)
		\sum_{\substack{\tau: V(\mathbf{K_{xy}}) \to [q] \\ \tau(\ell_1) = i,
				\tau(\ell_2) = j}} \prod_{(u,v) \in E(\mathbf{K_{xy}})} 
		(M^{\psi((u,v))})_{\tau(u)\tau(v)}\\
		&= \sum_{\psi: E(\mathbf{K_{xy}}) \to \{m,\ldots,m+q\}} 
		(c_\psi)
		\mathbf{K_{xy}^\psi}(M)_{ij}, \numberthis\label{eq:kpsi_sum}
	\end{align*}
	where $c_\psi := \prod_{e \in E(\mathbf{K_{xy}})} c_{\psi(e)-m} \neq 0$ and
	$\mathbf{K_{xy}^\psi} \in \PlGadget(M)$ is constructed by replacing each
	$(u,v) \in E(\mathbf{K_{xy}})$ with the stretching gadget $\mathbf{S}_{\psi((u,v))}$, so
	that, as in \cref{eq:signature_matrix},
	\begin{equation}
		\label{eq:quantum_gadget}
		\mathbf{K_{xy}^\psi}(M)_{ij} =
		\sum_{\substack{\tau: V(\mathbf{K_{xy}}) \to [q] \\ \tau(\ell_1) = i,
				\tau(\ell_2) = j}} \prod_{(u,v) \in E(\mathbf{K_{xy}})} 
		(M^{\psi((u,v))})_{\tau(u)\tau(v)}.\\
	\end{equation}
	For every $\psi: E(\mathbf{K_{xy}}) \to \{m,\ldots,m+q\}$ and $(u,v) \in E(\mathbf{K_{xy}})$,
	the diagonal entries of $M^{\psi((u,v))}$ are nonzero by definition of $m$.
	It then follows from nonnegativity that $\mathbf{K_{xy}^\psi}(M)_{ii} > 0$ for every
	$i \in [q]$,
	as the assignment $\tau$ mapping every vertex of $\mathbf{K_{xy}}$ to $i$ yields a nonzero 
	term in \cref{eq:quantum_gadget}. Furthermore, it follows from \cref{eq:kpsi_sum} and the
	assumption $\mathbf{K_{xy}}(M)_{xx} \neq \mathbf{K_{xy}}(M)_{yy}$ that there is a 
	$\psi: E(\mathbf{K_{xy}}) \to \{m,\ldots,m+q\}$ for which 
	$\mathbf{K_{xy}^\psi}(M)_{xx} \neq \mathbf{K_{xy}^\psi}(M)_{yy}$.
	This $\mathbf{K_{xy}^\psi}$ has the desired properties. Finally,
	as in the proof of \autoref{theorem: qutOrbits}, we may replace
	$\mathbf{K_{xy}^\psi} \in \PlGadget(M)$ with a
	$\mathbf{J_{xy}} \in \PlEdge(M)$ with the desired properties 
	(this involves squaring the diagonal
	entries of $\mathbf{J_{xy}}(M)$, which does not change whether these entries
	are nonzero).
\end{proof}

Using \cref{lemma: makeDiagonalDistinctBipartite,lemma: makeDiagonalDistinctNonBipartite}
to meet the conditions of \cref{lemma: makeDiagonalDistinct} for $M \in \Sym{q}{}(\mathbb{R}_{\geq 0})$
with a connected $\Gamma_{M}$,
we can now construct
a $\mathbf{K}(M) \in \PlEdge(M)$ such that
$\mathbf{K}(M)$ is diagonal distinct.

\begin{corollary}\label{corollary: makeDiagonalDistinctConnected}
	Let $M \in \Sym{q}{}(\mathbb{R}_{\geq 0})$,
	such that the underlying graph $\Gamma_{M}$ 
	is connected.
	Assume that for all $x < y \in [q]$, there
	is a $\mathbf{K_{xy}}(M) \in \PlEdge(M)$
	such that $\mathbf{K_{xy}}(M)_{xx} \neq 
	\mathbf{K_{xy}}(M)_{yy}$.
	Then there is a diagonal distinct 
	$\mathbf{K}(M) \in \PlEdge(M)$.
\end{corollary}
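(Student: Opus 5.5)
The plan is to reduce directly to \cref{lemma: makeDiagonalDistinct}. That lemma already produces a diagonal distinct $\mathbf{K}(M) \in \PlEdge(M)$, provided that for every $x<y$ we have a planar edge gadget separating $x$ from $y$ on the diagonal \emph{and} having strictly positive diagonal entries everywhere. So the whole task is to upgrade the given separating gadgets $\mathbf{K_{xy}}$ to ones with nonvanishing diagonal. Since $M$ is non-negative, every gadget signature is non-negative, so ``nonzero'' and ``positive'' coincide for these entries.

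First I will dispose of the degenerate case $q=1$. Here there is nothing to separate, and any single-edge gadget (signature $M$ itself) is trivially diagonal distinct. Assume $q\ge 2$, and split on whether the connected graph $\Gamma_M$ is bipartite.

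If $\Gamma_M$ is bipartite, \cref{lemma: makeDiagonalDistinctBipartite} applies verbatim. The given $\mathbf{K_{xy}}$ already satisfy $\mathbf{K_{xy}}(M)_{ii}\neq 0$ for all $i$, hence $\mathbf{K_{xy}}(M)_{ii}>0$ by non-negativity. If $\Gamma_M$ is not bipartite, \cref{lemma: makeDiagonalDistinctNonBipartite} supplies replacement gadgets $\mathbf{J_{xy}}\in\PlEdge(M)$ that still separate $x$ and $y$ and have all diagonal entries nonzero, hence positive. In either case the hypotheses of \cref{lemma: makeDiagonalDistinct} hold, with the family $\{\mathbf{K_{xy}}\}$ or $\{\mathbf{J_{xy}}\}$ respectively. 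That lemma then gives the combined gadget $\mathbf{K_z}$, a parallel composition of copies of the separating gadgets, which is diagonal distinct for a suitable $\mathbf{z}$.

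The only point needing care is checking that the lemmas' hypotheses match exactly. In particular, \cref{lemma: makeDiagonalDistinct} asks for $\mathbf{K_{xy}}(M)_{ii}>0$ for \emph{every} $i\in[q]$, not just for $i\in\{x,y\}$, and both case lemmas indeed deliver this for all $i$. There is no real obstacle beyond this bookkeeping, since the substantive work was done in the preceding lemmas.
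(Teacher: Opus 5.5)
Your proposal is correct and follows the paper's route: use \cref{lemma: makeDiagonalDistinctBipartite} when $\Gamma_M$ is bipartite, or \cref{lemma: makeDiagonalDistinctNonBipartite} when it is not, to get separating gadgets with positive diagonals, then apply \cref{lemma: makeDiagonalDistinct}. Your separate handling of $q=1$, which the two case lemmas (stated for $q\ge 2$) do not cover, is a small piece of care the paper leaves implicit.
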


\cref{corollary: distinct} now follows as an immediate
corollary of \cref{corollary: makeDiagonalDistinctConnected} 
and \cref{theorem: qutOrbits}.
\distinct*

It is natural to ask whether Lemmas \ref{lemma: makeDiagonalDistinctBipartite} and \ref{lemma: makeDiagonalDistinctNonBipartite} hold for all $M \in \Sym{q}{}(\mathbb{R}_{\geq 0})$, even those for which
$\Gamma_M$ is disconnected. The proofs still go through when
all components of $\Gamma_M$ are bipartite or all components of
$\Gamma_M$ are not bipartite. However, we show using techniques of
Dvo{\v{r}}{\'a}k \cite{dvorak2010recognizing} that when $\Gamma_M$
contains a mix of bipartite and non-bipartite components, the existence
of a $\mathbf{K_{xy}}$ separating $x$ and $y$ does not guarantee the
existence of the desired $\mathbf{J_{xy}}$ with nonzero diagonal
entries (although the constructed $M$ have nontrivial
automorphisms, hence do not satisfy the lemmas' hypothesis that all
pairs $x < y$ can be separated).

Given a graph $H$ with adjacency matrix $M$, let $H \times K_2$ be the bipartite double cover
of $H$, with adjacency matrix $M \otimes \left[\begin{smallmatrix}
	0 & 1 \\ 1 & 0
\end{smallmatrix}\right]$. 
Equivalently, $H \times K_2$ has vertex set $V(H) \times \{0,1\}$ 
and $((h,b),(h',b')) \in E(H \times K_2)$ if and only if $(h,h') \in E(H)$ and $b \neq b'$
($H \times K_2$ is bipartite between $\{(h,0) \mid h \in V(H)\}$ and
$\{(h,1) \mid h \in V(H)\}$). 
Let $H_1 \cup H_2$ be the disjoint union of graphs $H_1$ and $H_2$.
\begin{proposition}
	For any connected non-bipartite graph $H$, let $H_1 := H \cup H$ and $H_2 := H \times K_2$,
	and let
	$M \in \Sym{V(H_1) \sqcup V(H_2)}{}(\mathbb{R}_{\geq 0})$ be the 
	adjacency matrix of 
	$H_1 \cup H_2$. Then for every $x \in V(H_1)$, there is a
	$y \in V(H_2)$ such that 
	\begin{itemize}
		\item there exists a $\vk(M) \in \PlEdge(M)$ such that
		$\vk(M)_{xx} \neq \vk(M)_{yy}$, but
		\item every $\mathbf{J}(M) \in \Gadget(M)$ such that $\mathbf{J}(M)_{xx} \neq \mathbf{J}(M)_{yy}$
		also satisfies $\mathbf{J}(M)_{yy} = 0$.
	\end{itemize}
\end{proposition}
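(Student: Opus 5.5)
The plan is to take $x$ to be a vertex $h$ in one copy of $H$ inside $H_1 = H \cup H$, and to take $y := (h,0) \in V(H_2)$. The first bullet then comes from an explicit planar gadget built on an odd cycle. The second bullet comes from a rooted homomorphism-counting argument in the style of Dvo{\v{r}}{\'a}k, comparing counts into $H$ and into the bipartite double cover $H \times K_2$.

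For the first bullet, I will use that $H$ is connected and non-bipartite. Hence there is an odd $k$ with a closed walk of length $k$ at $h$ in $H$ (walk to an odd cycle, traverse it, walk back), so $(M^k)_{xx} > 0$. Let $\vk$ consist of a cycle $C_k$ through $\ell_1$ together with an isolated vertex $\ell_2$. This gadget is planar, has both labels on the outer face, and its signature is symmetric. It satisfies $\vk(M)_{ij} = (M^k)_{ii}$ for all $j$. Since $H_1 \cup H_2$ has all closed walks at $y$ of even length ($H \times K_2$ is bipartite and $y$ lies in that component), we get $\vk(M)_{yy} = (M^k)_{yy} = 0 \neq (M^k)_{xx} = \vk(M)_{xx}$.

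For the second bullet, take any binary gadget $\mathbf{J}$, and let $F$ be the graph obtained from $\mathbf{J}$ by identifying $\ell_1$ and $\ell_2$ into a single root $r$ (if $\ell_1 = \ell_2$ already, set $F = \mathbf{J}$). Then $\mathbf{J}(M)_{ii}$ is the number of homomorphisms $F \to H_1 \cup H_2$ sending $r \mapsto i$. Write $F = C \sqcup F'$, where $C$ is the component containing $r$. This gives $\mathbf{J}(M)_{ii} = \hom_r(C, H_1 \cup H_2; r \mapsto i) \cdot \hom(F', H_1 \cup H_2)$. The second factor does not depend on $i$, so it suffices to compare the rooted counts for $C$. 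Since $C$ is connected, every homomorphism from $C$ sending $r \mapsto x$ lands in the copy of $H$ containing $x$, so $\hom_r(C, H_1\cup H_2; r\mapsto x) = \hom_r(C, H; r \mapsto h)$. Likewise, every homomorphism sending $r \mapsto y$ lands in $H \times K_2$.

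Now split into two cases. If $C$ is not bipartite, it has no homomorphism into the bipartite $H \times K_2$, so $\mathbf{J}(M)_{yy} = 0$. If $C$ is bipartite with bipartition classes $A \ni r$ and $B$, then composing a homomorphism $C \to H$ (with $r \mapsto h$) with the parity labelling ($A \mapsto 0$, $B \mapsto 1$) gives a bijection onto homomorphisms $C \to H \times K_2$ with $r \mapsto (h,0)$. The inverse is projection to the first coordinate; the second coordinate is forced by parity along paths from $r$. Hence $\mathbf{J}(M)_{xx} = \mathbf{J}(M)_{yy}$ in this case. Therefore $\mathbf{J}(M)_{xx} \neq \mathbf{J}(M)_{yy}$ forces $C$ to be non-bipartite, and hence $\mathbf{J}(M)_{yy} = 0$.

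The main care point is the reduction from a binary gadget to the rooted graph $F$. I need to handle correctly the cases $\ell_1 = \ell_2$ and $\ell_1$, $\ell_2$ in different components, since identifying them may merge components or create an odd cycle. Both situations are absorbed by defining $C$ after the identification. I also need to check that the parity bijection is well defined, i.e.\ that the second coordinate is determined consistently; this holds because $C$ is connected and bipartite.
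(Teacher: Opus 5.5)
\paragraph{Gap in the first bullet.} Your second bullet is handled correctly, but the gadget you give for the first bullet is not in $\PlEdge(M)$. You compute $\mathbf{K}(M)_{ij} = (M^k)_{ii}$ for all $j$. This depends only on the row index, so the signature is \emph{not} symmetric: $\mathbf{K}(M)_{xy} = (M^k)_{xx} > 0$, whereas $\mathbf{K}(M)_{yx} = (M^k)_{yy} = 0$. By definition $\PlEdge(M)$ contains only symmetric signatures, so your assertion that ``its signature is symmetric'' is false, and as written the first bullet is not established.

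The repair is immediate and gives exactly the paper's gadget. Put a second copy of $C_k$ through $\ell_2$, i.e.\ take two disjoint $k$-cycles each carrying one label. Equivalently, glue two copies of your gadget crosswise, attaching $\ell_1$ of each to $\ell_2$ of the other. Then $\mathbf{K}(M)_{ij} = (M^k)_{ii}(M^k)_{jj}$ is symmetric, the gadget is still planar with both labels on the outer face, and
\[
\mathbf{K}(M)_{xx} = \bigl((M^k)_{xx}\bigr)^2 > 0 = \mathbf{K}(M)_{yy}.
\]

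\paragraph{Comparison for the second bullet.} With that fix the proof is complete, and your treatment of the second bullet is more elementary than the paper's. The paper proceeds in three steps:
\begin{itemize}
\item it discards label-free components;
\item it handles a non-bipartite $\mathbf{J}$ and an odd $\ell_1$--$\ell_2$ path as separate cases;
\item it transports homomorphisms via Dvo{\v{r}}{\'a}k's second-coordinate-preserving isomorphism $f\colon H_1\times K_2\to H_2\times K_2$, using a $2$-colouring $c$ of $\mathbf{J}$ with $c(\ell_1)=c(\ell_2)=0$ and setting $y:=\pi_1(f(x,0))$.
\end{itemize}
Your approach replaces each of these:
\begin{itemize}
\item identifying $\ell_1$ with $\ell_2$ folds both special cases into ``the root component $C$ is non-bipartite'';
\item factoring out $F'$ replaces the discarding step;
\item your parity lift $g \mapsto (g,\chi)$, with $\chi$ the $2$-colouring of $C$ having $\chi(r)=0$, is precisely what Dvo{\v{r}}{\'a}k's map reduces to here, since $(H\times K_2)\times K_2 \cong (H\times K_2)\cup(H\times K_2)$.
\end{itemize}
Your version is self-contained and lets you take $y=(h,0)$ for either copy of $H$. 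The paper's phrasing isolates the only properties actually used: a second-coordinate-preserving isomorphism $H_1\times K_2\cong H_2\times K_2$, plus bipartiteness of $H_2$. It therefore carries over to other such pairs.
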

\begin{proof}
	For every vertex $x$ of $H_1 = H \cup H$, there is an odd $r$ such that there 
	is a length-$r$ closed walk from $x$ to itself in $H$.
	So, if $\mathbf{K}$ is the edge gadget consisting of two copies of the cycle $C_r$, each with one labeled
	vertex, then $\mathbf{K}(M)_{xx} \neq 0$. Additionally,
	$\mathbf{K}(M)_{yy} = 0$ for every $y \in V(H_2)$, as $H_2 = H \times K_2$ is bipartite.
	This proves the first claim. 
	
	For the second claim, we may assume that $\mathbf{J} \in \Gadget(M)$ contains no 
	connected components without
	a labeled vertex, as such components simply multiply $\mathbf{J}(M)_{xx}$ and $\mathbf{J}(M)_{yy}$
	by the same constant. Since $H_2$ is not bipartite, any non-bipartite $\mathbf{J}$ will
	satisfy $\mathbf{J}(M)_{yy} = 0$ for every $y \in V(H_2)$. So assume $\mathbf{J}$ is bipartite.
	
	Dvo{\v{r}}{\'a}k \cite[Theorem 13]{dvorak2010recognizing} shows that there is an
	isomorphism $f: H_1 \times K_2 \to H_2 \times K_2$ preserving the second coordinate
	-- that is, for every
	$(h,b) \in V(H_1 \times K_2)$, we have $f((h,b)) = ((h',b))$ for some $h' \in V(H_2)$ --
	and, for any bipartite $\mathbf{J}$, uses $f$ to construct a family of bijections between 
	the sets $\text{Hom}(\mathbf{J},H_1)$ and
	$\text{Hom}(\mathbf{J},H_2)$ of homomorphisms from $\mathbf{J}$ to $H_1$ and $H_2$,
	respectively (ignoring for now the vertex labels of $\mathbf{J}$). 
	Each $c: V(\mathbf{J}) \to \{0,1\}$ defining a valid bipartition of $\mathbf{J}$
	yields a bijection as follows. For $\beta \in \{1,2\}$, define 
	$\pi_1^{H_\beta \times K_2}: V(H_\beta \times K_2) \to V(H_\beta)$ by
	$\pi_1(h,b) = h$. Then map $g_1 \in \text{Hom}(\mathbf{J},H_1)$ to $g_2 \in \text{Hom}(\mathbf{J},H_2)$
	defined by
	\begin{equation}\label{eq:g1tog2}
		g_2(\ell) = \pi_1^{H_2 \times K_2}(f(g_1(\ell),c(\ell)))
	\end{equation}
	for $\ell \in V(\mathbf{J})$. The inverse of this map is given by
	\begin{equation}\label{eq:g2tog1}
		g_1(\ell) = \pi_1^{H_1 \times K_2}(f^{-1}(g_2(\ell),c(\ell)))
	\end{equation}
	for $\ell \in V(\mathbf{J})$. Let $\ell_1,\ell_2 \in V(\mathbf{J})$ be the labeled vertices. If
	there is an odd-length path in $\mathbf{J}$ between $\ell_1$ and $\ell_2$, then again
	$\mathbf{J}(M)_{yy} = 0$ for every $y \in V(K_2)$ and we are done. 
	Assume otherwise, so we can choose $c$ such that $c(\ell_1) = c(\ell_2) = 0$. Let
	$y := \pi_1^{H_2 \times K_2}(f(x,0)) \in V(H_2)$. 
	If $g_1(\ell_1) = g_1(\ell_2) = x$, then, by \cref{eq:g1tog2}, $g_2(\ell_1) = g_2(\ell_2)
	= y$. Conversely, if $g_2(\ell_1) = g_2(\ell_2) = y$, then, by \cref{eq:g2tog1},
	\[
	g_1(\ell_1) = g_1(\ell_2) = \pi_1^{H_1 \times K_2}(f^{-1}(y,0))
	= \pi_1^{H_1 \times K_2}(x,0) = x.
	\]
	Therefore the bijection between $\text{Hom}(\mathbf{J},H_1)$ and $\text{Hom}(\mathbf{J},H_2)$
	restricts to a bijection between those homomorphisms in $\text{Hom}(\mathbf{J},H_1)$ sending
	$\ell_1$ and $\ell_2$ to $x$ and those homomorphisms in $\text{Hom}(\mathbf{J},H_2)$ sending
	$\ell_1$ and $\ell_2$ to $y$. Since $\mathbf{J}$ contains no connected components without a
	labeled vertex, any homomorphism from $\mathbf{J}$ to $H_1 \cup H_2$ sending $\ell_1$ and
	$\ell_2$ to $x$ (resp. $y$) is a homomorphism from $\mathbf{J}$ to $H_1$ (resp. $H_2$).
	Hence $\mathbf{J}(M)_{xx} = \mathbf{J}(M)_{yy}$.
\end{proof}

\cref{theorem: fullGadgetSeparationNonNegative}, however, only
requires \cref{corollary: makeDiagonalDistinctConnected} as stated,
for $\Gamma_M$ connected.

\fullGadgetSeparationNonNegative*
\begin{proof}
	We note that $\Gamma_{M}$ can be decomposed
	into its various connected components, such that
	(after a renaming of $[q]$),
	$M = M_{1} \oplus \dots \oplus M_{r}$,
	such that each $M_{i} \in \Sym{q_{i}}{}(\mathbb{R}_{\geq 0})$
	corresponds to a connected component of
	$\Gamma_{M}$.
	
	We have already seen in the proof
	of \cref{theorem: positiveDichotomy} that
	when $M_{i}$ are rank $0$ or rank $1$, then
	$\PlGH(M_{i})$ is polynomial-time tractable.
	Now, let us assume that $M_{i}$ is
	bipartite, but rank $2$.
	In this case, it is already known
	(\cite{bulatov2005complexity} see Theorem 6)
	that $\GH(M_{i})$ (and therefore, also
	$\PlGH(M_{i})$)) is polynomial-time tractable.
	Now, \cref{theorem: domainSeparable}
	implies that if every $M_{i}$ is rank $0$
	or rank $1$ or bipartite rank $2$,
	then $\PlGH(M)$ is polynomial-time tractable.
	
	Let us now assume otherwise.
	Without loss of generality, we may assume that
	either ${\rm rank}(M_{1}) \geq 2$ and $M_{1}$
	is not bipartite, or that
	${\rm rank}(M_{1}) \geq 3$ and $M_1$ is bipartite.
	In either case, we can use
	\cref{corollary: makeDiagonalDistinctConnected}
	to find some gadget $\mathbf{K}$ such that
	$\mathbf{K}(M_{1})$ is diagonal distinct.
	When $M_{1}$ is not bipartite,
	using the same techniques as in
	\cref{theorem: fullGadgetSeparation},
	we can construct $N = \mathcal{K}_{M}(\theta^{*})$
	that satisfies:
	$N = N_{1} \oplus \dots \oplus N_{r}$ for
	$N_{i} \in \Sym{q_{i}}{}(\mathbb{R}_{\geq 0})$,
	$\phi_{\rm{diag}}(N_{1}) \neq 0$,
	$\PlGH(N) \leq \PlGH(M)$, and
	${\rm rank}(N_{1}) = {\rm rank}(M_{1}) \geq 2$.
	From the \cref{theorem: nonNegativeDichotomy}
	dichotomy, it then follows that
	$\PlGH(N_{1})$ is $\#$P-hard, and 
	\cref{theorem: domainSeparable} then implies that
	$\PlGH(N) \leq \PlGH(M)$ is also
	$\#$P-hard.
	
	When $M_{1}$ is bipartite, we note that 
	the construction $\mathcal{K}_{M}$ that we
	used in \cref{theorem: fullGadgetSeparation}
	does not quite work, because the diagonal
	values of $M$ are all zeros.
	We note that since $\Gamma_{M_{1}}$ is
	connected, there exists some
	$m \geq 1$, such that $(M_{1})^{2m}
	= M_{1, 1} \oplus M_{1, 2}$, such that
	$M_{1, 1} \in \Sym{r_{1}}{}(\mathbb{R}_{> 0})$,
	and $M_{1, 2} \in \Sym{r_{2}}{}(\mathbb{R}_{> 0})$
	(where $r_{1} + r_{2} = q_{1}$).
	Moreover, since ${\rm rank}(M_{1}) \geq 3$,
	we see that at least one of $M_{1, 1}$ and
	$M_{1, 2}$ must have rank at least $2$.
	Without loss of generality,
	we may assume that ${\rm rank}(M_{1, 1}) \geq 2$.
	
	Now, we construct the gadget
	$\mathbf{K_{n}}$ by starting with $\mathbf{T_{(n+1)}}$,
	replacing all but one of the parallel edges
	with a copy of $\mathbf{K}$, and
	replacing the final edge with a copy of
	$\mathbf{S_{2m}}$.
	So,
	$$\mathbf{K_{n}}(M_{1})_{ij} = ((M_{1})^{2m})_{ij} \cdot
	(\mathbf{K}(M_{1})_{ij})^{n}.$$
	We can now interpolate with
	this gadget, just as in the proof of
	\cref{theorem: fullGadgetSeparation} to
	construct $\mathcal{K}_{M}(\theta)$
	such that $\mathcal{K}_{M}(\theta)_{ij}
	= (M^{2m})_{ij} \cdot (\mathbf{K}(M)_{ij})^{\theta}$,
	and $\PlGH(\mathcal{K}_{M}(\theta)) \leq 
	\PlGH(M)$ for all $\theta \in \mathbb{R}$.
	
	We can now find some
	$\theta^{*}$ close to $0$, such that
	$N = \mathcal{K}_{M}(\theta^{*})$ satisfies:
	$N = N_{1} \oplus \dots \oplus N_{r}$,
	$\phi_{\rm{diag}}(N_{1}) \neq 0$,
	$N_{1} = N_{1, 1} \oplus N_{1, 2}$,
	${\rm rank}(N_{1, 1}) = {\rm rank}(M_{1, 1})
	\geq 2$.
	So, once again,
	\cref{theorem: nonNegativeDichotomy,theorem: domainSeparable}
	imply that $\PlGH(N) \leq \PlGH(M)$
	is $\#$P-hard.
\end{proof}

\subsection{Undecidability}
Graphs $H$ and $H'$ with adjacency matrices $M$ and $M'$ are \emph{quantum isomorphic} if
there is a quantum permutation matrix $\uc$ such that $\uc M = M' \uc$. It is well-known
that connected $H$ and $H'$ are classically isomorphic if and only if some vertex of $H$
and some vertex of $H'$ are in the same orbit of $\aut(H \cup H') = \aut(M \oplus M')$, 
where $\cup$ denotes disjoint union. The quantum analogue is also true:
Connected graphs $H$ and $H'$ with adjacency matrices $M$ and $M'$ are quantum isomorphic if and only if some $v \in V(H)$
and $v' \in V(H')$ are in the same orbit of $\qut(M \oplus M')$
\cite[Theorem 4.5]{lupini2020nonlocal}. Furthermore, either a graph or its complement is connected, a connected graph cannot be quantum isomorphic to a disconnected graph \cite{lupini2020nonlocal}, and two graphs are quantum isomorphic if and only if their complements are \cite{atserias2019quantum}. The matrix $M \oplus M'$ (and the adjacency matrix of $\overline{H} \cup \overline{H'}$) are in 
$\Sym{|V(H)|+|V(H')|}{}(\mathbb{R}_{\geq 0})$, so we have the following.
\begin{theorem}[\cite{lupini2020nonlocal}]
	Quantum isomorphism of graphs reduces to the following problem:
	given $q \geq 1$, $M \in \Sym{q}{}(\mathbb{R}_{\geq 0})$ and $i,j \in [q]$, determine whether $i$ and $j$ are in the same orbit of $\qut(M)$.
	\label{theorem: orbit_isomorphism}
\end{theorem}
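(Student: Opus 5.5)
The reduction is built from three known facts about graphs $H,H'$ with adjacency matrices $M,M'$.

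\emph{Step 1 (connected case).} By \cite[Theorem 4.5]{lupini2020nonlocal}, if $H$ and $H'$ are both connected, then they are quantum isomorphic if and only if some $v \in V(H)$ and some $v' \in V(H')$ lie in the same orbit of $\qut(M \oplus M')$.

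\emph{Step 2 (arbitrary input).} For a general input pair we cannot assume connectivity, so we first normalize.
\begin{itemize}
\item Either a graph or its complement is connected. The complement is computable.
\item $H \cong_q H'$ if and only if $\overline{H} \cong_q \overline{H'}$ \cite{atserias2019quantum}.
\item A connected graph is never quantum isomorphic to a disconnected one \cite{lupini2020nonlocal}.
\end{itemize}
Given $(H,H')$, check connectivity of both graphs in polynomial time.
\begin{itemize}
\item If exactly one of $H,H'$ is connected, output ``not quantum isomorphic'' (a fixed no-instance).
\item If both are connected, keep the pair.
\item If both are disconnected, replace the pair by $(\overline{H},\overline{H'})$. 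Both complements are then connected, and quantum isomorphism is preserved.
\end{itemize}

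\emph{Step 3 (build the orbit queries).} Now both graphs in the pair are connected. Set $q=|V(H)|+|V(H')|$ and let $M\oplus M'$ be the combined adjacency matrix, which lies in $\Sym{q}{}(\mathbb{R}_{\geq 0})$ since it is 0--1 valued. Fix one vertex $v_0 \in V(H)$, and for each $v' \in V(H')$ form the query $(q, M\oplus M', v_0, v')$.

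This makes polynomially many oracle calls to the orbit problem, a Turing reduction. If a many-one reduction is required, use that quantum orbits are equivalence classes (\cref{definition: orbits}) and that $\qut(M\oplus M')$ preserves connected components of the underlying graph. Then, once some $v \in V(H)$ shares an orbit with some $v'$, a short argument should show $v_0$ does too, since each orbit of $\qut$ on $H \cup H'$ meets $V(H)$ whenever it meets it at all in the isomorphic case.

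\emph{Step 4 (correctness).} The answer is ``quantum isomorphic'' iff some query returns ``same orbit.''
\begin{itemize}
\item Backward direction: if some query says $v_0$ and $v'$ share an orbit, Step 1 gives quantum isomorphism.
\item Forward direction: given quantum isomorphism, Step 1 yields some $v \in V(H)$ and $v' \in V(H')$ in a common orbit. We must show the fixed $v_0$ also shares an orbit with some vertex of $H'$.
\end{itemize}

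\emph{Main obstacle.} The forward direction of Step 4 is where I expect the difficulty, because quantum orbits on $V(H)$ need not be transitive. I would invoke the refined statement in \cite{lupini2020nonlocal}: for connected quantum isomorphic $H,H'$, every orbit of $\qut(H\cup H')$ meets both $V(H)$ and $V(H')$. If that refinement is not available, the fallback is simply to query all pairs $(v,v')$, which keeps the reduction polynomial.
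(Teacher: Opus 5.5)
Your proposal is correct and is essentially the paper's argument: you use \cite[Theorem 4.5]{lupini2020nonlocal} for connected graphs and the same complement normalization (a graph or its complement is connected, complementation preserves quantum isomorphism, and a connected graph is never quantum isomorphic to a disconnected one), then make polynomially many orbit queries on $M\oplus M' \in \Sym{q}{}(\mathbb{R}_{\geq 0})$, i.e.\ a Turing reduction, which suffices for the undecidability application. Fixing $v_0$ is valid but unnecessary (it follows by placing a quantum isomorphism $\mathcal{U}$ and its transpose in the off-diagonal blocks of a quantum permutation matrix commuting with $M\oplus M'$), and you should drop the ``many-one'' paragraph, since it still ranges over all $v' \in V(H')$ and so never produces a single query.
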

We now obtain a slightly stronger form of \cref{theorem: undec}, obtaining undecidability from even positive definite $M$ with positive entries.
\begin{theorem}\label{theorem: undecidable_appendix}
	The following problem is undecidable:
	given $M \in \Sym{q}{pd}(\mathbb{R}_{> 0})$ and $i,j \in [q]$, determine whether there 
	is a $\mathbf{K}(M) \in \PlEdge(M)$ 
	such that $\mathbf{K}(M)_{ii} \neq \mathbf{K}(M)_{jj}$.
\end{theorem}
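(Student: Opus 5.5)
The plan is to reduce from quantum isomorphism of graphs, which is undecidable \cite{atserias2019quantum,slofstra2019set}. \cref{theorem: orbit_isomorphism} already turns it into the orbit problem for $\qut(M)$ with $M \in \Sym{q}{}(\mathbb{R}_{\geq 0})$, so the work is to move from $\mathbb{R}_{\geq 0}$ to $\Sym{q}{pd}(\mathbb{R}_{>0})$ without changing $\qut$.

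Given graphs $H,H'$, take the adjacency matrix $A$ of $H\cup H'$ (with $A=M\oplus M'$). By the discussion before \cref{theorem: orbit_isomorphism}, $H$ and $H'$ are quantum isomorphic iff some $i \in V(H)$ and $j \in V(H')$ lie in one orbit of $\qut(A)$. Here we may pass to complements so that both graphs are connected, and dismiss the mixed case since a connected graph is never quantum isomorphic to a disconnected one. Only finitely many pairs $(i,j)$ need testing, so an oracle for the orbit question would decide quantum isomorphism. By \cref{theorem: qutOrbits}, $i,j$ share a $\qut(A)$-orbit iff no $\vk(A)\in\PlEdge(A)$ separates them. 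It therefore suffices to produce computably an $N \in \Sym{q}{pd}(\mathbb{R}_{>0})$ with $\qut(N)=\qut(A)$, so that the two orbit relations agree, and then apply \cref{theorem: qutOrbits} to $N$.

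I would set $N = cI + J + A$ with $c$ a large positive integer, say $c > q+\|A\|$, where $\|A\|\le q$. This matrix has positive integer entries, is symmetric, and is positive definite, since $J+A$ has eigenvalues bounded below by $-q$. The key step is showing $\qut(N)=\qut(A)$, that is, for a quantum permutation matrix $\uc$ we have $\uc N = N\uc$ iff $\uc A = A\uc$. Every quantum permutation matrix commutes with $I$ and $J$, as noted in the proof of the strengthened \cref{proposition: quantumSymmetric}. Hence $\uc N - N\uc = \uc A - A\uc$, and the fundamental representations impose identical relations. If one prefers to phrase this via gadgets, \cref{theorem: duality} says that $\text{span}_{\mathbb{C}}\PlGadget(N)$ and $\text{span}_{\mathbb{C}}\PlGadget(A)$ are both the commutant of the same $\uc$, and $A = N - cI - J$ lies in the span of $\PlGadget(N)$ by \cref{proposition: ij}.

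The main obstacle is making the "same $\qut$" claim rigorous, since $\qut$ is defined only through its fundamental representation. The paper treats $\qut(M)$ as the universal $C^*$-algebra generated by a quantum permutation matrix subject to $\uc M = M\uc$, so identical defining relations give identical objects, and the orbit relations $u_{ij}\neq 0$ coincide. A cleaner alternative that avoids the universal object is to argue entirely through the orbital characterization of \cref{lemma: orbitals} and \cref{theorem: duality}. The commutants coincide, so the orbitals and therefore the orbits coincide. Everything else in the reduction is a finite computable map $(H,H')\mapsto (N,i,j)$, which yields the undecidability.
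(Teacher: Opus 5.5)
Your proposal is correct and follows essentially the paper's route. You shift the nonnegative matrix by $J$ plus a large multiple of $I$ to land in $\Sym{q}{pd}(\mathbb{R}_{>0})$, observe that $\qut$ (hence its orbits) is unchanged because every quantum permutation matrix commutes with $I$ and $J$, and then invoke \cref{theorem: qutOrbits}. The differences are cosmetic: you take an integer shift $c$ rather than $|\lambda|+1$ with $\lambda$ the least eigenvalue of $M+J$, which makes the output an integer matrix, and you compare the defining relations $\uc N = N\uc$ and $\uc A = A\uc$ directly rather than going through intertwiner spaces and \cref{lemma: orbitals}.
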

\begin{proof}
	We reduce the problem in \cref{theorem: orbit_isomorphism} to this
	problem, and the result follows from the undecidability of quantum
	isomorphism \cite{atserias2019quantum,slofstra2019set}. Let $M \in \Sym{q}{}(\mathbb{R}_{\geq 0})$ and $i,j \in [q]$.
	By \autoref{proposition: ij}, $I,J \in \PlGadget(M)$. Let $\lambda$ be the smallest
	eigenvalue of $M+J$, and let $M' := M+J+(|\lambda|+1)I$, so 
	$M' \in \Sym{q}{pd}(\mathbb{R}_{> 0})$. By \cref{theorem: duality}, the
	fundamental representations $\uc$ of $\qut(M)$ and $\uc'$ of
	$\qut(M')$ both commute with $J$ and $(|\lambda|+1)I$.
	In general, if $\mathcal{V}$ is the fundamental representation of
	$\qut(A)$, then the space $\{K \mid \mathcal{V} K = K \mathcal{V}\}$,
	called the $(1,1)$-\emph{intertwiner space} of $\qut(A)$, is closed
	under linear combinations and generated by $A$ and matrices that,
	like $J$ and $I$, don't depend on $A$ \cite[Lemma 5]{cai2023planar}. Thus
	both $\uc$ and $\uc'$ commute with both $M$ and $M'$, and therefore,
	for all $K$, we have
	$\uc K = K \uc$ if and only if $\uc' K = K \uc'$. Now it follows
	from \cref{lemma: orbitals} that $\qut(M)$ and $\qut(M')$ have the
	same orbitals, so $i$ and $j$ are in the same orbit of $\qut(M)$ if
	and only if $i$ and $j$ are in the same orbit of $\qut(M')$.
	By \cref{theorem: qutOrbits}, this in turn is equivalent to $\mathbf{K}(M)_{ii} = \mathbf{K}(M)_{jj}$ 
	for every symmetric $\mathbf{K}(M) \in \PlEdge(M')$.
\end{proof}

We similarly obtain a stronger version of \cref{remark: separate_nonplanar}.
\begin{proposition}\label{proposition: separate_nonplanar}
	For infinitely many $q \geq 1$, there exist $M \in \Sym{q}{pd}(\mathbb{R}_{> 0})$ and $i,j \in [q]$ such that $K_{ii} = K_{jj}$ for every $\mathbf{K}(M) \in\PlEdge(M)$, 
	but there is a $\mathbf{K'}(M) \in \Edge(M)$ 
	such that $\mathbf{K'}(M)_{ii} \neq \mathbf{K'}(M)_{jj}$.
\end{proposition}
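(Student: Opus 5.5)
Start from a pair of graphs $H, H'$ that are quantum isomorphic but not isomorphic, with adjacency matrices $A, A'$ on $n$ vertices each. Infinitely many such pairs exist, e.g.\ from the nonlocal-game constructions of \cite{atserias2019quantum} or from Hadamard matrices \cite{chan2024quantum}, and they give infinitely many sizes $q = 2n$.

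First I would make both graphs connected. Quantum isomorphism is preserved under complementation \cite{atserias2019quantum}. Classical isomorphism is preserved under complementation as well. Either a graph or its complement is connected, and a connected graph cannot be quantum isomorphic to a disconnected one \cite{lupini2020nonlocal}. So if $H$ is disconnected, I replace both graphs by their complements, and then both are connected.

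Set $M_0 = A \oplus A' \in \Sym{q}{}(\mathbb{R}_{\geq 0})$. By \cite[Theorem 4.5]{lupini2020nonlocal}, there are $i \in V(H)$ and $j \in V(H')$ in the same orbit of $\qut(M_0)$. Since $H \not\cong H'$ and both are connected, no vertex of $H$ shares an $\aut(M_0)$-orbit with a vertex of $H'$. Otherwise the automorphism would map the component $H$ onto $H'$, giving an isomorphism.

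Next I would pass to a positive definite positive matrix exactly as in the proof of \cref{theorem: undecidable_appendix}. Let $\lambda$ be the smallest eigenvalue of $M_0 + J$ and set $M = M_0 + J + (|\lambda|+1)I \in \Sym{q}{pd}(\mathbb{R}_{>0})$. The argument there shows that $\qut(M_0)$ and $\qut(M)$ have the same orbitals, hence the same orbits. Therefore $i, j$ lie in the same orbit of $\qut(M)$, and \cref{theorem: qutOrbits} gives $\mathbf{K}(M)_{ii} = \mathbf{K}(M)_{jj}$ for every $\mathbf{K}(M) \in \PlEdge(M)$.

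For the classical side I need $\aut(M) = \aut(M_0)$. Adding $J$ and a multiple of $I$ changes each entry by a value determined only by whether the entry is diagonal or off-diagonal. Every permutation preserves that distinction. Hence a permutation $\sigma$ satisfies $M_{\sigma(x)\sigma(y)} = M_{xy}$ for all $x, y$ if and only if $(M_0)_{\sigma(x)\sigma(y)} = (M_0)_{xy}$ for all $x, y$. So $i$ and $j$ lie in different orbits of $\aut(M)$, and \cref{proposition: autOrbits} yields some $\mathbf{K'}(M) \in \Edge(M)$ with $\mathbf{K'}(M)_{ii} \neq \mathbf{K'}(M)_{jj}$.

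The only delicate points are the connectivity reduction, where I must check that non-isomorphism also survives complementation (it clearly does), and the transfer of orbits along $M_0 \mapsto M$. Both are already handled by the cited results and by the argument in \cref{theorem: undecidable_appendix}, so I expect no real obstacle.
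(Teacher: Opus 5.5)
Your proposal is correct and follows the paper's proof essentially step for step: you make both graphs connected by complementation, obtain a shared $\qut$-orbit from \cite[Theorem 4.5]{lupini2020nonlocal} and distinct $\aut$-orbits from connectivity, then pass to $M_0 + J + (|\lambda|+1)I$, transfer both orbit structures, and finish with \cref{theorem: qutOrbits} and \cref{proposition: autOrbits}. The one point to pin down is that your pairs really come in infinitely many distinct sizes; the paper gets this from Chan--Martin's non-isomorphic Hadamard graphs on $8n$ vertices, one for each Hadamard order $n$.
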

\begin{proof}
	Chan and Martin \cite[Theorem 4.3]{chan2024quantum} show that any two
	Hadamard graphs $H$ and $H'$ with the same number of vertices are quantum isomorphic. They also note that, if there exists a Hadamard
	matrix of order $n$, then there exist non-isomorphic Hadamard graphs on $8n$
	vertices. Since there exist Hadamard matrices of infinitely many distinct
	orders (e.g. Walsh matrices), there exist infinitely many pairs of graphs
	that are quantum isomorphic but not isomorphic. Let $H,H'$ be
	such a pair, with adjacency matrices $M,M'$. By the discussion before
	\cref{theorem: orbit_isomorphism}, we may
	assume $H$ and $H'$ are connected by taking complements if needed, so there exist $i \in V(H)$ and $j \in V(H')$ in the same orbit of $\qut(M \oplus M')$ but
	different orbits of $\aut(M \oplus M')$. As in the proof of
	\cref{theorem: undecidable_appendix}, let $\lambda$ be the smallest eigenvalue of $(M \oplus M') + J$. Then $M'' := (M \oplus M') + J + (|\lambda|+1)I \in \Sym{q}{pd}(\mathbb{R}_{> 0})$ and $\qut(M'')$ and
	$\qut(M \oplus M')$ have the same orbits. 
	By the complete symmetry among the $[q]$ vertices of the graphs
	with adjacency matrices $I$ or $J$, $\aut(M'')$ and $\aut(M \oplus M')$ also have the same orbits.
	Hence, by \cref{proposition: autOrbits} and
	\cref{theorem: qutOrbits}, $M''$ satisfies the desired property.
\end{proof}

%% file: refs.bib
@article{guo2020complexity,
	title={The complexity of planar Boolean \#CSP with complex weights},
	author={Guo, Heng and Williams, Tyson},
	journal={Journal of Computer and System Sciences},
	volume={107},
	pages={1--27},
	year={2020},
	publisher={Elsevier}
}

@inproceedings{dyer2000complexity,
	title={The complexity of counting graph homomorphisms (extended abstract)},
	author={Dyer, Martin and Greenhill, Catherine},
	editor={David B. Shmoys},
	booktitle={Proceedings of the Eleventh Annual {ACM-SIAM} Symposium on Discrete Algorithms, January 9-11, 2000, San Francisco, CA, {USA}},
	pages={246--255},
	publisher={{ACM/SIAM}},
	year={2000}
}

@article{bulatov2005complexity,
	title={The complexity of partition functions},
	author={Bulatov, Andrei and Grohe, Martin},
	journal={Theoretical Computer Science},
	volume={348},
	number={2-3},
	pages={148--186},
	year={2005},
	publisher={Elsevier}
}

@article{goldberg2010complexity,
	title={A complexity dichotomy for partition functions with mixed signs},
	author={Goldberg, Leslie Ann and Grohe, Martin and Jerrum, Mark and Thurley, Marc},
	journal={SIAM Journal on Computing},
	volume={39},
	number={7},
	pages={3336--3402},
	year={2010},
	publisher={SIAM}
}

@article{cai2013graph,
	title={Graph homomorphisms with complex values: A dichotomy theorem},
	author={Cai, Jin-Yi and Chen, Xi and Lu, Pinyan},
	journal={SIAM Journal on Computing},
	volume={42},
	number={3},
	pages={924--1029},
	year={2013},
	publisher={SIAM}
}

@article{vertigan2005computational,
	title={The computational complexity of Tutte invariants for planar graphs},
	author={Vertigan, Dirk},
	journal={SIAM Journal on Computing},
	volume={35},
	number={3},
	pages={690--712},
	year={2005},
	publisher={SIAM}
}

@article{govorov2020dichotomy,
	title={A dichotomy for bounded degree graph homomorphisms with nonnegative weights},
	author={Govorov, Artem and Cai, Jin-Yi and Dyer, Martin},
	journal={arXiv preprint arXiv:2002.02021},
	year={2020}
}

@book{jacobson1985basic,
	title={Basic Algebra},
	author={Jacobson, N.},
	number={v. 2},
	isbn={9780716719335},
	lccn={84025836},
	series={Basic Algebra},
	url={https://books.google.com/books?id=oNmDSAAACAAJ},
	year={1985},
	publisher={W.H. Freeman}
}

@book{lovasz2012large,
	title={Large networks and graph limits},
	author={Lov{\'a}sz, L{\'a}szl{\'o}},
	volume={60},
	year={2012},
	publisher={American Mathematical Soc.}
}

@article{bulatov2013complexity,
	title={The complexity of the counting constraint satisfaction problem},
	author={Bulatov, Andrei A.},
	journal={Journal of the ACM (JACM)},
	volume={60},
	number={5},
	pages={1--41},
	year={2013},
	publisher={ACM New York, NY, USA}
}

@inproceedings{dyer2010complexity,
	title={On the complexity of \# CSP},
	author={Dyer, Martin E. and Richerby, David M.},
	booktitle={Proceedings of the forty-second ACM symposium on Theory of computing},
	pages={725--734},
	year={2010}
}

@InProceedings{dyer2011decidability,
	author =	{Dyer, Martin E. and Richerby, David M.},
	title =	{{The \# CSP Dichotomy is Decidable}},
	booktitle =	{28th International Symposium on Theoretical Aspects of Computer Science (STACS 2011) },
	pages =	{261--272},
	year =	{2011},
	volume =	{9},
	editor =	{Thomas Schwentick and Christoph D{\"u}rr},
	publisher =	{Schloss Dagstuhl--Leibniz-Zentrum fuer Informatik},
	address =	{Dagstuhl, Germany},
	doi =		{10.4230/LIPIcs.STACS.2011.261}
}

@article{dyer2013complexity,
	volume={42},
	number={3},
	author={Dyer, Martin E. and Richerby, David M.},
	title = {An effective dichotomy for the counting constraint satisfaction problem},
	publisher = {Society for Industrial and Applied Mathematics},
	year = {2013},
	journal = {SIAM Journal on Computing},
	pages = {1245 -- 1274},
}

@article{bulatov2012complexity,
	title={The complexity of weighted and unweighted \# CSP},
	author={Bulatov, Andrei and Dyer, Martin and Goldberg, Leslie Ann and Jalsenius, Markus and Jerrum, Mark and Richerby, David},
	journal={Journal of Computer and System Sciences},
	volume={78},
	number={2},
	pages={681--688},
	year={2012},
	publisher={Elsevier}
}

@inproceedings{cai2017complexity,
	title={Complexity of counting CSP with complex weights},
	author={Cai, Jin-Yi and Chen, Xi},
	booktitle={Proceedings of the forty-fourth annual ACM symposium on Theory of computing (STOC) (2012), pp. 909--920},
	series={Journal of the ACM},
	volume={64(3)},
	pages={1--39},
	year={2017}
}

@article{cai2016nonnegative,
	title={Nonnegative weighted \# CSP: an effective complexity dichotomy},
	author={Cai, Jin-Yi and Chen, Xi and Lu, Pinyan},
	journal={SIAM Journal on Computing},
	volume={45},
	number={6},
	pages={2177--2198},
	year={2016},
	publisher={SIAM}
}

@article{valiant2008holographic,
	title={Holographic algorithms},
	author={Valiant, Leslie G},
	journal={SIAM Journal on Computing},
	volume={37},
	number={5},
	pages={1565--1594},
	year={2008},
	publisher={SIAM}
}

@article{valiant1979complexity,
	title={The complexity of computing the permanent},
	author={Valiant, Leslie G.},
	journal={Theoretical computer science},
	volume={8},
	number={2},
	pages={189--201},
	year={1979},
	publisher={Elsevier}
}

@article{kasteleyn1961statistics,
	title={The statistics of dimers on a lattice: I. The number of dimer arrangements on a quadratic lattice},
	author={Kasteleyn, Pieter W.},
	journal={Physica},
	volume={27},
	number={12},
	pages={1209--1225},
	year={1961},
	publisher={Elsevier}
}

@article{temperley1961dimer,
	title={Dimer problem in statistical mechanics-an exact result},
	author={Temperley, Harold NV. and Fisher, Michael E.},
	journal={Philosophical Magazine},
	volume={6},
	number={68},
	pages={1061--1063},
	year={1961},
	publisher={Taylor \& Francis}
}

@article{kasteleyn1963dimer,
	title={Dimer statistics and phase transitions},
	author={Kasteleyn, Pieter W.},
	journal={Journal of Mathematical Physics},
	volume={4},
	number={2},
	pages={287--293},
	year={1963},
	publisher={American Institute of Physics}
}

@article{kasteleyn1967graph,
	title={Graph theory and crystal physics},
	author={Kasteleyn, Pieter},
	journal={Graph theory and theoretical physics},
	pages={43--110},
	year={1967},
	publisher={Academic Press}
}

@inproceedings{cai2009holant,
	title={Holant problems and counting CSP},
	author={Cai, Jin-Yi and Lu, Pinyan and Xia, Mingji},
	booktitle={Proceedings of the forty-first annual ACM symposium on Theory of computing},
	pages={715--724},
	year={2009}
}

@article{cai2016complete,
	title={A complete dichotomy rises from the capture of vanishing signatures},
	author={Cai, Jin-Yi and Guo, Heng and Williams, Tyson},
	journal={SIAM Journal on Computing},
	volume={45},
	number={5},
	pages={1671--1728},
	year={2016},
	publisher={SIAM}
}

@article{cai2019holographic,
	title={Holographic algorithm with Matchgates is universal for planar \# CSP over boolean domain},
	author={Cai, Jin-Yi and Fu, Zhiguo},
	journal={SIAM Journal on Computing},
	volume={51},
	number={2},
	pages={STOC17--50},
	year={2019},
	publisher={SIAM}
}

@article{atserias2019quantum,
	title={Quantum and non-signalling graph isomorphisms},
	author={Atserias, Albert and Man{\v{c}}inska, Laura and Roberson, David E. and {\v{S}}{\'a}mal, Robert and Severini, Simone and Varvitsiotis, Antonios},
	journal={Journal of Combinatorial Theory, Series B},
	volume={136},
	pages={289--328},
	year={2019},
	publisher={Elsevier}
}

@article{lupini2020nonlocal,
	title={Nonlocal games and quantum permutation groups},
	author={Lupini, Martino and Man{\v{c}}inska, Laura and Roberson, David E.},
	journal={Journal of Functional Analysis},
	volume={279},
	number={5},
	pages={108592},
	year={2020},
	publisher={Elsevier}
}

@inproceedings{manvcinska2020quantum,
	title={Quantum isomorphism is equivalent to equality of homomorphism counts from planar graphs},
	author={Man{\v{c}}inska, Laura and Roberson, David E.},
	booktitle={2020 IEEE 61st Annual Symposium on Foundations of Computer Science (FOCS)},
	pages={661--672},
	year={2020},
	organization={IEEE}
}

@article{lovasz1967operations,
	title={Operations with structures},
	author={Lov{\'a}sz, L{\'a}szl{\'o}},
	journal={Acta Math. Acad. Sci. Hungar},
	volume={18},
	number={3-4},
	pages={321--328},
	year={1967}
}

@book{horn2012matrix,
	title={Matrix analysis},
	author={Horn, Roger A. and Johnson, Charles R.},
	year={2012},
	publisher={Cambridge university press}
}

@inproceedings{backens2017new,
	author={Miriam Backens},
	title={A New Holant Dichotomy Inspired by Quantum Computation},
	booktitle={44th International Colloquium on Automata, Languages, and Programming, {ICALP} 2017, July 10-14, 2017, Warsaw, Poland},
	series={LIPIcs},
	volume={80},
	pages={16:1--16:14},
	publisher={Schloss Dagstuhl - Leibniz-Zentrum f{\"{u}}r Informatik},
	year= {2017}
}

@inproceedings{backens2018complete,
	author={Miriam Backens},
	title={A Complete Dichotomy for Complex-Valued Holant\^{}c},
	booktitle={45th International Colloquium on Automata, Languages, and Programming, {ICALP} 2018, July 9-13, 2018, Prague, Czech Republic},
	series={LIPIcs},
	volume={107},
	pages={12:1--12:14},
	publisher={Schloss Dagstuhl - Leibniz-Zentrum f{\"{u}}r Informatik},
	year={2018}
}

@article{yang2022local,
	title={Local holographic transformations: tractability and hardness},
	author={Yang, Peng and Fu, Zhiguo},
	journal={Frontiers of Computer Science},
	volume={17},
	number={2},
	pages={1--11},
	year={2022},
	publisher={Springer}
}

@article{fu2019blockwise,
	title={On blockwise symmetric matchgate signatures and higher domain \# CSP},
	author={Fu, Zhiguo and Yang, Fengqin and Yin, Minghao},
	journal={Information and Computation},
	volume={264},
	pages={1--11},
	year={2019},
	publisher={Elsevier}
}

@article{fu2014holographic,
	title={Holographic algorithms on bases of rank 2},
	author={Fu, Zhiguo and Yang, Fengqin},
	journal={Information Processing Letters},
	volume={114},
	number={11},
	pages={585--590},
	year={2014},
	publisher={Elsevier}
}

@inproceedings{cai2023complexity,
	title={The complexity of counting planar graph homomorphisms of domain size 3},
	author={Cai, Jin-Yi and Maran, Ashwin},
	booktitle={Proceedings of the 55th Annual ACM Symposium on Theory of Computing},
	pages={1285--1297},
	year={2023}
}

@book{hell2004graphs,
	title={Graphs and homomorphisms},
	author={Hell, Pavol and Nesetril, Jaroslav},
	volume={28},
	year={2004},
	publisher={OUP Oxford}
}

@book{cai2017complexitybook,
	title={Complexity dichotomies for counting problems: Volume 1, Boolean domain},
	author={Cai, Jin-Yi and Chen, Xi},
	year={2017},
	publisher={Cambridge University Press}
}

@article{mityagin2015zero,
	title={The zero set of a real analytic function},
	author={Mityagin, Boris},
	journal={arXiv preprint arXiv:1512.07276},
	year={2015}
}

@article{cai2023planar,
	title={Planar \#CSP Equality Corresponds to Quantum Isomorphism -- A Holant Viewpoint},
	author={Cai, Jin-Yi and Young, Ben},
	journal={ACM Transactions on Computation Theory},
	year={2024},
	month={9},
	publisher={ACM New York, NY},
	volume={16},
	number={3},
	publisher = {Association for Computing Machinery},
	address = {New York, NY, USA},
	articleno={18}
}

@misc{young2022equality,
	title={Equality on all \#CSP Instances Yields Constraint Function Isomorphism via Interpolation and Intertwiners}, 
	author={Ben Young},
	year={2022},
	eprint={2211.13688},
	archivePrefix={arXiv},
	primaryClass={cs.DM}
}

@article{lovasz2006rank, 
	title = {The rank of connection matrices and the dimension of graph algebras}, 
	journal = {European Journal of Combinatorics},
	volume = {27},
	number = {6},
	pages = {962-970},
	year = {2006},
	author = {László Lovász}
}

@article{chan2024quantum,
	title = {Quantum isomorphism of graphs from association schemes},
	journal = {Journal of Combinatorial Theory, Series B},
	volume = {164},
	pages = {340-363},
	year = {2024},
	author = {Ada Chan and William J. Martin},
}

@article{slofstra2019set, 
	title={The Set of Quantum Correlations is not Closed}, 
	volume={7}, 
	journal={Forum of Mathematics, Pi}, 
	author={Slofstra, William}, 
	year={2019}
}

@article{cai2024polynomial,
	title={Polynomial and analytic methods for classifying complexity of planar graph homomorphisms},
	author={Cai, Jin-Yi and Maran, Ashwin},
	journal={arXiv preprint arXiv:2412.17122},
	year={2024}
}

@article{lovasz2009contractors,
	title={Contractors and connectors of graph algebras},
	author={Lov{\'a}sz, L{\'a}szl{\'o} and Szegedy, Bal{\'a}zs},
	journal={Journal of Graph Theory},
	volume={60},
	number={1},
	pages={11--30},
	year={2009},
	publisher={Wiley Online Library}
}

@article{dvorak2010recognizing,
	title={On recognizing graphs by numbers of homomorphisms},
	author={Dvo{\v{r}}{\'a}k, Zden{\v{e}}k},
	journal={Journal of Graph Theory},
	volume={64},
	number={4},
	pages={330--342},
	year={2010},
	publisher={Wiley Online Library}
}
